\g@addto@macro{\UrlBreaks}{\UrlOrds}
\theoremstyle{plain}
\newtheorem{theorem}{Theorem}[section]
\newtheorem{corollary}[theorem]{Corollary}
\newtheorem{lemma}[theorem]{Lemma}
\newtheorem{proposition}[theorem]{Proposition}
\theoremstyle{definition}
\newtheorem{definition}[theorem]{Definition}
\newtheorem{example}[theorem]{Example}
\newtheorem{question}[theorem]{Question}
\numberwithin{equation}{section}
\begin{document}
\title{Distinguishing Level-2 Phylogenetic Networks Using Phylogenetic Invariants}
\author{Muhammad Ardiyansyah}

\maketitle
\begin{abstract}
\noindent
    In phylogenetics, it is important for the phylogenetic network model parameters to be identifiable so that the evolutionary histories of a group of species can be consistently inferred. However, as the complexity of the phylogenetic network models grows, the identifiability of network models becomes increasingly difficult to analyze. As an attempt to analyze the identifiability of network models, we check whether two networks are distinguishable. In this paper, we specifically study the distinguishability of phylogenetic network models associated with level-2 networks. Using an algebraic approach, namely using discrete Fourier transformation, we present some results on the distinguishability of some level-2 networks, which generalize earlier work on the distinguishability of level-1 networks. In particular, we study simple and semisimple level-2 networks. Simple and semisimple level-2 networks can be thought as generalizations of level-1 sunlet and cycle networks, respectively. Moreover, we also compare the varieties associated with semisimple level-2 and cycle networks. 
\end{abstract}

\section{Introduction}\label{introduction}

Phylogenetics is a field in biology that mainly concerns with the evolutionary history of organisms. Using data from existing species, we would like to infer the phylogenetic tree, a tree that best describes the relationship between them. Some standard references about phylogenetics are \cite{felsenstein2004inferring,semple2003phylogenetics}. Given some biological data, the evolutionary process of a set of species can be studied using a mathematical model, called phylogenetic model. The most well-studied phylogenetic models are the nucleotide substitution models, which are Markov models that describe evolutionary changes over time. Recently mathematicians have explored both algebraic and geometric methods in studying phylogenetics. Interested readers can refer to \cite{allman2003phylogenetic,banos2016phylogenetic,MartaJesus07, CavenderFelsenstein,gross2018distinguishing, EvansSpeed, PachterSturmfels,sturmfels2005toric} for a list of publications in phylogenetic using both algebraic and geometric approaches.

Although the notion of a phylogenetic model on a tree is fairly simple, this notion demonstrates many desirable features in applications. A phylogenetic tree can explain well the speciation events, in which one ancestral lineage gives rise to two or more lineages, and the descent with modification events occurring over a period of time, in which genetic differences could be passed on to the next generation. On the other hand, a phylogenetic tree may fail to model hybridization, recombination, or horizontal gene transfer \cite{maddison1997gene,pamilo1988relationships,sneath1975cladistic}. These reticulate mechanisms are thought to play a crucial role in shaping evolutionary histories \cite{doolittle1999phylogenetic}. Phylogenetic networks come into play in phylogenetics to take into account these reticulation events. Instead of using a tree, a phylogenetic network models the evolutionary process on a connected directed acyclic graph (DAG) that contains some edges representing reticulation events. In this paper, we focus on a well-studied class of networks, namely level-$k$ networks \cite{jansson2006inferring}, particularly for $k=2$.

We aim to study networks that take into account reticulation events represented by reticulation edges. These networks belong to the class of partly-directed networks, which are graphs that may contain both undirected and directed edges. There are several publications in the biological literature that consider partly-directed networks. These publications include the descriptions of the evolutionary history of humans in Europe \cite{lazaridis2018evolutionary}, of grapes \cite{myles2011genetic}, and of bears \cite{kumar2017evolutionary}. From the mathematical perspectives, a class of partly-directed networks called semi-directed networks has been introduced in \cite{gross2018distinguishing}. In this class of networks, the only directed edges are the edges representing the reticulation events. In this paper, we are mainly interested to reconstruct the underlying semi-directed network topology of a phylogenetic network as semi-directed networks may provide information about the reticulation events among the given organisms.

Due to the growing importance of networks, the recent literature includes several papers on network inferences. Some of the most important inference methods for phylogenetic trees cover the sequence-based methods \cite{farris1970methods,fitch1971toward,chor2005maximum,felsenstein2004inferring} and the distance-based methods \cite{sokal1958statistical,saitou1987neighbor}. Recently, these methods have been adapted to the general network setting. In \cite{hein1990reconstructing}, the parsimony principles have been taken into account in reconstructing networks. Various works in the past already examined the likelihood estimation method for general networks \cite{jin2006maximum, strimmer2000likelihood,von1993network}. Finally, a distance-based method for a class of networks is proven to be useful in recovering the true phylogeny \cite{bordewich2018recovering, van2020reconstructibility}. In addition to these methods, combinatorial methods can reconstruct a network using the sets of its subnetworks. In \cite{van2009constructing}, the set of triplets, which are the phylogenetic networks on three leaves, is considered. Similarly, the set of quarnets, which are four leaf-networks, is studied in \cite{huebler2019constructing} and a set of subnetworks obtained by deleting some edges of the network is considered in \cite{murakami2019reconstructing}.

In this paper, we will look more closely at the network reconstruction using phylogenetic invariants. Phylogenetic invariants are polynomials associated with a phylogenetic network model. A reconstruction approach that uses phylogenetic invariants for a tree has been studied in \cite{allman2010identifiability,sturmfels2005toric}. Recently, network inference using phylogenetic invariants was considered in \cite{gross2018distinguishing} to distinguish a class of networks called cycle networks. 

To achieve the main goal in reconstructing the true phylogenetic network given some biological data, we need to study the identifiability of phylogenetic network models. Roughly speaking, model identifiability ensures that given some biological data, it is possible to recover the unique network topology and model parameters that produce a probability distribution that best represents our biological data. The identifiability of model parameters in a basic model of evolution is studied in \cite{chang1996full}. Moreover, the identifiability of model parameters on some more complex phylogenetic models, which take into account some specific information about the evolution, has been established \cite{allman2008identifiability,allman2006identifiability}. 

We would like to employ an algebraic approach, namely using the discrete Fourier transformation, to study the identifiability of phylogenetic network model. For these algebraic models, we will study the generic identifiability of the model, which is a slightly weaker notion than the (global) identifiability. In the network setting, recent works in \cite{gross2018distinguishing,gross2020distinguishing} provide generic identifiability results for level-1 triangle-free networks. These results are obtained by first checking that the two given networks are distinguishable.  Roughly speaking, two networks are distinguishable if the intersection of varieties associated with each network model is a proper subvariety of both network varieties. No work has been done in obtaining generic identifiability or distinguishability results for higher-level networks. This paper attempts to provide some distinguishability results of phylogenetic network models associated with level-2 networks. We focus on a class of networks which permits us to use the discrete Fourier transformation to obtain phylogenetic invariants associated with each network model. A network belonging to this class will be called nice (see \Cref{subsection:nice phylogenetic networks} for the precise definition of this term). In addition to this class of nice networks, we limit our study to the class of simple and semisimple networks, which are generalizations of level-1 sunlet and cycle networks. More detailed definitions of simple and semisimple networks will be given in \Cref{section:preliminaries} and \Cref{section:semisimple}, respectively. 

The main results of this paper will be given by the following series of theorems. These results are derived under some conditions on the set of reticulation leaves or branches. More detailed definitions of reticulation leaves and branches will be given in \Cref{section:preliminaries} and \Cref{section:semisimple}, respectively. Roughly speaking, a leaf is said to be a reticulation leaf or is contained in a reticulation branch, if the corresponding species evolves from a reticulate evolutionary process. For the case of simple nice level-2 networks, under some assumptions on the set of reticulation leaves, the main results will be given in \Cref{proposition:distinguishability r(N_1)=r(N_2)=2}, \Cref{proposition:distinguishability r(N_1)>=r(N_2)}, and \Cref{proposition:distinguishability r(N_1)=r(N_2)=1}.  \Cref{proposition:distinguishability r(N_1)=r(N_2)=2} provides distinguishability of two networks, each with exactly two reticulation leaves. \Cref{proposition:distinguishability r(N_1)>=r(N_2)} presents distinguishability of two networks: one has two reticulation leaves and the other has only one reticulation leaf. \Cref{proposition:distinguishability r(N_1)=r(N_2)=1} provides distinguishability of two networks, each with exactly one reticulation leaf. For the case of semisimple nice level-2 networks that are not level-1 networks, under some assumptions on the set of reticulation branches, the main results will be given in  \Cref{proposition:distinguishability semisimple R(N_1)=R(N_2)=2}, \Cref{proposition:distinguishability semisimple R(N_1)>=R(N_2) (1)}, and \Cref{proposition:distinguishability semisimple R(N_1)=R(N_2)=1}. \Cref{proposition:distinguishability semisimple R(N_1)=R(N_2)=2} provides distinguishability of two networks, each with exactly two reticulation branches. \Cref{proposition:distinguishability semisimple R(N_1)>=R(N_2) (1)} presents distinguishability of two networks: one has two reticulation branches and the other has only one reticulation branch. \Cref{proposition:distinguishability semisimple R(N_1)=R(N_2)=1} provides distinguishability of two networks, each with exactly one reticulation branch. Moreover, \Cref{theorem:comparison-level-2-and-level-1} and \Cref{thm:semisimple level-2 and level-1} compares the network varieties associated with nice level-2 and level-1 networks for the class of simple and semisimple networks, respectively.

The outline of the paper is organized as follows. \Cref{section:preliminaries} contains a brief summary of phylogenetic networks which will be used throughout the paper. The definition of level-$k$ and simple networks will be provided in this section. In \Cref{section:phylogenetic network model}, we will describe the phylogenetic models we take into consideration. Moreover, we will present an algebraic approach to distinguish networks using discrete Fourier transform and phylogenetic invariants associated with each network model. We will introduce the class of nice networks which enable us to perform discrete Fourier transform to distinguish networks. \Cref{section:semisimple} introduces the notion of semisimple level-2 networks and discusses some of their properties. This class of semisimple networks is a generalization of the class of simple networks. In \Cref{section:distinguishing nice networks}, we will look more closely at the 4-leaf nice level-2  networks. A few key results about the varieties of 4-leaf nice level-2 networks will be presented. These results will be our basic building blocks in proving the main results of the paper.  \Cref{subsub:Distinguishing simple level-2 networks with at least five leaves} provides some results on the distinguishability of some simple nice networks with at least five leaves using the results in \Cref{section:distinguishing nice networks}. Moreover, \Cref{section:Distinguishing semisimple level-2 networks with at least five leaves and beyond} provides some results on the distinguishability of some semisimple nice networks with at least five leaves, which generalize the results in \Cref{subsub:Distinguishing simple level-2 networks with at least five leaves}. Finally, in the last section, we will discuss some limitations of this paper and some possible directions for future research.

\section{Preliminaries and definitions}\label{section:preliminaries}

In this section, the notion of phylogenetic networks and some of their properties will be discussed. We will also present some graph-theoretic terminologies, which are adapted from \cite{bouvel2020counting,huebler2019constructing,gross2018distinguishing} and will be used throughout the paper. From now on, we assume that the set $X$ is a finite set corresponding to taxa or a set of species. In particular, we will assume that $X=[n]$, which is the set of natural numbers up to $n$ where $n\geq 2$. 

In a graph or network, we call an edge \textit{cut-edge} if by removing this edge, the network becomes disconnected. A cut-edge is said to be \textit{trivial} if one part of the induced  partition of the leaves is a single leaf.  A \textit{root} of a directed graph is a distinguished vertex of a graph with in-degree zero and out-degree two. Furthermore, the set of vertices with in-degree one and out-degree zero is called the \textit{leaves}. A subtree component consisting of two leaves adjacent to a vertex incident to a non-trivial cut-edge is called a \textit{tree cherry}. If a graph contains no vertex whose removal disconnects the graph, then we say that the graph is \textit{biconnected}. A \textit{biconnected component} of a graph $N$ is a maximal biconnected subgraph of $N$.

\begin{definition}[\cite{gross2018distinguishing}, Section 2]\label{def:rooted phylogenetic network}

 A \textit{rooted phylogenetic network} $N$ on $X$ is a rooted connected DAG with no parallel edges such that: $N$ has a single root; the leaf set of $N$ is bijectively labeled by $X$; and the remaining vertices either have in-degree one and out-degree two or in-degree two and out-degree one.
\end{definition}
In this network setting, 
a vertex with in-degree one and out-degree two is called a \textit{tree vertex} and it represents a speciation event. On the other hand,  a vertex with in-degree two and out-degree one vertex is called a \textit{reticulation vertex} and it represents a reticulation event. Additionally, \textit{reticulation edges} are the two edges directed into a reticulation vertex and the remaining edges are called \textit{tree edges}. For illustration purpose of a rooted phylogenetic network, a tree edge is represented by a solid arrow while a reticulation edge is represented by a dashed arrow. 

In a rooted phylogenetic network, we say that a vertex $u$ is a \textit{child} of vertex $v$ or $v$ is a \textit{parent} of $u$ if there exists a directed edge going from $v$ to $u$. In the rest of the paper, the unique parent of a leaf labeled by $x\in X$ will be denoted by $up(x)$.
We say that a vertex $u$ is a \textit{descendant} of vertex $v$ if there exists a directed path from the root to the vertex $u$ containing the vertex $v$. Alternatively, we could also say that the vertex $v$ is an \textit{ancestor} of the vertex $u$ to mean the same thing. In addition, we say that a leaf vertex $u$ is a \textit{reticulation leaf}  if there exists a reticulation vertex $v$ such that $v$ is a parent of $u$. We call a vertex $u$ \textit{purely interior vertex} if all vertices adjacent to $u$ are not leaves. In addition to the tree cherry, a subset $\{x,y\}\subseteq X$ is said to be a \textit{reticulated cherry} if there exists an undirected path $(x,u,v,y)$ such that $u$ is the parent of $x$ and $v$ is the parent of $y$ such that exactly one of $u$ or $v$ is a reticulation vertex.

Similar to \Cref{def:rooted phylogenetic network}, we could define an unrooted phylogenetic network.

\begin{definition}[\cite{fischer2020classes}, Section Methods]
 An \textit{unrooted phylogenetic network} $N$ on $X$ is a connected undirected graph without loops and parallel edges such that each vertex either has degree three, which we call \textit{internal vertices}, or degree one, which we call \textit{leaves}, and its leaf set is bijectively labeled by $X$. 
\end{definition}

Two rooted (unrooted) phylogenetic networks are said to be \textit{isomorphic} if they are isomorphic as rooted (unrooted) directed (undirected) leaf-labeled graphs. We then define a class of phylogenetic network based on the number of reticulation vertices in each biconnected component as follows.

\begin{definition}[\cite{van2009constructing,bouvel2020counting}, Section 2]
A \textit{rooted level-k phylogenetic network} is a rooted phylogenetic network such that each biconnected component has at most $k$ reticulation vertices. Similarly, an \textit{unrooted level-k phylogenetic network} is an unrooted phylogenetic network such that removing at most $k$ edges in each biconnected component and contracting each degree two vertex to one of its neighbors produce an unrooted phylogenetic tree. Furthermore, we say that a network is a \textit{strict level-k phyologenetic network} if it is a level-$k$ network but not a level-$(k-1)$ network. 
\end{definition}

We could not identify the root location in a network because the phylogenetic models we are considering are time reversible (see \Cref{section:phylogenetic network model}). Time reversible model means that in the model associated with the reversed process, the frequency of transitions and transversions of DNA bases remains unchanged. For this reason, we wish to reconstruct the semi-directed network topology of a phylogenetic network. The \textit{semi-directed
network} topology of a rooted phylogenetic network is obtained by first collapsing the root
and then we keep the direction of the reticulation edges but forget the direction of all tree edges. 

\begin{example}\label{ex:level-k example}
In the semi-directed network $N_1$, the set of reticulation vertices is $\{a,b\}$ and the set of reticulation edges is $\{e_1,\dots,e_4\}$ while in $N_2$, the set of reticulation vertices is $\{a,b,c,d\}$ and the set of reticulation edges is $\{e_1,\dots,e_8\}$. In $N_1$, there are two biconnected components, each with exactly one reticulation vertex while in $N_2$, there are two biconnected components, each with exactly two reticulation vertices. Additionally, $N_1$ contains a tree cherry on $\{c,d\}$ but no reticulated cherry while $N_2$ contains a reticulated cherry on $\{e,f\}$ but no tree cherry.

 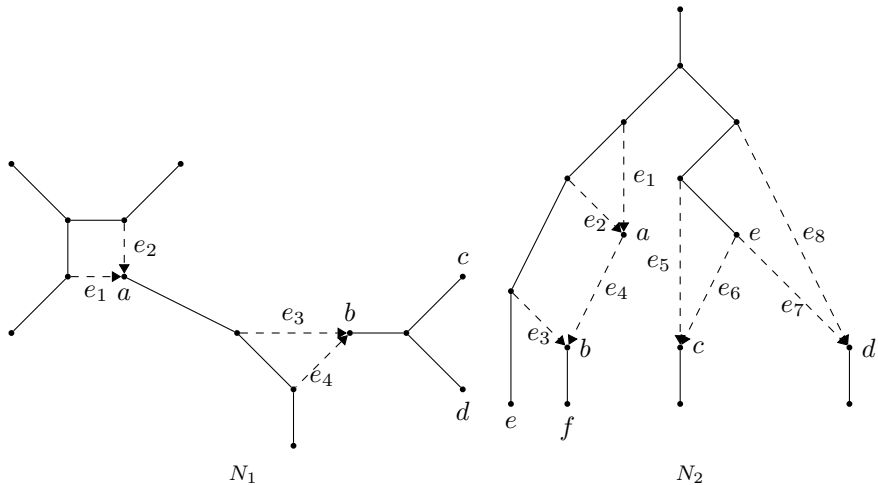
\begin{figure}[H]
    \centering
    \subfigure[$N_1$]{
    \begin{tikzpicture}[scale=0.75]
      \node (a1) at (0,0)[circle , draw, fill=black!50, inner sep=0pt, very thick ]{};
      \node [label=below:$a$] (a2) at (1,0)[circle , draw, fill=black!50, inner sep=0pt, very thick ]{};
      \node (a3) at (-1,-1)[circle , draw, fill=black!50, inner sep=0pt, very thick ]{};
      \node (a4) at (0,1)[circle , draw, fill=black!50, inner sep=0pt, very thick ]{};
      \node (a5) at (1,1)[circle , draw, fill=black!50, inner sep=0pt, very thick ]{};
      \node (a6) at (-1,2)[circle , draw, fill=black!50, inner sep=0pt, very thick ]{};
      \node (a7) at (2,2)[circle , draw, fill=black!50, inner sep=0pt, very thick ]{};
      \node (a8) at (3,-1)[circle , draw, fill=black!50, inner sep=0pt, very thick ]{};
      \node [label=above:$b$] (a9) at (5,-1)[circle , draw, fill=black!50, inner sep=0pt, very thick ]{};
      \node (a10) at (4,-2)[circle , draw, fill=black!50, inner sep=0pt, very thick ]{};
      \node (a11) at (4,-3)[circle , draw, fill=black!50, inner sep=0pt, very thick ]{};
      \node (a12) at (6,-1)[circle , draw, fill=black!50, inner sep=0pt, very thick ]{};
      \node [label=below:$d$](a13) at (7,-2)[circle , draw, fill=black!50, inner sep=0pt, very thick ]{};
      \node [label=above:$c$](a14) at (7,0)[circle , draw, fill=black!50, inner sep=0pt, very thick ]{};
      \draw [->,>=Triangle,dashed](a1)--(a2) node[midway, below]{$e_1$};
      \draw (a1)--(a4);
      \draw [->,>=Triangle,dashed](a5)--(a2) node[midway, right]{$e_2$};
      \draw (a5)--(a4);
      \draw (a1)--(a3);
      \draw (a4)--(a6);
      \draw (a5)--(a7);
      \draw (a8)--(a2);
      \draw [->,>=Triangle,dashed](a8)--(a9)node[midway, above]{$e_3$};
      \draw (a8)--(a10);
      \draw [<-,>=Triangle,dashed](a9)--(a10) node[midway, below]{$e_4$};
      \draw (a10)--(a11);
      \draw (a9)--(a12);
      \draw (a12)--(a13);
      \draw (a12)--(a14);
    \end{tikzpicture}}
    \subfigure[$N_2$]{
    \begin{tikzpicture}[scale=0.75]
      \node (a1) at (0,0)[circle , draw, fill=black!50, inner sep=0pt, very thick ]{};
     \node (a2) at (1,-1)[circle , draw, fill=black!50, inner sep=0pt, very thick ]{};
     \node (a3) at (-1,-1)[circle , draw, fill=black!50, inner sep=0pt, very thick ]{};
     \node (a4) at (0,-2)[circle , draw, fill=black!50, inner sep=0pt, very thick ]{};
     \node (a5) at (-2,-2)[circle , draw, fill=black!50, inner sep=0pt, very thick ]{};
     \node [label=right:$a$](a6) at (-1,-3)[circle , draw, fill=black!50, inner sep=0pt, very thick ]{};
     \node  [label=right:$e$](a7) at (1,-3)[circle , draw, fill=black!50, inner sep=0pt, very thick ]{};
     \node (a8) at (-3,-4)[circle , draw, fill=black!50, inner sep=0pt, very thick ]{};
     \node [label=right:$b$](a9) at (-2,-5)[circle , draw, fill=black!50, inner sep=0pt, very thick ]{};
     \node [label=right:$c$](a10) at (0,-5)[circle , draw, fill=black!50, inner sep=0pt, very thick ]{};
     \node [label=right:$d$](a11) at (3,-5)[circle , draw, fill=black!50, inner sep=0pt, very thick ]{};
     \node [label=below:$e$](a12) at (-3,-6)[circle , draw, fill=black!50, inner sep=0pt, very thick ]{};
     \node [label=below:$f$](a13) at (-2,-6)[circle , draw, fill=black!50, inner sep=0pt, very thick ]{};
     \node (a14) at (0,-6)[circle , draw, fill=black!50, inner sep=0pt, very thick ]{};
     \node (a15) at (3,-6)[circle , draw, fill=black!50, inner sep=0pt, very thick ]{};
     \node (a16) at (0,1)[circle , draw, fill=black!50, inner sep=0pt, very thick ]{};
     \draw (a1)--(a3);
     \draw (a1)--(a2);
     \draw (a3)--(a5);
     \draw[->,>=Triangle,dashed](a3)--(a6) node[midway, right]{$e_1$};
      \draw (a2)--(a4);
       \draw [->,>=Triangle,dashed](a2)--(a11)node[midway, right]{$e_8$};
      \draw[->,>=Triangle,dashed](a5)--(a6) node[midway, below]{$e_2$};
       \draw (a5)--(a8);
    \draw[->,>=Triangle,dashed] (a8)--(a9) node[midway, below]{$e_3$};
    \draw [->,>=Triangle,dashed](a6)--(a9)node[midway, right]{$e_4$};
     \draw[->,>=Triangle,dashed] (a4)--(a10)node[midway, left]{$e_5$};
      \draw (a4)--(a7);
       \draw[->,>=Triangle,dashed] (a7)--(a10)node[midway, right]{$e_6$};
        \draw[->,>=Triangle,dashed] (a7)--(a11)node[midway, below]{$e_7$};
         \draw (a9)--(a13);
          \draw (a8)--(a12);
           \draw (a10)--(a14);
            \draw (a11)--(a15);
            \draw (a16)--(a1);
    \end{tikzpicture}}
    \caption{The semi-directed network $N_1$ is a strict level-1 network with two reticulation vertices while $N_2$ is a strict level-2 network with four reticulation vertices.}
    \label{fig:my_label}
\end{figure}
\end{example}

Following the terminology used in \cite{bouvel2020counting,van2009constructing}, we will first consider a class of level-2 networks that is called \textit{simple level-2 networks}. These simple level-2 networks is a generalization of sunlet networks, which are level-1. Furthermore, this class of networks can be thought to be the basic building blocks of level-2 networks. Namely, each biconnected component of
a level-2 network can be approximated by a simple level-2 network.  To begin with, we introduce level-$k$ generators as follows.

\begin{definition}[\cite{van2009constructing}, Section 3]
 For $k\geq 1$, a \textit{rooted level-$k$ generator} is a biconnected directed acyclic graph with parallel edges that has exactly $k$ reticulation vertices, has a single root, and the remaining vertices are tree vertices. 
 \end{definition}

\begin{definition}[\cite{huber2016transforming}, Lemma 6]\label{lemma:characterization of generator}
An \textit{unrooted level-$l$ generator} $L$ is defined as the following multigraph.
\begin{enumerate}
    \item If $l=0$, then $L$ is a single vertex.
    \item If $l=1,$ then $L$ is the 2-regular multigraph with 2 vertices.
    \item If $l>1$, then $L$ is a 3-regular biconnected multigraph with $2l-2$ vertices.
\end{enumerate}
\end{definition}
\begin{example}
It is shown in \cite{van2009constructing} that there are precisely four rooted and one unrooted level-2 generators. For the rooted case, the root is represented by $E$ while the vertices $C$ or $D$ represent reticulation vertices. Network of type $8a$ and $8d$ contain exactly one tree vertex while network of type $8b$ and $8c$ contain exactly two tree vertices.  

\begin{figure}[H]
    \centering
\subfigure[The unrooted level-2 generator $L_2$]{
    \begin{tikzpicture}[scale=0.5]
    \node (a1) at (0,0)[left]{$A$} ;  
  \node (a2) at (4,0)[right]{$B$};  
\draw (0,0) to[out=90,in=90] node[midway,above] {} (4,0);
\draw (0,0) -- node[midway,below] {} (4,0);
\draw (0,0) to[out=-90,in=-120] node[midway,above] {} (4,0);
    \end{tikzpicture}}
    \qquad 
    \subfigure[Four rooted level-2 generators: type $8a, 8b, 8c,\mbox{and }8d$ (from left to right).]{\
    \begin{tikzpicture}[scale=0.75]
  \node [label=above:$E$](a1) at (0,0)[circle , draw, fill=black!50, inner sep=0pt, very thick ]{} ;  
  \node (a2) at (-1,-1)[circle , draw, fill=black!50, inner sep=0pt, very thick ]{};  
  \node (a3) at (1,-2)[circle , draw, fill=black!50, inner sep=0pt, very thick ]{};  
  \node [label=below:$C$](a4) at (0,-3)[circle , draw, fill=black!50, inner sep=0pt, very thick ]{};  
 
  \draw (a1)--(a2)node[midway, left]{};
  \draw (a1)--(a3)node[midway, right]{};
  \draw (a2)--(a4)node[midway, left]{};
  \draw (a2)--(a3)node[midway, above]{};
  \draw (a3)--(a4)node[midway, right]{};
\end{tikzpicture}
\begin{tikzpicture}[scale=0.75]
  \node [label=above:$E$](a1) at (0,0)[circle , draw, fill=black!50, inner sep=0pt, very thick ]{} ;  
  \node (a2) at (-1,-1)[circle , draw, fill=black!50, inner sep=0pt, very thick ]{};  
  \node (a3) at (0,-2)[circle , draw, fill=black!50, inner sep=0pt, very thick ]{};  
  \node [label=below:$C$](a4) at (-1,-3)[circle , draw, fill=black!50, inner sep=0pt, very thick ]{};  
  \node [label=below:$D$](a5) at (1,-3)[circle , draw, fill=black!50, inner sep=0pt, very thick ]{};
 
  \draw (a1)--(a2)node[midway, left]{};
  \draw (a1)--(a5)node[midway, right]{};
  \draw (a2)--(a3)node[midway, right]{};
  \draw (a3)--(a5)node[midway, left]{};
  \draw (a2)--(a4)node[midway, left]{};
  \draw (a3)--(a4)node[midway, right]{};
\end{tikzpicture}
\begin{tikzpicture}[scale=0.75]
  \node [label=above:$E$](a1) at (0,0)[circle , draw, fill=black!50, inner sep=0pt, very thick ]{} ;  
  \node (a2) at (-1,-1)[circle , draw, fill=black!50, inner sep=0pt, very thick ]{};  
  \node [label=below:$C$](a3) at (0,-1.5)[circle , draw, fill=black!50, inner sep=0pt, very thick ]{};  
  \node (a4) at (1,-1)[circle , draw, fill=black!50, inner sep=0pt, very thick ]{};  
  \node [label=below:$D$](a5) at (0,-3)[circle , draw, fill=black!50, inner sep=0pt, very thick ]{};
 
  \draw (a1)--(a2)node[midway, left]{};
  \draw (a2)--(a3)node[midway, above]{};
  \draw (a1)--(a4)node[midway, right]{};
  \draw (a4)--(a3)node[midway, above]{};
  \draw (a2)--(a5)node[midway, left]{};
  \draw (a4)--(a5)node[midway, right]{};
\end{tikzpicture}
\begin{tikzpicture}[scale=0.75]
  \node [label=above:$E$](a1) at (0,0)[circle , draw, fill=black!50, inner sep=0pt, very thick ]{} ;  
  \node (a2) at (-1,-1)[circle , draw, fill=black!50, inner sep=0pt, very thick ]{};  
  \node (a3) at (-1,-2)[circle , draw, fill=black!50, inner sep=0pt, very thick ]{};  
  \node [label=below:$C$](a4) at (0,-3)[circle , draw, fill=black!50, inner sep=0pt, very thick ]{};  
 
  \draw (a1)--(a2);
  \draw (a2) edge [bend left](a3);
  \draw (a3) edge [bend left] (a2);
  \draw (a1)--(a4);
  \draw (a3)--(a4);
\end{tikzpicture}
   }
   \caption{The level-2 generators.}
\end{figure}
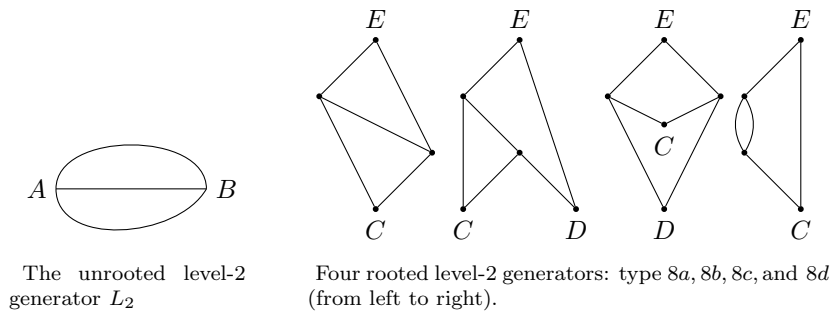

\end{example}
The following procedure describes how to obtain a simple level-$k$ network from a level-$k$ generator.

\begin{definition}[\cite{van2009constructing}, Section 3]\label{def:simplenetwork}
For $k\geq 1$, we can obtain a \textit{rooted/unrooted simple level-k network} by implementing the
following procedure to some rooted/unrooted level-$k$ generator $L$. We initially replace each edge of $L$ by a path graph and then we add a new leaf $x$ and a leaf edge $(u,x)$ for every internal vertex $u$ of the path. Additionally, for the rooted case, we add a leaf and a leaf edge for each vertex $v$ of in-degree two and out-degree zero.
\end{definition}
 An important characterization of a simple level-$k$ network is that the network should not have any non-trivial cut-edges \cite[Lemma 6]{van2009constructing}. As an example, if we forget the direction of all reticulation edges in the two networks in \Cref{ex:level-k example}, then each of them is not a simple unrooted network.

\begin{example}\label{ex:unrooted-level2-two and three leaves}
Up to isomorphism and labeling of the leaves, the following undirected network topologies are  the complete set of distinct unrooted simple level-2 network topologies with two and three leaves.
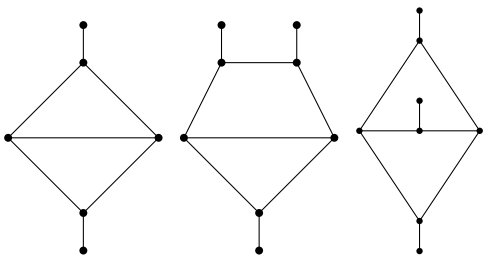
\begin{figure}[H]
    \centering
    \subfigure[]{
    \begin{tikzpicture}[scale=0.5]
       \fill[black] (0,0) circle (3pt) (4,0) circle (3pt) (2,2) circle (3pt)  (2,3) circle (3pt) (2,-2) circle (3pt) (2,-3) circle (3pt);
\draw (0,0) to node[midway,above] {} (2,2);
\draw (2,2) to node[midway,above] {} (4,0);
\draw (0,0) -- node[midway,below] {} (4,0);
\draw (0,0) to node[midway,above] {} (2,-2);
\draw (2,-2) to node[midway,above] {} (4,0);
\draw (2,2)--(2,3);
\draw (2,-2)--(2,-3);
    \end{tikzpicture}}
    \subfigure[]{
    \begin{tikzpicture}[scale=0.5]
       \fill[black] (0,0) circle (3pt) (4,0) circle (3pt) (1,2) circle (3pt)  (1,3) circle (3pt) (3,2) circle (3pt)  (3,3) circle (3pt) (2,-2) circle (3pt) (2,-3) circle (3pt) ;
\draw (0,0) to node[midway,above] {} (1,2);
\draw (1,2) to node[midway,above] {} (3,2);
\draw (3,2)--(4,0);
\draw (0,0) -- node[midway,below] {} (4,0);
\draw (0,0) to node[midway,above] {} (2,-2);
\draw (2,-2) to node[midway,above] {} (4,0);
\draw (1,2)--(1,3);
\draw (3,2)--(3,3);
\draw (2,-2)--(2,-3);
    \end{tikzpicture}}
     \subfigure[]{
    \begin{tikzpicture}[scale=0.4]
       \fill[black] (0,0) circle (3pt) (4,0) circle (3pt) (2,3) circle (3pt)  (2,4) circle (3pt) (2,-3) circle (3pt) (2,-4) circle (3pt) (2,0) circle (3pt)  (2,1) circle (3pt) ;
\draw (0,0) to node[midway,above] {} (2,3);
\draw (2,3) to node[midway,above] {} (4,0);
\draw (0,0) -- node[midway,below] {} (2,0);
\draw (2,0) to node[midway,above] {} (4,0);
\draw (0,0) to node[midway,above] {} (2,-3);
\draw (2,-3) to node[midway,above] {} (4,0);
\draw (2,3)--(2,4);
\draw (2,0)--(2,1);
\draw (2,-3)--(2,-4);
    \end{tikzpicture}}
    \caption{Example of unrooted simple level-2 network topologies with two and three leaves.}
\end{figure}
\end{example}

In the rest of the paper, we will only consider orientable networks due to their biological importance. Given two phylogenetic networks $N'$ and $N$ on $X$ where $N'$ is rooted and $N$ is unrooted, the network $N'$ is said to be an \textit{orientation}
 of $N$ if the network obtained from $N'$ by removing all the directions of the edges and suppressing its root is isomorphic to $N$ \cite{bouvel2020counting}. Additionally, if $N$ has at least one orientation, then $N$ is said to be \textit{orientable}.  In the rest of the paper, for simplicity, the word ‘unrooted orientable semi-directed network’ will be abbreviated to just ‘network’ .

We can obtain an $(n+1)$-leaf unrooted level-$k$ network $N'$ from an $n$-leaf rooted level-$k$ network $N$ by adding a vertex adjacent to the root of $N$ and then labeling it with an extra leaf label. \cite[Theorem 1]{gambette2012quartets} guarantees that $N'$ is also a level-$k$ network. Conversely, we can obtain a rooted level-$k$ network from an unrooted level-$k$ network  \cite{bouvel2020counting,janssen2018exploring} although the resulting rooted network needs not be unique. By definition, a level-$k$ semi-directed network is obtained from a rooted network. Alternatively, we can obtain a level-$k$ semi-directed network from a level-$k$ unrooted network. In order to do this, we must first place a root in a valid root location. Huber et al. \cite{huber2019rooting} provides an algorithm to find a valid root location. Then using this root, for each biconnected component, we can choose at 
most $k$ vertices to be the reticulation vertices and exactly $2k$ edges to be the reticulation edges, two for each reticulation vertex. Finally, we collapse the root vertex and undirect all tree edges.

\begin{example}\label{ex:unrooted-2-leaves}
Up to isomorphism and labeling the leaves, \Cref{fig:Two distinct $2$-leaf unrooted network} and \Cref{fig:Five distinct $3$-leaf network} present distinct unrooted orientable simple strict level-2 semi-directed networks with two and three leaves.
\begin{figure}[H]
    \centering
    \subfigure[]{
    \begin{tikzpicture}[scale=0.5]
       \fill[black] (0,0) node[circle,fill=black!3][above]{}  (4,0) circle (3pt) (2,2) circle (3pt)  (2,3) circle (3pt) (2,-2) circle (3pt) (2,-3) circle (3pt);
\draw [dashed,->,>=Triangle](0,0) to node[midway,above] {} (2,2);
\draw [dashed,<-,>=Triangle](2,2) to node[midway,above] {} (4,0);
\draw (0,0) -- node[midway,below] {} (4,0);
\draw [dashed,->,>=Triangle](0,0) to node[midway,above] {} (2,-2);
\draw [dashed,<-,>=Triangle](2,-2) to node[midway,above] {} (4,0);
\draw (2,2)--(2,3);
\draw (2,-2)--(2,-3);
    \end{tikzpicture}}
    \subfigure[]{
     \begin{tikzpicture}[scale=0.5]
       \fill[black] (0,0) circle (3pt) (4,0) circle (3pt)  (2,2) circle (3pt)  (2,3) circle (3pt) (2,-2) circle (3pt) (2,-3) circle (3pt);
\draw [dashed,->,>=Triangle](0,0) to node[midway,above] {} (2,2);
\draw [dashed,<-,>=Triangle](2,2) to node[midway,above] {} (4,0);
\draw [dashed,<-,>=Triangle](0,0) -- node[midway,below] {} (4,0);
\draw [dashed,<-,>=Triangle](0,0) to node[midway,above] {} (2,-2);
\draw (2,-2) to node[midway,above] {} (4,0);
\draw (2,2)--(2,3);
\draw (2,-2)--(2,-3);
    \end{tikzpicture}}
    \caption{Distinct $2$-leaf unrooted orientable simple strict level-2 semi-directed network topologies.}
    \label{fig:Two distinct $2$-leaf unrooted network}
\end{figure}
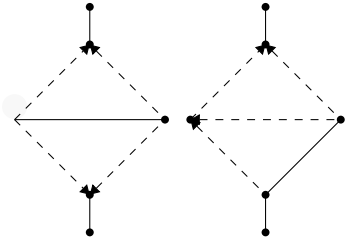
\end{example}

\begin{figure}[H]
    \centering
    \subfigure[]{
    \begin{tikzpicture}[scale=0.5]
       \fill[black] (0,0) circle (3pt) (4,0) circle (3pt) (1,2) circle (3pt)  (1,3) circle (3pt) (3,2) circle (3pt)  (3,3) circle (3pt) (2,-2) circle (3pt) (2,-3) circle (3pt) ;
\draw [dashed,->,>=Triangle](0,0) to node[midway,above] {} (1,2);
\draw [dashed,<-,>=Triangle](1,2) to node[midway,above] {} (3,2);
\draw (3,2)--(4,0);
\draw [dashed,->,>=Triangle](0,0) -- node[midway,below] {} (4,0);
\draw (0,0) to node[midway,above] {} (2,-2);
\draw [dashed,->,>=Triangle](2,-2) to node[midway,above] {} (4,0);
\draw (1,2)--(1,3);
\draw (3,2)--(3,3);
\draw (2,-2)--(2,-3);
    \end{tikzpicture}}
    \subfigure[]{
    \begin{tikzpicture}[scale=0.5]
       \fill[black] (0,0) circle (3pt) (4,0) circle (3pt) (1,2) circle (3pt)  (1,3) circle (3pt) (3,2) circle (3pt)  (3,3) circle (3pt) (2,-2) circle (3pt) (2,-3) circle (3pt) ;
\draw [dashed,->,>=Triangle](0,0) to node[midway,above] {} (1,2);
\draw [dashed,<-,>=Triangle](1,2) to node[midway,above] {} (3,2);
\draw (3,2)--(4,0);
\draw [dashed,<-,>=Triangle](0,0) -- node[midway,below] {} (4,0);
\draw [dashed,<-,>=Triangle](0,0) to node[midway,above] {} (2,-2);
\draw (2,-2) to node[midway,above] {} (4,0);
\draw (1,2)--(1,3);
\draw (3,2)--(3,3);
\draw (2,-2)--(2,-3);
    \end{tikzpicture}}
    \subfigure[]{
    \begin{tikzpicture}[scale=0.5]
       \fill[black] (0,0) circle (3pt) (4,0) circle (3pt) (1,2) circle (3pt)  (1,3) circle (3pt) (3,2) circle (3pt)  (3,3) circle (3pt) (2,-2) circle (3pt) (2,-3) circle (3pt) ;
\draw [dashed,->,>=Triangle](0,0) to node[midway,above] {} (1,2);
\draw [dashed,<-,>=Triangle](1,2) to node[midway,above] {} (3,2);
\draw (3,2)--(4,0);
\draw (0,0) -- node[midway,below] {} (4,0);
\draw [dashed,->,>=Triangle](0,0) to node[midway,above] {} (2,-2);
\draw [dashed,<-,>=Triangle](2,-2) to node[midway,above] {} (4,0);
\draw (1,2)--(1,3);
\draw (3,2)--(3,3);
\draw (2,-2)--(2,-3);
    \end{tikzpicture}}
    \subfigure[]{
    \begin{tikzpicture}[scale=0.4]
       \fill[black] (0,0) circle (3pt) (4,0) circle (3pt) (2,3) circle (3pt)  (2,4) circle (3pt) (2,-3) circle (3pt) (2,-4) circle (3pt) (2,0) circle (3pt)  (2,1) circle (3pt) ;
\draw [dashed,->,>=Triangle](0,0) to node[midway,above] {} (2,3);
\draw [dashed,<-,>=Triangle](2,3) to node[midway,above] {} (4,0);
\draw (0,0) -- node[midway,below] {} (2,0);
\draw (2,0) to node[midway,above] {} (4,0);
\draw [dashed,->,>=Triangle](0,0) to node[midway,above] {} (2,-3);
\draw [dashed,<-,>=Triangle](2,-3) to node[midway,above] {} (4,0);
\draw (2,3)--(2,4);
\draw (2,0)--(2,1);
\draw (2,-3)--(2,-4);
    \end{tikzpicture}}
    \subfigure[]{
    \begin{tikzpicture}[scale=0.4]
       \fill[black] (0,0) circle (3pt) (4,0) circle (3pt) (2,3) circle (3pt)  (2,4) circle (3pt) (2,-3) circle (3pt) (2,-4) circle (3pt) (2,0) circle (3pt)  (2,1) circle (3pt) ;
\draw [dashed,->,>=Triangle](0,0) to node[midway,above] {} (2,3);
\draw [dashed,<-,>=Triangle](2,3) to node[midway,above] {} (4,0);
\draw [dashed,<-,>=Triangle](0,0) -- node[midway,below] {} (2,0);
\draw (2,0) to node[midway,above] {} (4,0);
\draw [dashed,<-,>=Triangle](0,0) to node[midway,above] {} (2,-3);
\draw (2,-3) to node[midway,above] {} (4,0);
\draw (2,3)--(2,4);
\draw (2,0)--(2,1);
\draw (2,-3)--(2,-4);
    \end{tikzpicture}}
    \caption{Distinct $3$-leaf unrooted orientable simple strict level-2 semi-directed network topologies.}
    \label{fig:Five distinct $3$-leaf network}
\end{figure}
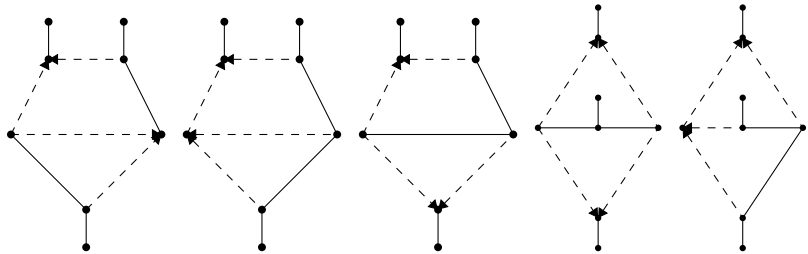

\section{Phylogenetic network model}\label{section:phylogenetic network model}

\subsection{Markov model on networks}\label{sec:networkwith2reticulationvertcies}

In \Cref{sec:networkwith2reticulationvertcies}, we will develop a mathematical model, which is called phylogenetic model, to obtain a probability distribution on $n$-tuples of DNA nucleotides from an $n$-leaf network. A phylogenetic model is a statistical model used to study the phylogeny of a collection of $n$ species based on DNA sequence at a single site. If one considers multiple sites, then it is usually assumed that every site evolves independently and identically.

We now introduce a Markov model on phylogenetic networks, which were previously studied in \cite{gross2018distinguishing,gross2020distinguishing}.  For $k\geq 1$, let $N$ be an $n$-leaf level-$k$ semi-directed network on $X$. For every edge $e$ of $N$, we associate an $l\times l$ transition matrix $M^e$, where $l$ is the cardinality of the state space $S$. In our case, $S=\{A,C,G,T\}$, which is the set of four distinct DNA nucleotides and so $l=4$.

Because $k\geq 1$, there is no unique undirected path connecting
each leaf and the possible valid root location. To build the model, we introduce reticulation edge parameters. Suppose that $N$ has $r\geq k$ reticulation vertices $v_1,\dots, v_r$. For each $v_i$, there are two
edges directed into $v_i$, let us say $e^0_i$ and $e^1_i$. We assign a parameter $\delta_i\in[0, 1]$ to $e^1_i$ and $\delta_i':=1-\delta_i$ to $e^0_i$. For $1\leq i\leq r$, we independently keep $e^1_i$ and delete $e^0_i$ with probability $\delta_i$. If not, we keep $e^0_i$ and delete $e^1_i$ with probability $\delta_i'$.  To each choice we make, we associate a binary vector $\sigma$ of length $r$ where
a zero in the $i$-th entry indicates that edge $e^0_i$ was deleted while a one in the $i$-th entry indicates that edge $e^1_i$ was deleted. Given a binary vector $\sigma$ of length $r$, after deleting $r$ edges corresponding to $\sigma$, one for each reticulation vertex, we obtain a tree $T_\sigma$. In our case,
there are $4^n$ possible observable site patterns at the leaves of $N$. If $(p_N)_\omega$ denotes the joint probability
of observing a specific site pattern $\omega=(g_1,\dots,g_n)$, where the DNA nucleotide $g_i$ is observed at the leaf $i$ of the network $N$, then this site pattern probability is given by
\begin{align*}
    (p_N)_\omega=\sum_{\sigma\in \{0,1\}^r}(\prod_{i=1}^r\delta_i^{1-\sigma_i}(1-\delta_i)^{\sigma_i})(p_{T_\sigma})_\omega,
\end{align*}
where $(p_{T_\sigma})_\omega$ denotes the joint probability of observing the site pattern $\omega$ in $T_\sigma$.

The preceding paragraph tells us that once the collection $\{M^e\}$ of transition matrices and the set of reticulation edge parameters $\{\delta_i\}$ are specified, we can easily compute the joint probability of observing a specific site pattern at the leaves. We shall call the entries of the Markov matrices and the reticulation edge parameters \textit{the numerical parameters} of the model.
Suppose that $\theta_N=S_N\times [0,1]^r$ is the set of numerical parameters associated with the network $N$ where $S_N$ and $[0,1]^r$ correspond to the entries of Markov matrices and the reticulation edge parameters, respectively. The \textit{network model associated with} $N$, denoted by $M_{N}$, is then defined as the image of the following polynomial map:
\begin{equation}\label{eq:p_n}
    \begin{split}
        \varphi_{N}:&\theta_N\rightarrow \Delta^{k^n-1}:=\{p\in \mathbb{R}^{k^n}:p\geq 0, \sum_{i=1}^{k^n}p_i=1\},\\
        &(\theta,\delta_i)\mapsto \mathbf{p}_N=((p_N)_\omega)_{\omega \in S^n}.\\
    \end{split}
\end{equation}
Here, $\Delta^m$ denotes the probability simplex in $\mathbb{R}^{m+1}$. In summary, the model $M_N$ contains all probability distributions obtained from $N$ by varying the numerical parameters in $\theta_N$.
\begin{example}\label{ex:level-1}
Let us consider the 4-leaf level-1 network $N$ presented in \Cref{fig:4leaf-2reticulation1} with two reticulation vertices, $w_1$ and $w_2$. There are  four possible binary vectors of length two: $\alpha=(0,0),\beta=(0,1),\gamma=(1,0),\mbox{ and }\delta=(1,1)$. Thus, we  obtain the following four unrooted trees in \Cref{fig:4embeddedtressin-N1}.
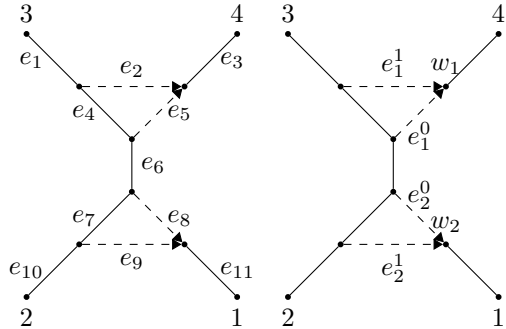
\begin{figure}[H]
    \centering
    \subfigure[]{\begin{tikzpicture}[scale=0.7]
  \node (a1) at (0,0)[circle , draw, fill=black!50, inner sep=0pt, very thick ]{} ;  
  \node (a2) at (-1,-1)[circle , draw, fill=black!50, inner sep=0pt, very thick ]{};
  \node [](a3) at (1,-1)[circle , draw, fill=black!50, inner sep=0pt, very thick ]{};  
  \node [label=below:2](a4) at (-2,-2)[circle , draw, fill=black!50, inner sep=0pt, very thick ]{};  
  \node [label=below:1](a5) at (2,-2) [circle , draw, fill=black!50, inner sep=0pt, very thick ]{}; 
  \node (a6) at (0,1)[circle , draw, fill=black!50, inner sep=0pt, very thick ]{};
  \node (a7) at (-1,2)[circle , draw, fill=black!50, inner sep=0pt, very thick ]{};
  \node [](a8) at (1,2) [circle , draw, fill=black!50, inner sep=0pt, very thick ]{};
  \node [label=above:3](a9) at (-2,3)[circle , draw, fill=black!50, inner sep=0pt, very thick ]{};
  \node [label=above:4](a10) at (2,3)[circle , draw, fill=black!50, inner sep=0pt, very thick ]{};

  \draw [dashed,->,>=Triangle](a1) -- (a3) node[midway, right]{$e_8$}; 
  \draw [dashed,->,>=Triangle](a2) -- (a3) node[midway, below]{$e_9$} ;
  \draw [dashed,->,>=Triangle](a6) -- (a8)node[midway, right]{$e_5$} ;  
  \draw [dashed,->,>=Triangle](a7) -- (a8) node[midway, above]{$e_2$};  
  \draw (a1)--(a2)node[midway,left]{$e_7$};
  \draw (a2)--(a4)node[midway,left]{$e_{10}$};
  \draw (a3)--(a5)node[midway,right]{$e_{11}$};
  \draw (a1)--(a6)node[midway,right]{$e_6$};
  \draw (a6)--(a7)node[midway,left]{$e_4$};
  \draw (a7)--(a9)node[midway,left]{$e_1$};
  \draw (a8)--(a10)node[midway,right]{$e_3$};
\end{tikzpicture}}
\subfigure[]{\begin{tikzpicture}[scale=0.7]
  \node (a1) at (0,0)[circle , draw, fill=black!50, inner sep=0pt, very thick ]{} ;  
  \node (a2) at (-1,-1)[circle , draw, fill=black!50, inner sep=0pt, very thick ]{};  
  \node [label=above:$w_2$](a3) at (1,-1)[circle , draw, fill=black!50, inner sep=0pt, very thick ]{};  
  \node [label=below:2](a4) at (-2,-2)[circle , draw, fill=black!50, inner sep=0pt, very thick ]{};  
  \node [label=below:1](a5) at (2,-2) [circle , draw, fill=black!50, inner sep=0pt, very thick ]{}; 
  \node (a6) at (0,1)[circle , draw, fill=black!50, inner sep=0pt, very thick ]{};
  \node (a7) at (-1,2)[circle , draw, fill=black!50, inner sep=0pt, very thick ]{};
  \node [label=above:$w_1$](a8) at (1,2) [circle , draw, fill=black!50, inner sep=0pt, very thick ]{};
  \node [label=above:3](a9) at (-2,3)[circle , draw, fill=black!50, inner sep=0pt, very thick ]{};
  \node [label=above:4](a10) at (2,3)[circle , draw, fill=black!50, inner sep=0pt, very thick ]{};

  \draw [dashed,->,>=Triangle](a1) -- (a3) node[midway, above]{$e^0_2$}; 
  \draw [dashed,->,>=Triangle](a2) -- (a3) node[midway, below]{$e^1_2$} ;
  \draw [dashed,->,>=Triangle](a6) -- (a8)node[midway, below]{$e^0_1$} ;  
  \draw [dashed,->,>=Triangle](a7) -- (a8) node[midway, above]{$e^1_1$};  
  \draw (a1)--(a2);
  \draw (a2)--(a4);
  \draw (a3)--(a5);
  \draw (a1)--(a6);
  \draw (a6)--(a7);
  \draw (a7)--(a9);
  \draw (a8)--(a10);
\end{tikzpicture}}
\caption{ The 4-leaf semi-directed phylogenetic networks $N$ with two reticulation vertices. The left figure displays the edge labeling, which will be used in deriving a parameterization of $M_N$.}
\label{fig:4leaf-2reticulation1}
\end{figure}

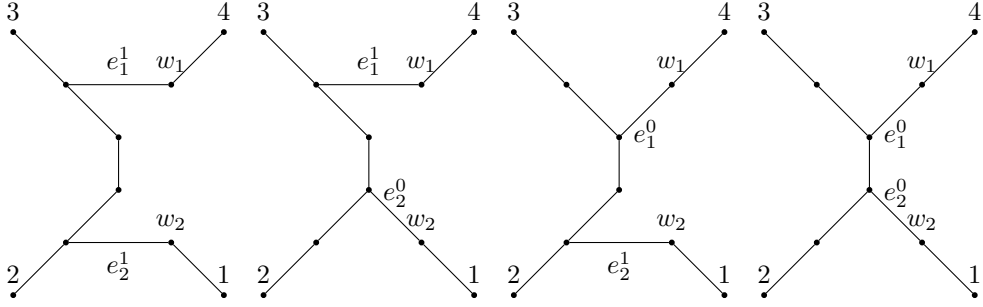
\begin{figure}[H]
    \centering
 \begin{tikzpicture}[scale=0.7]
  \node (a1) at (0,0)[circle , draw, fill=black!50, inner sep=0pt, very thick ]{} ;  
  \node (a2) at (-1,-1)[circle , draw, fill=black!50, inner sep=0pt, very thick ]{};  
  \node [label=above:$w_2$](a3) at (1,-1)[circle , draw, fill=black!50, inner sep=0pt, very thick ]{};  
  \node [label=above:2](a4) at (-2,-2)[circle , draw, fill=black!50, inner sep=0pt, very thick ]{};  
  \node [label=above:1](a5) at (2,-2) [circle , draw, fill=black!50, inner sep=0pt, very thick ]{}; 
  \node (a6) at (0,1)[circle , draw, fill=black!50, inner sep=0pt, very thick ]{};
  \node (a7) at (-1,2)[circle , draw, fill=black!50, inner sep=0pt, very thick ]{};
  \node [label=above:$w_1$](a8) at (1,2) [circle , draw, fill=black!50, inner sep=0pt, very thick ]{};
  \node [label=above:3](a9) at (-2,3)[circle , draw, fill=black!50, inner sep=0pt, very thick ]{};
  \node [label=above:4](a10) at (2,3)[circle , draw, fill=black!50, inner sep=0pt, very thick ]{};
 
  \draw (a2) -- (a3) node[midway, below]{$e^1_2$} ;
  \draw (a7) -- (a8) node[midway, above]{$e^1_1$};  
  \draw (a1)--(a2);
  \draw (a2)--(a4);
  \draw (a3)--(a5);
  \draw (a1)--(a6);
  \draw (a6)--(a7);
  \draw (a7)--(a9);
  \draw (a8)--(a10);
\end{tikzpicture}
\begin{tikzpicture}[scale=0.7]
  \node (a1) at (0,0)[circle , draw, fill=black!50, inner sep=0pt, very thick ]{} ;  
  \node (a2) at (-1,-1)[circle , draw, fill=black!50, inner sep=0pt, very thick ]{};  
  \node [label=above:$w_2$](a3) at (1,-1)[circle , draw, fill=black!50, inner sep=0pt, very thick ]{};  
  \node [label=above:2](a4) at (-2,-2)[circle , draw, fill=black!50, inner sep=0pt, very thick ]{};  
  \node [label=above:1](a5) at (2,-2) [circle , draw, fill=black!50, inner sep=0pt, very thick ]{}; 
  \node (a6) at (0,1)[circle , draw, fill=black!50, inner sep=0pt, very thick ]{};
  \node (a7) at (-1,2)[circle , draw, fill=black!50, inner sep=0pt, very thick ]{};
  \node [label=above:$w_1$](a8) at (1,2) [circle , draw, fill=black!50, inner sep=0pt, very thick ]{};
  \node [label=above:3](a9) at (-2,3)[circle , draw, fill=black!50, inner sep=0pt, very thick ]{};
  \node [label=above:4](a10) at (2,3)[circle , draw, fill=black!50, inner sep=0pt, very thick ]{};

  \draw (a1) -- (a3) node[midway, above]{$e^0_2$}; 
  \draw (a7) -- (a8) node[midway, above]{$e^1_1$};  
  \draw (a1)--(a2);
  \draw (a2)--(a4);
  \draw (a3)--(a5);
  \draw (a1)--(a6);
  \draw (a6)--(a7);
  \draw (a7)--(a9);
  \draw (a8)--(a10);
\end{tikzpicture}
\begin{tikzpicture}[scale=0.7]
  \node (a1) at (0,0)[circle , draw, fill=black!50, inner sep=0pt, very thick ]{} ;  
  \node (a2) at (-1,-1)[circle , draw, fill=black!50, inner sep=0pt, very thick ]{};  
  \node [label=above:$w_2$](a3) at (1,-1)[circle , draw, fill=black!50, inner sep=0pt, very thick ]{};  
  \node [label=above:2](a4) at (-2,-2)[circle , draw, fill=black!50, inner sep=0pt, very thick ]{};  
  \node [label=above:1](a5) at (2,-2) [circle , draw, fill=black!50, inner sep=0pt, very thick ]{}; 
  \node (a6) at (0,1)[circle , draw, fill=black!50, inner sep=0pt, very thick ]{};
  \node (a7) at (-1,2)[circle , draw, fill=black!50, inner sep=0pt, very thick ]{};
  \node [label=above:$w_1$](a8) at (1,2) [circle , draw, fill=black!50, inner sep=0pt, very thick ]{};
  \node [label=above:3](a9) at (-2,3)[circle , draw, fill=black!50, inner sep=0pt, very thick ]{};
  \node [label=above:4](a10) at (2,3)[circle , draw, fill=black!50, inner sep=0pt, very thick ]{};
 
  \draw (a2) -- (a3) node[midway, below]{$e^1_2$} ;
  \draw (a6) -- (a8)node[midway, below]{$e^0_1$} ;  
  \draw (a1)--(a2);
  \draw (a2)--(a4);
  \draw (a3)--(a5);
  \draw (a1)--(a6);
  \draw (a6)--(a7);
  \draw (a7)--(a9);
  \draw (a8)--(a10);
\end{tikzpicture}
\begin{tikzpicture}[scale=0.7]
  \node (a1) at (0,0)[circle , draw, fill=black!50, inner sep=0pt, very thick ]{} ;  
  \node (a2) at (-1,-1)[circle , draw, fill=black!50, inner sep=0pt, very thick ]{};  
  \node [label=above:$w_2$](a3) at (1,-1)[circle , draw, fill=black!50, inner sep=0pt, very thick ]{};  
  \node [label=above:2](a4) at (-2,-2)[circle , draw, fill=black!50, inner sep=0pt, very thick ]{};  
  \node [label=above:1](a5) at (2,-2) [circle , draw, fill=black!50, inner sep=0pt, very thick ]{}; 
  \node (a6) at (0,1)[circle , draw, fill=black!50, inner sep=0pt, very thick ]{};
  \node (a7) at (-1,2)[circle , draw, fill=black!50, inner sep=0pt, very thick ]{};
  \node [label=above:$w_1$](a8) at (1,2) [circle , draw, fill=black!50, inner sep=0pt, very thick ]{};
  \node [label=above:3](a9) at (-2,3)[circle , draw, fill=black!50, inner sep=0pt, very thick ]{};
  \node [label=above:4](a10) at (2,3)[circle , draw, fill=black!50, inner sep=0pt, very thick ]{};
 
  \draw (a1) -- (a3) node[midway, above]{$e^0_2$}; 
  \draw (a6) -- (a8)node[midway, below]{$e^0_1$} ;   
  \draw (a1)--(a2);
  \draw (a2)--(a4);
  \draw (a3)--(a5);
  \draw (a1)--(a6);
  \draw (a6)--(a7);
  \draw (a7)--(a9);
  \draw (a8)--(a10);
\end{tikzpicture}
\caption{The trees $T_\alpha,T_\beta,T_\delta\mbox{ and }T_\gamma$ from left to right.}
\label{fig:4embeddedtressin-N1}
\end{figure}
\end{example}

\subsection{Fourier transformation in phylogenetics}

 The well-studied nucleotide substitution models include the well-known Jukes-Cantor (JC), Kimura 2-parameter (K2P), and 3-parameter (K3P) models. These three models are some examples of a large class of phylogenetic models, called \textit{group-based models}. Given a finite additive abelian group $G$, in a group-based model with underlying group $G$, the entries of the transition matrix $M^{e}$ associated to the edge $e$ has to satisfy
$M^e_{g,h}=f^e(h-g)$
for some function $f^e:G\rightarrow \mathbb{R}.$ It is important to note that the group-based models considered in the literature are time reversible. For the JC, K2P, or K3P model, the underlying group is $\mathbb{Z}_2\times\mathbb{Z}_2$ and we use the following identification of the DNA nucleotides with the group elements: $A=(0,0),C=(0,1),G=(1,0),\mbox{ and } T=(1,1).$ Moreover, the transition matrices $M^e$ have the following forms, respectively: 
$$\begin{pmatrix}
a^e&b^e&b^e&b^e\\
b^e&a^e&b^e&b^e\\
b^e&b^e&a^e&b^e\\
b^e&b^e&b^e&a^e\\
\end{pmatrix},\begin{pmatrix}
a^e&b^e&c^e&c^e\\
b^e&a^e&c^e&c^e\\
c^e&c^e&a^e&b^e\\
c^e&c^e&b^e&a^e\\
\end{pmatrix},
\begin{pmatrix}
a^e&b^e&c^e&d^e\\
b^e&a^e&d^e&c^e\\
c^e&d^e&a^e&b^e\\
d^e&c^e&b^e&a^e\\
\end{pmatrix}.$$

An advantage of using group-based models is that group-based models allow us to apply \textit{the discrete Fourier transform}, which is a linear change of coordinates, to obtain a parameterization of the phylogenetic tree model \cite{EvansSpeed,szekely1993fourier,szekely1992fourier}. By applying this transformation on a phylogenetic tree, we will have a monomial parameterization of the group-based model, instead of a polynomial parameterization. Therefore, the corresponding algebraic varieties are toric \cite{sturmfels2005toric}.

The Fourier coordinates of a group-based model on $n$-leaf network will be denoted by $q_{g_1\dots g_n}$ where $g_i\in G$. Suppose that $G=\mathbb{Z}_2\times\mathbb{Z}_2$ and $T$ is an $n$-leaf tree.  Let $\Sigma(T)$ denote the set of splits induced by the edges of $T$. We associate a set of parameters $\{a^{A|B}_g\}_{g\in G}$ to every split $A|B\in \Sigma(T)$,. Then the Fourier coordinate $q_{g_1\dots g_n}$ is given by the following monomial parameterization:
\[ 
q_{g_1\dots g_n}= \left\{
\begin{array}{ll}
      \prod_{A|B\in \Sigma(T)}a^{A|B}_{\sum_{i\in A}g_i,}&\mbox{ if }\sum_{i=1}^ng_i=0 \\
      0,&\mbox{ otherwise. }
\end{array} 
\right. 
\]
Following \cite{EvansSpeed} and \cite{hendy1993spectral}, the parameters in the representation of the Fourier coordinates in the K3P model on a tree $T$ satisfy $a^e_A=1$ and $a^e_C,a^e_G,a^e_T\in(0,1]$ for each split $e\in \Sigma(T).$ Moreover, in the K2P model and the JC model, they should additionally satisfy $a^e_G=a^e_T$ and $a^e_C=a^e_G=a^e_T$, respectively,  for each split $e.$ We will ignore the condition $a^e_A=1$ as we want to homogenize the polynomial parameterization. The following example illustrates how to obtain the parameterization for the level-1 network in \Cref{ex:level-1}. This parameterization can be extended to level-$k$ network in similar manner. If the network has $r$ reticulation vertices, then each Fourier coordinate for the model will consist of $2^r$ distinct terms.

\begin{example}
Suppose that we label the edge of $N$ as displayed on the left in \Cref{fig:4leaf-2reticulation1}. We want to compute the Fourier coordinate of observing nucleotides T, G, C, and A at the leaves 1, 2, 3 and 4, respectively. We will denote by $a^i_g$, the parameter $a^{e_i}_g$ to simplify the notation. Then 
\begin{align*}
    q_{TGCA}=&\delta_1\delta_2(a^1_Ca^2_Aa^3_Aa^4_{A+C}a^6_{A+C}a^7_{A+C}a^9_{T}a^{10}_Ga^{11}_T)+\delta_1\delta_2'(a^1_Ca^2_Aa^3_Aa^4_{A+C}a^6_{A+C}a^7_{G}a^8_{T}a^{10}_Ga^{11}_T)\\
    &+\delta_1'\delta_2(a^1_Ca^3_Aa^4_{C}a^5_Aa^6_{A+C}a^7_{G+T}a^9_{T}a^{10}_Ga^{11}_T)+\delta_1'\delta_2'(a^1_Ca^3_Aa^4_{C}a^5_Aa^6_{A+C}a^7_{G}a^8_{T}a^{10}_Ga^{11}_T)\\
    =&\delta_1\delta_2(a^1_Ca^2_Aa^3_Aa^4_{C}a^6_{C}a^7_{C}a^9_{T}a^{10}_Ga^{11}_T)+\delta_1\delta_2'(a^1_Ca^2_Aa^3_Aa^4_{C}a^6_{C}a^7_{G}a^8_{T}a^{10}_Ga^{11}_T)\\
    &+\delta_1'\delta_2(a^1_Ca^3_Aa^4_{C}a^5_Aa^6_{C}a^7_{C}a^9_{T}a^{10}_Ga^{11}_T)+\delta_1'\delta_2'(a^1_Ca^3_Aa^4_{C}a^5_Aa^6_{C}a^7_{G}a^8_{T}a^{10}_Ga^{11}_T).\\
\end{align*}
In the above parameterization, the first, the second, the third and the fourth term correspond to the trees $T_\alpha,T_\beta,T_\delta\mbox{ and }T_\gamma$ in \Cref{fig:4embeddedtressin-N1}, respectively. We can further reparameterize this Fourier coordinate by replacing $\delta_1a^2_g$ with $a^2_g$, $\delta_1'a^5_g$ with $a^5_g$, $\delta_2a^9_g$ with $a^9_g$, and $\delta_2'a^8_g$ with $a^8_g$. Therefore, the parameterization of observing nucleotides T, G, C, and A at the leaves 1, 2, 3 and 4, respectively, can be written as:
\begin{align*}
    q_{TGCA}=&a^1_Ca^2_Aa^3_Aa^4_{C}a^6_{C}a^7_{C}a^9_{T}a^{10}_Ga^{11}_T+a^1_Ca^2_Aa^3_Aa^4_{C}a^6_{C}a^7_{G}a^8_{T}a^{10}_Ga^{11}_T\\
    &+a^1_Ca^3_Aa^4_{C}a^5_Aa^6_{C}a^7_{C}a^9_{T}a^{10}_Ga^{11}_T+a^1_Ca^3_Aa^4_{C}a^5_Aa^6_{C}a^7_{G}a^8_{T}a^{10}_Ga^{11}_T.\\
\end{align*}
\end{example}

\subsection{Generic identifiability and algebraic varieties associated with phylogenetic Markov models}
In this subsection, we discuss how to associate an algebraic variety to each network model described earlier. Moreover, the definition of identifiability of a model will be presented. Hence, we hope to employ an algebraic approach to solve the model identifiability problems. 

Given a network $N$, an element of $Im(\varphi_N)$ belongs to a probability simplex. Thus, all the entries of this element belong to the interval $[0,1]$ and they must sum to one.  Additionally, each coordinate function of $\varphi_N$ is a homogeneous polynomial of degree equal to the number of edges of the tree obtained after deleting exactly one of the two reticulation edges directed to each reticulation vertex. If we regard $\varphi_N$ as a complex polynomial map, then the \textit{algebraic variety $V_N$ associated with a network model} $M_N$ is defined as the Zariski closure of the image, i.e. $\overline{Im(\varphi_N)}$.

Associated with an algebraic variety $V$, we have the \textit{vanishing ideal}, denoted by $I_V$, which is the set $\{f\in \mathbb{C}[p]: f(v)=0,\mbox{ }\forall v\in V\}.$ In our case, this ideal lives in the polynomial ring $\mathbb{C}[p_{i_1i_2\dots i_n}:(i_1,\dots,i_n)\in \{A,C,G,T\}^n]$. If a joint probability distribution arises from the network model $M_N$, then its entries are constrained by polynomials belonging to the vanishing ideal of the variety $V_N$. These polynomials are known as \textit{phylogenetic invariants} associated with the phylogenetic network model $M_N$. The use of phylogenetic invariants is often referred as an algebraic approach to study the identifiability of phylogenetic models. This algebraic approach using invariants was first introduced in phylogenetics by Cavender and Felsenstein \cite{CavenderFelsenstein} and Lake \cite{lake1987rate} in the setting of phylogenetic tree. Interested readers can refer to
\cite{allman2003phylogenetic,allman2008phylogenetic,EvansSpeed,steel1995classifying,sturmfels2005toric,szekely1993fourier,szekely1992fourier} for a list of publications on phylogenetic invariants.

Identifiability of the model parameters is an important question to solve as it gives information on the sufficiency of our data to learn about the model parameters. In our case, the identifiability of phylogenetic network model ensures that our genetic data is sufficient to be used to recover the evolutionary history of species. Alternatively, given some genetic data, we can find a unique network topology and a set of numerical parameters that give rise to a probability distribution that agrees with our data. In what follows, we would like to study the identifiability of the network parameter of a network model. As we mentioned earlier in the introduction, for an algebraic model, it is favorable to slightly weaken the notion of (global) identifiability to generic identifiability as presented in the following definitions. Roughly speaking, given a specific network, the network parameter is generically identifiable if the probability distribution coming from a generic choice of numerical parameters could have only arisen from this network topology.

\begin{definition}[\cite{gross2018distinguishing}, Section 3]
Let $\{M_{N}\}_{N\in \mathcal{N}}$ be a class of network models.
\begin{enumerate}
    \item The network parameter is \textit{(globally) identifiable} with respect to the collection   $\{M_{N}\}_{N\in \mathcal{N}}$ if there is no probability distribution that belongs to both $M_{N_1}$ and  $M_{N_2}$ for two distinct networks $N_1,N_2\in \mathcal{N}$.
    
    \item The network parameter is \textit{generically identifiable} with respect to the collection $\{M_{N}\}_{N\in \mathcal{N}}$ if the equality $\varphi_N^{-1}(\varphi_N(\theta))=\{\theta\}$ holds for almost all $\theta\in \theta_N$ and $N\in \mathcal{N}$.
\end{enumerate}

\end{definition}
 
Given a class of networks $\mathcal{N}$, we would like to study the set of all associated varieties $\{V_N\}_{N\in \mathcal{N}}$. To do this, let us recall the notion of distinguishable networks that was introduced in \cite{gross2018distinguishing}.
 
\begin{definition}[\cite{gross2018distinguishing}, Section 3]
Two networks $N_1$ and $N_2$ are said to be \textit{distinguishable} if $V_{N_1}\cap V_{N_2}$ is a proper subvariety of $V_{N_1}$ and $V_{N_2}$. If not, then we call them \textit{indistinguishable}.
\end{definition}
\begin{proposition}[\cite{gross2018distinguishing}, Proposition 3.3] Let $\{M_{N}\}_{N\in \mathcal{N}}$ be a class of network models and $N_1,N_2\in \mathcal{N}$ be any two distinct $n$-leaf networks. If $N_1$ and $N_2$ are distinguishable, then the network parameter is generically identifiable with respect to $\{M_{N}\}_{N\in \mathcal{N}}$.
\end{proposition}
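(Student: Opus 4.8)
The plan is to establish the contrapositive-free implication directly by relating the geometry of the varieties to the fibers of the parameterization map. Suppose $N_1$ and $N_2$ are distinguishable, so that $V_{N_1}\cap V_{N_2}$ is a proper subvariety of each of $V_{N_1}$ and $V_{N_2}$. Fix one of the networks, say $N = N_1$; the argument for $N_2$ is identical. First I would recall that $V_N = \overline{\mathrm{Im}(\varphi_N)}$ is an irreducible variety, since it is the Zariski closure of the image of an irreducible parameter space $\theta_N$ under a polynomial map. Because $V_{N_1}\cap V_{N_2}$ is a \emph{proper} closed subvariety of the irreducible variety $V_{N_1}$, it has strictly smaller dimension, and in particular its preimage $\varphi_{N_1}^{-1}(V_{N_1}\cap V_{N_2})$ is a proper closed subset of $\theta_{N_1}$ (it cannot be all of $\theta_{N_1}$, else its image would be Zariski-dense in $V_{N_1}$, forcing $V_{N_1}\subseteq V_{N_1}\cap V_{N_2}$). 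Hence for a generic parameter $\theta\in\theta_{N_1}$, the point $\varphi_{N_1}(\theta)$ lies in $V_{N_1}\setminus V_{N_2}$, and a fortiori $\varphi_{N_1}(\theta)\notin\mathrm{Im}(\varphi_{N_2})\subseteq V_{N_2}$. The same holds with the roles reversed. This takes care of distinguishing $N_1$ from $N_2$ as sources of a generic distribution.

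Next I would invoke generic smoothness / the fiber-dimension theorem to handle the within-network part of generic identifiability, namely the claim $\varphi_N^{-1}(\varphi_N(\theta)) = \{\theta\}$ for almost all $\theta$. Strictly speaking this last equality is a statement about a \emph{single} network and does not follow from distinguishability alone; rather, the convention in this literature (following \cite{gross2018distinguishing}, Proposition 3.3) is that "the network parameter is generically identifiable" refers precisely to the network \emph{topology} being recoverable, i.e., for almost all $\theta\in\theta_N$ and all $N\in\mathcal{N}$, there is no other $N'\in\mathcal{N}$ with $\varphi_{N'}(\theta')=\varphi_N(\theta)$ for some $\theta'$. With that reading, the proof is exactly the dimension count above applied pairwise: since $\mathcal{N}$ need only be considered finite (for fixed leaf number $n$ there are finitely many network topologies in the classes under study), the set of $\theta\in\theta_N$ for which $\varphi_N(\theta)\in V_{N'}$ for some $N'\neq N$ distinguishable from $N$ is a finite union of proper closed subsets, hence a proper closed subset, hence of measure zero. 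Taking complements yields the desired "almost all $\theta$" conclusion.

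The main obstacle — really the only subtle point — is bookkeeping around the definition: one must be careful that "distinguishable" is the right hypothesis to feed into the chosen formulation of "generically identifiable," and that the passage from "proper subvariety" to "measure-zero preimage" uses irreducibility of $V_N$ in an essential way. I would state explicitly the lemma that for an irreducible variety $V=\overline{\varphi(\theta)}$ and a proper closed $W\subsetneq V$, the set $\varphi^{-1}(W)$ is a proper closed (hence Lebesgue-null, since $\theta$ is a full-dimensional semialgebraic set) subset of $\theta$, and prove it in one line via the density of $\varphi(\theta)$ in $V$. Everything else is a finite union argument over the class $\mathcal{N}$. No delicate algebra is needed; the content is entirely the translation between the variety-level statement (proper intersection) and the parameter-level statement (generic fibers over distinct topologies are empty).
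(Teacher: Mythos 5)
The paper does not actually prove this proposition---it is imported verbatim from \cite{gross2018distinguishing}---so there is no internal proof to compare against; your argument is correct and is essentially the standard one from that source. The key step is exactly right: since $V_{N_1}$ is irreducible and $\mathrm{Im}(\varphi_{N_1})$ is Zariski-dense in it, the preimage of the proper closed set $V_{N_1}\cap V_{N_2}$ must be a proper closed (hence measure-zero) subset of the full-dimensional semialgebraic parameter space, and a finite union over the finitely many $n$-leaf topologies in $\mathcal{N}$ finishes the argument. You are also right to flag the definitional wrinkle: the paper's literal definition of generic identifiability (singleton fibers of $\varphi_N$) concerns numerical parameters of a single network and cannot follow from distinguishability alone; the intended reading, as in the cited source, is generic identifiability of the network \emph{topology}, and that is precisely what your dimension/density argument establishes.
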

 The above proposition tells us that one should attempt to first check whether two networks $N_1$ and $N_2$  are distinguishable in order to be able to obtain the generic identifiability result. If $V_{N_1}\subseteq V_{N_2}$ or $V_{N_2}\subseteq V_{N_1}$, then $V_{N_1}\cap V_{N_2}$ would not be a proper subvariety of $V_{N_1}$ and of $V_{N_2}$. Thus, in this case, $N_1$ and $N_2$ are indistinguishable. In order to distinguish two networks, one needs to show that $V_{N_1}\not\subseteq V_{N_2}$ and $V_{N_2}\not\subseteq V_{N_1}$. Equivalently, one needs to find two phylogenetic invariants $f_1$ and $f_2$ satisfying $f_1\in I_{N_1}\setminus I_{N_2}$ and $f_2\in I_{N_2}\setminus I_{N_1}$.

\subsection{Nice phylogenetic networks} \label{subsection:nice phylogenetic networks}
As mentioned in \Cref{sec:networkwith2reticulationvertcies}, given an $n$-leaf strict level-$k$ network $N$ on $X$ with $r$ reticulation vertices, we can obtain a tree from $N$ by removing exactly $l$  out of $2l$ reticulation edges in each level-$l$ biconnected component ($l\leq k$), one for each pair of reticulation edges incident  to  each  reticulation vertex.  Let $B_r$ denote the set of all binary vectors of length $r$. If $\sigma\in B_r$, then $T_{\sigma}$ denotes the tree obtained by removing the reticulation edges specified by the entries of $\sigma$. It is understood that the leaf set of $T_{\sigma}$ contains $X$. However, this tree could contain degree two vertices and the leaf set of this tree could be strictly bigger than $X$. For instance, it can be seen from the networks $N_1$ and $N_2$ in \Cref{ex:level-k example}. In $N_1$, if we delete $e_1$ or $e_2$, then the vertex $a$ will have degree two. In $N_2$, if we delete edges $e_1$ and $e_4$, then in the resulting tree, the vertex $a$ will be a new leaf that is not contained in the set of leaves of $N_2$. 

We are interested in a network $N$ such that for any $\sigma\in B_r$, the leaf set of $T_{\sigma}$ is exactly the set $X$. A network with such property will be called \textit{nice}. Therefore, this class of networks will allow us to apply the discrete Fourier transformation to compute the phylogenetic invariants associated with the model. In \Cref{ex:level-k example}, we observe that level-2 networks might contain the following type of vertices which are problematic for the discrete Fourier transformation.

\begin{definition}
A \textit{funnel} is a non-leaf vertex $v$ of the following type:
\begin{enumerate}
    \item type $A$ if $v$ is a child of two reticulation vertices and also a parent of a reticulation vertex, or
    \item type $B$ if $v$ is a parent of two reticulation vertices.
\end{enumerate}
Two types of funnel vertices are illustrated in \Cref{fig:funnel}. Furthermore, A network is said to be \textit{funnel-free} if it does contain any funnel vertices.
\end{definition}
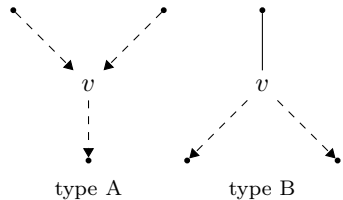
\begin{figure}[H]\label{fig:funnel vertices}
    \centering
    \subfigure[type A]{
    \begin{tikzpicture}[scale=0.5]
  \node (a1) at (0,0){$v$} ;  
  \node (a2) at (2,2)[circle , draw, fill=black!50, inner sep=0pt, very thick ]{} ; 
  \node (a3) at (-2,2)[circle , draw, fill=black!50, inner sep=0pt, very thick ]{} ; 
  \node (a4) at (0,-2)[circle , draw, fill=black!50, inner sep=0pt, very thick ]{} ; 
  \draw [dashed,->,>=Triangle](a2)--(a1);
  \draw [dashed,->,>=Triangle](a3)--(a1);
  \draw [dashed,->,>=Triangle](a1)--(a4);
\end{tikzpicture}}
  \subfigure[type B]{
  \begin{tikzpicture}[scale=0.5]
  \node (a1) at (0,0){$v$} ;  
  \node (a2) at (2,-2)[circle , draw, fill=black!50, inner sep=0pt, very thick ]{} ; 
  \node (a3) at (-2,-2)[circle , draw, fill=black!50, inner sep=0pt, very thick ]{} ; 
  \node (a4) at (0,2)[circle , draw, fill=black!50, inner sep=0pt, very thick ]{} ; 
  \draw [dashed,->,>=Triangle](a1)--(a2);
  \draw [dashed,->,>=Triangle](a1)--(a3);
  \draw (a1)--(a4);
\end{tikzpicture}}
\caption{Funnel vertices.}
    \label{fig:funnel}
\end{figure}

\begin{example}
 We will consider two networks $N_1$ and $N_2$ in \Cref{ex:level-k example}. The network $N_1$ is funnel-free while $N_2$ is not funnel-free because the vertex $a$ is a funnel of type A and the vertex $e$ is a funnel of type B. 
\end{example}\

Although here we introduce funnel-free and nice networks for semi-directed networks, in the literature, for the rooted (directed) phylogenetic network, there is a class of networks called \textit{tree-child} networks \cite[Definition 1]{cardona2008comparison}. In this network, every non-leaf vertex has a child that is either a tree vertex or a leaf. Tree-child networks were originally introduced to adapt the complex biological phenomena in a computationally controllable way \cite{cardona2008comparison}. Biologically, tree-child network can not accommodate every biological phenomena as the mathematical definition of a tree-child network requires that every non-extant species has at least one child that evolved only through mutation. Despite of this limitation, tree-child networks allow us to model many relevant biological scenarios \cite{cardona2019generation}. Additionally, those evolutionary histories that can not be modeled using tree-child networks can be approximated well with nice networks.

The following lemma suggests that being funnel-free semi-directed network is equivalent to being nice.
\begin{lemma}\label{lemma:funnel-free and nice}
\begin{enumerate}
    \item Let $N$ be any semi-directed network on $X$. Then $N$ is funnel-free if and only if $N$ is nice.
    \item Every level-1 semi-directed network is nice.
\end{enumerate}
 
\end{lemma}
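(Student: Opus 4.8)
The plan is to prove both parts by unwinding the definitions of \emph{nice} and \emph{funnel-free}, relating them through the description of what goes wrong when we delete reticulation edges.

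\textbf{Part (1): funnel-free $\iff$ nice.} Recall that $N$ is nice if for every $\sigma \in B_r$ the tree $T_\sigma$ obtained by deleting one reticulation edge per reticulation vertex has leaf set exactly $X$. The leaf set of $T_\sigma$ can exceed $X$ only if, after the deletion, some non-leaf vertex of $N$ ends up with out-degree zero (it becomes a new ``leaf'' not labeled by $X$); likewise the only other obstruction to niceness is really about new leaves appearing, since degree-two vertices are allowed and are simply suppressed. So the first step is to characterize, in terms of the local structure of $N$, which vertices can acquire out-degree zero under some choice $\sigma$. A non-leaf vertex $v$ has out-degree one or two in $N$. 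If $v$ has out-degree two, at least one outgoing edge survives any deletion unless \emph{both} outgoing edges are reticulation edges, i.e.\ $v$ is a parent of two reticulation vertices and we delete both of those edges at $v$ --- this is exactly a type $B$ funnel. If $v$ has out-degree one, its unique outgoing edge $e$ must be a reticulation edge (else it is never deleted), so $v$ is the parent of a reticulation vertex; and for $e$ to actually be deletable into a dead end we need $v$'s incoming edges to also be ``unstable'', meaning $v$ is a child of two reticulation vertices --- this is exactly a type $A$ funnel. Conversely, if $N$ contains a type $A$ or type $B$ funnel $v$, I would exhibit an explicit $\sigma$ (delete the two reticulation edges forcing $v$ into out-degree zero, and for type $A$ handle the incoming side so that $v$ genuinely survives as a new leaf rather than being killed from above) that makes $T_\sigma$ have a leaf outside $X$. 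Care is needed here: deleting an incoming reticulation edge to $v$ could disconnect $v$ from the rest, so I must argue that in a type $A$ funnel there is still a choice of $\sigma$ keeping $v$ attached but with no descendants among $X$; this is the one place where the precise bookkeeping of in-degrees and the acyclicity of $N$ matters.

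\textbf{Part (2): every level-1 semi-directed network is nice.} By Part (1) it suffices to show a level-1 semi-directed network is funnel-free. Both funnel types require a vertex incident (as parent or child) to \emph{two} reticulation vertices that lie in a configuration forcing them into the same biconnected component: a type $A$ funnel $v$ is a child of two reticulation vertices and parent of a third, and all the relevant reticulation edges sit in one biconnected component; a type $B$ funnel is the parent of two reticulation vertices joined through $v$. In a level-1 network each biconnected component contains at most one reticulation vertex, so no vertex can be simultaneously adjacent (in the required directed sense) to two reticulation vertices within a single biconnected component. I would make this rigorous by noting that the two reticulation edges into a reticulation vertex, together with the vertex, lie in a common biconnected component, and tracing that a funnel vertex would force two distinct reticulation vertices into one biconnected component, contradicting level-$1$.

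\textbf{Main obstacle.} The routine direction is ``funnel $\Rightarrow$ not nice''; the subtler direction is ``not nice $\Rightarrow$ funnel'', because I must show that \emph{any} $\sigma$ producing an extra leaf localizes the failure to one vertex that is genuinely a funnel of type $A$ or $B$ --- in particular ruling out that the extra leaf arises from some cascade of deletions rather than from a single bad local configuration. Handling the type $A$ case carefully (ensuring the offending vertex is not merely orphaned from above but really becomes a stranded sink) is where the argument needs the most attention; I expect to use the DAG structure and an induction or minimal-counterexample argument on the set of deleted edges.
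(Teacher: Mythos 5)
Your proposal follows essentially the same route as the paper: Part (1) is proved by the same local analysis of which vertices can acquire out-degree zero under some $\sigma$ (out-degree two with both outgoing edges reticulation edges gives type $B$; out-degree one with a reticulation child gives type $A$), and Part (2) is reduced via Part (1) to showing a funnel cannot occur in a level-1 network. Two small cautions. In Part (1), your worry about a type $A$ vertex being "disconnected from above" or about cascades is unfounded: exactly one of the two incoming reticulation edges of a reticulation vertex is deleted for any $\sigma$, so the vertex always retains in-degree one, and edge deletion is a one-shot, purely local operation, so the extra leaf always localizes to a single funnel. In Part (2), the step you leave as "tracing" is precisely where the paper does its real work: being adjacent (through a common vertex) to two reticulation vertices does \emph{not} by itself place those reticulation vertices in one biconnected component, since two blocks may meet at a cut vertex; one must exhibit either a single undirected cycle through both reticulation vertices, or two cycles (one through each) that share an \emph{edge}. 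The paper achieves the latter by taking least common ancestors of the relevant parents and observing that the two resulting cycles both contain the edge into the funnel vertex. Your sentence "no vertex can be simultaneously adjacent to two reticulation vertices within a single biconnected component" presupposes the containment you need to prove; with the explicit cycle construction supplied, your plan goes through and coincides with the paper's argument.
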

\begin{proof}
\begin{enumerate}
    \item Let $N$ be a funnel-free semi-directed network with $r$ reticulation vertices and $\sigma$ be any element of $B_r$. Suppose that the leaf set of the tree $T_\sigma$ is not the set $X$. Then there exists $y\notin X$ such that $y$ is a leaf of $T_\sigma$. It implies that $y$ is obtained from a type A or B funnel vertex, a contradiction. Conversely, let $N$ be a nice semi-directed network. If $N$ contains a type A funnel vertex $v\notin X$ such that there are there reticulation edges $e_1,e_2,$ and $e_3$ such that $e_1$ and $e_2$ are directed to $v$ and $e_3$ is directed away from $v$. If we remove $e_3$ and either $e_1$ or $e_2$, then in the resulting tree, the vertex $v$ is a leaf, a contradiction. Similarly, one can show that $N$ can not contain funnel vertex of type $B$.
    \item Let $N$ be a level-1 semi-directed network. If $N$ is not nice, then by the first part of the lemma, there exists a non-leaf vertex $u$ such that all its children are reticulation vertices.
    
    Now suppose that $u$ is a reticulation vertex.  Then $u$ has only one child $v$ and  $(u,v)$ is a reticulation edge. Let us assume that the parents of $u$ are $u_1$ and $u_2$ and let $v'$ be the other parent of $v.$ Let the vertex $x$ be the least common ancestor of $u_1$ and $u_2$ and the vertex $y$ be the least common ancestor of $u_2$ and $v'$. The paths $(x,\dots, u_1,u)$ and $(x,\dots,u_2,u)$ define a cycle. Similarly, the paths $(y,\dots, u_2,u,v)$ and $(y,\dots,v',v)$ also define a cycle. These two cycles share an edge $(u_2,u)$, contradicting the hypothesis that $N$ is a level-1 network. Thus, $u$ is a tree vertex. See \Cref{fig:level-1 is nice} (a).
    
    Since $u$ is a tree vertex, $u$ has two reticulation children, $v_1$ and $v_2$. Let $v_1'$ and $v_2'$ be the other parent of $v_1$ and $v_2$, respectively. Moreover, let $u'$ be the unique parent of $u$. If the vertex $x$ is least common ancestor of $u'$ and $v_1'$ and the vertex $y$ is least common ancestor of $u'$ and $v_2'$. The paths $(x,\dots, v_1',v_1)$ and $(x,\dots, u',u,v_1)$ define a cycle. Similarly, the paths $(y,\dots, v_2',v_2)$ and $(y,\dots, u',u,v_2)$ also define a cycle. These two cycles share an edge $(u',u)$, contradicting the hypothesis that $N$ is a level-1 network. See \Cref{fig:level-1 is nice} (b). \qedhere
    \end{enumerate}
\end{proof}

    \begin{figure}[H]
    \centering
    \subfigure[(a)]{
    \begin{tikzpicture}[scale=0.75]
       \fill[black] (0,0) circle (3pt) (2,0) circle (3pt) (4,0) circle (3pt) (1,-1) circle (3pt) (2,-2) circle (3pt) (1,1) circle (3pt) (3,1) circle (3pt);
       \node (a1) at (0,0)[left]{$u_1$} ; 
       \node (a1) at (2,0)[right]{$u_2$} ; 
       \node (a1) at (1,-1)[left]{$u$} ;
       \node (a1) at (2,-2)[below]{$v$} ; 
       \node (a1) at (4,0)[right]{$v'$} ; 
       \node (a1) at (1,1)[above]{$x$} ;
       \node (a1) at (3,1)[above]{$y$} ;
       \draw [dashed,->,>=Triangle](0,0)--(1,-1);
       \draw [dashed,->,>=Triangle](2,0)--(1,-1);
       \draw (1,1)--(0,0);
       \draw (1,1)--(2,0);
       \draw [dashed,->,>=Triangle](1,-1)--(2,-2);
       \draw [dashed,->,>=Triangle](4,0)--(2,-2);
       \draw (3,1)--(4,0);
       \draw (3,1)--(2,0);
       \draw (1,1)--(2,0);
    \end{tikzpicture}}
    \subfigure[(b)]{
    \begin{tikzpicture}[scale=0.75]
       \fill[black] (0,0) circle (3pt) (2,0) circle (3pt) (4,0) circle (3pt)  (2,1) circle (3pt) (1,2) circle (3pt) (3,2) circle (3pt) (1,-1) circle (3pt) (3,-1) circle (3pt);
       \node (a1) at (0,0)[left]{$v_1'$} ; 
       \node (a1) at (2,0)[right]{$u$} ; 
       \node (a1) at (4,0)[right]{$v_2'$} ;
       \node (a1) at (1,-1)[below]{$v_1$} ; 
       \node (a1) at (3,-1)[below]{$v_2$} ; 
       \node (a1) at (2,1)[right]{$u'$} ;
       \node (a1) at (1,2)[above]{$x$} ;
       \node (a1) at (3,2)[above]{$y$} ;
\draw [dashed,->,>=Triangle](0,0)--(1,-1);
\draw [dashed,->,>=Triangle](2,0) --(1,-1);
\draw [dashed,->,>=Triangle](2,0) --(3,-1);
\draw [dashed,->,>=Triangle](4,0)--(3,-1);
\draw (2,0)--(2,1);
\draw (0,0)--(1,2);
\draw (1,2)--(2,1);
\draw (2,1)--(3,2);
\draw (3,2)--(4,0);
    \end{tikzpicture}}
    \caption{In a level-1 semi-directed network,  a reticulation vertex can not have a reticulation child and a tree vertex can not have two reticulation children.}
    \label{fig:level-1 is nice}
    \end{figure}
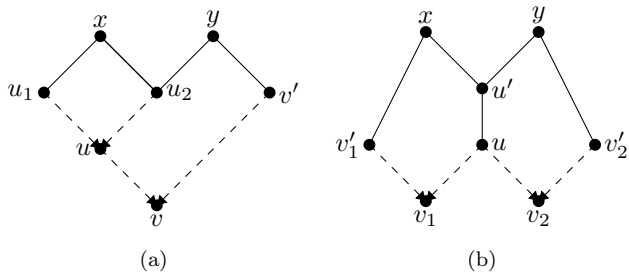

The second statement of \Cref{lemma:funnel-free and nice} allows us to apply the Fourier transformation method to simplify the joint probability distributions for level-1 semi-directed network as presented in \cite{gross2018distinguishing} and \cite{gross2020distinguishing}. This would not be the case for level-2 networks. Thus, we will limit our study to nice level-2 networks as we are interested to obtain phylogenetic invariants associated with a phylogenetic model using discrete Fourier transformation.

\section{Semisimple level-2 networks}\label{section:semisimple}

In this section, we introduce a class of level-2 networks that is more general compared to the class of simple level-2 networks. A network in this class is allowed to contain a tree cherry. A network in this class will be called semisimple. Let us first discuss what happens in level-1 networks. For level-1 networks, the definition of cycle networks was introduced in \cite{gross2018distinguishing}. This class of semisimple networks can be thought as a generalization of cycle networks for the case of level-2 networks. 

\begin{definition}[\cite{gross2018distinguishing}, Definition 2.2]\label{def:cycle network}
A \textit{cycle network} is a level-1 network with exactly one reticulation vertex. A $k$-\textit{cycle network} is a cycle network that contains an undirected cycle of length $k$. Finally, a \textit{cycle vertex} is an internal vertex that is contained in the unique cycle of a cycle network.
\end{definition}

By \Cref{def:cycle network}, the set of cycle vertices of an $n$-leaf $k$-cycle network $N$ induces a partition on the leaf set $[n]$ as follows. Suppose that the set of cycle vertices of $N$ is given by $\{v_1,\dots, v_k\}$.
 For each $v_i$, there is a tree $T_{v_i}$ attached to $v_i.$ Let $A_{v_i}$, or $A_i$ for short, be the set of leaf labels of $T_{v_i}.$ As a consequence, the cycle vertices induce a partition $A_1|A_2|\dots|A_k$ of $[n]$.

\begin{definition}
Let $A_1|A_2|A_3$ be a partition of $[n]$ into two or three parts such that $|A_1|\geq |A_2|\geq |A_3|$. An $n$-leaf simple level-2 network $N$ is said to be of \textit{type} $L_2^{n, (A_1|A_2|A_3)}$ if up to isomorphism, $N$ is obtained by adding $|A_1|,|A_2|,\mbox{ and }|A_3|$ leaf edges with set of labels given by $A_1,A_2,\mbox{ and } A_3$, respectively, to the top, middle and bottom edge, respectively, of the unrooted level-2 generator $L_2.$
\end{definition}
\begin{example}
The two middle networks in \Cref{fig:Examples of strict simple and semisimple level-2 networks} are of the same type but they have distinct underlying undirected topology.
\begin{figure}[H]
    \centering
    \subfigure[A simple network of type $L_2^{3,(\{1,3\}|\{2\})}$]{
    \begin{tikzpicture}[scale=0.7]
       \fill[black] (0,0) circle (3pt) (4,0) circle (3pt) (1,1) circle (3pt) (3,1) circle (3pt) 
       (2,-1) circle (3pt) (1,2) circle (3pt) (3,2) circle (3pt) (2,-2) circle (3pt) ;
       \node (a1) at (1,2)[above]{1} ; 
       \node (a1) at (3,2)[above]{3} ; 
       \node (a1) at (2,-2)[below]{2} ; 
\draw (0,0)--(4,0);
\draw (0,0)--(1,1);
\draw (1,1) --(3,1);
\draw (3,1)--(4,0);
\draw (1,1)--(1,2);
\draw (3,1)--(3,2);
\draw (2,-1)--(0,0);
\draw (2,-1)--(4,0);
\draw (2,-1)--(2,-2);
    \end{tikzpicture}}
    \qquad
    \subfigure[Two distinct simple networks of type $L_2^{6,(\{4,5,6\}|\{2,3\}|\{1\})}$]{
    \begin{tikzpicture}[scale=0.7]
       \fill[black] (0,0) circle (3pt) (4,0) circle (3pt) (2,2) circle (3pt) (2,3) circle (3pt) (1,0) circle (3pt) (3,0) circle (3pt) (1,-1) circle (3pt) (2,-1) circle (3pt) (3,-1) circle (3pt) (1,-2) circle (3pt) (2,-2) circle (3pt) (3,-2) circle (3pt) (2,3) circle (3pt) (1,0.5) circle (3pt) (3,0.5) circle (3pt) (1,-2) circle (3pt) (2,-2) circle (3pt) (3,-2) circle (3pt);
       \node (a1) at (2,3)[above]{1} ; 
       \node (a1) at (1,0.5)[right]{2} ; 
       \node (a1) at (3,0.5)[left]{3} ;
       \node (a1) at (1,-2)[below]{4} ; 
       \node (a1) at (2,-2)[below]{5} ; 
       \node (a1) at (3,-2)[below]{6} ;
\draw (0,0)--(2,2);
\draw (2,2) --(4,0);
\draw (0,0)--(1,0);
\draw (1,0)--(3,0);
\draw (3,0)--(4,0);
\draw (0,0)--(1,-1);
\draw (1,-1)--(2,-1);
\draw (2,-1)--(3,-1);
\draw(3,-1)--(4,0);
\draw (2,2)--(2,3);
\draw (1,0)--(1,0.5);
\draw (3,0)--(3,0.5);
\draw (1,-1)--(1,-2);
\draw (2,-1)--(2,-2);
\draw (3,-1)--(3,-2);
    \end{tikzpicture}
    \begin{tikzpicture}[scale=0.7]
       \fill[black] (0,0) circle (3pt) (4,0) circle (3pt) (2,2) circle (3pt) (2,3) circle (3pt) (1,0) circle (3pt) (3,0) circle (3pt) (1,-1) circle (3pt) (2,-1) circle (3pt) (3,-1) circle (3pt) (1,-2) circle (3pt) (2,-2) circle (3pt) (3,-2) circle (3pt) (2,3) circle (3pt) (1,0.5) circle (3pt) (3,0.5) circle (3pt) (1,-2) circle (3pt) (2,-2) circle (3pt) (3,-2) circle (3pt);
       \node (a1) at (2,3)[above]{1} ; 
       \node (a1) at (1,0.5)[right]{2} ; 
       \node (a1) at (3,0.5)[left]{3} ;
       \node (a1) at (1,-2)[below]{4} ; 
       \node (a1) at (2,-2)[below]{6} ; 
       \node (a1) at (3,-2)[below]{5} ;
\draw (0,0)--(2,2);
\draw (2,2) --(4,0);
\draw (0,0)--(1,0);
\draw (1,0)--(3,0);
\draw (3,0)--(4,0);
\draw (0,0)--(1,-1);
\draw (1,-1)--(2,-1);
\draw (2,-1)--(3,-1);
\draw(3,-1)--(4,0);
\draw (2,2)--(2,3);
\draw (1,0)--(1,0.5);
\draw (3,0)--(3,0.5);
\draw (1,-1)--(1,-2);
\draw (2,-1)--(2,-2);
\draw (3,-1)--(3,-2);
    \end{tikzpicture}}
    \qquad
    \subfigure[A semisimple network]{
    \begin{tikzpicture}[scale=0.7]
       \fill[black] (0,0) circle (3pt) (4,0) circle (3pt) (2,2) circle (3pt) (1,0) circle (3pt) (3,0) circle (3pt) (1,-1) circle (3pt) (2,-1) circle (3pt) (3,-1) circle (3pt) (1,-2) circle (3pt) (2,-2) circle (3pt) (3,-2) circle (3pt) (1,3) circle (3pt) (1,0.5) circle (3pt) (3,0.5) circle (3pt) (1,-2) circle (3pt) (2,-2) circle (3pt) (3,-2) circle (3pt) (3,3) circle (3pt) (2,2.5) circle (3pt);
       \node (a1) at (1,3)[above]{1} ; 
       \node (a1) at (1,0.5)[right]{2} ; 
       \node (a1) at (3,0.5)[left]{3} ;
       \node (a1) at (1,-2)[below]{4} ; 
       \node (a1) at (2,-2)[below]{6} ; 
       \node (a1) at (3,-2)[below]{5} ;
       \node (a1) at (3,3)[above]{7} ;
\draw (0,0)--(2,2);
\draw (2,2) --(4,0);
\draw (0,0)--(1,0);
\draw (1,0)--(3,0);
\draw (3,0)--(4,0);
\draw (0,0)--(1,-1);
\draw (1,-1)--(2,-1);
\draw (2,-1)--(3,-1);
\draw(3,-1)--(4,0);
\draw (2,2.5)--(1,3);
\draw (1,0)--(1,0.5);
\draw (3,0)--(3,0.5);
\draw (1,-1)--(1,-2);
\draw (2,-1)--(2,-2);
\draw (3,-1)--(3,-2);
\draw (2,2.5)--(3,3);
\draw (2,2.5)--(2,2);
\end{tikzpicture}}
    \caption{Examples of simple and semisimple undirected strict level-2 networks.}
    \label{fig:Examples of strict simple and semisimple level-2 networks}
    \end{figure}
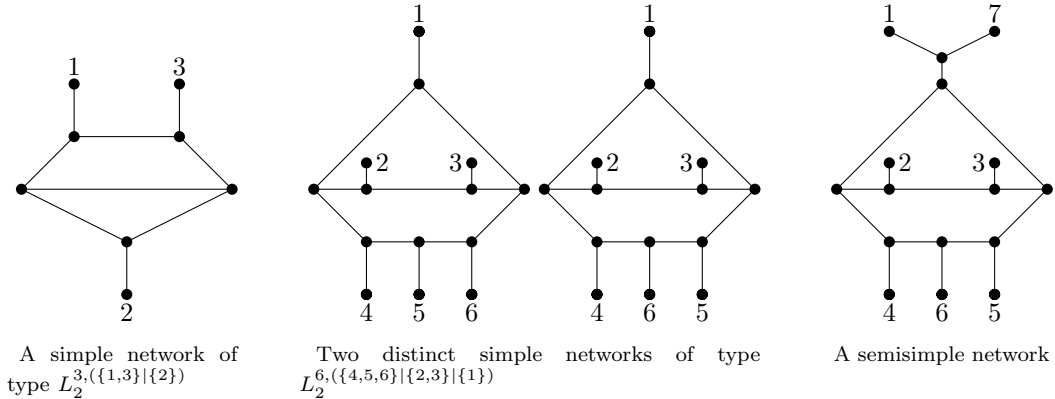
\end{example}
For level-1 networks, the set of $n$-leaf cycle networks is larger than the set of simple networks, which are also known as \textit{sunlet networks}.  Similarly, we would also like to generalize the notion of simple level-2 networks by defining the class of semisimple level-2 networks, which is given in the following definition. By this definition, every simple level-2 network is semisimple. But the converse is not true.

\begin{definition}\label{def:semisimple level-2}
A strict level-2 network with exactly two reticulation vertices is called \textit{semisimple level-2 network}.
\end{definition}

 One can construct a semisimple level-2 network using the following steps. We first start with a network $N$ of type $L_2^{n,(A_1|A_2|A_3)}$ where 
$$A_i=\{a_{i1},a_{i2},\dots,a_{ik_i}\}$$ 
where $a_{ij}\in[n]$ and $k_i\geq 1.$ Suppose that we are given the following set of graphs: 
\begin{align*}
    T=&\{T_{ij}: \mbox{ for }1\leq i\leq 3,1\leq j \leq k_i, T_{ij} \mbox{ is an unrooted binary tree or an empty graph such that for two }\\
    & \mbox{ distinct }T_{\alpha}\mbox{ and }T_{\beta},\mbox{ the set of leaf labels of }T_{\alpha}\mbox{ and }T_{\beta}\mbox{ are disjoint}\}.\\
\end{align*}

We may assume that the union of the set of leaf labels of all $T_{ij}\in T$ is equal to $[m]$ for some $m\in \mathbb{N}$. For each $1\leq i\leq 3$ and $1\leq j\leq  k_i$, we associate to the leaf labeled by $a_{ij}$ the graph $T_{ij}\in T$. If $T_{ij}$ is an unrooted binary tree, then we assume that $T_{ij}$ has a distinguished leaf vertex labeled by $b_{ij}.$ Let $a'_{ij}$ and $b'_{ij}$ be the unique parent of the leaf labeled by $a_{ij}$ and $b_{ij}$, respectively. We then identify the leaf edge adjacent to the leaf vertex $a_{ij}$ and $b_{ij}$ such that the leaf vertex labeled by $a_{ij}$ is identified with $b'_{ij}$ while the leaf vertex labeled by $b_{ij}$ is identified with $a'_{ij}.$ If $T_{ij}$ is an empty set, we do nothing to the leaf labeled by $a_{ij}$. We repeat these identification steps for each $a_{ij}$. At the end of the process, we will obtain an $m$-leaf semisimple strict level-2 network where $$m=\sum_{T_{ij}\neq \emptyset} (L(T_{ij})-1)+|\{T_{ij}|T_{ij}=\emptyset, 1\leq i\leq 3,1\leq j\leq k_i\}|$$
and $L(T_{ij})$ denotes the number of leaves in $T_{ij}.$ If all $T_{ij}$'s are empty, then after these edge identifications, the starting simple network remains unchanged. An illustration of this procedure is displayed in \Cref{semisimple construction}.

 For each $1\leq i\leq 3, 1\leq j\leq k_i$, the leaf labels of $T_{ij}\setminus\{b_{ij}\}$ if $T_{ij}$ is a tree and the singleton $\{a_{ij}\}$ if $T_{ij}$ is empty, are called \textit{branches}. The set of branches $\{A_{11},\dots,A_{1k_1},A_{21},\dots,A_{2k_2},A_{31},\dots,A_{3k_3}\}$ of an $m$-leaf semisimple level-2 network $N$ induces a partition of $[m]$. For instance, the rightmost network in \Cref{fig:Examples of strict simple and semisimple level-2 networks} induces the partition $\{\{1,7\},\{2\},\{3\},\{4\},\{5\},\{6\}\}$ of $[7].$ Suppose that we have a semisimple nice level-2 network $N$ which is given by the partition $P=\{A_1,\dots, A_t\}$ induces by the branches $N.$ For $1\leq i\leq t$, let $e_i=(v_i,w_i)$ be the cut-edge corresponding to the branch $A_i$ such that $v_i$ belongs to the unique strict level-2 biconnected component of $N$. A branch $A_i$ is said to be a \textit{reticulation branch} if $v_i$ is a reticulation vertex. Therefore, the number of reticulation branches of the network $N$ is either 1 or 2, due to \Cref{lemma:r(N)}. 
 
 We will use the same terminologies for cycle networks as well. Namely, suppose that the set of cycle vertices of an $n$-leaf $k$-cycle network $N$ induces a partition $A_1|\dots|A_k$ of the leaf set $[n]$. Then the set of branches of $N$ is given by $\{A_1,\dots, A_k\}$ and a branch $A_i$ is said to be a reticulation branch if the corresponding cycle vertex is a reticulation vertex of $N$.

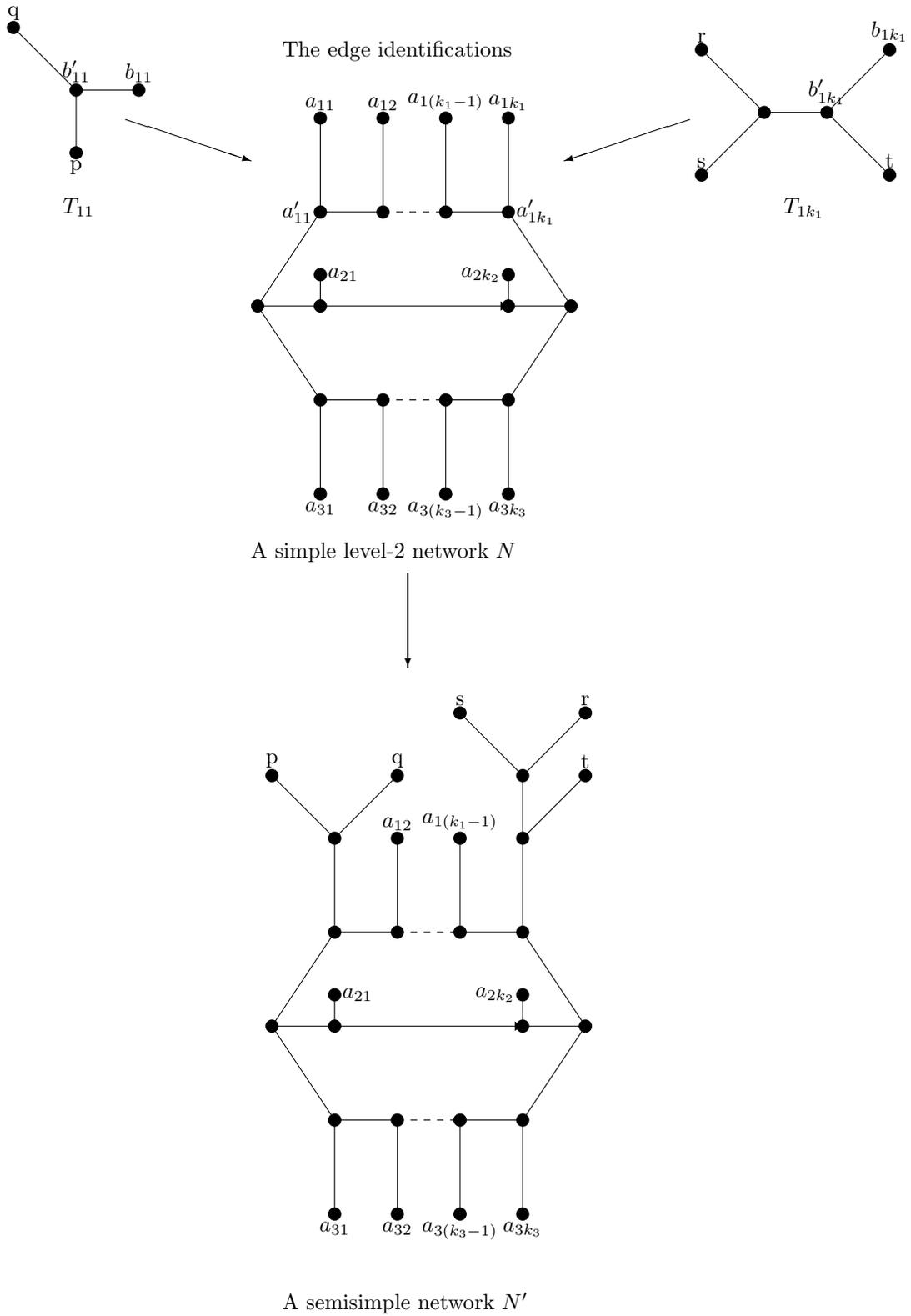
\begin{figure}
\setlength{\unitlength}{1 cm}    
\begin{picture}(15,20)
\put(1,18){\begin{tikzpicture}[scale=1]
       \fill[black] (0,0) circle (3pt) (1,0) circle (3pt) (0,-1) circle (3pt) (-1,1) circle (3pt);
        \node (a1) at (1,0)[above]{$b_{11}$} ; 
       \node (a1) at (0,-1)[below]{p} ; 
       \node (a1) at (-1,1)[above]{q} ;
       \node (a1) at (0,0)[above]{$b'_{11}$} ;
      
\draw (0,0)--(1,0);
\draw (0,0)--(0,-1);
\draw (0,0)--(-1,1);

    \end{tikzpicture}}
\put(5,12.5){\begin{tikzpicture}[scale=1]
       \fill[black] (0,0) circle (3pt) (1,1.5) circle (3pt) (2,1.5) circle (3pt)  (3,1.5) circle (3pt) (4,1.5) circle (3pt) (4,3) circle (3pt) (5,0) circle (3pt) (1,-1.5) circle (3pt) (2,-1.5) circle (3pt)  (3 ,-1.5) circle (3pt) (4,-1.5) circle (3pt)(1,3) circle (3pt) (2,3) circle (3pt)  (3,3) circle (3pt) (1,-3) circle (3pt) (2,-3) circle (3pt)  (3 ,-3) circle (3pt) (4,-3) circle (3pt)  (1,0) circle (3pt) (4,0) circle (3pt) (1,0.5) circle (3pt) (4,0.5) circle (3pt) ;
        \node (a1) at (1,3)[above]{$a_{11}$} ; 
       \node (a1) at (2,3)[above]{$a_{12}$} ; 
       \node (a1) at (3,3)[above]{$a_{1(k_1-1)}$} ;
       \node (a1) at (4,3)[above]{$a_{1k_1}$} ;
       \node (a1) at (1,-3)[below]{$a_{31}$} ; 
       \node (a1) at (2,-3)[below]{$a_{32}$} ; 
       \node (a1) at (3,-3)[below]{$a_{3(k_3-1)}$} ;
       \node (a1) at (4,-3)[below]{$a_{3k_3}$} ;
        \node (a1) at (1,0.5)[right]{$a_{21}$} ;
       \node (a1) at (4,0.5)[left]{$a_{2k_2}$} ;
       \node (a1) at (1,1.5)[left]{$a'_{11}$} ;
       \node (a1) at (4,1.5)[right]{$a'_{1k_1}$} ;
\draw (0,0)--(1,0);
\draw (4,0)--(5,0);
\draw (0,0)--(1,1.5);
\draw (1,1.5) --(2,1.5);
\draw [dashed](2,1.5) --(3,1.5);
\draw (3,1.5)--(4,1.5);
\draw (4,1.5)--(5,0);
\draw (0,0)--(1,-1.5);
\draw (1,-1.5)--(2,-1.5);
\draw [dashed](2,-1.5)--(3,-1.5);
\draw (3,-1.5)--(4,-1.5);
\draw (4,-1.5)--(5,0);
\draw (1,1.5)--(1,3);
\draw (2,1.5)--(2,3);
\draw (3,1.5)--(3,3);
\draw (4,1.5)--(4,3);
\draw (1,-1.5)--(1,-3);
\draw (2,-1.5)--(2,-3);
\draw (3,-1.5)--(3,-3);
\draw (4,-1.5)--(4,-3);
\draw (1,0)--(1,0.5);
\draw [->,>=Triangle](1,0)--(4,0);
\draw (4,0)--(4,0.5);
    \end{tikzpicture}}
\put(12,18){\begin{tikzpicture}[scale=1]
       \fill[black] (0,0) circle (3pt) (1,0) circle (3pt) (2,1) circle (3pt) (2,-1) circle (3pt) (-1,1) circle (3pt) (-1,-1) circle (3pt);
        \node (a1) at (2,1)[above]{$b_{1k_1}$} ; 
       \node (a1) at (1,0)[above]{$b'_{1k_1}$} ;
         \node (a1) at (-1,1)[above]{r} ; 
       \node (a1) at (-1,-1)[above]{s} ;
       \node (a1) at (2,-1)[above]{t} ;
      
\draw (0,0)--(1,0);
\draw (1,0)--(2,1);
\draw (1,0)--(2,-1);
\draw (0,0)--(-1,1);
\draw (0,0)--(-1,-1);

    \end{tikzpicture}}
\put(3,19){\vector(3,-1){2}}
\put(12,19){\vector(-3,-1){2}}
\put(5.5,20){The edge identifications}
\put(2,17.5){$T_{11}$}
\put(13.5,17.5){$T_{1k_1}$}
\put(5,12){A simple level-2 network $N$}
\put(7.5,11.75){\vector(0,-1){1.5}}
\put(5,1){ \begin{tikzpicture}[scale=1]
       \fill[black] (0,0) circle (3pt) (1,1.5) circle (3pt) (2,1.5) circle (3pt)  (3,1.5) circle (3pt) (4,1.5) circle (3pt) (4,3) circle (3pt) (5,0) circle (3pt) (1,-1.5) circle (3pt) (2,-1.5) circle (3pt)  (3 ,-1.5) circle (3pt) (4,-1.5) circle (3pt)(1,3) circle (3pt) (2,3) circle (3pt)  (3,3) circle (3pt) (1,-3) circle (3pt) (2,-3) circle (3pt)  (3 ,-3) circle (3pt) (4,-3) circle (3pt)  (1,0) circle (3pt) (4,0) circle (3pt) (1,0.5) circle (3pt) (4,0.5) circle (3pt)(0,4) circle (3pt) (2,4) circle (3pt) (5,4) circle (3pt) (4,4) circle (3pt) (5,5) circle (3pt) (3,5) circle (3pt) ;
       \node (a1) at (2,3)[above]{$a_{12}$} ; 
       \node (a1) at (3,3)[above]{$a_{1(k_1-1)}$} ;
       \node (a1) at (1,-3)[below]{$a_{31}$} ; 
       \node (a1) at (2,-3)[below]{$a_{32}$} ; 
       \node (a1) at (3,-3)[below]{$a_{3(k_3-1)}$} ;
       \node (a1) at (4,-3)[below]{$a_{3k_3}$} ;
        \node (a1) at (1,0.5)[right]{$a_{21}$} ;
       \node (a1) at (4,0.5)[left]{$a_{2k_2}$} ;
        \node (a1) at (0,4)[above]{p};
         \node (a1) at (2,4)[above]{q};
          \node (a1) at (5,4)[above]{t};
           \node (a1) at (5,5)[above]{r};
            \node (a1) at (3,5)[above]{s};
\draw (0,0)--(1,0);
\draw (4,0)--(5,0);
\draw (0,0)--(1,1.5);
\draw (1,1.5) --(2,1.5);
\draw [dashed](2,1.5) --(3,1.5);
\draw (3,1.5)--(4,1.5);
\draw (4,1.5)--(5,0);
\draw (0,0)--(1,-1.5);
\draw (1,-1.5)--(2,-1.5);
\draw [dashed](2,-1.5)--(3,-1.5);
\draw (3,-1.5)--(4,-1.5);
\draw (4,-1.5)--(5,0);
\draw (1,1.5)--(1,3);
\draw (2,1.5)--(2,3);
\draw (3,1.5)--(3,3);
\draw (4,1.5)--(4,3);
\draw (1,-1.5)--(1,-3);
\draw (2,-1.5)--(2,-3);
\draw (3,-1.5)--(3,-3);
\draw (4,-1.5)--(4,-3);
\draw (1,0)--(1,0.5);
\draw [->,>=Triangle](1,0)--(4,0);
\draw (4,0)--(4,0.5);
\draw (1,3)--(0,4);
\draw (1,3)--(2,4);
\draw (4,3)--(5,4);
\draw (4,3)--(4,4);
\draw (4,4)--(5,5);
\draw (4,4)--(3,5);
    \end{tikzpicture}}
\put(5.5,0){A semisimple network $N'$}
\end{picture}
\caption{An illustration for the identification process given a simple level-2 network $N$ and two trees $T_{11}$ and $T_{1k_1}$. The resulting network after edge identifications is the semisimple level-2 network $N'$, which has three more leaves.}
\label{semisimple construction}
   \end{figure}  
  The following lemma provides some properties of a semisimple level-2 network. Let us recall that it is assumed that $|X|\geq 2$ and the networks in our consideration are orientable.

\begin{lemma}\label{prop:A_icontains reticulation leaf}
Let $N$ be a simple nice strict level-2 network. Then $N$ contains a reticulation leaf. 
\end{lemma}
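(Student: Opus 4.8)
The plan is to reduce the claim to a finite inspection of the four rooted level-2 generators. First I would pass from the semi-directed network $N$ to a rooted network $\hat N$ from which $N$ arises (such an $\hat N$ exists by the very definition of a semi-directed network). Because $N$ is simple, its underlying undirected graph has no non-trivial cut-edge \cite[Lemma 6]{van2009constructing}, so the biconnected component of $\hat N$ that contains its two reticulation vertices is, by \Cref{def:simplenetwork}, obtained from one of the four rooted level-2 generators of types $8a$, $8b$, $8c$, $8d$ by subdividing its edges---one of the inserted degree-two vertices possibly being the root---and then attaching a pendant leaf at each subdivision vertex and at each vertex of in-degree two and out-degree zero. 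Since a leaf of $\hat N$ adjacent to a reticulation vertex yields a reticulation leaf of $N$ after collapsing the root and undirecting the tree edges, it suffices to produce such a leaf in $\hat N$.

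The key observation, which I would verify by reading off the four generators directly, is that in each of $8a$, $8b$, $8c$, and $8d$ the reticulation vertex labelled $C$ has in-degree two and out-degree zero; equivalently, every rooted level-2 generator possesses a reticulation vertex that is a sink. By \Cref{def:simplenetwork} the simple-network construction then attaches a pendant leaf $\ell$ to this vertex $C$. Subdividing the generator's edges only inserts new vertices into the interiors of paths, so it does not change the in- or out-degree of $C$; hence in $\hat N$ the vertex $C$ has in-degree two and out-degree one, i.e. it is a reticulation vertex whose unique child is $\ell$. Therefore $\ell$ is a reticulation leaf of $\hat N$, and the corresponding leaf of $N$ is a reticulation leaf, as desired.

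I expect the only real friction to be the bookkeeping around rooting: one has to confirm that placing a root on an edge of $N$ really does leave the relevant biconnected component equal to a subdivided level-2 generator (the root just contributes one extra degree-two vertex on one of the generator's edges), and that ``being a reticulation leaf'' is preserved when we pass between $\hat N$ and its semi-directed quotient. Once these routine points are in place, the argument is simply the finite check above on the four generators; in particular, the hypothesis that $N$ is nice is not needed for this lemma.
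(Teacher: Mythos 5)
Your proof is correct in its conclusion but takes a genuinely different route from the paper. The paper works inside an arbitrary orientation of $N$: it picks a reticulation vertex $v$ at maximum distance from the root, uses niceness to exclude funnels among $v$'s parents and their children, and uses simplicity (no tree cherries) to force the set of vertices reachable from $v$ to be a single leaf. You instead reduce to the classification of the four rooted level-2 generators and observe that each of $8a$--$8d$ contains a sink reticulation vertex $C$, to which \Cref{def:simplenetwork} attaches a pendant leaf; that finite check is accurate (in $8a$ and $8d$ the second reticulation vertex has out-degree one and becomes a purely interior vertex of $N$, but $C$ always produces a reticulation leaf). Your remark that niceness is superfluous is also right --- the non-nice $2$- and $3$-leaf simple strict level-2 networks of \Cref{ex:unrooted-2-leaves} all have reticulation leaves --- whereas the paper's argument genuinely needs niceness to control the parents of the deepest reticulation vertex. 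What your approach buys is a short, checkable case analysis and a stronger statement; what the paper's buys is independence from the generator classification and a template that also drives the later restriction lemmas.

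The one step that is more than ``routine bookkeeping'' is the reduction itself: you must justify that \emph{every} orientation $\hat N$ of the semi-directed $N$ has, as its nontrivial biconnected component, a subdivided rooted generator. Your parenthetical that the root may be ``one of the inserted degree-two vertices'' is not quite right: a source cannot lie in the interior of a single directed generator edge (the two halves of that edge would have to point away from it, so it cannot subdivide an edge of an already-oriented generator); rather, different admissible root placements on the underlying theta graph realize \emph{different} generators from the list of four, with the root always appearing as the distinguished vertex $E$. Since the four-generator classification is a cited result, this is a presentational slip rather than a mathematical gap, but it should be stated as an appeal to that classification rather than as subdivision of a fixed generator. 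Alternatively, the classification can be bypassed: if both purely interior vertices $u,v$ of $N$ were reticulation vertices, distributing the four reticulation edge-ends over the three $u$--$v$ paths forces either a second source in the interior of a path (impossible, since every internal path vertex carries a pendant leaf) or a directed cycle through $u$ and $v$; hence some reticulation vertex is a path-internal vertex and its pendant leaf is a reticulation leaf.
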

\begin{proof}
 The network $N$ has a reticulation vertex as it is a strict level-2 network. Since $N$ is orientable, then it has at least one orientation. Let $N'$ be an orientation of $N$ with a root $\rho$. Suppose that in $N'$, the vertex $v$ is a reticulation vertex with maximum distance to $\rho$. Let $p$ be the longest path from $\rho$ to $v$. Let $v_1$ and $v_2$ be the parents of $v$. If for some $i\in\{1,2\}$, the vertex $v_i$ is a reticulation vertex, then $v_i$ is a funnel vertex of type A, contradicting our hypothesis that $N$ is nice. Therefore, $u_1$ and $u_2$ are both tree vertices.
 
 The path $p$ passes through either $v_1$ or $v_2$. Without loss of generality, we may assume that $p$ passes through $v_1$. Let $u_1$ be the child of $v_1$ that is not $v$. The vertex $u_1$ is either a tree vertex or a leaf. Otherwise, $v_1$ is a funnel vertex of type B, a contradiction. By our construction, no reticulation vertices can be reached by a directed path from $u_1$ or $v$ as $v$ is a reticulation vertex with maximum distance to the root $\rho$. If there are at least two leaves which are reachable from $v$, then $N$ contains a tree cherry. This statement contradicts our hypothesis as $N$ is simple and hence does not contain a tree cherry. It implies that the only vertex that can be reached from $v$ is a single leaf $x$. Thus, $x$ is a reticulation leaf in $N'$ and hence in $N$ after collapsing the root $\rho$ and contracting all degree two vertices. 
\end{proof}
  
\begin{lemma}\label{lemma:different A_i}
 Let $N$ be a semisimple nice strict level-2 network obtained from a simple network of type $L_2^{n,(A_1|A_2|A_3)}$ by the identification process mentioned earlier. Suppose that $N$ contains two reticulation branches. Then the two reticulation branches correspond to two different $A_i$'s.
\end{lemma}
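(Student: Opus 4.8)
The plan is to argue by contradiction using the acyclicity of an orientation of $N$. Suppose both reticulation branches belong to the same part, say $A_1$. Writing the generator $L_2$ as two hubs $A,B$ joined by three parallel edges, let $P_1$ be the path of $N$ subdividing the top edge (to which the labels in $A_1$ were attached), so the two reticulation branches are attached at two interior vertices $u,w$ of $P_1$. Since $N$ is a strict level-2 network it has exactly two reticulation vertices (\Cref{def:semisimple level-2}), and since it has two reticulation branches these two reticulation vertices must be $u$ and $w$; hence the hubs $A,B$ are tree vertices and every interior vertex of $P_2,P_3$ is a tree vertex. Fix an orientation $N'$ of $N$ with root $\rho$; in $N'$ every vertex is reachable from $\rho$ by a directed path, leaves are sinks, $u$ and $w$ are the only in-degree-two vertices, and each attached branch meets the rest of $N'$ in a single cut-edge.

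The first step is to pin down the orientation of $P_1$. At most one of the two branch subtrees hanging at $u$ and $w$ can contain $\rho$; I would treat the main case in which neither does, so that at $u$ and at $w$ both incident $P_1$-edges point inward and the cut-edge points outward (the cases where one of these subtrees contains $\rho$ are even more restrictive on the location of $\rho$ and lead to an analogous contradiction below). Order $P_1$ from $A$ to $B$ with $u$ preceding $w$. The $P_1$-edge just after $u$ points back toward $u$ and the $P_1$-edge just before $w$ points forward toward $w$, so there is at least one interior vertex strictly between $u$ and $w$, and at the first place the direction of $P_1$ reverses there is a vertex $z$ with both incident $P_1$-edges pointing outward; a reversal of the opposite type would create a third in-degree-two vertex, which is impossible, so $z$ is unique. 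Then $z$ is a tree vertex whose cut-edge points in, so the subtree $T_z$ at $z$, joined to the rest of $N'$ only by that inward edge, must contain $\rho$ (if $T_z$ is too small to host a valid root, we are already done). In particular no other attached subtree contains $\rho$.

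Feeding this back, between $A$ and $u$ the path $P_1$ can have neither a reversal of the in-degree-two type (a third reticulation vertex) nor one of the out-out type (a second $\rho$-containing subtree), so all of its edges point toward $u$; hence the $P_1$-edge at $A$ points out of $A$, and symmetrically the $P_1$-edge at $B$ points out of $B$. The same reasoning applied to $P_2$ and $P_3$ shows that each of them is monotone, i.e.\ all its edges point consistently from one hub to the other. Now I would trace reachability from $\rho$: since $\rho\in T_z$ with $z$ between $u$ and $w$, the flow leaving $T_z$ reaches only the portion of $P_1$ between $u$ and $w$ (inclusive) and the subtrees attached there — it cannot continue past $u$ or past $w$, which have no outgoing $P_1$-edge, and so it never reaches $A$ or $B$ along $P_1$. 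But $A$ and $B$ are tree vertices whose $P_1$-edges point outward, so each must receive its unique incoming edge along $P_2$ or $P_3$; combined with the monotonicity of $P_2$ and $P_3$, chasing these two incoming edges yields a directed cycle through $A,B$ and the paths $P_2,P_3$ (when a part of the partition is empty the corresponding path is the single edge joining $A$ and $B$, and the same cycle appears), contradicting that $N'$ is a DAG. Therefore the two reticulation branches lie in two different parts $A_i$.

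The step I expect to be the main obstacle is the orientation analysis of $P_1,P_2,P_3$: one must argue carefully, using only that there are exactly two reticulation vertices and a unique source $\rho$, that no configuration other than the monotone one can occur on these paths — in particular that an interior path vertex with both path-edges outgoing necessarily forces a $\rho$-containing subtree, and that in the degenerate or small-subtree situations this already collapses to an immediate contradiction. Once all the path and hub orientations are forced, the concluding directed cycle (equivalently, the unreachability of $A$ and $B$ from $\rho$) is immediate.
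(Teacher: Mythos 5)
Your argument is correct, but it takes a genuinely different route from the paper. The paper disposes of this lemma in one line: it observes that if both reticulation branches sat on the same subdivided generator edge, the network would admit no valid root location, and it cites the rooting criterion of Huber et al.\ (\texttt{huber2019rooting}) for that fact. You instead prove the non-orientability directly: you classify the possible orientations of the three hub-to-hub paths using only that there are exactly two in-degree-two vertices and one source, force $P_2$ and $P_3$ to be monotone, and then extract either an unreachable hub or a directed cycle. What your approach buys is self-containment --- the reader does not need to unpack the external rooting algorithm --- at the cost of a longer case analysis. Two small points you should make explicit if you write this up: (i) the reversal on the $u$--$w$ segment need not occur at an interior vertex $z$ with an attached subtree; the root may instead be placed on a $P_1$-edge of that segment, in which case $\rho$ itself is the out--out point and there is no $T_z$ --- the endgame (both hubs have their $P_1$-edges outgoing, so the monotone paths $P_2,P_3$ must each deliver the unique incoming edge of one hub, producing the cycle) is unchanged, but the sentence ``the subtree $T_z$ must contain $\rho$'' does not literally cover it; and (ii) the deferred case where $\rho$ lies in a reticulation branch at $u$ or $w$ really does collapse the same way, since the out-edge of that reticulation vertex along $P_1$ either points into the $u$--$w$ segment (and the hubs are again unreachable along $P_1$) or away from it (forcing a second root-bearing subtree on the segment, which is impossible); it is worth writing one sentence of that chase rather than only asserting analogy.
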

\begin{proof}
If two reticulation branches correspond to a single set $A_i$, then the network $N$ will not be orientable as there will be no valid root location \cite{huber2019rooting}.
\end{proof}

Let us recall that a simple level-$k$ network is a biconnected network with at most $k$ reticulation vertices. Since there are at most $k$ reticulation vertices in a simple level-$k$ network, then there are at most $k$ reticulation leaves in the network. The following lemmas provide a characterization of a simple nice level-2 network based on the reticulation leaves of the network. 
\begin{lemma}\label{lemma:r(N)}
 Let $N$ be a simple nice level-2 network. Let $r(N)$ denote the set of reticulation leaves of $N$. If $r(N)$ is empty, then $N$ is a tree. If $|r(N)|=1$, then $N$ is either a simple strict level-1 or strict level-2 network. If $|r(N)|=2$, then $N$ is a simple strict level-2 network.
 \end{lemma}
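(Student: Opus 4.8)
The plan is to count the reticulation vertices of $N$ and then translate between reticulation vertices and reticulation leaves. Write $\rho(N)$ for the number of reticulation vertices of $N$. Because $N$ is a simple level-2 network it has no non-trivial cut-edge, so all of its internal vertices --- in particular all of its reticulation vertices --- lie in a single biconnected component; since $N$ is level-2 this gives $\rho(N)\le 2$, exactly as recalled immediately before the statement. A reticulation vertex has out-degree one and so is the parent of at most one leaf, hence $|r(N)|\le\rho(N)\le 2$ and the three cases of the statement are exhaustive. Finally, one translates $\rho(N)$ into the network type: since $N$ is orientable, $\rho(N)=0$ means $N$ is obtained from a rooted tree and is therefore a tree; if $\rho(N)=1$ then every biconnected component carries at most one reticulation vertex, so $N$ is level-1, and strict level-1 because it still contains a reticulation vertex; and if $\rho(N)=2$ then one biconnected component carries two reticulation vertices, so $N$ is strict level-2.

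Two of the three cases then follow quickly. If $|r(N)|=2$, with reticulation leaves $x_1\neq x_2$, then $up(x_1)$ and $up(x_2)$ are reticulation vertices, and they are distinct because a reticulation vertex, having out-degree one, cannot be the parent of two leaves; hence $\rho(N)\ge 2$, forcing $\rho(N)=2$, so $N$ is a simple (nice) strict level-2 network. If $|r(N)|=1$ then $\rho(N)\ge 1$, so $\rho(N)\in\{1,2\}$ by the bound above, and $N$ is a simple strict level-1 network or a simple strict level-2 network.

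For the remaining case I would argue the contrapositive: if $\rho(N)\ge 1$ then $|r(N)|\ge 1$, which yields $|r(N)|=0\Rightarrow N$ is a tree. When $\rho(N)=2$, $N$ is a simple nice strict level-2 network, so $|r(N)|\ge 1$ by \Cref{prop:A_icontains reticulation leaf}. When $\rho(N)=1$, $N$ is a simple nice strict level-1 network (a sunlet), and here I would argue directly; this is the one step with genuine content, since \Cref{prop:A_icontains reticulation leaf} is stated only for strict level-2 networks. Deleting the pendant-leaf edges and the resulting isolated leaves from $N$ leaves its biconnected core, which is an undirected cycle (the subdivision of the $2$-vertex level-1 generator), so the unique reticulation vertex $v$ has degree two in this core and degree three in $N$; thus exactly one edge at $v$ is a pendant-leaf edge. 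The two reticulation edges directed into $v$ are not pendant-leaf edges --- their tails would be leaves, which have out-degree zero --- so they are the two core edges at $v$, and the remaining edge at $v$ is the pendant-leaf edge and is its out-edge. Hence the child of $v$ is a leaf, so $|r(N)|\ge 1$, which finishes the implication and hence the lemma. (Alternatively one could rerun the proof of \Cref{prop:A_icontains reticulation leaf} almost verbatim, since it never really uses that there are two reticulation vertices, but the cycle argument is shorter.)
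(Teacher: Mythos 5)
Your proof is correct and follows essentially the same route as the paper: bound the number of reticulation vertices by two using the fact that a simple network is biconnected apart from its pendant edges, invoke \Cref{prop:A_icontains reticulation leaf} to rule out a strict level-2 network with $r(N)=\emptyset$, and note that two distinct reticulation leaves must have two distinct reticulation parents, forcing the strict level-2 case. The only point where you add substance is the explicit cycle argument showing that the reticulation vertex of a simple strict level-1 network has a leaf child, a fact the paper's proof simply asserts in the sentence ``If $N$ is a simple strict level-1 network, then $r(N)=1$''.
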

 \begin{proof}
  If $N$ is a tree, then $r(N)$ is empty. If $r(N)$ is empty, then \Cref{prop:A_icontains reticulation leaf} tells us that $N$ can not be a strict level-2 network. If $N$ is a simple strict level-1 network, then $r(N)=1$. If $N$ is a strict level-2 network with $r(N)$ is empty, then the two purely interior vertices of the network are reticulation vertices. By \Cref{prop:A_icontains reticulation leaf}, we then reach a contradiction as $N$ does not have a reticulation leaf. Hence, if $r(N)$ is empty, then $N$ is a tree. The number $|r(N)|$ can not be bigger than two as $N$ is a simple level-2 network. If $|r(N)|=2,$ then $N$ can not be level-1 network as simple level-1 network has exactly one cycle of length at least three with only one reticulation vertex. 
 \end{proof}

The instances of 4-leaf simple nice strict level-2 network $N$ such that the number $|r(N)|\in\{1,2\}$ are presented in \Cref{fig:unrooted four leaves}. In this figure, the number of reticulation leaves of type 1 and type 2 networks is 2 and 1, respectively.

\section{Distinguishing 4-leaf nice level-2 networks}\label{section:distinguishing nice networks}

In this section, we will be concerned with the 4-leaf nice level-2 networks. In particular, we will derive some relationships between the varieties associated with level-2 networks with four leaves. These relationships are obtained using phylogenetic invariants associated with each network model. These results will serve as a foundation towards distinguishing level-2 networks with at least five leaves.

We are now ready to present two results on the number of small simple level-2 networks.
\begin{lemma}\label{lemma:small simple nice level-2}
 There is no simple nice strict level-2 network with two or three leaves. Additionally, up to isomorphism and labeling of the leaves, there are four distinct simple nice strict level-2 networks.
\end{lemma}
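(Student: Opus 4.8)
The plan rests on the structural description of simple level-$2$ networks. By \Cref{def:simplenetwork}, every simple level-$2$ network is built from the unique unrooted level-$2$ generator $L_2$---the theta graph, whose two vertices are joined by three parallel edges---by subdividing its three edges, leaving at most one of them unsubdivided (parallel edges being forbidden), and attaching a pendant leaf at each subdivision vertex; hence the number of leaves equals the total number of subdivisions, spread over the three edges. Throughout I use \Cref{lemma:funnel-free and nice} to replace ``nice'' by ``funnel-free'', and I record how funnels arise in this setting: a type-$B$ funnel is a vertex out of which two reticulation edges run, i.e.\ a common parent of the two reticulation vertices, while a type-$A$ funnel is a reticulation vertex whose outgoing edge is again a reticulation edge, i.e.\ whose child is a reticulation vertex.

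For the first assertion I would show that with at most three subdivisions a funnel cannot be avoided. At least two of the three paths of $L_2$ are subdivided. If the two reticulation vertices are internal path vertices on a common path, then---fewer than four subdivisions being available---they are separated by at most one vertex; they cannot be adjacent (the edge between them would have to be oriented into both), so a single vertex separates them, and that vertex is a common parent, a type-$B$ funnel. If they lie on two different subdivided paths, then at least one of them is adjacent to a hub, and the same counting of subdivisions forces a hub to be a common parent of the two reticulation vertices, again a type-$B$ funnel. If one reticulation vertex is a hub, then its unique outgoing edge either enters the other reticulation vertex---a type-$A$ funnel---or, following the forced orientations, makes a hub into a common parent of the two reticulation vertices. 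In practice this amounts to a short inspection of the finitely many semi-directed networks already recorded in \Cref{ex:unrooted-level2-two and three leaves} and \Cref{ex:unrooted-2-leaves}: each of them contains a funnel, hence none is nice.

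For the count I would enumerate the four-leaf simple level-$2$ networks directly. Distributing four subdivisions over the three edges of $L_2$ with at most one edge unsubdivided yields three underlying topologies, according to whether the multiset of subdivision counts is $\{2,1,1\}$, $\{2,2,0\}$, or $\{3,1,0\}$. For each such topology I would go through the admissible orientations---a choice of which two vertices are the reticulation vertices together with the directions of the four reticulation edges, subject to acyclicity so that the network is orientable and to the absence of a non-trivial cut-edge so that it remains simple---and discard every orientation containing a type-$A$ or type-$B$ funnel according to the criteria above. The orientations that survive are exactly the four networks displayed in \Cref{fig:unrooted four leaves}, namely those with $|r(N)|=2$ and those with $|r(N)|=1$, and these four are easily seen to be pairwise non-isomorphic, for instance by their numbers of reticulation leaves together with their underlying undirected topologies. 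This gives the asserted count.

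The routine part is the first assertion, which uses only the small-network lists already available. The main obstacle is the bookkeeping in the second part: one has to be genuinely exhaustive over the three subdivision patterns, over the choices of the two reticulation vertices, and over the orientations of the four reticulation edges, while correctly discarding the configurations that fail to be orientable, fail to be simple, or contain a funnel, and then deciding which of the remaining networks agree up to isomorphism and relabeling of the leaves. I expect the two funnel criteria---no reticulation edge joining the two reticulation vertices, and no vertex that is a common reticulation-parent---to eliminate the bulk of the cases, leaving only a short list to match against the figure.
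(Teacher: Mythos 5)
Your proposal is correct and follows essentially the same route as the paper: both parts reduce to a finite check against the unique level-2 generator $L_2$, ruling out the two- and three-leaf cases because every candidate in \Cref{ex:unrooted-2-leaves} and \Cref{fig:Five distinct $3$-leaf network} contains a funnel, and enumerating the three four-leaf subdivision patterns to arrive at the four networks of \Cref{fig:unrooted four leaves}. One small slip worth noting: two adjacent reticulation vertices are not impossible to orient --- the edge between them simply makes its tail a type-A funnel --- but since a funnel arises either way, your conclusion is unaffected.
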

\begin{proof}
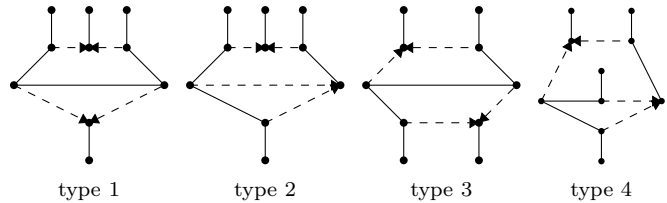
\begin{figure}[H]
    \centering
    \subfigure[type 1]{
    \begin{tikzpicture}[scale=0.5]
       \fill[black] (0,0) circle (3pt) (1,1) circle (3pt) (2,1) circle (3pt)  (3,1) circle (3pt) (4,0) circle (3pt) (2,-1) circle (3pt) (1,2) circle (3pt) (2,2) circle (3pt)  (3,2) circle (3pt) (2,-2) circle (3pt);
\draw (0,0)--(4,0);
\draw (0,0)--(1,1);
\draw [dashed,->,>=Triangle](1,1) --(2,1);
\draw [dashed,<-,>=Triangle](2,1) --(3,1);
\draw (3,1)--(4,0);
\draw [dashed,->,>=Triangle](0,0)--(2,-1);
\draw [dashed,<-,>=Triangle](2,-1)--(4,0);
\draw (1,1)--(1,2);
\draw (2,1)--(2,2);
\draw (3,1)--(3,2);
\draw (2,-1)--(2,-2);
    \end{tikzpicture}}
    \subfigure[type 2]{
    \begin{tikzpicture}[scale=0.5]
       \fill[black] (0,0) circle (3pt) (1,1) circle (3pt) (2,1) circle (3pt)  (3,1) circle (3pt) (4,0) circle (3pt) (2,-1) circle (3pt) (1,2) circle (3pt) (2,2) circle (3pt)  (3,2) circle (3pt) (2,-2) circle (3pt);
\draw [dashed,->,>=Triangle](0,0)--(4,0);
\draw (0,0)--(1,1);
\draw [dashed,->,>=Triangle](1,1) --(2,1);
\draw [dashed,<-,>=Triangle](2,1) --(3,1);
\draw (3,1)--(4,0);
\draw (0,0)--(2,-1);
\draw [dashed,->,>=Triangle](2,-1)--(4,0);
\draw (1,1)--(1,2);
\draw (2,1)--(2,2);
\draw (3,1)--(3,2);
\draw (2,-1)--(2,-2);
    \end{tikzpicture}}
    \subfigure[type 3]{
    \begin{tikzpicture}[scale=0.5]
       \fill[black] (0,0) circle (3pt) (1,1) circle (3pt) (3,1) circle (3pt) (4,0) circle (3pt) (1,-1) circle (3pt) (3,-1) circle (3pt) (1,2) circle (3pt)  (3,2) circle (3pt) (1,-2) circle (3pt) (3,-2) circle (3pt);
\draw (0,0)--(4,0);
\draw [dashed,->,>=Triangle](0,0)--(1,1);
\draw [dashed,<-,>=Triangle](1,1) --(3,1);
\draw (3,1) --(4,0);
\draw (0,0)--(1,-1);
\draw [dashed,->,>=Triangle](1,-1)--(3,-1);
\draw [dashed,<-,>=Triangle](3,-1)--(4,0);
\draw (1,1)--(1,2);
\draw (3,1)--(3,2);
\draw (1,-1)--(1,-2);
\draw (3,-1)--(3,-2);
    \end{tikzpicture}}
    \subfigure[type 4]{
     \begin{tikzpicture}[scale=0.4]
       \fill[black] (0,0) circle (3pt) (1,2) circle (3pt) (3,2) circle (3pt) (4,0) circle (3pt) (2,-1) circle (3pt)  (1,3) circle (3pt)  (3,3) circle (3pt) (2,-2) circle (3pt) (2,0) circle (3pt) (2,1) circle (3pt);
\draw [dashed,->,>=Triangle](0,0)--(1,2);
\draw [dashed,<-,>=Triangle](1,2) --(3,2);
\draw (3,2) --(4,0);
\draw (0,0)--(2,-1);
\draw [dashed,->,>=Triangle](2,-1)--(4,0);
\draw (0,0)--(2,0);
\draw [dashed,->,>=Triangle](2,0)--(4,0);
\draw (1,2)--(1,3);
\draw (3,2)--(3,3);
\draw (2,0)--(2,1);
\draw (2,-1)--(2,-2);
    \end{tikzpicture}}
    \caption{4-leaf simple nice strict level-2 networks.}
    \label{fig:unrooted four leaves}
\end{figure}
In \Cref{ex:unrooted-2-leaves}, each of the two possible 2-leaf simple level-2 networks contains a funnel vertex. Similarly, in \Cref{fig:Five distinct $3$-leaf network}, each of the five possible 3-leaf simple strict level-2 networks contains a funnel vertex. The first assertion of the lemma follows \Cref{lemma:funnel-free and nice}. For four leaves case, by \Cref{prop:A_icontains reticulation leaf}, at least one of the leaves should be a reticulation leaf. There are four distinct 4-leaf simple nice strict level-2 networks displayed in \Cref{fig:unrooted four leaves}. These four networks are obtained from the three possible unrooted 4-leaf simple networks. If the network contains two reticulation leaves, then we obtain network of type 1 and 3. If the network contains one reticulation leaf, then we obtain network of type 2 and 4.
\end{proof}

\begin{lemma}
 Up to isomorphism and labeling of the leaves, there are $\{\frac{(n+3)^2}{12}\}-1$ distinct unrooted simple strict level-2 undirected network topologies with $n$ leaves where $\{x\}$ denotes the nearest integer to $x$.
\end{lemma}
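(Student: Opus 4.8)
The plan is to realize $n$-leaf unrooted simple strict level-$2$ undirected topologies as a set of integer partitions and then invoke the classical closed form for the number of partitions of $n$ into at most three parts. Recall from \Cref{lemma:characterization of generator} and the example following it that there is a unique unrooted level-$2$ generator, namely the multigraph $L_2$ on two vertices $A,B$ joined by three parallel edges, and that by \Cref{def:simplenetwork} every unrooted simple level-$2$ network is obtained from $L_2$ by replacing each of its three edges by a path and attaching a pendant leaf to each internal vertex of those paths. Writing $a_i\ge 0$ for the number of internal vertices inserted on the $i$-th edge (so $a_i=0$ means that edge is left undivided), such a network with $n$ leaves is encoded by a triple $(a_1,a_2,a_3)$ with $a_1+a_2+a_3=n$. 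The first thing I would check is which triples are admissible: a phylogenetic network has no parallel edges, and two undivided edges would be parallel between $A$ and $B$, so at most one $a_i$ may equal $0$; conversely, if at most one $a_i$ vanishes then the resulting graph is simple (its only cut-edges are the trivial pendant edges) and is genuinely strict level-$2$, since deleting any single edge of the subdivided $L_2$ still leaves a cycle, while deleting two suitable edges and suppressing degree-two vertices produces a tree. So the admissible triples are exactly those summing to $n$ with at most one coordinate equal to $0$.

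The second step is the classification up to isomorphism (and relabeling of leaves). The automorphism group of $L_2$ induces the full symmetric group on its three edges, so the topology depends only on the multiset $\{a_1,a_2,a_3\}$. Conversely this multiset is intrinsic to the undirected topology: the vertices $A$ and $B$ are exactly the two internal vertices with no leaf neighbour, so they are canonically identified, and after deleting all leaves one is left with three internally disjoint $A$--$B$ paths whose internal-vertex counts recover $a_1,a_2,a_3$. Hence distinct admissible multisets give non-isomorphic networks, and I obtain a bijection between $n$-leaf unrooted simple strict level-$2$ undirected topologies and multisets of three nonnegative integers summing to $n$ in which at most one part is $0$.

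It then remains to count these multisets. Multisets of three nonnegative integers with sum $n$ are precisely the partitions of $n$ into at most three parts; their number is the classical quantity $p_3(n)=\bigl\{\tfrac{(n+3)^2}{12}\bigr\}$ — this follows, for instance, from the partial-fraction expansion of $\bigl[(1-x)(1-x^2)(1-x^3)\bigr]^{-1}$, and $(n+3)^2/12$ is never a half-integer, so the nearest-integer rounding is unambiguous. Among these partitions, the only ones having two or more parts equal to $0$ are, for $n\ge 2$, the single partition $\{n,0,0\}$. Subtracting this one excluded case yields exactly $\bigl\{\tfrac{(n+3)^2}{12}\bigr\}-1$ topologies, as claimed.

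The only point that is not a direct consequence of the definitions in \Cref{section:preliminaries} or of the known formula for $p_3(n)$ is the isomorphism classification of the second step — in particular the claim that the path-length multiset (and the pair $A,B$) can be read off from the undirected topology alone, so that different admissible multisets really do yield different topologies. I expect this intrinsic-recovery argument to be the part requiring the most care; once it is in place, the enumeration is immediate.
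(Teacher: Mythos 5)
Your proposal is correct and follows essentially the same route as the paper: both reduce the count to partitions of $n$ into at most three parts minus the single one-part partition (the paper writes this as $p(n,3)-p(n,1)$) and then invoke the classical formula $p(n,3)=\bigl\{\tfrac{(n+3)^2}{12}\bigr\}$. Your write-up is more careful than the paper's about why the edge-subdivision multiset determines the topology up to isomorphism and why at most one part may be zero, but these are details the paper leaves implicit rather than a genuinely different argument.
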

\begin{proof}
Let $p(n,m)$ be the number of distinct partitions of the integer $n$ into at most $m$ parts where $1\leq m \leq n$. Since by \Cref{lemma:characterization of generator}, an unrooted level-2 generator is the 3-regular graph on two vertices, then  the number of distinct unrooted undirected $n$-leaf simple strict level-2 topologies is equal to $p(n,3)-p(n,1)$. It is immediate that $p(n,1)=1$. Additionally, it is shown in \cite[Chapter 6]{andrews2004integer} that $p(n,3)=\{\frac{(n+3)^2}{12}\}$ using generating function method.
\end{proof}

The following lemma states that semisimple nice strict level-2 networks that are not simple should have at least five leaves.

\begin{lemma}\label{lemma:small semisimple networks}
 There is no semisimple nice strict level-2 network with two or three leaves. The only 4-leaf semisimple nice strict level-2 networks are given by the simple ones.
\end{lemma}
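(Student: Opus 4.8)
The plan is to reduce the whole statement to \Cref{lemma:small simple nice level-2} by passing to the underlying simple network. Recall that a semisimple strict level-2 network $N'$ is obtained, via the identification process described in \Cref{section:semisimple}, from a simple level-2 network $N$ of type $L_2^{n,(A_1|A_2|A_3)}$ by attaching to each of its $n$ branches either nothing or an unrooted binary tree; conversely a strict level-2 network with exactly two reticulation vertices must have both of them in a single biconnected component (there being only two), which, after suppressing degree-two vertices, is the unique unrooted level-2 generator $L_2$ with pendant trees hanging off its cut-edges, so every such network does arise this way. The first thing I would establish is that the identification process preserves funnel-freeness in both directions: it never alters the unique strict level-2 biconnected component of $N$, and every funnel vertex (type $A$ or $B$) together with the reticulation vertices witnessing it lies inside that biconnected component, so a funnel of $N$ survives in $N'$; conversely attaching a tree at a pendant leaf introduces only tree vertices and changes the local picture of no vertex except the former leaf's parent, whose in/out-degrees are unchanged, so no new funnel is created. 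Hence $N'$ is nice if and only if $N$ is nice, and in particular the underlying simple network of a semisimple \emph{nice} strict level-2 network is itself a simple nice strict level-2 network, so by \Cref{lemma:small simple nice level-2} it has at least four leaves, i.e. $n\ge 4$.

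Next I would do the leaf count. Writing $T_{ij}$ for the tree attached to branch $(i,j)$ and $L(T_{ij})$ for its number of leaves, the number $m$ of leaves of $N'$ is
\[
m \;=\; \sum_{T_{ij}\neq\emptyset}\bigl(L(T_{ij})-1\bigr)\;+\;\bigl|\{(i,j): T_{ij}=\emptyset\}\bigr|.
\]
Each of the $n$ branches contributes at least one leaf to this sum: an empty branch contributes its single leaf, and a branch carrying an unrooted binary tree $T$ contributes $L(T)-1\ge 2$ (a nondegenerate unrooted binary tree, having all internal vertices of degree three, has at least three leaves, e.g. $K_{1,3}$). Therefore $m\ge n\ge 4$, which immediately rules out semisimple nice strict level-2 networks on two or three leaves and proves the first assertion.

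For the four-leaf assertion, suppose $m=4$. Then $4=m\ge n\ge 4$ forces $n=4$ and forces every branch to contribute exactly one leaf; in particular no branch can carry a tree with at least three leaves, so every $T_{ij}$ is empty (if one were to allow the degenerate two-leaf ``tree'', attaching it merely relabels a leaf). Hence $N'$ coincides, up to relabeling of leaves, with its underlying simple network $N$, which is a four-leaf simple nice strict level-2 network and therefore, by \Cref{lemma:small simple nice level-2}, one of the four networks in \Cref{fig:unrooted four leaves}. Since each of those four networks is simple and hence semisimple, this shows that the four-leaf semisimple nice strict level-2 networks are exactly the simple ones.

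The main obstacle is the niceness-preservation step in the first paragraph: one must argue carefully that a funnel vertex and the reticulation vertices it interacts with always lie in the level-2 biconnected component, so that attaching pendant trees at cut-edges cannot destroy a funnel, and symmetrically that attaching a tree cannot create one. Everything after that is routine bookkeeping with the leaf-count formula together with \Cref{lemma:small simple nice level-2}.
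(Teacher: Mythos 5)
Your proof is correct and follows essentially the same route as the paper, which disposes of the lemma in one line by combining \Cref{lemma:small simple nice level-2} with the observation that a non-simple semisimple network has strictly more leaves than its underlying simple network. You merely make explicit two points the paper leaves implicit — that the identification process preserves niceness (so the underlying simple network of a nice semisimple network is itself nice) and the precise leaf-count inequality $m\ge n$ with equality forcing all attached trees to be trivial — both of which are sound.
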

\begin{proof}
The lemma follows from \Cref{lemma:small simple nice level-2} and the fact that a non-simple semisimple network has strictly more edges than the corresponding simple ones.
\end{proof}

 We will now discuss some earlier works on level-1 networks. It has been shown in \cite{gross2018distinguishing} that modulo the linear invariants induced by the labeling of the group elements, under the JC or K2P model, the phylogenetic ideal of a level-1 network with three leaves is the zero ideal while under the K3P model, a single invariant of degree three was found. On the other hand, under the JC, K2P or K3P model, two distinct 4-leaf 4-cycle level-1 networks are distinguishable  \cite[Lemma 1]{gross2020distinguishing}. 
 
We conclude this section by providing some results on the non-containment of varieties associated with 4-leaf nice level-2 networks, which are given by simple networks due to \Cref{lemma:small simple nice level-2} and \Cref{lemma:small semisimple networks}.
\begin{proposition}\label{lemma:containment level-2 4 leaves}
Let $N_1$ and $N_2$ be two distinct 4-leaf level-2 networks.
\begin{itemize}
    \item[(i)]If $N_1$ is a simple nice strict level-2 network and $N_2$ is a level-1 network, then under the JC, K2P or K3P model, $V_{N_1}\not\subseteq V_{N_2}.$
    \item[(ii)] Suppose that $N_1$ and $N_2$ are both simple nice strict level-2 networks. If $N_1$ is of type 1,2 or 3 and $N_2$ is of type 4, then under the JC model, $V_{N_2}\not\subseteq V_{N_1}.$
\end{itemize}
\end{proposition}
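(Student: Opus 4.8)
The plan is as follows. Recall that $V_{N_1}\not\subseteq V_{N_2}$ is equivalent to the existence of a single phylogenetic invariant $f\in I_{N_2}\setminus I_{N_1}$, i.e. a polynomial that vanishes on all of $V_{N_2}$ but not identically on the image of $\varphi_{N_1}$; dually, $V_{N_2}\not\subseteq V_{N_1}$ asks for $f\in I_{N_1}\setminus I_{N_2}$. Since each $V_N$ occurring here is irreducible (it is the Zariski closure of the image of the polynomial map $\varphi_N$ on the irreducible parameter space $\theta_N$), a strict inequality of dimensions $\dim V_{N_1}>\dim V_{N_2}$ already forces $V_{N_1}\not\subseteq V_{N_2}$. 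So the proof splits into a dimension line (useful for (i)) and an explicit-invariant line (needed for (ii)).

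The first step is to cut the problem down to a finite, small number of pairs. By \Cref{lemma:small simple nice level-2} there are, up to isomorphism and relabeling, exactly the four networks of \Cref{fig:unrooted four leaves}, while the $4$-leaf level-$1$ networks form a short, explicitly enumerable list; within each pair $(N_1,N_2)$ one may moreover quotient by the groups of label-preserving symmetries of $N_1$ and $N_2$, leaving only a handful of genuinely distinct configurations. Because every network in sight is nice, \Cref{lemma:funnel-free and nice} lets us write each coordinate of $\varphi_N$ in Fourier coordinates as a sum of $2^{r}$ monomials in the edge parameters $a^{e}_{g}$ (with $r=2$ for the strict level-$2$ networks and $r\le 1$ for the level-$1$ ones), after imposing on the $a^{e}_{g}$ the identifications appropriate to the JC, K2P, and K3P models.

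For part (i) the plan is a Jacobian/dimension computation: for each of the four networks $N_1$ and each $4$-leaf level-$1$ network $N_2$, evaluate the rank of the Jacobian of the Fourier parameterization of $\varphi_{N_1}$ at a generic parameter point and compare it with $\dim V_{N_2}$. The extra edges of a strict level-$2$ generator together with the two reticulation parameters are not absorbed by the available reparameterizations, so $\dim V_{N_1}>\dim V_{N_2}$ under all three models, giving $V_{N_1}\not\subseteq V_{N_2}$ uniformly. If in some case the dimensions turned out to coincide, one falls back to exhibiting a known invariant of the level-$1$ model---for instance a toric binomial of a $4$-leaf tree, or one of the $4$-cycle invariants of \cite{gross2020distinguishing}---and checking by direct substitution that it does not lie in $I_{N_1}$. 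For part (ii) a dimension argument is not available in general, since the type-$4$ network has only one reticulation leaf (\Cref{lemma:r(N)}) and $V_{N_2}$ could a priori sit inside some $V_{N_1}$; here the plan is to specialize to the JC model (one free Fourier parameter per edge, which keeps elimination tractable), compute a generating set of $I_{N_1}$ with a computer algebra system for $N_1$ of type $1$, $2$, $3$, pick a low-degree element $f\in I_{N_1}$, and substitute the JC Fourier parameterization of the type-$4$ network into $f$; verifying $f\not\equiv 0$ certifies $f\in I_{N_1}\setminus I_{N_2}$ and hence $V_{N_2}\not\subseteq V_{N_1}$.

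The main obstacle is precisely part (ii): there is no clean dimension drop to exploit, so one must genuinely produce an explicit invariant of each of the type-$1$, $2$, $3$ varieties and prove it fails on the type-$4$ variety. The delicate work is the bookkeeping of which monomials enter each Fourier coordinate $q_{g_1g_2g_3g_4}$ for each of the four topologies, together with organizing the elimination so that it uniformly covers the cases remaining after the symmetry reduction; the dimension computations of part (i), by contrast, are routine once the parameterizations are set up.
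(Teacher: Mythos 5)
Your part (ii) and your fallback for part (i) are exactly the paper's proof: after the relabeling reduction (the paper phrases it as ``it is enough to provide a single invariant that vanishes on exactly one of the [level-1] network varieties, but on none of the simple nice strict level-2 network varieties,'' since permuting leaf labels transports any containment $V_{N_1}\subseteq V_{N_2}$ to the permuted pair), the paper exhibits three explicit polynomials $f_1,f_2,f_3$ in the Fourier coordinates and verifies by Macaulay2 computation that, under each of JC/K2P/K3P, one of them lies in $I_{V_{N_2}}\setminus I_{V_{N_1}}$ for the relevant level-1 network $N_2$, and that under JC the linear form $f_1=q_{CTTC}-q_{GCGC}$ lies in $I_{V_{N_1}}$ for types 1--3 but not for type 4. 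So the operative content of your plan coincides with the paper's.

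The one place where you diverge is the primary route you propose for part (i), and there you have a gap: you assert that ``the extra edges of a strict level-2 generator together with the two reticulation parameters are not absorbed by the available reparameterizations, so $\dim V_{N_1}>\dim V_{N_2}$ under all three models,'' but parameter count only bounds dimension from above, and for these small group-based models the varieties are frequently defective or fill the ambient linear space (e.g.\ \cite{gross2018distinguishing} shows the 3-leaf level-1 ideal under JC and K2P is zero modulo linear invariants). A 4-leaf 4-cycle variety is contained in a join of two tree varieties and could well have dimension equal to that of the level-2 variety, in which case irreducibility tells you nothing about containment. So the dimension comparison is an unverified computation, not an argument; as written, the weight of part (i) rests entirely on your fallback (exhibit an invariant of $V_{N_2}$, such as the 4-cycle invariants of \cite{gross2020distinguishing}, and check it is nonzero on the level-2 parameterization), which is precisely what the paper does. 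If you keep the dimension route, you must actually compute the generic Jacobian ranks for every pair; otherwise state the fallback as the proof.
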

\begin{proof}
Let us first give the main ideas of the proof. To prove the proposition and to reduce the number of computations to obtain the invariants, we consider the following approach from \cite[Lemma 1]{gross2020distinguishing}.  Suppose that we want to first prove (i) for a single-triangle network $N_2$. It is enough to provide a single invariant that vanishes on exactly one of the single-triangle network varieties, but on none of the simple nice strict level-2 network varieties. We assume the second statement is not true. Then there exist two networks $N_1$ and $N_2$, where $N_1$ is a simple nice level-2 network and $N_2$ is a single-triangle network such that $V_{N_1}\subseteq V_{N_2}$. Equivalently, we have that $I_{V_{N_2}}\subseteq I_{V_{N_1}}.$ We can obtain any single-triangle network $N_2'$ from $N_2$ by permuting the leaf labels using some permutation $\sigma \in S_4$. If we permute the leaf labels of $N_1$ according to permutation $\sigma$, then $I_{V_{N_2'}}\subseteq I_{V_{N_1'}}.$ Thus, if we can find a single single-triangle network whose vanishing ideal is not contained in the vanishing ideal of any 4-leaf simple nice strict level-2 networks, then we will arrive at a contradiction. A similar method can be used to prove the statement when $N_2$ is a tree, a double-triangle, or a 4-cycle network.

Let us consider the following three polynomials in the polynomial ring $\mathbb{C}[q_{g_1\dots g_4}:g_i\in \{A,C,G,T\}]$:
\begin{align*}
    f_1&=q_{CTTC}-q_{GCGC},\\
     f_2&=q_{AACC}q_{CGCG}q_{GAGA}q_{TAAT} - q_{AACC}q_{CGAT}q_{GAGA}q_{TACG}\\
     &+q_{AACC}q_{CAGT}q_{GGAA}q_{TACG} - q_{AAAA}q_{CAGT}q_{GGCC}q_{TACG}, \mbox{ and }\\
     f_3&=q_{TGGT}^3-q_{TGGT}^2q_{TGGT}-q_{TGGT}q_{TGTG}^2+q_{TGTG}^3-q_{TGGT}^2q_{TTGG}+2q_{TGGT}q_{TGTG}q_{TTGG}\\&-q_{TGTG}^2q_{TTGG}-q_{TGGT}q_{TTGG}^2-q_{TGTG}q_{TTGG}^2+q_{TTGG}^3-q_{TGGT}^2q_{TTTT}+2q_{TGGT}q_{TGTG}q_{TTTT}\\&-q_{TGTG}^2q_{TTTT}+2q_{TGGT}q_{TTGG}q_{TTTT}+2q_{TGTG}q_{TTGG}q_{TTTT}-q_{TTGG}^2q_{TTTT}-q_{TGGT}q_{TTTT}^2\\&-q_{TGTG}q_{TTTT}^2-q_{TTGG}q_{TTTT}^2+q_{TTTT}^3.\\
\end{align*}

The polynomials $f_1$ and $f_2$ were also used in \cite[Lemma 1]{gross2020distinguishing}. All the following statements can be verified using \textit{Macaulay2} code that is available at
{\fontfamily{pcr}\selectfont https://github.com/ardiyam1/Distinguishing-Level-2\\-Phylogenetic-Networks-Using-Phylogenetic-Invariants}.

Under the JC model, the polynomial $f_3$ vanishes on the parameterization of one of the trees, one of the single-triangle or one of  the double-triangle networks but on none of the simple nice strict level-2 networks. Additionally, the polynomial $f_2$ vanishes on the parameterization of one of the 4-cycle networks but on none of the strict simple nice level-2 networks. Thus, in all cases under the JC model, there exists a polynomial $f$ such that $f\in I_{V_{N_2}}$ but $f\notin I_{V_{N_1}}$. For the K2P or K3P models, the polynomial $f_2$ vanishes on the parameterization of one of the tree, single-triangle, double-triangle, or 4-cycle networks but on none of the strict simple nice level-2 networks. Altogether, this completes the proof of the first part of the proposition.

Under the JC model, the polynomial $f_1$ vanishes on the parameterization of network type 1, 2, or 3 but not on the parameterization of network type 4. Thus, $f_3\in I_{V_{N_1}}$ but $f_3\notin I_{V_{N_2}}$, which implies that $I_{V_{N_1}}\not\subseteq I_{V_{N_2}}$. Reversing the non-inclusion, we have that  $V_{N_2}\not\subseteq V_{N_1}$, verifying the second statement. 

\end{proof}
 
 Together with earlier results in \cite{gross2018distinguishing,gross2020distinguishing}, the following table summarizes the relationships between varieties associated with a tree, level-1 and simple nice level-2 networks with four leaves under the JC, K2P, or K3P models. The non-containment of varieties associated with type 4 and the other types of networks is only valid under the JC model.
\begin{table}[H]
\begin{center}
\begin{tabular}{ |c|c|c|c|c|c|c| } 
 \hline
  & type 1, 2, or 3& type 4& single-triangle & double-triangle& 4-cycle & tree \\ \hline
  type 1, 2, or 3&- &$\not\supseteq$&$\not\subseteq$&$\not\subseteq$&$\not\subseteq$&$\not\subseteq$\\\hline
  type 4& $\not\subseteq$&-& $\not\subseteq$&$\not\subseteq$&$\not\subseteq$&$\not\subseteq$\\\hline
  single-triangle&$\not\supseteq$&$\not\supseteq$&$\sim$&$\not\supseteq$&$\not\supseteq$&$\not\subseteq$\\\hline
  double-triangle&$\not\supseteq$&$\not\supseteq$&$\not\subseteq$&$\sim$&$\checkmark$&$\not\subseteq$\\\hline
  4-cycle&$\not\supseteq$&$\not\supseteq$&$\not\subseteq$&$\checkmark$&$\checkmark$&$\not\subseteq$\\\hline
  tree&$\not\supseteq$&$\not\supseteq$&$\not\supseteq$&$\not\supseteq$&$\not\supseteq$&$\checkmark$\\
 \hline
\end{tabular}
\end{center}
\caption{An overview for two distinct 4-leaf level-2 networks $N_1$ and $N_2$. The rows of the table represent the topology of $N_1$ and the columns of the table represent the topology of $N_2$. We denote by $\not\subseteq$ if $V_{N_1}\not\subseteq V_{N_2}$ and by  $\not\supseteq$ if $V_{N_1}\not\supseteq V_{N_2}$. Furthermore, if $N_1$ and $N_2$ are distinguishable, then we denote it by $\checkmark$ while we will denote it by $\sim$ if they are distinguishable given that the underlying undirected topologies are distinct. Lastly, the notation ‘-’ means that we can not determine whether $V_{N_1}\not\subseteq V_{N_2}$ or $V_{N_2}\not\subseteq V_{N_1}$ given two networks of the same type.}
\label{table}
\end{table}

\section{Distinguishing simple nice level-2 networks with at least five leaves}\label{subsub:Distinguishing simple level-2 networks with at least five leaves}

In this section, we will study whether two simple nice level-2 networks with at least five leaves are distinguishable. As mentioned in the first section, data on the set of subnetworks may be implemented to recover the evolutionary history of a set of species. Thus, it is reasonable to restrict the analysis of a network with many leaves to some smaller subnetworks and then study these subnetworks. \Cref{def:restriction of network} describes a restriction procedure which provides a way to obtain a subnetwork given a subset of the leaves.

\begin{definition}[\cite{gross2018distinguishing}, Definition 4.1]\label{def:restriction of network}
Let $N$ be an $n$-leaf semi-directed network on $X$ and $S\subseteq X$ be a subset of the leaves. The \textit{restriction of $N$ to $S$}, denoted by $N|_S$, is the semi-directed network on $S$ obtained from $N$ by first taking the union of all directed paths connecting any two leaves in $S$, contracting all degree two vertices to one of its neighbors, and finally removing all parallel edges. In this definition, we regard the undirected edges as bidirected.
\end{definition}
\begin{example}
\Cref{fig:restriction} displays the restriction of the network $N$ to a subset $S=\{a,c,d,e\}$, which is a 4-leaf 4-cycle network with a reticulation vertex $d$. Moreover, $N|_S$ is a nice network. If we take $S'=\{b,c,d,e\}$, then $N|_S$ is not a nice network as depicted in \Cref{fig:restriction not nice}.

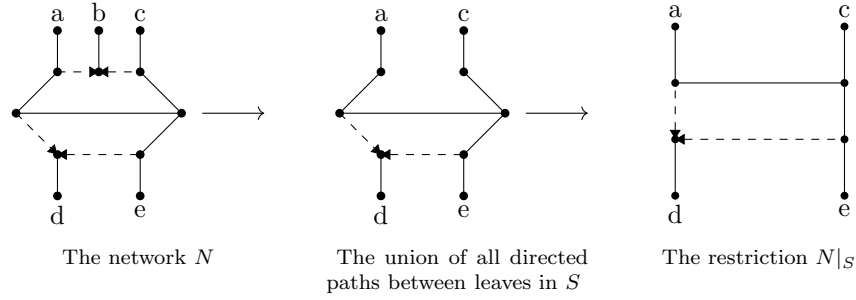
\begin{figure}[H]
    \centering
    \subfigure[The network $N$]{
    \begin{tikzpicture}[scale=0.55]
       \fill[black] (0,0) circle (3pt) (1,1) circle (3pt) (2,1) circle (3pt)  (3,1) circle (3pt) (4,0) circle (3pt) (1,-1) circle (3pt) (3,-1) circle (3pt) (1,2) circle (3pt) (2,2) circle (3pt)  (3,2) circle (3pt) (3,-2) circle (3pt) (1,-2) circle (3pt) (4.5,0) (6,0);
        \node (a1) at (1,2)[above]{a} ; 
       \node (a1) at (2,2)[above]{b} ; 
       \node (a1) at (3,2)[above]{c} ;
       \node (a1) at (1,-2)[below]{d} ; 
       \node (a1) at (3,-2)[below]{e} ; 
\draw (0,0)--(4,0);
\draw (0,0)--(1,1);
\draw [dashed,->,>=Triangle](1,1) --(2,1);
\draw [dashed,<-,>=Triangle](2,1) --(3,1);
\draw (3,1)--(4,0);
\draw [dashed,->,>=Triangle](0,0)--(1,-1);
\draw [dashed,<-,>=Triangle](1,-1)--(3,-1);
\draw (3,-1)--(4,0);
\draw (1,1)--(1,2);
\draw (2,1)--(2,2);
\draw (3,1)--(3,2);
\draw (1,-1)--(1,-2);
\draw (3,-1)--(3,-2);
\draw [->] (4.5,0)--(6,0);
    \end{tikzpicture}}
    \qquad
    \subfigure[The union of all directed paths between leaves in $S$]{
    \begin{tikzpicture}[scale=0.55]
       \fill[black] (0,0) circle (3pt) (1,1) circle (3pt) (3,1) circle (3pt) (4,0) circle (3pt) (1,-1) circle (3pt) (3,-1) circle (3pt) (1,2) circle (3pt) (3,2) circle (3pt) (3,-2) circle (3pt) (1,-2) circle (3pt) (4.5,0) (6,0);
        \node (a1) at (1,2)[above]{a} ; 
       \node (a1) at (3,2)[above]{c} ;
       \node (a1) at (1,-2)[below]{d} ; 
       \node (a1) at (3,-2)[below]{e} ; 
\draw (0,0)--(4,0);
\draw (0,0)--(1,1);
\draw (3,1)--(4,0);
\draw [dashed,->,>=Triangle](0,0)--(1,-1);
\draw [dashed,<-,>=Triangle](1,-1)--(3,-1);
\draw (3,-1)--(4,0);
\draw (1,1)--(1,2);
\draw (3,1)--(3,2);
\draw (1,-1)--(1,-2);
\draw (3,-1)--(3,-2);
\draw [->] (4.5,0)--(6,0);
    \end{tikzpicture}}
    \qquad
    \subfigure[The restriction $N|_S$]{
    \begin{tikzpicture}[scale=0.75]
       \fill[black] (0,0) circle (2pt) (3,0) circle (2pt) (3,1) circle (2pt) (0,1) circle (2pt) (0,-1) circle (2pt) (3,-1) circle (2pt) (0,2) circle (2pt) (3,2) circle (2pt);
        \node (a1) at (0,-1)[below]{d} ; 
       \node (a1) at (3,-1)[below]{e} ;
       \node (a1) at (3,2)[above]{c} ; 
       \node (a1) at (0,2)[above]{a} ; 
\draw (3,0)--(3,1);
\draw (0,1)--(3,1);
\draw (0,0)--(0,-1);
\draw (3,0)--(3,-1);
\draw (3,1)--(3,2);
\draw (0,1)--(0,2);
\draw [dashed,<-,>=Triangle](0,0)--(3,0);
\draw [dashed,<-,>=Triangle](0,0)--(0,1);
    \end{tikzpicture}}
    \caption{The restriction procedure of the network $N$ to a subset $S=\{a,c,d,e\}$.}
    \label{fig:restriction}
\end{figure}
\begin{figure}[H]
    \centering
  \subfigure[]{
    \begin{tikzpicture}[scale=0.55]
        \fill[black] (0,0) circle (3pt) (1,1) circle (3pt)  (3,1) circle (3pt) (4,0) circle (3pt) (1,-1) circle (3pt) (3,-1) circle (3pt)  (1,2) circle (3pt)  (3,2) circle (3pt) (3,-2) circle (3pt) (1,-2) circle (3pt);
       \node (a1) at (1,2)[above]{b} ; 
       \node (a1) at (3,2)[above]{c} ;
       \node (a1) at (1,-2)[below]{d} ; 
       \node (a1) at (3,-2)[below]{e} ; 
\draw (0,0)--(4,0);
\draw [dashed,->,>=Triangle](0,0) --(1,1);
\draw [dashed,<-,>=Triangle](1,1) --(3,1);
\draw (3,1)--(4,0);
\draw [dashed,->,>=Triangle](0,0)--(1,-1);
\draw [dashed,<-,>=Triangle](1,-1)--(3,-1);
\draw (3,-1)--(4,0);
\draw (1,1)--(1,2);
\draw (3,1)--(3,2);
\draw (1,-1)--(1,-2);
\draw (3,-1)--(3,-2);
    \end{tikzpicture}}
    \caption{The restriction $N|_{S'}$, which is not a nice network.}
    \label{fig:restriction not nice}
\end{figure}
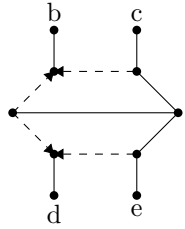
\end{example}

As we can see from the previous example, a restriction of a nice network needs not be nice. If $N$ is a tree, then the restriction to any subset of the leaves is again a tree, which is nice by \Cref{lemma:funnel-free and nice}. We would like to know whether a restriction of a nice network to any subset of its leaves is again nice network. The following sequence of technical lemmas present some properties of the restriction of semisimple nice level-2 networks. In what follows, we will denote by $r(N)$ the set of reticulation leaves of a simple network $N$. Similarly, the number $R(N)$ denotes the set of reticulation branches of semisimple network $N$.

\begin{lemma}\label{lemma:level of restriction}
 \begin{enumerate}
     \item Let $N$ be a cycle network on $X$. Then for any $S\subseteq X\setminus R(N)$, $N|_S$ is a tree. This statement also holds when $N$ is a semisimple strict level-2 network on $X$ with two reticulation branches.
     \item Let $N$ be a cycle network on $X$. Then for any $S\subseteq X$, $N|_S$ is a level-1 network.
     \item Let $N$ be a semisimple nice strict level-2 network on $X$ and $R(N)=\{A_1,A_2\}$. Let $S\subseteq X$ be any subset such that $S$ does not intersect both $A_1$ and $A_2$. Then $N|_S$ is a level-1 network. 
      \item Let $N$ be a semisimple nice strict level-2 network on $X$ and $R(N)=\{A_1\}$. Let $S\subseteq X$ be any subset such that $S$ does not intersect $A_1$. Then $N|_S$ is a level-1 network. 
       \item Let $N$ be a semisimple nice strict level-2 network on $X$ and $R(N)=\{A_1,A_2\}$. If $S\subseteq X$ is any subset containing two distinct elements $s_1,s_2\in S$ such that $s_1\in A_1$ and $s_2\in A_2$, then $N|_S$ is a strict level-2 network.
       \item Let $N$ be a semisimple nice strict level-2 network on $X$ and $R(N)=\{A_1\}$. Suppose that $N$ is obtained from a simple network of type $L_2^{n,(B_1|B_2|B_3)}$ by the identification process mentioned earlier and suppose that $A_1$ is contained in $B_i$. 
       \begin{enumerate}
           \item  If $S\subseteq X$ is any subset containing two distinct elements $s_1\in A_1$ and $s_2\in B_j$ for some $j\neq i$, then $N|_S$ is a strict level-2 network.
           \item If $S\subseteq X$ is any subset containing an element $s\in S\cap A_1$ and for all $s'\in S\setminus\{s\},s'\in B_i$, then $N|_S$ is a level-1 network.
       \end{enumerate}
 \end{enumerate}
\end{lemma}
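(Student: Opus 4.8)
The plan is to deduce all six statements from a single structural observation, together with the elementary fact that the restriction operation of \Cref{def:restriction of network} can never raise the cycle rank (first Betti number) of a biconnected component: it only passes to a subgraph, contracts degree-two vertices, and deletes parallel edges, each of which leaves the cycle rank unchanged or decreases it. Since a level-$k$ network is exactly one whose biconnected components have cycle rank at most $k$, every ``$N|_S$ is level-$1$'' or ``$N|_S$ is a tree'' conclusion below reduces to counting how many reticulation vertices survive in $N|_S$, and every ``$N|_S$ is strict level-$2$'' conclusion reduces to exhibiting two independent cycles that survive.

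The structural observation is this. Because $N$ is nice, it is funnel-free by \Cref{lemma:funnel-free and nice}, so every reticulation vertex $v$ has out-degree one, its unique out-edge $(v,w)$ leads to a tree vertex or a leaf, and the set of leaves reachable from $v$ by a directed path is a pendant subtree of $N$ — precisely the reticulation branch attached at $v$ when $v$ is a branch-attaching vertex. Since $v$ has out-degree one, a directed path passing through $v$ must leave $v$ along $(v,w)$; hence a directed path between two leaves $s_1,s_2\in S$ uses $v$ only if at least one of $s_1,s_2$ lies in the pendant subtree below $v$ (in the one-reticulation-branch case this pendant part may be a portion of the biconnected component rather than a reticulation branch). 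Conversely, if such a leaf belongs to $S$, the directed path climbing from it into the biconnected component does pick up $v$ together with both of its incoming edges. So ``which reticulation vertices occur in $N|_S$'' is governed by ``which reticulation branches (or downstream parts) $S$ meets.''

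With this in hand, the cycle-network parts (first sentence of (1), and (2)) are immediate: the unique cycle of a cycle network contains exactly one reticulation vertex $v$ with $R(N)=\{A_1\}$ its branch, so $S\cap A_1=\emptyset$ forces $v\notin N|_S$, breaking the cycle and, after contraction, leaving a tree, while for arbitrary $S$ at most that one cycle can survive. For a semisimple strict level-$2$ network with $R(N)=\{A_1,A_2\}$, the biconnected component is modelled on the generator $L_2$, a theta graph of cycle rank two, and by \Cref{lemma:different A_i} its two reticulation vertices $v_1,v_2$ lie on two distinct sides of $L_2$. If $S$ avoids both $A_1$ and $A_2$, both $v_i$ disappear, and deleting a vertex from each of two different sides of a theta graph destroys every cycle, so $N|_S$ is a tree (second sentence of (1)); if $S$ meets at most one of $A_1,A_2$, at most one $v_i$ survives, so $N|_S$ is level-$1$ ((3)); and if $S$ contains $s_1\in A_1$ and $s_2\in A_2$, the directed paths among $S$ retain both $v_1$ and $v_2$, all four of their incoming edges (hence the two sides carrying them) and a third connecting piece, so two independent cycles persist and $N|_S$ is strict level-$2$ ((5)). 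The one-reticulation-branch cases (4) and (6) go the same way: here $v_1$ attaches $A_1\subseteq B_i$ and still lies on a side of $L_2$, while the second reticulation vertex $v_2$ has its out-edge running back into the component; $S\cap A_1=\emptyset$ removes $v_1$ and drops the cycle rank to at most one ((4)); a pair $s_1\in A_1$, $s_2\in B_j$ with $j\ne i$ forces a directed path climbing through $v_1$, crossing $L_2$, and descending side $B_j$, keeping both reticulation vertices and two cycles ((6a)); and if the remaining leaves of $S$ all lie in $B_i$, every directed path among $S$ stays inside the pendant tree $A_1$ and the part of the component over side $B_i$, which contains only $v_1$, so $N|_S$ is level-$1$ ((6b)).

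The routine directions — that $N|_S$ is \emph{at most} level-$1$, or a tree — are all immediate from the non-increase of cycle rank. The real work, and the expected obstacle, is the matching lower bounds in (5) and (6a) and, most delicately, the level-$1$ claim in (6b): these require tracking which vertices and edges the union of directed paths actually contains, using the explicit shape of $L_2$ and the position of the two reticulation vertices on its three sides. In particular I expect (6b) to need a genuine case analysis over the finitely many valid orientations of $L_2$, verifying that when the chosen leaves are confined to side $B_i$ and the pendant tree $A_1$, the second reticulation vertex $v_2$ never lies on a directed path between two of them and hence never enters $N|_S$.
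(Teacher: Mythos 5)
Your overall framework is sound and in places cleaner than the paper's argument: the paper proves each upper bound (``$N|_S$ is a tree / level-1'') by an ad hoc analysis of which branches $S$ meets, whereas your observation that the restriction of \Cref{def:restriction of network} can never increase the cycle rank of a biconnected component disposes of all of these uniformly. Your treatment of parts (1)--(5) and (6a) --- deciding which reticulation vertices survive via the ``pendant part below $v$'' criterion and exhibiting the surviving theta graph for the lower bounds --- matches the paper's proof in substance, if not in packaging.

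The gap is in (6b), exactly where you predicted difficulty. The verification you propose --- that when $S\subseteq A_1\cup B_i$ every directed path among $S$ stays inside the branch $A_1$ and side $B_i$, so the purely interior reticulation vertex $v_2$ ``never enters $N|_S$'' --- is false for some orientations. Take the type-2 network of \Cref{fig:unrooted four leaves}: there $v_2$ is the right-hand purely interior vertex, its two in-edges are the terminal edges of the two leaf-free sides, and its out-edge lies on side $i$, the side carrying $v_1$. A directed path from a leaf $x\in B_i$ lying between the tree-type interior vertex $u$ and $v_1$ to a leaf $s\in A_1$ may run from $x$ to $u$, then to $v_2$ along \emph{either} leaf-free side (both are oriented towards $v_2$), then re-enter side $i$ through $v_2$'s out-edge and reach $v_1$; so the union of directed paths contains $v_2$, all three of its incident edges, and both leaf-free sides. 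Your own structural observation already shows this: $s$ is reachable from $v_2$ by a directed path, so paths through $v_2$ ending at $s$ belong to the union. The conclusion of (6b) is still true, but for the reason the paper gives: the sides carrying no element of $S$ survive only as parallel edges joining the two purely interior vertices (or, in other orientations, at most one of them is traversable at all), and parallel edges are deleted in the restriction, so the cycle rank drops to one. Your case analysis over orientations of $L_2$ would therefore have to establish a different statement --- not that $v_2$ is absent from the union of paths, but that after deleting parallel edges at most one connection between the two purely interior vertices outside side $i$ remains.
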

\begin{proof}
\begin{enumerate}
    \item We will prove the statement for a cycle network $N$. Let $S\subseteq X\setminus R(N).$ If the restriction $N|_S$ contains a cycle, then $S$ should contain an element of a reticulation branch corresponding to the unique reticulation vertex of $N$, a contradiction. Similar arguments will work for the other cases as well.
    \item Suppose that $R(N)=\{A_1\}$. Furthermore, let $S$ be the the unique cycle of $N$ and $a\in S$ be the reticulation vertex of $N$.  If $S\subseteq X\setminus R(N)$, then part (1) applies. If $S\subseteq A_1$, then it is immediate $N|_S$ is a tree as well. Suppose now that $S\cap A_1$ is not empty but $S\not\subseteq A_1$. If $S\setminus A_1$ is contained in a single branch of $N$, let us say $A_2$, then $N|_S$ is a tree. Indeed,  suppose that the branch $A_2$ is connected to $S$ via the cut-edge $e=(v,w)$ where $v\in S$. For every $s\in A_2$ and $t\in A_1$, any path connecting $s$ and $t$ passes through $v$ and $a$. In the restriction procedure $N|_S$, after collapsing all degree two vertices, we have a parallel edge connecting $v$ and $a$. This parallel edge will be further collapsed to just a single edge connecting $v$ and $a$, which gives us a tree.
    
    Now suppose that $S\setminus A_1$ is not contained in a single branch of $N$. Then there exists $s_1,s_2\in S$ such that $s_1$ belongs to, let us say, the branch $A_3$ and $s_2$ belongs to the branch $A_4$ where $A_3\neq A_4$. Suppose that the branch $A_3$ is connected to $S$ via the cut-edge $e_3=(v_3,w_3)$ where $v_3\in S$ and the branch $A_4$ is connected to $S$ via the cut-edge $e_4=(v_4,w_4)$ where $v_4\in S$.  For any $t\in A_1$, any path connecting $s_1$ or $s_2$ to $t$ pass through $v_3,v_4,$ and $a$. Thus, in the restriction procedure $N|_S$, the unique cycle in $N|_S$ has size at least three. Hence, in this case, $N|_S$ is a strict level-1 network.
    
    \item Suppose now that If $S$ does not intersect $A_1$ and $A_2$, then by part (1), $N|_S$ is a tree. Without loss of generality, we now suppose that $S$ intersects $A_1$ but not $A_2$. Then $N|_S$ can not be a strict level-2 network. Indeed, if $N|_S$ is a strict level-2 network, then $N|_S$ contains a strict level-2 biconnected component, let us say $B$. This level-2 biconnected component $B$ is unique as $N$ is semisimple. Moreover, $B$ contains two reticulation vertices. Hence, $S$ contains at least two leaf labels, one contained in $A_1$ and the other is contained in $A_2$, a contradiction.
    \item The proof of (4) is similar to (3).
    
    \item Let $B$ be the unique strict level-2 biconnected component of $N$. Suppose that $s_1$ and $s_2$ are connected to $B$ via cut-edges $e_1=(v_1,w_1)$ and $e_2=(v_2,w_2)$, respectively, where $v_1,v_2\in B.$ Any path connecting $s_1$ and $s_2$ passes through $v_1$ and $v_2$ and it contains the two purely interior vertices of $N.$ Indeed, if there exists a path connecting $s_1$ and $s_2$ that does not pass through $v_1$ and $v_2$ simultaneously, then there exist at least two distinct paths connecting $s_i$ to $B$ for some $i\in \{1,2\}$. Without loss of generality, assume that there exists at least two distinct paths connecting $s_1$ to $B$, namely the path $p_1$, connecting $s_1$ and $v_1$, and the path $p_2$, connecting $s_1$ and $v_3\in B$ for some $v_3\notin\{v_1,v_2\}$. Removing the edge $e_1=(v_1,w_1)$ would not disconnect $s_1$ from $B$ as $s_1$ is connected to $B$ via the path $p_2$, a contradiction since $e_1$ is a cut-edge.
    
    If $N$ is obtained from a simple network of type $L_2^{n,(B_1|B_2|B_3)}$ by the identification process mentioned earlier, then by \Cref{lemma:different A_i}, the two reticulation branches correspond to two distinct $B_i$'s. Hence $N|_{\{s_1,s_2\}}$ is a 2-leaf simple strict level-2 network. It implies that  $N|_{S}$ is again strict level-2 network because adding more element of $S$ to the restriction means adding more branches and leaves.
    \item The proof of (6a) is similar to (5). For (6b), the assumptions implies that in the restriction procedure, there will be parallel edges connecting the two purely interior vertices of $N$ and hence  they will be collapsed to a single edge. \qedhere
\end{enumerate}
\end{proof}

\begin{lemma}\label{lemma:restriction of level-1 network}
 Let $N$ be an $n$-leaf cycle network on $X$. Then for any non-empty subset $S\subseteq X$, $N|_S$ is a nice network.
\end{lemma}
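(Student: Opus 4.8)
The plan is to deduce the statement directly from two results already established in the excerpt. \Cref{lemma:level of restriction}(2) states that for a cycle network $N$ and any subset $S \subseteq X$, the restriction $N|_S$ is a level-$1$ semi-directed network; \Cref{lemma:funnel-free and nice}(2) states that every level-$1$ semi-directed network is nice. Composing these two implications shows that $N|_S$ is nice for every non-empty $S \subseteq X$, which is exactly the claim. So the proof is a two-step deduction, and most of the substance has already been carried out in the cited lemmas.

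A little more concretely, I would record the dichotomy underlying \Cref{lemma:level of restriction}: if $S \subseteq X \setminus R(N)$, i.e.\ $S$ is disjoint from the unique reticulation branch of $N$, then $N|_S$ is a tree by part~(1) of that lemma, and a tree is nice because it has no reticulation vertices and hence no funnel vertices; if $S$ meets the reticulation branch, then part~(2) gives that $N|_S$ is a (strict) level-$1$ network, and \Cref{lemma:funnel-free and nice}(2) applies. The degenerate small cases fit this pattern as well: for $|S| = 1$ the restriction is a single labeled leaf, which has no non-leaf vertex and is therefore vacuously funnel-free, hence nice by \Cref{lemma:funnel-free and nice}(1).

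There is no serious obstacle here, since the technical work sits inside \Cref{lemma:level of restriction}, whose proof already tracks the unique cycle of $N$ through the three steps of the restriction operation (taking the union of directed paths between the leaves of $S$, contracting degree-two vertices, deleting parallel edges) and confirms that no second reticulation vertex is ever created, so the level-$1$ property is preserved. If one instead wanted a from-scratch argument, the point to verify would be that no funnel vertex can appear in a level-$1$ network — a type~$A$ or type~$B$ funnel produces two undirected cycles sharing an edge, i.e.\ a biconnected component carrying two reticulation vertices, contradicting the level-$1$ condition — but this is precisely the content of \Cref{lemma:funnel-free and nice}(2) and need not be reproved.
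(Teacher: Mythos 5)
Your proposal is correct and follows exactly the paper's own argument: the paper likewise cites \Cref{lemma:level of restriction} to conclude that $N|_S$ is a level-1 network and then \Cref{lemma:funnel-free and nice}(2) to conclude that it is nice. The extra remarks on the tree case and singleton $S$ are harmless elaborations of the same two-step deduction.
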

\begin{proof}
For any non-empty subset $S\subseteq X=[n]$, the restriction $N|_S$ is a level-1 network by \Cref{lemma:level of restriction} and hence $N|_S$ is a nice network by \Cref{lemma:funnel-free and nice} (2).
\end{proof}
\begin{lemma}\label{lemma:restriction of network with two reticulation leaves}
 Let $N$ be an $n$-leaf simple nice strict level-2 network on $X$ where $|X|\geq 4$. Moreover, let us assume that $N$ has two reticulation leaves. Given two distinct leaves $a,b\in X$, there exist two distinct leaves $c,d\in X\setminus\{a,b\}$ such that $N|_{\{a,b,c,d\}}$ is a nice network. 
\end{lemma}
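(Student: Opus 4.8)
The plan is to combine the generator description of simple level-$2$ networks with \Cref{lemma:funnel-free and nice}, \Cref{lemma:level of restriction} and the classification in \Cref{lemma:small simple nice level-2}. Write $x_1,x_2$ for the two reticulation leaves of $N$ and put $v_i=up(x_i)$; these are exactly the two reticulation vertices of $N$. Since $N$ is simple it arises from the unique unrooted level-$2$ generator, i.e.\ from a theta-graph with two purely interior vertices $P,Q$ joined by three internally disjoint paths $\pi_1,\pi_2,\pi_3$, after placing a pendant leaf at every internal path vertex. A short orientation argument---using that $v_i$ has in-degree two while its unique out-edge is the pendant edge to $x_i$, so that both path-edges at $v_i$ point into $v_i$---shows that $v_1$ and $v_2$ lie on two distinct paths, say $v_i$ is an internal vertex of $\pi_i$ for $i=1,2$; thus $v_i$ splits $\pi_i$ into a $P$--$v_i$ part and a $v_i$--$Q$ part, with disjoint sets $L_i^P$ and $L_i^Q$ of pendant leaves on their interior vertices. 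Since any funnel of a level-$2$ network involves exactly two reticulation vertices, the only possible funnels of $N$ are $P$ and $Q$, and $P$ is a type-$B$ funnel precisely when $v_1,v_2$ are both adjacent to $P$, that is, when $L_1^P=L_2^P=\emptyset$ (similarly for $Q$). As $N$ is nice we may therefore fix non-reticulation leaves $p\in L_1^P\cup L_2^P$ and $q\in L_1^Q\cup L_2^Q$; these are distinct (the two sets are disjoint) and lie outside $\{x_1,x_2\}$.

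Given $a,b\in X$, I would argue by cases. If $n=4$, then $N|_{\{a,b,c,d\}}=N$ for the only choice of $c,d$, and we are done; so assume $n\ge 5$. If $\{x_1,x_2\}\not\subseteq\{a,b\}$, pick $i$ with $x_i\notin\{a,b\}$ and choose distinct $c,d\in X\setminus\{a,b,x_i\}$ (possible, as this set has at least two elements). Then $S:=\{a,b,c,d\}$ misses the reticulation branch $\{x_i\}$, so by the part of \Cref{lemma:level of restriction} treating subsets that meet at most one reticulation branch (a simple strict level-$2$ network being semisimple, with singleton reticulation branches $\{x_1\},\{x_2\}$), $N|_S$ is a level-$1$ network, hence nice by \Cref{lemma:funnel-free and nice}(2). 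In the remaining case $\{x_1,x_2\}=\{a,b\}$, take $c=p$ and $d=q$. In $N|_{\{x_1,x_2,p,q\}}$ both $v_1,v_2$ survive as reticulation vertices (their leaf children and both reticulation in-edges are retained), the only pendant leaves on $\pi_1\cup\pi_2$ are $x_1,x_2,p,q$, all of $\pi_3$ contracts to a single edge $P$--$Q$, and the portions of $\pi_1,\pi_2$ carrying no element of $S$ contract away. Running through the four possibilities---according to whether $p$ lies on $\pi_1$ or $\pi_2$ and whether $q$ lies on $\pi_1$ or $\pi_2$---the restriction is in each case, up to relabelling, the $4$-leaf network of type $1$ or of type $3$ in \Cref{fig:unrooted four leaves}, all of which are nice by \Cref{lemma:small simple nice level-2}. (As a sanity check, the restriction has two reticulation leaves $x_1,x_2$, so it cannot be of type $2$ or type $4$, which each have only one.)

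The main obstacle is this last case. One must rule out that $N|_{\{x_1,x_2,p,q\}}$ is one of the $4$-leaf strict level-$2$ networks that fail to be nice---such networks genuinely arise as restrictions of nice networks, see \Cref{fig:restriction not nice}---equivalently, that the restriction contains a funnel. Both surviving reticulation vertices keep their leaf children, so there is no type-$A$ funnel; and since the two reticulation vertices descend from the distinct paths $\pi_1,\pi_2$, a type-$B$ funnel could only occur at a vertex obtained by contracting $P$ (or $Q$) together with the reticulation vertices immediately below it. The whole point of choosing $p\in L_1^P\cup L_2^P$ is that the vertex carrying $p$ sits strictly between $P$ and one of $v_1,v_2$ and hence survives the restriction, which blocks that contraction at $P$; the leaf $q$ does the same at $Q$. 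Making this precise---a careful bookkeeping of which vertices and edges of $N$ survive in $N|_{\{x_1,x_2,p,q\}}$ and which edges become parallel, together with nailing down the initial structural claims about the generator (distinct paths, inward orientation at $v_1,v_2$, which is where the hypothesis of exactly two reticulation leaves is really used)---is the technical heart of the argument.
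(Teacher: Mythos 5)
Your proof is correct, and it takes a genuinely different route from the paper's. The paper never invokes the generator explicitly: it fixes the two purely interior vertices $u,v$ and runs a case analysis on which of $a,b$ are reticulation leaves and on the distances $d(a,u),d(b,u),d(a,v),d(b,v)$, in each of many subcases exhibiting concrete leaves at distance three from $a$ or $b$ (tabulated in \Cref{table:illustration two reticulation leaves}). You collapse everything except the case $\{a,b\}=r(N)$ into one line by simply avoiding one reticulation leaf and invoking \Cref{lemma:level of restriction} (3) together with \Cref{lemma:funnel-free and nice} (2), and in the remaining case you argue structurally on the theta graph, choosing $p$ and $q$ exactly so as to block the only two candidate type-B funnels $P$ and $Q$; this is shorter and more transparent, and your check that the restriction must be of type $1$ or $3$ is sound. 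Two caveats. First, the claim that $v_1,v_2$ lie on distinct paths of the generator is true but needs more than the local in/out-degree bookkeeping you sketch: if they sat on a common path, the root of any orientation would be forced onto the segment between them, and then the two reticulation-free paths of the theta graph would have to be oriented $P\to Q$ and $Q\to P$ respectively (to give $P$ and $Q$ in-degree one), producing a directed cycle; that global acyclicity step is what actually closes the argument. Second, the paper's proof is cited later not only for its statement but for its construction: \Cref{proposition:distinguishability r(N_1)=r(N_2)=2}, \Cref{proposition:distinguishability r(N_1)>=r(N_2)} and \Cref{theorem:comparison-level-2-and-level-1} extract from ``the first case'' the extra property that $c,d$ can be chosen at distance three from $a$ or $b$, which feeds the reticulation-conflict hypotheses. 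Your $p\in L_1^P\cup L_2^P$ and $q\in L_1^Q\cup L_2^Q$ need not satisfy this, though it is easily arranged by always taking the pendant leaf whose parent is adjacent to $v_1$ or $v_2$ on the relevant segment; as written, substituting your proof would leave those downstream citations unsupported.
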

\begin{proof}
Let $u$ and $v$ be the two purely interior vertices of $N$.  For any leaf $x$, let $p^x_u$ be the undirected shortest path connecting the unique parent of $x$, $up(x)$ and the vertex $u$ that does not contain $v$. Similarly, we can define $p^x_v$. The proof will be divided into three cases. For illustrations in each case in the proof, see \Cref{table:illustration two reticulation leaves} in \Cref{sec:appendix}.
\begin{enumerate}
    \item Case 1: suppose that $r(N)=\{a,b\}$. Because $N$ is nice, $d(a,u)=d(b,u)=2$. Indeed, $d(a,u)=d(b,u)=2$ implies that $u$ or $v$ is a funnel vertex, respectively. Here $d(x,y)$ denotes the distance between vertex $x$ and $y$ in the network $N$. Therefore, there are three possibilities: 
    $$d(a,u)=2,d(b,u)>2; \qquad d(a,u)>2,d(b,u)=2; \qquad \mbox{ or }\qquad d(a,u)>2,d(b,u)>2.$$
    \begin{itemize}
        \item Subcase 1a: suppose that $d(a,u)=2$  and $d(b,u)>2$. Because $d(b,u)>2$, there exists $c\in X\setminus\{a,b\}$ such that $d(b,c)=3$ and $up(c)\in p^b_u.$ If $d(a,v)=2$, as $N$ is nice network, then $d(b,v)>2$. This implies that there exists $d\in X\setminus\{a,b,c\}$ such that $d(b,d)=3$ and $up(d)\in p^b_v.$ By By our construction, $N|_{\{a,b,c,d\}}$ is a simple nice strict level-2  network. If $d(a,v)>2$, then there exists $d'\in X\setminus\{a,b,c\}$ such that $d(a,d')=3$ and $up(d')\in p^a_v.$ Again, by our construction, $N|_{\{a,b,c,d'\}}$ is a simple nice strict level-2 network. The case for $d(a,u)>2$ and $d(b,u)=2$ can be treated similarly. 
        \item Subcase 1b: suppose that $d(a,u)>2$ and $d(b,u)>2$. Since $d(a,u)>2$, then there exists $c\in X\setminus\{a,b\}$ such that $d(a,c)=3$ and $up(c)\in p^a_u.$ Similarly, since $d(b,u)>2$, then there exists $d\in X\setminus\{a,b,c\}$ such that $d(b,d)=3$ and $up(d)\in p^b_u$. If $d(a,v)=2$, then $d(b,v)>2$. Then there exists $e\in X\setminus\{a,b,c,d\}$ such that $d(b,e)=3$ and $up(e)\in p^b_v.$ Hence $N|_{\{a,b,c,e\}}$ is a simple nice strict level-2 network. If $d(a,v)>2$, then there exists $e'\in X\setminus\{a,b,c,d\}$ such that $d(e',a)=3$ and $up(e')\in p^a_v$. Hence $N|_{\{a,b,c,e'\}}$ is a simple nice strict level-2 network.
    \end{itemize}
    \item Case 2: without loss of generality, we now suppose that $r(N)=\{a,c\}$ where $c\neq b.$ Suppose that $d(a,b)=3$ and moreover, without loss of generality, we also suppose that $up(b)\in p^a_u$. If $d(a,v)=2$, then $d(c,v)>2.$ Then there exists $d\in X\setminus\{a,b,c\}$ such that $d(c,d)=3$ and $up(d)\in p^c_v.$ Hence, $N|_{\{a,b,c,d\}}$ is a simple nice strict level-2 network. If $d(a,v)>2,$ then there exist $d'\in X\setminus\{a,b,c\}$ such that $d(a,d')=3$ and $up(d')\in p^a_v.$ Thus, $N|_{\{a,b,c,d'\}}$ is a simple nice strict level-2 network. The case for $d(b,c)=3$ can be treated similarly. 
    
    We are left with the case $d(a,b)>3$ and $d(b,c)>3.$ As in part (1), we will deal with three possibilities: 
     $$d(a,u)=2,d(c,u)>2; \qquad d(a,u)>2,d(c,u)=2; \qquad \mbox{ or }\qquad d(a,u)>2,d(c,u)>2.$$
     \begin{itemize}
         \item Subcase 2a: suppose that $d(a,u)=2$  and $d(c,u)>2$. Because $d(c,u)>2$, there exists $d\in X\setminus\{a,c\}$ such that $d(c,d)=3$ and $up(d)\in p^c_u$. If $d(a,v)=2$, then $d(c,v)>2$. Then there exists $e\in X\setminus\{a,c,d\}$ such that $d(c,e)=3$ and $up(e)\in p^c_v$. If $b\in\{d,e\}$, then $N|_{\{a,c,d,e\}}$ is a simple nice strict level-2 network. Now suppose that $b\notin\{d,e\}$. If $p^b_u\subseteq p^c_u$, then $N|_{\{a,b,c,e\}}$ is a simple nice strict level-2 network. If $p^b_v\subseteq p^c_v$, then $N|_{\{a,b,c,d\}}$ is a simple nice strict level-2 network. If $up(b)$ is neither contained in $p^c_u$ nor $p^c_v$, then $N|_{\{a,b,d,e\}}$ is a simple level-1 network, which is a nice network. 
         
         If $d(a,v)>2$, then there exists $e'\in X\setminus\{a,c,d\}$ such that $d(a,e')=3$ and $up(e')\in p^a_v$. If $b\in\{d,e'\}$, then $N|_{\{a,c,d,e'\}}$ is a simple nice strict level-2 network. Now suppose that $b\notin\{d,e'\}$. If $p^b_v\subseteq p^a_v$, then $N|_{\{a,b,c,d\}}$ is a simple nice strict level-2 network. If $p^b_u\subseteq p^c_u$, then $N|_{\{a,b,c,e'\}}$ is a simple nice strict level-2 network. If $up(b)$ is neither contained in  $p^a_v$ nor $p^c_u$, then $N|_{\{a,b,d,e'\}}$ is a simple level-1 network, which is a nice network. The case for $d(a,u)>2,d(c,u)=2$ can be treated similarly.
         \item Subcase 2b: suppose that $d(a,u)>2$ and $d(c,u)>2$. Then there exist two distinct elements $d,e\in X\setminus \{a,c\}$ such that $d(a,d)=3=d(c,e)$, $up(d)\in p^a_u$, and $up(e)\in p^c_u$. Now suppose that $d(a,v)=2.$ Then there exists $e\in X\setminus\{a,c,d,e\}$ such that $d(c,e')=3$ and $up(e')\in p^c_v$. If $b\in \{d,e'\}$, then  $N|_{\{a,c, d,e'\}}$ is a simple nice strict level-2 network. Now suppose that $b\notin \{d,e'\}$.  If $p^b_u\subseteq p^a_u$, then $N|_{\{a,b,c,e'\}}$ is a simple nice strict level-2 network. If $p^b_u\subseteq p^c_u$, then $N|_{\{a,b,d,e'\}}$ is a level-1 network.  If $p^b_v\subseteq p^c_v$, then $N|_{\{a,b,c,d\}}$ is a simple nice strict level-2 network. If $up(b)$ is not contained in the union of $p^a_u,p^c_u,$ and $p^c_v$, then $N|_{\{a,b,d,e'\}}$ is a level-1 network.
         
         If $d(a,v)>2$, then there exists $f\in X\setminus\{a,c,d,e\}$ such that $d(a,f)=3$ and $f\in p^a_v$. If $b\in \{d,e\}$, then $N|_{\{a,b,c,f\}}$ is a nice strict level-2 network. If $b=f$, then $N|_{\{a,b,c,e\}}$ is a nice strict level-2 network. Now suppose that $b\notin \{d,e,f\}$. If $p^b_u\subseteq p^a_u$ or $p^b_u\subseteq p^c_u$, then $N|_{\{a,b,c,f\}}$ is a nice strict level-2 network. If $p^b_v\subseteq p^a_v$ or $p^b_v\subseteq p^c_v$, then $N|_{\{a,b,c,d\}}$ is a nice strict level-2 network. If $up(b)$ is not contained in the union of $p^a_u,p^a_v,p^c_u$, and $p^c_v$, then $N|_{\{a,b,c,f\}}$ is a nice strict level-2 network.
     \end{itemize}
    \item Case 3: finally, we consider the case for both $a$ and $b$ are not reticulation vertices. We may assume $|X|\geq 5$ as the lemma statement is trivial for $|X|=4.$ Let $r(N)=\{c,d\}$. Since $|X|\geq 5$, there exists $e\in X\setminus\{a,b,c,d\}$ which is not a reticulation vertex. Then by \Cref{lemma:level of restriction} (3), $N|_{\{a,b,c,e\}}$ is a level-1 network, which is nice. \qedhere
\end{enumerate}
\end{proof}

\begin{lemma}\label{lemma:restriction of network with one reticulation leaf}
 Let $N$ be an $n$-leaf simple nice strict level-2 network on $X$ where $|X|\geq 4$. Moreover, let us assume that $N$ has only one reticulation leaf. Given a leaf $a\in X$, there exist three distinct leaves $b,c,d\in X\setminus\{a\}$ such that $N|_{\{a,b,c,d\}}$ is a nice network. 
\end{lemma}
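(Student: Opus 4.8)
The plan is to follow the pattern of the proof of \Cref{lemma:restriction of network with two reticulation leaves}, the argument being shorter here because there is only one distinguished leaf to keep track of. Let $z$ denote the unique reticulation leaf of $N$. If $|X|=4$ we take $\{b,c,d\}=X\setminus\{a\}$, so $N|_{\{a,b,c,d\}}=N$, which is nice by hypothesis; so assume $|X|\geq 5$. If $a\neq z$, pick three distinct leaves $b,c,d\in X\setminus\{a,z\}$ (possible since $|X|\geq 5$). Viewing the simple network $N$ as a semisimple network, its set of branches consists of the individual leaves and its only reticulation branch is $\{z\}$; since $S:=\{a,b,c,d\}$ does not meet $\{z\}$, \Cref{lemma:level of restriction} (4) gives that $N|_S$ is a level-1 network, hence nice by \Cref{lemma:funnel-free and nice} (2).

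The main case is $a=z$. First record the local structure of $N$. It has exactly two purely interior vertices $u,v$, namely the two vertices of the unrooted level-2 generator $L_2$, joined by three internally disjoint paths (the ``arms''). If $y$ is a reticulation vertex internal to an arm, then its two reticulation edges lie among its two arm-edges, so its remaining incident edge -- the edge to its pendant leaf -- is directed out of $y$, making that pendant leaf a reticulation leaf; since $|r(N)|=1$ this forces $w:=up(a)$ to be the only arm-internal reticulation vertex of $N$, and the second reticulation vertex of $N$ to be one of $u,v$, say $u$, so that $v$ is a tree vertex. Because the restriction procedure never creates a reticulation vertex, every reticulation vertex of $N|_S$ lies in $\{w,u\}$, and $w$ always survives; if $u$ does not survive, $N|_S$ is a tree or a level-1 network and is nice. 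If $u$ does survive, $N|_S$ is a 4-leaf level-2 network, and one checks -- using that $w$'s only child is the leaf $a$, that $u$ has out-degree one, and that the reticulation vertices of $N|_S$ are exactly $\{w,u\}$ -- that its only possible funnel is a type-B funnel at $v$, arising precisely when the restriction makes $v$ a parent of both $w$ and $u$. The arm through $w$ is oriented away from both $u$ and $v$ (its edges flow toward the peak $w$), so $v$ has exactly one further out-arm, which runs monotonically to $u$. We therefore choose $b$ to be a leaf hanging from an internal vertex of $w$'s arm strictly between $v$ and $w$ if such a vertex exists (that is, if $d(a,v)>2$), and otherwise a leaf hanging from an internal vertex of the arm from $v$ to $u$ (which is nonempty, since otherwise $v$ would already be a funnel of $N$); in either case $v$ is not a funnel of $N|_S$. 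We then take $c$ to be a leaf not on $w$'s arm -- one exists because at least two of the three arms of $L_2$ are subdivided, $N$ having no parallel edges -- which keeps $N|_S$ connected and orientable, and let $d$ be any remaining leaf. Then $N|_{\{a,b,c,d\}}$ is funnel-free, hence nice by \Cref{lemma:funnel-free and nice} (1).

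The chief obstacle is the $a=z$ case: one must read off the forced local orientation of a simple strict level-2 network near its two reticulation vertices, pin down exactly which collapses of arm-segments in a restriction can create a funnel, and verify that the prescribed choice of $b$ (with any admissible $c,d$) leaves $N|_S$ both funnel-free and a legitimate connected network -- in other words, the same distance-and-path bookkeeping that underlies the proof of \Cref{lemma:restriction of network with two reticulation leaves}, here considerably lightened by the hypothesis $|r(N)|=1$.
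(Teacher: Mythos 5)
Your reduction of the case $a\neq z$ to \Cref{lemma:level of restriction}~(4) and \Cref{lemma:funnel-free and nice}~(2) is a legitimate (and shorter) way to get a nice restriction for the lemma as literally stated, although note that the paper's later results invoke ``the proof of'' this lemma to extract a \emph{strict level-2} restriction, which your level-1 shortcut does not supply. The real problem is in the main case $a=z$. Your claim that ``the only possible funnel of $N|_S$ is a type-B funnel at $v$'' is false: a funnel of type A can occur at the purely interior reticulation vertex $u$. Indeed, $u$'s unique out-edge is the first edge of the segment of $w$'s arm running from $u$ to $w=up(a)$, and if $S$ contains no leaf pendant to the interior of that segment, the whole segment contracts in $N|_S$ to a single reticulation edge $u\to w$; then $u$ is a reticulation vertex whose child is the reticulation vertex $w$, and deleting that edge together with one of $u$'s two in-edges leaves $u$ as a degree-one vertex of $T_\sigma$, so $N|_S$ is not nice. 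Your recipe never places a leaf on this segment: $b$ is chosen on the $v$-side of $w$'s arm or on an arm from $v$ to $u$, $c$ is explicitly chosen off $w$'s arm, and $d$ is arbitrary. Concretely, take arm~1 to be $u$--$x_1$--$w$--$y_1$--$v$ with pendant leaves $\ell_{x_1},a,\ell_{y_1}$ and arms 2 and 3 each a single subdivided edge with pendant leaves $\ell_{b_1},\ell_{c_1}$ and reticulation edges into $u$; this $N$ is simple, nice, strict level-2 with unique reticulation leaf $a$, yet your choice $S=\{a,\ell_{y_1},\ell_{b_1},\ell_{c_1}\}$ contracts $x_1$ and produces exactly the bad configuration above.

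This is precisely the point the paper's proof is built around: since $d(a,u)>2$ (forced by niceness), there is a leaf $c$ with $d(a,c)=3$ and $up(c)\in p^a_u$, and that leaf is \emph{always} put into $S$ to keep a tree vertex between $u$ and $w$. A secondary issue: in the sub-case $d(a,v)=2$, the vertex $v$ is adjacent to $w$, so to block the type-B funnel at $v$ one must prevent $v$ from also becoming a parent of $u$, which requires a leaf of $S$ pendant to \emph{each} of the two arms carrying the reticulation edges into $u$ (the paper's $d$ and $e$ with $up(d)\in p_1$, $up(e)\in p_2$); your recipe reserves only one slot for ``the arm from $v$ to $u$''. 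So the construction, as written, does not prove the lemma, and the missing ingredient is the distance bookkeeping around $u$ that your last paragraph acknowledges but does not actually carry out.
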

\begin{proof}
Let $u$ and $v$ be the two purely interior vertices of $N$ and $r(N)$ denote the set of reticulation leaves of $N$. Without loss of generality, suppose $r(N)=\{u,b\}$ for some $b\in X.$ Let $e_1$ and $e_2$ be two reticulation edges directed to $u$. Moreover, suppose that $p_1$ and $p_2$ denote the undirected shortest path connecting $v$ and $u$ containing $e_1$ and $e_2$, respectively. Furthermore, for any leaf $x$, let $p^x_u$ be the undirected shortest path connecting the unique parent of $x$, $up(x)$ and the vertex $u$ that does not contain $v$. Similarly, we can define $p^x_v$.We divide the proof into two cases. For illustrations in each case in the proof, see \Cref{table:illustration one reticulation leaf} in \Cref{sec:appendix}.

\begin{enumerate}
    \item Case 1: the vertex $a$ is a reticulation leaf, i.e. $a=b$. Since $u\in r(N)$, $d(a,u)>2.$ Then there exists $c\in X\setminus\{a\}$ such that $d(a,c)=3$ and $p^c_u\subseteq p^a_u$.  If $d(a,v)=2$, then $d(u,v)>2$. Then there exist two distinct elements $d,e\in X\setminus\{a,c\}$ such that $d(d,u)=2=d(e,u)$, $up(d)\in p_1$ and $up(e)\in p_2$. Then $N|_{\{a,c,d,e\}}$ is a simple nice strict level-2 network. If $d(a,v)>2$, then there exists $c'\in X\setminus \{a,c\}$ such that $d(a,c')=3$ and $p^{c'}_v\subseteq p^a_v$. Then choose any element $d'\in X\setminus\{a,c,c'\}$ such that $up(d')\in p_1$ or $up(d')\in p_2.$ Otherwise, there are parallel edges connecting $u$ and $v$. Hence $N|_{\{a,c,c',d'\}}$ is a simple nice strict level-2 network.
    
    \item Case 2: the vertex $a$ is not a reticulation vertex, i.e. $b\neq a$. Since $u\in r(N)$, $d(b,u)>2.$ Then there exists $c\in X\setminus\{b\}$ such that $d(b,c)=3$ and $p^c_u\subseteq p^b_u.$ If $d(b,v)=2$, then $d(u,v)>2$. Then there exist two distinct elements $d,e\in X\setminus\{b,c\}$ such that $d(d,u)=2=d(e,u)$, $up(d)\in p_1$ and $up(e)\in p_2$. If $a\in \{c,d,e\}$, then $N|_{\{b,c,d,e\}}$ is a simple nice strict level-2 network. Suppose that $a\notin \{c,d,e\}$. If $p^a_u\subseteq p^b_u$, then $N|_{\{a,b,d,e\}}$ is a simple nice strict level-2 network. If  $p^a_u\subseteq p_1$ or $p^a_u\subseteq p_2$, then $N|_{\{b,c,d,e\}}$ is a simple nice strict level-2 network. 
    
    If $d(b,v)>2$, then there exists $c'\in X\setminus\{b,c\}$ such that $d(c',b)=3$ and $p^{c'}_v\subseteq p^b_v$.Then choose any element $d'\in X\setminus\{b,c,c'\}$ such that $up(d')\in p_1$ or $up(d')\in p_2.$ If $a\in \{c,c',d'\}$, then $N|_{\{b,c,c',d'\}}$ is a simple nice strict level-2 network. Now suppose that $a\notin \{c,c',d'\}$. If $p^a_u\subseteq p^b_u$, then $N|_{\{a,b,c',d'\}}$ is a simple nice strict level-2 network. If $p^a_v\subseteq p^b_v$, then $N|_{\{a,b,c,d'\}}$ is a simple nice strict level-2 network. If $p^a_u\subseteq p_1$ or $p^a_u\subseteq p_2$, then $N|_{\{a,b,c,c'\}}$ is a simple nice level-2 network. \qedhere
\end{enumerate}
\end{proof}

The network restriction procedure enables us to use the following powerful lemma, which tells us that the information of being able to distinguish smaller subnetworks on some subsets of the leaf set can be used to distinguish the original networks.

\begin{lemma}[\cite{gross2018distinguishing}, Proposition 4.3]\label{lemma:variety of restriction}
 Let $N$ and $M$ be two distinct $n$-leaf networks on $X$ and $S\subseteq X$. If $V_{N|_S}\not\subseteq V_{M|_S}$, then $V_{N}\not\subseteq V_{M}.$
\end{lemma}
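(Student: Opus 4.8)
\textit{Strategy.} The plan is to argue by contraposition: assuming $V_N\subseteq V_M$, deduce $V_{N|_S}\subseteq V_{M|_S}$. The bridge between a network and its restriction is \emph{marginalization}. Writing the probability coordinates of an $n$-leaf model as $p_{i_1\dots i_n}$, let
\[
\mu\colon \mathbb{C}^{4^{n}}\longrightarrow\mathbb{C}^{4^{|S|}}
\]
be the linear map sending $(p_{i_1\dots i_n})$ to the vector whose coordinate indexed by a labeling of the leaves of $S$ is the sum of all $p_{i_1\dots i_n}$ agreeing with it on $S$; that is, $\mu$ sums out the coordinates indexed by the leaves in $X\setminus S$. (In Fourier coordinates $\mu$ is just the projection onto the coordinates $q_{g_1\dots g_n}$ with $g_i=0$ for every $i\notin S$.) Being linear, $\mu$ is continuous for the Zariski topology, so $\overline{\mu(\overline A)}=\overline{\mu(A)}$ for every subset $A$.

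The crux of the argument is the claim that for \emph{every} network $L$ on $X$,
\[
\overline{\mu(V_L)}\;=\;V_{L|_S}\qquad(\ast).
\]
Granting $(\ast)$, the lemma is a one-line diagram chase: from $V_N\subseteq V_M$ we get $\mu(V_N)\subseteq\mu(V_M)$, hence
\[
V_{N|_S}\;=\;\overline{\mu(V_N)}\;\subseteq\;\overline{\mu(V_M)}\;=\;V_{M|_S},
\]
and contraposing yields the statement. So all the content lies in $(\ast)$.

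\textit{Proof of $(\ast)$.} Since $V_L=\overline{\mathrm{Im}\,\varphi_L}$ and closure commutes with $\mu$ as above, it suffices to show $\mu(\mathrm{Im}\,\varphi_L)$ and $\mathrm{Im}\,\varphi_{L|_S}$ have the same Zariski closure, which I would obtain from the two inclusions $\mu(\mathrm{Im}\,\varphi_L)\subseteq\mathrm{Im}\,\varphi_{L|_S}$ and $\mathrm{Im}\,\varphi_{L|_S}\subseteq\mu(\mathrm{Im}\,\varphi_L)$. For the first, recall from \Cref{section:phylogenetic network model} that $(p_L)_\omega=\sum_{\sigma\in B_r}w_\sigma\,(p_{T_\sigma})_\omega$ with $w_\sigma=\prod_i\delta_i^{1-\sigma_i}(1-\delta_i)^{\sigma_i}$, so $\mu(p_L)=\sum_\sigma w_\sigma\,\mu(p_{T_\sigma})$. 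Marginalizing the Markov model on a tree $T_\sigma$ over the leaves in $X\setminus S$ deletes those leaves one at a time (using that every transition matrix is row-stochastic) and then suppresses the degree-two vertices that arise, multiplying the transition matrices along them; the result is exactly the distribution of the restricted tree $T_\sigma|_S$ with composed transition parameters, and $T_\sigma|_S$ is one of the trees displayed by $L|_S$. Grouping the sum over $\sigma$ by which reticulation edges of $L$ survive in $L|_S$, and checking that the weights $w_\sigma$ reassemble into a product of the reticulation parameters of $L|_S$ — those attached to reticulation vertices of $L$ destroyed by the restriction contribute $\delta+(1-\delta)=1$ and drop out — gives $\mu(p_L)\in\mathrm{Im}\,\varphi_{L|_S}$. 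For the reverse inclusion I would lift a parameter choice on $L|_S$ to $L$: assign the transition matrix of each edge of $L|_S$ to the first edge of the corresponding path in $L$ and the identity matrix to every other tree edge of $L$, give each reticulation vertex of $L$ surviving in $L|_S$ its prescribed parameter, and fix the remaining reticulation parameters of $L$ arbitrarily; marginalizing the resulting distribution recovers the chosen distribution on $L|_S$.

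\textit{Main obstacle.} The diagram chase deducing the lemma from $(\ast)$ is purely formal; the real work is $(\ast)$ itself, i.e.\ showing that marginalizing the Markov model of $L$ over $X\setminus S$ produces a Zariski-dense subset of the Markov model of $L|_S$. I expect the bookkeeping to be the hard part: matching the trees displayed by $L$ with those displayed by $L|_S$ and verifying that correspondence is onto, handling the reticulation vertices of $L$ that are merged away by the ``contract degree-two vertices'' and ``remove parallel edges'' steps of \Cref{def:restriction of network} (so that their parameters factor out with total weight $1$), and carrying out the explicit lifting needed for the reverse inclusion.
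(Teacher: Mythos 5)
The paper does not actually prove this lemma: it is imported verbatim from \cite{gross2018distinguishing}, Proposition~4.3, and your marginalization-plus-contraposition strategy is exactly the argument used in that source. The reduction of the lemma to the identity $(\ast)$, the use of Zariski continuity of the linear map $\mu$, and the reverse inclusion via lifting parameters (identity matrices on contracted edges, and a degenerate reticulation parameter selecting one side of any pair of merged parallel paths) are all sound.

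The one step that does not work as you state it is the claim that reticulation vertices of $L$ destroyed by the restriction ``contribute $\delta+(1-\delta)=1$ and drop out.'' That is correct only when the reticulation vertex $v$ is absent from the union of up--down paths, so that the two displayed trees differing at $v$ restrict to literally the same marginalized distribution. But $v$ can also survive the path-union step and disappear only at the parallel-edge stage: if $v$ leads to a leaf of $S$ while no other vertex of its cycle remains a branch point of the restriction, both sides of the cycle contract to a pair of parallel edges into $v$, which are then merged. In that case the two displayed trees restrict to the \emph{same topology} but with \emph{different} composed transition matrices along the merged edge, so their marginals are two distinct points of the same tree model and you cannot simply add the weights. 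The mixture $\delta\,\mathbf{p}_1+(1-\delta)\,\mathbf{p}_2$ does still lie in $\mathrm{Im}\,\varphi_{L|_S}$, but for a different reason: in Fourier coordinates every edge acts diagonally, the composed parameters along the two sides of the cycle are two admissible edge-parameter vectors, and their convex combination is again admissible because the JC, K2P, and K3P constraints ($a^e_C=a^e_G=a^e_T$, resp.\ $a^e_G=a^e_T$, resp.\ none) are linear. You need to state and use this convexity of the edge-parameter set; without it the grouping argument for $(\ast)$ has a hole precisely in the case your ``main obstacle'' paragraph gestures at but resolves with the wrong mechanism.
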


To derive results for level-2 networks with at least five leaves, our strategy is that we will restrict our observation to some smaller subnetworks. Using \Cref{lemma:containment level-2 4 leaves}, we can derive results for these subnetworks. Then we will use \Cref{lemma:variety of restriction} to derive results on the original networks with at least five leaves.  Another main tool to obtain the results is some earlier results stating that two of large cycle networks are distinguishable, which are provided in \cite{gross2018distinguishing} under the JC model and in \cite{hollering2020identifiability} under the K2P and K3P models. Additionally, the result for general triangle-free level-1 networks is provided in \cite{gross2020distinguishing}. Results provided in this paper are obtained under the JC, K2P or K3P model unless specifically stated otherwise. We will start with the following proposition, which provides some results on the varieties associated with simple nice strict level-2 and level-1 networks with at least five leaves.

\begin{proposition}\label{theorem:comparison-level-2-and-level-1}
For $n\geq 5,$ let $N_1$ be an $n$-leaf simple nice strict level-2 network.
\begin{enumerate}  
    \item If $N_2$ is an $n$-leaf cycle network, then $V_{N_1}\not\subseteq V_{N_2}.$
    \item  If $N_1$ has two reticulation leaves and $N_2$ is an $n$-leaf $n$-cycle network such that the unique reticulation leaf of $N_2$ is not a reticulation leaf in $N_1$, then they are distinguishable.
\end{enumerate}
  
\end{proposition}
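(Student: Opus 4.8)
The plan is to reduce both parts to the four-leaf results of \Cref{lemma:containment level-2 4 leaves} and to the known distinguishability of four-leaf cycle networks, by passing to carefully chosen four-leaf restrictions and invoking \Cref{lemma:variety of restriction}.

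For part (1), note first that $|r(N_1)|\in\{1,2\}$ by \Cref{lemma:r(N)}. If $r(N_1)=\{a,b\}$, apply \Cref{lemma:restriction of network with two reticulation leaves} to the pair $a,b$; if $r(N_1)=\{a\}$, apply \Cref{lemma:restriction of network with one reticulation leaf} to $a$. Either way one obtains a four-element subset $S$ containing $r(N_1)$ with $N_1|_S$ nice, and, since the relevant case of those lemmas retains both reticulation leaves (resp.\ the unique reticulation leaf together with enough further leaves to keep the two purely interior vertices apart), $N_1|_S$ still has exactly two reticulation vertices and is strict level-2; this can also be read off from \Cref{lemma:level of restriction}. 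A nice four-leaf strict level-2 network with two reticulation vertices is semisimple by \Cref{def:semisimple level-2}, hence simple by \Cref{lemma:small semisimple networks}. On the other hand $N_2$ is a cycle network, so $N_2|_S$ is level-1 by \Cref{lemma:level of restriction}(2). Thus \Cref{lemma:containment level-2 4 leaves}(i) gives $V_{N_1|_S}\not\subseteq V_{N_2|_S}$, and \Cref{lemma:variety of restriction} upgrades this to $V_{N_1}\not\subseteq V_{N_2}$.

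For part (2), by part (1) it remains only to show $V_{N_2}\not\subseteq V_{N_1}$, since the two non-containments force $V_{N_1}\cap V_{N_2}$ to be a proper subvariety of each, i.e.\ $N_1$ and $N_2$ are distinguishable. Let $x$ be the unique reticulation leaf of the $n$-cycle network $N_2$, and let $r(N_1)=\{a,b\}$ with $x\notin\{a,b\}$ by hypothesis. Since $n\geq 5$, choose a four-element subset $S$ with $x,b\in S$ and $a\notin S$, picking the remaining two leaves with the same care as in the proofs of \Cref{lemma:restriction of network with two reticulation leaves} and \Cref{lemma:restriction of network with one reticulation leaf}. Because every branch of the $n$-cycle $N_2$ is a singleton, $N_2|_S$ is a $4$-cycle network whose unique reticulation leaf is $x$. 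On the other side, $a\notin S$ forces $N_1|_S$ to be level-1 by \Cref{lemma:level of restriction}(3); moreover the reticulation vertex above $a$ does not survive in $N_1|_S$ (nothing of $S$ lies below it), so the only reticulation vertex of $N_1$ that can survive in $N_1|_S$ is the parent of $b$, whose unique child in $N_1|_S$ is $b$. Hence $x$ is not a reticulation leaf of $N_1|_S$, and in particular $N_1|_S\neq N_2|_S$. Reading off the ``$4$-cycle'' row of \Cref{table}, which collects the four-leaf results of \cite{gross2018distinguishing,gross2020distinguishing,hollering2020identifiability}, we obtain $V_{N_2|_S}\not\subseteq V_{N_1|_S}$ in every case, and \Cref{lemma:variety of restriction} yields $V_{N_2}\not\subseteq V_{N_1}$.

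The chief obstacle is controlling the level of the four-leaf restrictions. In part (1) the restriction must not collapse to level-1, which is why the reticulation leaves of $N_1$ must be supplied to \Cref{lemma:restriction of network with two reticulation leaves}/\Cref{lemma:restriction of network with one reticulation leaf} rather than arbitrary leaves, and why the case analysis in those lemmas, together with \Cref{lemma:level of restriction}, is needed. In part (2) the opposite must be arranged, with $S$ chosen so that $N_1|_S$ drops to level-1 while $N_2|_S$ remains a $4$-cycle distinct from it; confirming that the resulting four-leaf level-1 topology is one for which the appropriate entry of \Cref{table} applies is the point that requires the most care, and is exactly where the hypotheses $n\geq 5$ and $x\notin r(N_1)$ are used.
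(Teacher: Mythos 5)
Your proposal is correct and follows essentially the same route as the paper's own proof: for part (1) you restrict to a four-leaf subset built around $r(N_1)$ so that $N_1|_S$ stays a simple nice strict level-2 network while $N_2|_S$ is level-1, then apply \Cref{lemma:containment level-2 4 leaves}(i) and \Cref{lemma:variety of restriction}; for part (2) you drop one reticulation leaf of $N_1$ from $S$ so that $N_1|_S$ collapses to a level-1 network differing from the $4$-cycle $N_2|_S$, and invoke the level-1 results summarized in \Cref{table}. The only cosmetic difference is that the paper chooses the two auxiliary leaves in part (2) arbitrarily and runs a short case analysis on whether $N_1|_S$ is a $4$-cycle, single-triangle, or tree, which your argument subsumes.
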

\begin{proof}
Suppose that $N_1$ is a simple level-2 network of type $L_2^{n,(A_1|A_2|A_3)}$ and the vertices $u$ and $v$ be the two purely interior vertices of $N_1$.
\begin{enumerate}
    \item  By \Cref{prop:A_icontains reticulation leaf}, there exists $1\leq i\leq 3$ such that $A_i$ contains a reticulation leaf. For any $S\subseteq X$ of size four, if $N_2$ is a tree, then $N_2|_S$ is a 4-leaf tree. If $N_2$ a strict level-1 network, then \Cref{lemma:level of restriction} (2) implies that $N_2|_S$ is a 4-leaf level-1 network. 
 
We firstly suppose that $N_1$ contains two reticulation leaves. Then by \Cref{lemma:different A_i}, there exists an index $j\neq i$ such that $A_j$ also contains a reticulation leaf. Let $a\in A_i$ and $b\in A_j$ be the two reticulation leaves of $N_1$. By the proof of \Cref{lemma:restriction of network with two reticulation leaves} and \Cref{lemma:level of restriction} (5), there exist two distinct elements $c$ and $d$ such that $N_1|_S$ is a 4-leaf simple nice strict level-2 network where $S=\{a,b,c,d\}$. Now suppose that $N_1$ contains only one reticulation leaf. Then $A_j$ does not contain any reticulation leaves for any $j\neq i.$  Let $a\in A_i$ be the only reticulation leaf in $N_1$. By the proof of \Cref{lemma:restriction of network with one reticulation leaf} and \Cref{lemma:level of restriction} (6), there exist three distinct leaves $b,c,\mbox{ and }d$, where at least one of $b,c\mbox{ or }d$ belongs to some $A_j$ for $j\neq i$, such that $N_1|_S$ is a 4-leaf simple nice strict level-2 network where $S=\{a,b,c,d\}$. In both cases, by \Cref{lemma:containment level-2 4 leaves} (i), we have that $V_{N_1|_S}\not\subseteq V_{N_2|_S}$ and \Cref{lemma:variety of restriction} implies that $V_{N_1}\not\subseteq V_{N_2}.$

\item By the first part of this lemma, we have that $V_{N_1}\not\subseteq V_{N_2}.$ Thus, we only need to show that $V_{N_2}\not\subseteq V_{N_1}.$

Let $a$ be the unique reticulation leaf of $N_2.$ and $\{b,c\}$ be the reticulation leaves of $N_1$. By \Cref{lemma:different A_i}, $b$ and $c$ belong to different $A_i$'s. From the lemma assumptions, $a\notin\{b,c\}.$ We now choose any two distinct elements $d,e\in X\setminus\{a,b,c\}$. If $S=\{a,c,d,e\}$, then by \Cref{lemma:level of restriction} (2) and (3), $N_1|_S$ and $N_2|_S$ are level-1 networks. In particular, $N_1|_S$ is either a 4-leaf 4-cycle, single-triangle network or a tree. Indeed, let $p_u$ be the undirected path connecting $up(b)$ to $u$ that does not contain $v$. Similarly, let $p_v$ be the undirected path connecting $up(b)$ to $v$ that does not contain $u$. Let $T=\{a,d,e\}$ and $S'=\{a',d',e'\}$, where $a'=up(a),d'=up(d),$ and $e'=up(e)$. If $\forall s\in S',s\in p_u$ or $\forall s\in S',s\in p_v$, then $N_1|_S$ is a tree. Now let us fix $U\subseteq S'$ of size two. If ($\forall x\in U, x\in p_u$ and $S'\setminus U'\not\subseteq p_u$) or ($\forall x\in U, x\in p_v$ and $S'\setminus U'\not\subseteq p_v$), then $N_1|_S$ is a single-triangle network. The remaining cases imply that $N_1|_S$ is a 4-leaf 4-cycle network.

Additionally, $N_2|_S$ is a 4-leaf 4-cycle network. Indeed, by the proof of \Cref{lemma:level of restriction} (2), $N_2|_S$ can not be a tree as $S$ contain the reticulation vertex $a$. The restriction $N_2|_S$ can not be a double-triangle network because $N_2$ only contains one reticulation vertex and $N_2|_S$ can not be a single-triangle network because $N_2$ is simple. If both restricted networks are 4-cycle networks, then they have distinct semi-directed network topologies as $c$ is the reticulation vertex in $N_1$ but $a$ is the reticulation vertex in $N_2$. In any case of the restriction of $N_1,$ \cite[Lemma 1]{gross2020distinguishing} implies that $V_{N_2}\not\subseteq V_{N_1}.$ \qedhere
\end{enumerate} \end{proof}

 Let $N_1$ be an $n$-leaf simple nice strict level-2 network and $N_2$ be an $n$-leaf $n$-cycle network with $n\geq 5$. In the second assertion of \Cref{theorem:comparison-level-2-and-level-1}, we assume that $N_1$ has two reticulation leaves and the unique reticulation leaf of $N_2$ is not a reticulation leaf in $N_1$. \Cref{example:one of two assumptions is violated} in \Cref{sec:appendix} presents two pairs of networks such that one of the assumptions is violated. In these examples, we can not say anything about about their distinguishability as we only have one-sided non-containment.

 For the rest of this section, let us recall that the number of reticulation leaves of a strict simple nice level-2 network is either one or two. We now introduce the following definitions. To avoid confusion, $d_N(x,y)$ denotes the distance between vertex $x$ and $y$ in the network $N$.
 
 \begin{definition}\label{definition:N1-RC and N2-RC}
 Let $N_1$ and $N_2$ be two simple nice level-2 networks on $X$.
 \begin{enumerate}
     \item A pair $(N_1,N_2)$ is said to be \textit{$N_1$-reticulation-conflicting ($N_1$-RC)} if for every $x\in r(N_1)$, there exists $y\in r(N_2)$ such that $d_{N_1}(x,y)=3$.
     \item Similarly, a pair $(N_1,N_2)$ is said to be \textit{$N_2$-reticulation-conflicting ($N_2$-RC)} if for every $x\in r(N_2)$, there exists $y\in r(N_1)$ such that $d_{N_2}(x,y)=3$.
 \end{enumerate}
 \end{definition}
 
 \begin{lemma}\label{lemma: not N1-RC}
  Let $N_1$ and $N_2$ be two distinct $n$-leaf simple nice strict level-2 networks on $X$. If $r(N_2)\subseteq r(N_1)$, Then the pair $(N_1,N_2)$ is not $N_1$-RC.
 \end{lemma}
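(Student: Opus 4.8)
The plan is to unwind the definition of $N_1$-reticulation-conflicting and exhibit an explicit element of $r(N_1)$ that violates the defining property. Since $N_2$ is a simple nice strict level-2 network, \Cref{prop:A_icontains reticulation leaf} gives $r(N_2)\neq\emptyset$, and the hypothesis $r(N_2)\subseteq r(N_1)$ lets me fix some $x\in r(N_2)\subseteq r(N_1)$. I will take $x$ itself as the witness and show that no $y\in r(N_2)$ satisfies $d_{N_1}(x,y)=3$; that is precisely the negation of $N_1$-RC.

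First I would reduce the claim to a statement purely about the two reticulation vertices of $N_1$. For $y\in r(N_2)$ with $y=x$ there is nothing to check, since $d_{N_1}(x,x)=0\neq 3$, so suppose $y\neq x$. Because $x,y\in r(N_1)$, their parents in $N_1$, say $w$ and $w'$, are reticulation vertices of $N_1$, and $w\neq w'$: a reticulation vertex has out-degree one, hence a single child, so $w=w'$ would force $x=y$. As $N_1$ is a simple strict level-2 network it has exactly two reticulation vertices, so $\{w,w'\}$ is that whole set. Since $x$ and $y$ are leaves, every walk in $N_1$ joining them starts $x,w,\dots$ and ends $\dots,w',y$, so a path of length $3$ would have to be $x,w,w',y$; hence $d_{N_1}(x,y)=3$ would force $w$ and $w'$ to be adjacent. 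It therefore suffices to prove that the two reticulation vertices of $N_1$ are non-adjacent. This subsumes the degenerate case $r(N_2)=\{x\}$, where there is simply no candidate $y\neq x$.

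The only structural input needed is that $w$ and $w'$ are not adjacent in $N_1$. Each of $w,w'$ is an internal vertex, hence has degree three, and as a reticulation vertex it carries two incoming reticulation edges; its one remaining incident edge must then be the leaf edge to its reticulation leaf, namely $x$ for $w$ and $y$ for $w'$. So the three neighbours of $w$ are $x$ together with the two tails of the reticulation edges entering $w$. If $w'$ were a neighbour of $w$, then $w'$, not being a leaf, would have to be the tail of a reticulation edge entering $w$; but the only reticulation edges incident to $w'$ have $w'$ as their head, not their tail. Hence $w'$ is not a neighbour of $w$, and the reduction from the previous step yields $d_{N_1}(x,y)\neq 3$ for all $y\in r(N_2)$, so $x\in r(N_1)$ witnesses that $(N_1,N_2)$ is not $N_1$-RC.

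I do not anticipate a genuine obstacle: the argument is short and elementary. The point requiring the most care is bookkeeping, namely that ``parent'' and ``distance'' must be taken inside $N_1$ rather than $N_2$, and the use of the basic fact about semi-directed networks that a reticulation vertex has exactly three incident edges, two of which are its incoming reticulation edges, so that its one remaining edge is forced to be the leaf edge of its reticulation leaf.
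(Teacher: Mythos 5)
Your proof is correct, and it follows the same overall route as the paper's: negate the definition of $N_1$-RC by showing that two reticulation leaves of $N_1$ can never be at distance $3$ in $N_1$, which settles the matter because $r(N_2)\subseteq r(N_1)$ forces every candidate $y$ to be a reticulation leaf of $N_1$. The one substantive difference is how that distance bound is justified. The paper disposes of it in a single clause by asserting that $d_{N_1}(x,y)=3$ would make $N_1$ fail to be nice, i.e.\ the resulting adjacency of the two reticulation vertices would create a funnel. You instead run a purely local count: all three edges at each reticulation vertex are already spoken for --- two incoming reticulation edges plus the edge to its unique (leaf) child --- so no edge can join the two reticulation vertices at all, and hence $d_{N_1}(x,y)\geq 4$. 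Your argument is the more elementary and slightly more general of the two, since it never invokes niceness and therefore establishes the distance bound for arbitrary simple strict level-2 networks; it also supplies the bookkeeping (uniqueness of the leaf's neighbour, $w\neq w'$, the degenerate case $y=x$) that the paper leaves implicit. Nothing is lost by bypassing the funnel/niceness framing, so there is no gap to repair.
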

 \begin{proof}
 Let $x\in r(N_1)$ be any element. If there exists $y\in r(N_2)=r(N_1)$ such that $d_{N_1}(x,y)=3$, then $N_1$ would not be a nice network. Thus, $(N_1,N_2)$ is not $N_1$-RC.
 \end{proof}
 \begin{corollary}
  Let $N_1$ and $N_2$ be two distinct $n$-leaf simple nice strict level-2 networks on $X$. If $r(N_1)=r(N_2)$, Then the pair $(N_1,N_2)$ is neither $N_1$-RC nor $N_2$-RC.
 \end{corollary}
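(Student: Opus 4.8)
The plan is to derive this as an immediate formal consequence of \Cref{lemma: not N1-RC}, using only the symmetry built into the hypothesis $r(N_1)=r(N_2)$. First, since $r(N_1)=r(N_2)$ certainly gives $r(N_2)\subseteq r(N_1)$, \Cref{lemma: not N1-RC} applies directly and tells us that $(N_1,N_2)$ is not $N_1$-RC. That settles the first of the two claims with no further argument.

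For the second claim I would observe that, unwinding \Cref{definition:N1-RC and N2-RC}, the statement ``$(N_1,N_2)$ is $N_2$-RC'' is word-for-word the same statement as ``$(N_2,N_1)$ is $N_2$-RC'', when $N_2$ is taken to be the first network in part (1) of that definition: both assert that for every $x\in r(N_2)$ there is a $y\in r(N_1)$ with $d_{N_2}(x,y)=3$. So it is enough to apply \Cref{lemma: not N1-RC} to the pair $(N_2,N_1)$ in place of $(N_1,N_2)$. Its hypothesis, which in this instance reads $r(N_1)\subseteq r(N_2)$, again holds since $r(N_1)=r(N_2)$, and its conclusion is exactly that $(N_2,N_1)$ — equivalently $(N_1,N_2)$ — is not $N_2$-RC.

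I do not expect any real obstacle: the corollary carries no geometric or combinatorial content of its own beyond what \Cref{lemma: not N1-RC} already supplies (namely that two reticulation leaves at mutual distance three would create a funnel and contradict niceness). The one point worth flagging is the non-vacuity $r(N_1)=r(N_2)\neq\emptyset$, which is needed for ``not $N_1$-RC'' and ``not $N_2$-RC'' to be meaningful; it holds because every simple nice strict level-2 network contains a reticulation leaf by \Cref{prop:A_icontains reticulation leaf} (see also \Cref{lemma:r(N)}), and it is the same non-vacuity already implicit in the proof of \Cref{lemma: not N1-RC}.
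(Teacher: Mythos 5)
Your argument is correct and matches the paper's intent exactly: the corollary is stated without proof as an immediate consequence of \Cref{lemma: not N1-RC}, obtained by applying that lemma once as stated (since $r(N_1)=r(N_2)$ gives $r(N_2)\subseteq r(N_1)$) and once with the roles of $N_1$ and $N_2$ exchanged. Your additional remark on non-vacuity of $r(N_1)=r(N_2)$, guaranteed by \Cref{prop:A_icontains reticulation leaf}, is a harmless and in fact welcome clarification, since the universally quantified RC condition would hold vacuously over an empty set of reticulation leaves.
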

 
 \begin{example}
 Let us consider the following six networks in \Cref{fig:RC}. The pair $(N_1,N_2)$ is both $N_1$-RC and $N_2$-RC. The pair $(N_3,N_4)$ is neither $N_1$-RC nor $N_2$-RC. The pair $(N_5,N_6)$ is $N_1$-RC but not $N_2$-RC.
   \begin{figure}[H]
    \centering
     \subfigure[$N_1$]{
    \begin{tikzpicture}[scale=0.5]
       \fill[black] (0,0) circle (3pt) (1,1) circle (3pt) (3,1) circle (3pt) (4,0) circle (3pt) (1,-1) circle (3pt) (3,-1) circle (3pt) (1,2) circle (3pt)  (3,2) circle (3pt) (1,-2) circle (3pt) (3,-2) circle (3pt);
        \node (a1) at (1,2)[above]{a} ; 
       \node (a1) at (3,2)[above]{c} ; 
       \node (a1) at (1,-2)[below]{d} ; 
       \node (a1) at (3,-2)[below]{b} ; 
       \draw (0,0)--(2,0);
\draw (0,0)--(4,0);
\draw [dashed,->,>=Triangle](0,0)--(1,1);
\draw [dashed,<-,>=Triangle](1,1) --(3,1);
\draw (3,1) --(4,0);
\draw (0,0)--(1,-1);
\draw [dashed,->,>=Triangle](1,-1)--(3,-1);
\draw [dashed,<-,>=Triangle](3,-1)--(4,0);
\draw (1,1)--(1,2);
\draw (3,1)--(3,2);
\draw (1,-1)--(1,-2);
\draw (3,-1)--(3,-2);
    \end{tikzpicture}}
   \subfigure[$N_2$]{
    \begin{tikzpicture}[scale=0.5]
       \fill[black] (0,0) circle (3pt) (1,1) circle (3pt) (3,1) circle (3pt) (4,0) circle (3pt) (1,-1) circle (3pt) (3,-1) circle (3pt) (1,2) circle (3pt)  (3,2) circle (3pt) (1,-2) circle (3pt) (3,-2) circle (3pt);
        \node (a1) at (1,2)[above]{c} ; 
       \node (a1) at (3,2)[above]{a} ; 
       \node (a1) at (1,-2)[below]{b} ; 
       \node (a1) at (3,-2)[below]{d} ; 
       \draw (0,0)--(2,0);
\draw (0,0)--(4,0);
\draw [dashed,->,>=Triangle](0,0)--(1,1);
\draw [dashed,<-,>=Triangle](1,1) --(3,1);
\draw (3,1) --(4,0);
\draw (0,0)--(1,-1);
\draw [dashed,->,>=Triangle](1,-1)--(3,-1);
\draw [dashed,<-,>=Triangle](3,-1)--(4,0);
\draw (1,1)--(1,2);
\draw (3,1)--(3,2);
\draw (1,-1)--(1,-2);
\draw (3,-1)--(3,-2);
    \end{tikzpicture}}
    \subfigure[$N_3$]{
    \begin{tikzpicture}[scale=0.55]
       \fill[black] (0,0) circle (3pt) (1,1) circle (3pt) (2,1) circle (3pt)  (3,1) circle (3pt) (4,0) circle (3pt) (1,-1) circle (3pt) (3,-1) circle (3pt) (1,2) circle (3pt) (2,2) circle (3pt)  (3,2) circle (3pt) (3,-2) circle (3pt) (1,-2) circle (3pt);
        \node (a1) at (1,2)[above]{c} ; 
       \node (a1) at (2,2)[above]{a} ; 
       \node (a1) at (3,2)[above]{d} ;
       \node (a1) at (1,-2)[below]{e} ; 
       \node (a1) at (3,-2)[below]{b} ; 
\draw (0,0)--(4,0);
\draw (0,0)--(1,1);
\draw [dashed,->,>=Triangle](1,1) --(2,1);
\draw [dashed,<-,>=Triangle](2,1) --(3,1);
\draw (3,1)--(4,0);
\draw (0,0)--(1,-1);
\draw [dashed,->,>=Triangle](1,-1)--(3,-1);
\draw [dashed,<-,>=Triangle](3,-1)--(4,0);
\draw (1,1)--(1,2);
\draw (2,1)--(2,2);
\draw (3,1)--(3,2);
\draw (1,-1)--(1,-2);
\draw (3,-1)--(3,-2);
    \end{tikzpicture}}
     \subfigure[$N_4$]{
    \begin{tikzpicture}[scale=0.55]
       \fill[black] (0,0) circle (3pt) (1,1) circle (3pt) (2,1) circle (3pt)  (3,1) circle (3pt) (4,0) circle (3pt) (1,-1) circle (3pt) (3,-1) circle (3pt) (1,2) circle (3pt) (2,2) circle (3pt)  (3,2) circle (3pt) (3,-2) circle (3pt) (1,-2) circle (3pt);
        \node (a1) at (1,2)[above]{b} ; 
       \node (a1) at (2,2)[above]{a} ; 
       \node (a1) at (3,2)[above]{d} ;
       \node (a1) at (1,-2)[below]{e} ; 
       \node (a1) at (3,-2)[below]{c} ; 
\draw (0,0)--(4,0);
\draw (0,0)--(1,1);
\draw [dashed,->,>=Triangle](1,1) --(2,1);
\draw [dashed,<-,>=Triangle](2,1) --(3,1);
\draw (3,1)--(4,0);
\draw (0,0)--(1,-1);
\draw [dashed,->,>=Triangle](1,-1)--(3,-1);
\draw [dashed,<-,>=Triangle](3,-1)--(4,0);
\draw (1,1)--(1,2);
\draw (2,1)--(2,2);
\draw (3,1)--(3,2);
\draw (1,-1)--(1,-2);
\draw (3,-1)--(3,-2);
    \end{tikzpicture}}
    \subfigure[$N_5$]{
    \begin{tikzpicture}[scale=0.55]
       \fill[black] (0,0) circle (3pt) (1,1) circle (3pt) (2,1) circle (3pt)  (3,1) circle (3pt) (4,0) circle (3pt) (1,-1) circle (3pt) (3,-1) circle (3pt) (1,2) circle (3pt) (2,2) circle (3pt)  (3,2) circle (3pt) (3,-2) circle (3pt) (1,-2) circle (3pt) (2,-1) circle (3pt) (2,-2) circle (3pt);
        \node (a1) at (1,2)[above]{a} ; 
       \node (a1) at (2,2)[above]{c} ; 
       \node (a1) at (3,2)[above]{e} ;
       \node (a1) at (1,-2)[below]{f} ; 
       \node (a1) at (2,-2)[below]{d} ;
       \node (a1) at (3,-2)[below]{b} ; 
\draw (0,0)--(4,0);
\draw (0,0)--(1,1);
\draw [dashed,->,>=Triangle](1,1) --(2,1);
\draw [dashed,<-,>=Triangle](2,1) --(3,1);
\draw (3,1)--(4,0);
\draw (0,0)--(1,-1);
\draw [dashed,->,>=Triangle](2,-1)--(3,-1);
\draw [dashed,<-,>=Triangle](3,-1)--(4,0);
\draw (1,1)--(1,2);
\draw (2,1)--(2,2);
\draw (3,1)--(3,2);
\draw (1,-1)--(1,-2);
\draw (3,-1)--(3,-2);
\draw (1,-1)--(2,-1);
\draw (2,-1)--(2,-2);
    \end{tikzpicture}}
    \subfigure[$N_5$]{
    \begin{tikzpicture}[scale=0.55]
       \fill[black] (0,0) circle (3pt) (1,1) circle (3pt) (2,1) circle (3pt)  (3,1) circle (3pt) (4,0) circle (3pt) (1,-1) circle (3pt) (3,-1) circle (3pt) (1,2) circle (3pt) (2,2) circle (3pt)  (3,2) circle (3pt) (3,-2) circle (3pt) (1,-2) circle (3pt) (2,-1) circle (3pt) (2,-2) circle (3pt);
        \node (a1) at (1,2)[above]{c} ; 
       \node (a1) at (2,2)[above]{e} ; 
       \node (a1) at (3,2)[above]{a} ;
       \node (a1) at (1,-2)[below]{b} ; 
       \node (a1) at (2,-2)[below]{f} ;
       \node (a1) at (3,-2)[below]{d} ; 
\draw (0,0)--(4,0);
\draw (0,0)--(1,1);
\draw [dashed,->,>=Triangle](1,1) --(2,1);
\draw [dashed,<-,>=Triangle](2,1) --(3,1);
\draw (3,1)--(4,0);
\draw (0,0)--(1,-1);
\draw [dashed,->,>=Triangle](2,-1)--(3,-1);
\draw [dashed,<-,>=Triangle](3,-1)--(4,0);
\draw (1,1)--(1,2);
\draw (2,1)--(2,2);
\draw (3,1)--(3,2);
\draw (1,-1)--(1,-2);
\draw (3,-1)--(3,-2);
\draw (1,-1)--(2,-1);
\draw (2,-1)--(2,-2);
    \end{tikzpicture}}
    \caption{Six simple nice level-2 networks $N_i$, $1\leq i\leq 6$.}
    \label{fig:RC}
\end{figure}
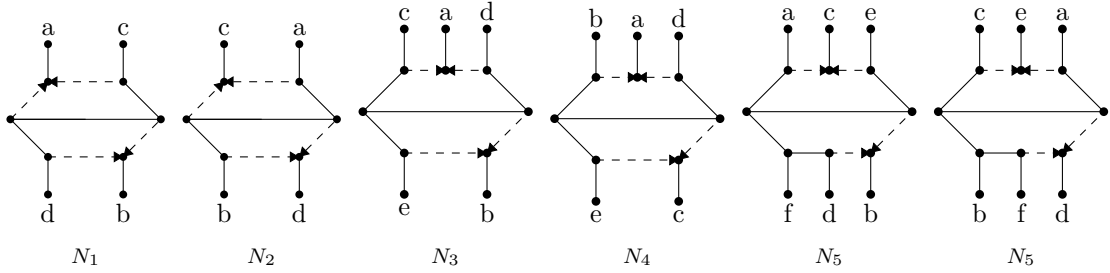
 \end{example}

 We will now provide some results on simple nice level-2 networks based on their set of reticulation leaves.

\begin{theorem}\label{proposition:distinguishability r(N_1)=r(N_2)=2}
Let $N_1$ and $N_2$ be two distinct $n$-leaf simple nice strict level-2 networks for $n\geq 5$ such that $|r(N_1)|=|r(N_2)|=2$ and $r(N_1)\neq r(N_2)$. If the pair $(N_1,N_2)$ is neither $N_1$-RC nor $N_2$-RC, then  $N_1$ and $N_2$ are distinguishable.
\end{theorem}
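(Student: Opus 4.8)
The plan is to reduce everything to the four-leaf case already settled in \Cref{lemma:containment level-2 4 leaves} and transported by \Cref{lemma:variety of restriction}. Concretely, for each of the two non-containments $V_{N_1}\not\subseteq V_{N_2}$ and $V_{N_2}\not\subseteq V_{N_1}$ it suffices to produce a four-element subset $S\subseteq X$ such that one of $N_1|_S,N_2|_S$ is a $4$-leaf simple nice strict level-2 network while the other is a level-1 network; \Cref{lemma:containment level-2 4 leaves}(i) then yields the corresponding non-containment of the restricted varieties, and \Cref{lemma:variety of restriction} lifts it to $N_1$ and $N_2$. Since all hypotheses ($n\geq 5$, both networks simple nice strict level-2, $|r(N_1)|=|r(N_2)|=2$, $r(N_1)\neq r(N_2)$, and the two non-RC conditions) are symmetric in $N_1$ and $N_2$, it is enough to construct $S$ witnessing $V_{N_1}\not\subseteq V_{N_2}$; the reverse non-containment follows by repeating the argument with $N_1$ and $N_2$ interchanged, using that $(N_1,N_2)$ is not $N_2$-RC in place of not $N_1$-RC.

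To build $S$, let $a\in r(N_1)$ be a reticulation leaf witnessing that $(N_1,N_2)$ is not $N_1$-RC, so that $d_{N_1}(a,y)\neq 3$ for every $y\in r(N_2)$, and write $r(N_1)=\{a,b\}$. By \Cref{lemma:level of restriction}(5), $N_1|_S$ is strict level-2 as soon as $\{a,b\}\subseteq S$, and by \Cref{lemma:level of restriction}(3), $N_2|_S$ is a level-1 network as soon as $S$ fails to contain both elements of $r(N_2)$. I would then choose the two remaining leaves of $S$ by running the construction in the proof of \Cref{lemma:restriction of network with two reticulation leaves} with $a,b$ playing the roles of the two reticulation leaves: this yields $S\supseteq\{a,b\}$ with $N_1|_S$ a $4$-leaf simple nice strict level-2 network, necessarily of type $1$, $2$, $3$, or $4$ by \Cref{lemma:small simple nice level-2}. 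The point where the hypothesis enters is that every ``extra'' leaf this construction picks is chosen at distance exactly $3$ in $N_1$ from $a$ or from $b$; the ones taken from $a$'s side cannot lie in $r(N_2)$ because $(N_1,N_2)$ is not $N_1$-RC, and since $r(N_1)\neq r(N_2)$ forces $|\{a,b\}\cap r(N_2)|\leq 1$, the resulting $S$ omits at least one element of $r(N_2)$. Hence $N_2|_S$ is level-1, and \Cref{lemma:containment level-2 4 leaves}(i) together with \Cref{lemma:variety of restriction} gives $V_{N_1}\not\subseteq V_{N_2}$.

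The main obstacle is the branch of the construction in \Cref{lemma:restriction of network with two reticulation leaves} that forces an extra leaf to be drawn from $b$'s side, at distance $3$ from $b$, which a priori could be an element of $r(N_2)$; this occurs only when $a$ is at distance $2$ from one of the two purely interior vertices $u,v$ of $N_1$, i.e. when $up(a)$ is adjacent to a vertex of the level-2 generator $L_2$ underlying $N_1$. To deal with it I would re-examine the $L_2$-structure: when $a$ is pinned next to $u$ or $v$, there is a third subdivided edge of $L_2$, and one reroutes by taking the fourth leaf of $S$ along that edge or by interchanging the reference vertex $u\leftrightarrow v$; since $n\geq 5$ there are always enough leaves to make such a choice, and one checks that $N_1|_S$ remains a $4$-leaf simple nice strict level-2 network (and that in the degenerate situation where $a$ and $b$ are unavoidably forced onto a common side, $N_1|_S$ is a $4$-leaf $4$-cycle network, so that the two restrictions are separated instead by the cycle-network distinguishability results of \cite{gross2018distinguishing,hollering2020identifiability}). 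Executing this case analysis carefully, in particular verifying niceness of $N_1|_S$ in each branch, is where the real work lies; the algebraic content is entirely imported from \Cref{lemma:containment level-2 4 leaves}.
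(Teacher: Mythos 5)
Your overall architecture coincides with the paper's: restrict to a four-element set $S$, separate the restricted varieties with \Cref{lemma:containment level-2 4 leaves}(i), lift via \Cref{lemma:variety of restriction}, and invoke symmetry of the hypotheses for the reverse non-containment. You have also correctly located the one real difficulty: the construction of \Cref{lemma:restriction of network with two reticulation leaves} may force an extra leaf of $S$ to be taken at distance $3$ from $b$, and nothing prevents that leaf from lying in $r(N_2)$. The problem is that your resolution of this difficulty does not go through. First, the proposed ``rerouting'' is not available in general: the candidate extra leaves must have their parent adjacent to $up(a)$ or $up(b)$ on a specific subdivided generator edge, such a path may carry a unique leaf, and that unique leaf can be precisely an element of $r(N_2)$; the assumption $n\geq 5$ guarantees more leaves somewhere in the network, but not on the path you need. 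Second, since the theorem only assumes $r(N_1)\neq r(N_2)$, you must handle $|r(N_1)\cap r(N_2)|=1$, say $r(N_1)=\{a,b\}$ and $r(N_2)=\{a,c\}$: if $c$ is forced into $S$, then $S\supseteq\{a,c\}=r(N_2)$ and $N_2|_S$ is itself a strict level-2 network by \Cref{lemma:level of restriction}(5), so \Cref{lemma:containment level-2 4 leaves}(i) no longer applies; your fallback (a $4$-cycle versus $4$-cycle comparison ``when $a$ and $b$ are forced onto a common side'') addresses a different degeneracy. Your justification that $S$ omits at least one element of $r(N_2)$ is also incomplete in the disjoint case, since it does not exclude the possibility that both chosen extras come from $b$'s side and both lie in $r(N_2)$.

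The paper escapes exactly this trap with a device your proposal is missing: in the shared-leaf case, when the non-shared reticulation leaf $c$ of $N_2$ is unavoidable, it abandons the level-2/level-1 dichotomy and instead restricts to a set $S'$ that \emph{excludes} the shared leaf $a$ (namely $S'=\{b,p,q,e\}$ for a fifth leaf $e$), so that both $N_1|_{S'}$ and $N_2|_{S'}$ become level-1 networks whose reticulation leaves are $b$ and $c$ respectively; distinguishability then follows from the level-1 theorem of \cite{gross2018distinguishing} rather than from \Cref{lemma:containment level-2 4 leaves}. In the disjoint case the paper instead argues that having \emph{both} elements of $r(N_2)$ among the chosen extras would make the pair $N_1$-RC. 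You would need to add these (or equivalent) steps to close the gap.
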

\begin{proof}
First suppose that $r(N_1)\cap r(N_2)=\emptyset.$ Suppose that $r(N_1)=\{a,b\}$ and $r(N_2)=\{c,d\}$ where $a,b\notin\{c,d\}, a\neq b\mbox{ and }c\neq d.$ By \Cref{lemma:restriction of network with two reticulation leaves} and \Cref{lemma:level of restriction} (5), there exist two distinct elements $p,q\in X\setminus\{a,b\}$ such that $N_1|_S$ is a 4-leaf simple nice strict level-2 network for $S=\{a,b,p,q\}.$ By the proof of the first case in \Cref{lemma:restriction of network with two reticulation leaves}, we may choose $p$ and $q$ such that in $N_1$, $p$ and $q$ have distance three to either $a$ or $b$. If both $p$ and $q$ are contained in $\{c,d\}$, then the pair $(N_1,N_2)$ is $N_1$-RC, a contradiction. Thus, at least one of $p$ or $q$ is not contained in $\{c,d\}.$ 
 \begin{itemize}
 \item Case 1: without loss of generality, we may assume that $p=c$ but $q\notin\{c,d\}.$ By \Cref{lemma:level of restriction} (3), $N_2|_S$ is a level-1 network. In this case, by \Cref{lemma:containment level-2 4 leaves} (i), we have that $V_{N_1|_S}\not\subseteq V_{N_2|_S}$.
 \item Case 2: $p,q\notin\{c,d\}$. By \Cref{lemma:level of restriction} (1), the restriction $N_2|_S$ is a tree. In this case, by \Cref{lemma:containment level-2 4 leaves} (i), we have that $V_{N_1|_S}\not\subseteq V_{N_2|_S}$.
 \end{itemize}
 Therefore, \Cref{lemma:variety of restriction} suggests that $V_{N_1}\not\subseteq V_{N_2}$. Conversely, we can use similar arguments and the assumption that the pair $(N_1,N_2)$ is not $N_2$-RC to obtain $V_{N_2}\not\subseteq V_{N_1}$.
 
 Next we suppose that $|r(N_1)\cap r(N_2)|=1.$ Suppose that $r(N_1)=\{a,b\}$ and $r(N_2)=\{a,c\}$ where $a\neq b,a\neq c\mbox{ and }b\neq c.$ By \Cref{lemma:restriction of network with two reticulation leaves} and \Cref{lemma:level of restriction} (5), there exist two distinct elements $p',q'\in X\setminus\{a,b\}$ such that $N_1|_S$ is a 4-leaf simple nice strict level-2 network for $S=\{a,b,p',q'\}.$ Now we have the following two cases. 
 \begin{itemize}
     \item Case 1': without loss of generality, $p'=c$ and $q'\neq c$. As $n\geq 5$, choose $e\in X\setminus\{a,b,p,q\}$. Then as $N_1|_{\{a,b,p,q,e\}}$ is a nice strict level-2 network, \Cref{lemma:level of restriction} (3) implies that $N_1|_{S'}$ is a level-1 network where $S'=\{b,p,q,e\}$. Similarly, $N_2|_{S'}$ is also a level-1 network. Moreover, the vertex $b$ is the reticulation leaf of $N_1|_{S'}$ while the vertex $c$ is the reticulation leaf of $N_2|_{S'}$. \cite[Theorem 1.1]{gross2018distinguishing} implies that $N_1$ and $N_2$ are distinguishable.
     \item Case 2': $p\neq c$ and $q\neq c$. In this case, $N_2|_S$ is a level-1 network. Therefore, \Cref{lemma:variety of restriction} suggests that $V_{N_1}\not\subseteq V_{N_2}$. Similarly, one can use similar arguments to show that $V_{N_2}\not\subseteq V_{N_1}$.
 \end{itemize}
 This completes the proof of the theorem.
 \end{proof}
 
 If the assumption that $r(N_1)\neq r(N_2)$ in \Cref{proposition:distinguishability r(N_1)=r(N_2)=2} is removed, then we may not be able to distinguish the networks $N_1$ and $N_2$. \Cref{example: r(N1)=r(N_2)=2} in \Cref{sec:appendix} suggests that for two networks with $|r(N_1)|=|r(N_2)|=2$ but $r(N_1)=r(N_2)$, we may not be able to say anything about their distinguishability.

\begin{theorem}\label{proposition:distinguishability r(N_1)>=r(N_2)}
Let $N_1$ and $N_2$ be two distinct $n$-leaf simple nice strict level-2 networks on $X$ for $n\geq 5$.
 Suppose that $r(N_1)=\{a,b\}$ and $r(N_2)=\{c\}$ such that $c\notin\{a,b\}$. Let $u$ be the purely interior reticulation vertex of $N_2$. If $d_{N_1}(x,y)\geq 4$ for any $x\in r(N_1)$ and $y\in r(N_2)$, then $V_{N_1}\not\subseteq V_{N_2}.$ Additionally, if the pair $(N_1,N_2)$ is not $N_2$-RC, $d_{N_2}(a,u)\geq 3,$ and $d_{N_2}(b,u)\geq 3$, then $N_1$ and $N_2$ are distinguishable.
\end{theorem}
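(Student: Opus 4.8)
The plan is to establish the two non-containments $V_{N_1}\not\subseteq V_{N_2}$ and $V_{N_2}\not\subseteq V_{N_1}$ separately; together they say that $V_{N_1}\cap V_{N_2}$ is a proper subvariety of each, i.e. that $N_1$ and $N_2$ are distinguishable. In each case the strategy is the one used in \Cref{theorem:comparison-level-2-and-level-1} and \Cref{proposition:distinguishability r(N_1)=r(N_2)=2}: restrict both networks to a suitable four-element subset $S\subseteq X$ so that one restriction is a $4$-leaf simple nice strict level-2 network and the other is a level-1 network, then apply \Cref{lemma:containment level-2 4 leaves}(i) together with \Cref{lemma:variety of restriction}. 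Recall that, since $N_2$ is a simple nice strict level-2 network with $r(N_2)=\{c\}$, it has exactly two reticulation vertices: $up(c)$, whose child is the reticulation leaf $c$, and the purely interior reticulation vertex $u$; the unique reticulation branch of $N_2$ is $\{c\}$, while the two reticulation branches of $N_1$ are $\{a\}$ and $\{b\}$.

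For $V_{N_1}\not\subseteq V_{N_2}$ I would first apply \Cref{lemma:restriction of network with two reticulation leaves} to $N_1$ with its pair of reticulation leaves $a,b$, obtaining distinct $p,q\in X\setminus\{a,b\}$ with $N_1|_S$ a $4$-leaf simple nice strict level-2 network for $S=\{a,b,p,q\}$; moreover, reading off the proof of that lemma (its Case~1), one may take $p$ and $q$ to lie at distance exactly $3$ in $N_1$ from one of $a,b$. The hypothesis $d_{N_1}(x,y)\ge 4$ for all $x\in\{a,b\}$ and $y\in\{c\}$ then forces $p\neq c$ and $q\neq c$, so $c\notin S$. Hence \Cref{lemma:level of restriction}(4), applied to $N_2$ with $R(N_2)=\{\{c\}\}$ and $c\notin S$, shows that $N_2|_S$ is a level-1 network. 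Now \Cref{lemma:containment level-2 4 leaves}(i) gives $V_{N_1|_S}\not\subseteq V_{N_2|_S}$, and \Cref{lemma:variety of restriction} gives $V_{N_1}\not\subseteq V_{N_2}$. This direction consumes only the distance hypothesis between $r(N_1)$ and $r(N_2)$.

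For $V_{N_2}\not\subseteq V_{N_1}$ I would run the symmetric argument with $N_2$ in the level-2 role: apply \Cref{lemma:restriction of network with one reticulation leaf} to $N_2$ with its unique reticulation leaf $c$ (Case~1 of that proof), producing three distinct auxiliary leaves so that $N_2|_{S'}$ is a $4$-leaf simple nice strict level-2 network, where $S'$ consists of $c$ together with those three leaves. The crux is to check that the three auxiliary leaves can be chosen to miss $\{a,b\}$. Inspecting the case analysis, each auxiliary leaf either (i) lies at distance exactly $3$ from $c$ in $N_2$ --- and then the hypothesis that $(N_1,N_2)$ is not $N_2$-RC, i.e. $d_{N_2}(c,a)\neq 3$ and $d_{N_2}(c,b)\neq 3$, rules out its being $a$ or $b$ --- or (ii) has its parent on one of the two internally disjoint $v$--$u$ paths of $N_2$ (with $v$ the purely interior tree vertex), in which case the hypotheses $d_{N_2}(a,u)\ge 3$ and $d_{N_2}(b,u)\ge 3$ allow one to take it to be the leaf hanging off a subdivision point adjacent to $u$ on such a path, which has distance $2$ from $u$ and hence is neither $a$ nor $b$; such a leaf exists (otherwise $u$ and $v$ would be joined by a parallel edge) and serves the same purpose in the construction. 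With $S'\cap\{a,b\}=\emptyset$, \Cref{lemma:level of restriction}(3) applied to $N_1$ makes $N_1|_{S'}$ a level-1 network, while $N_2|_{S'}$ is a $4$-leaf simple nice strict level-2 network; so \Cref{lemma:containment level-2 4 leaves}(i) and \Cref{lemma:variety of restriction} give $V_{N_2}\not\subseteq V_{N_1}$. Combining the two non-containments finishes the proof.

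The routine part is the bookkeeping inside the two restriction lemmas. The one genuinely delicate step, and the place where all three remaining hypotheses (not $N_2$-RC, $d_{N_2}(a,u)\ge 3$, and $d_{N_2}(b,u)\ge 3$) get consumed, is showing in the second half that the auxiliary leaves making $N_2|_{S'}$ level-2 avoid $\{a,b\}$: if even one of $a,b$ were forced into $S'$, then $N_1|_{S'}$ could itself be a strict level-2 network, and \Cref{lemma:containment level-2 4 leaves} would give only one-sided information, leaving distinguishability unresolved --- this is exactly the phenomenon illustrated by the examples in \Cref{sec:appendix} when the hypotheses fail.
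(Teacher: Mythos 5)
Your proposal is correct and follows essentially the same route as the paper's proof: restrict to a four-leaf subset built from the reticulation leaves of the "level-2 side," use the distance/$N_2$-RC hypotheses to force the auxiliary leaves away from the other network's reticulation leaves so that its restriction is level-1, and then invoke \Cref{lemma:containment level-2 4 leaves}(i) with \Cref{lemma:variety of restriction} in each direction. Your added justification that the distance-2 auxiliary leaf near $u$ exists (else $u$ and $v$ would be joined by a parallel edge) matches the argument already inside the proof of \Cref{lemma:restriction of network with one reticulation leaf}, so nothing new is needed.
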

 \begin{proof}
 By \Cref{lemma:restriction of network with two reticulation leaves} and \Cref{lemma:level of restriction} (5), there exists $p,q\in X\setminus\{a,b\}$ such that for $S=\{a,b,p,q\}$, $N_1|_S$ is a 4-leaf simple nice strict level-2 network. In the proof of the first case in \Cref{lemma:restriction of network with two reticulation leaves}, we may choose $p$ and $q$ such that in $N_1$, they have distance three to either $a$ or $b$. 
  By our hypothesis, $p\neq c$ and $q\neq c$. Indeed, if $p=c$, then $d_{N_1}(p,a)=3$ or $d_{N_1}(p,b)=3$, a contradiction. Similar argument hold for $q=c$. In this case, $N_2|_S$ is a level-1 network. Therefore, \Cref{lemma:variety of restriction} suggests that $V_{N_1}\not\subseteq V_{N_2}$.

  Additionally, let us assume that the pair $(N_1,N_2)$ is not $N_2$-RC. By \Cref{lemma:restriction of network with one reticulation leaf} and \Cref{lemma:level of restriction} (6), there exist $p',q',r'\in X\setminus\{c\}$ such that $N_2|_{S'}$ is a 4-leaf simple nice strict level-2 network for $S'=\{c,p',q',r'\}$. In the proof of the first case in \Cref{lemma:restriction of network with one reticulation leaf}, we may choose $p',q',$ and $r'$ such that in $N_2$, they have distance three to $c$ or distance two to $u$. If at least one of $p',q',$ or $r'$ is $a$ or $b$, then the pair is $N_2$-RC, $d_{N_2}(a,u)=2$, or $d_{N_2}(b,u)=2
  $, a contradiction. Thus, $a,b\notin \{p',q',r'\}$. Thus, in this case, $N_1|_{S'}$ is a level-1 network. Therefore, \Cref{lemma:variety of restriction} suggests that $V_{N_2}\not\subseteq V_{N_1}$. \qedhere
\end{proof}

For the case when $r(N_2)\subseteq r(N_1)$, the pair $(N_1,N_2)$ is clearly not $N_1$-RC by \Cref{lemma: not N1-RC}. Therefore, the most natural assumption to add in order to distinguish the pair is that the pair should not be $N_2$-RC as follows. \Cref{example: counterexample second statement} in \Cref{sec:appendix} suggests that a pair of networks which is $N_2$-RC. In this example, we can not say anything about their distinguishability.

\begin{proposition}\label{proposition:distinguishability r(N_1)>=r(N_2) (2)}
Let $N_1$ and $N_2$ be two distinct $n$-leaf simple nice strict level-2 networks on $X$ for $n\geq 5$. Suppose that $r(N_1)=\{a,b\}$ and $r(N_2)=\{a\}$. Let $u$ be the purely interior reticulation vertex of $N_2$. If the pair $(N_1,N_2)$ is not $N_2$-RC, $d_{N_2}(a,u)\geq 3,$ and $d_{N_2}(b,u)\geq 3$, then $V_{N_2}\not\subseteq V_{N_1}$.
\end{proposition}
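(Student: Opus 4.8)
The plan is to mimic the second half of the proof of \Cref{proposition:distinguishability r(N_1)>=r(N_2)}, which already establishes exactly the statement $V_{N_2}\not\subseteq V_{N_1}$ under the same hypotheses (the pair is not $N_2$-RC, $d_{N_2}(a,u)\geq 3$, and $d_{N_2}(b,u)\geq 3$), but for the case where $N_2$ has reticulation leaf set $\{c\}$ with $c\notin\{a,b\}$. Here the only difference is that $r(N_2)=\{a\}$ with $a\in r(N_1)$; the argument used to get $V_{N_2}\not\subseteq V_{N_1}$ in that proof never actually used $c\notin\{a,b\}$, only the three listed conditions, so the same reasoning applies verbatim. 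First I would invoke \Cref{lemma:restriction of network with one reticulation leaf} together with \Cref{lemma:level of restriction} (6) to produce three distinct leaves $p',q',r'\in X\setminus\{a\}$ such that $N_2|_{S'}$ is a $4$-leaf simple nice strict level-2 network, where $S'=\{a,p',q',r'\}$. As in the proof of the first case of \Cref{lemma:restriction of network with one reticulation leaf}, I would additionally arrange that $p',q',r'$ each have distance three to $a$ in $N_2$, or distance two to $u$ in $N_2$.

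The next step is the core of the argument: show that $b\notin\{p',q',r'\}$. Indeed, if $b$ were one of these three leaves, then either $d_{N_2}(a,b)=3$, which forces the pair $(N_1,N_2)$ to be $N_2$-RC (since $a\in r(N_1)$ witnesses the condition for the unique element of $r(N_2)$), contradicting our hypothesis; or $d_{N_2}(b,u)=2$, contradicting the assumption $d_{N_2}(b,u)\geq 3$. Hence $b\notin\{p',q',r'\}$, so $S'$ contains $a$ but not $b$, and $a$ is the only element of $S'$ that is a reticulation leaf of $N_1$ — that is, $S'$ intersects only one of the two reticulation branches of $N_1$.

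With that in hand, \Cref{lemma:level of restriction} (3) (applied to $N_1$, whose two reticulation branches each contain one of $a,b$) yields that $N_1|_{S'}$ is a level-1 network, while $N_2|_{S'}$ is a $4$-leaf simple nice strict level-2 network by construction. Now \Cref{lemma:containment level-2 4 leaves} (i), applied with the level-2 network $N_2|_{S'}$ in the role of $N_1$ there and the level-1 network $N_1|_{S'}$ in the role of $N_2$ there, gives $V_{N_2|_{S'}}\not\subseteq V_{N_1|_{S'}}$. Finally \Cref{lemma:variety of restriction} lifts this to $V_{N_2}\not\subseteq V_{N_1}$, which is the desired conclusion.

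The main obstacle — and really the only point needing care — is confirming that the leaves $p',q',r'$ extracted from \Cref{lemma:restriction of network with one reticulation leaf} can indeed be chosen with the distance properties stated (distance three to $a$ or distance two to $u$ in $N_2$), since these are what rule out $b$ being among them. This is exactly the content of Case 1 in the proof of \Cref{lemma:restriction of network with one reticulation leaf}, so it is available, but I would want to state explicitly which choices in that proof guarantee the distance bounds, rather than merely citing the lemma's conclusion. A secondary subtlety is making sure $n\geq 5$ is actually used: it guarantees that $X\setminus\{a\}$ is large enough for the three leaves $p',q',r'$ to exist and to be distinct from $b$ in the favorable cases; for $n=4$ the statement reduces to \Cref{lemma:containment level-2 4 leaves} directly (and $r(N_1)$ could not have size two while $N_1\neq N_2$ forces a separate small check), which is why the hypothesis $n\geq 5$ is imposed.
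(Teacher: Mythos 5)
Your proposal is correct and is essentially the paper's own proof: the paper simply says to repeat the second paragraph of the proof of \Cref{proposition:distinguishability r(N_1)>=r(N_2)}, and you carry out exactly that adaptation (with $a$ playing the role of $c$, and $b$ excluded from $S'$ via the not-$N_2$-RC and distance hypotheses), correctly concluding via \Cref{lemma:level of restriction}, \Cref{lemma:containment level-2 4 leaves}, and \Cref{lemma:variety of restriction}.
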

\begin{proof}
The proof follows by applying similar arguments as in the second paragraph of the proof of \Cref{proposition:distinguishability r(N_1)>=r(N_2)}.
\end{proof}

\begin{theorem}\label{proposition:distinguishability r(N_1)=r(N_2)=1}
   Let $N_1$ ans $N_2$ be two distinct $n$-leaf simple nice strict level-2 networks on $X$ for $n\geq 5$. Suppose that $r(N_1)=\{a\}$ and $r(N_2)=\{b\}$ such that $a\neq b$. Let $u$ be the purely interior reticulation vertex in both networks. If the pair $(N_1,N_2)$ is neither $N_1$-RC nor $N_2$-RC, $d_{N_1}(b,u)\geq 3$, and  $d_{N_2}(a,u)\geq 3$, then $N_1$ and $N_2$ are distinguishable.
\end{theorem}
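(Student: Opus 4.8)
The plan is to follow the restriction strategy of \Cref{theorem:comparison-level-2-and-level-1} and \Cref{proposition:distinguishability r(N_1)=r(N_2)=2}: I will establish the two non-containments $V_{N_1}\not\subseteq V_{N_2}$ and $V_{N_2}\not\subseteq V_{N_1}$ separately, by restricting each network to a cleverly chosen four-leaf subset and invoking \Cref{lemma:containment level-2 4 leaves} (i) and \Cref{lemma:variety of restriction}. The two arguments are mirror images, so I describe the first in detail. Since $r(N_1)=\{a\}$, \Cref{lemma:restriction of network with one reticulation leaf} together with \Cref{lemma:level of restriction} (6) yields three distinct leaves $p,q,r\in X\setminus\{a\}$ such that $N_1|_S$ is a $4$-leaf simple nice strict level-$2$ network, where $S=\{a,p,q,r\}$. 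Moreover, inspecting Case~1 of the proof of \Cref{lemma:restriction of network with one reticulation leaf} (the case where the reticulation leaf under consideration is the one we are tracking), one may take $p,q,r$ so that in $N_1$ each of them is at distance $3$ from $a$ or at distance $2$ from the purely interior reticulation vertex $u$ of $N_1$ (which exists and is unique by \Cref{prop:A_icontains reticulation leaf} and \Cref{lemma:r(N)}).

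The key step is to show that $b\notin S$. Suppose $b\in\{p,q,r\}$. Then either $d_{N_1}(a,b)=3$ or $d_{N_1}(b,u)=2$. In the first case, since $r(N_1)=\{a\}$ and $r(N_2)=\{b\}$, the pair $(N_1,N_2)$ would be $N_1$-RC in the sense of \Cref{definition:N1-RC and N2-RC}, contradicting the hypothesis; in the second case we contradict the hypothesis $d_{N_1}(b,u)\geq 3$. Hence $b\notin S$, so $S$ does not meet the unique reticulation branch $\{b\}$ of $N_2$, and \Cref{lemma:level of restriction} (4) gives that $N_2|_S$ is a $4$-leaf level-$1$ network (possibly a tree). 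Since $N_1|_S$ is a $4$-leaf simple nice strict level-$2$ network, \Cref{lemma:containment level-2 4 leaves} (i) yields $V_{N_1|_S}\not\subseteq V_{N_2|_S}$ under the JC, K2P or K3P model, and \Cref{lemma:variety of restriction} lifts this to $V_{N_1}\not\subseteq V_{N_2}$.

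Exchanging the roles of $N_1$ and $N_2$ gives the other non-containment: using $r(N_2)=\{b\}$, \Cref{lemma:restriction of network with one reticulation leaf} and \Cref{lemma:level of restriction} (6) produce $S'=\{b,p',q',r'\}$ with $N_2|_{S'}$ a $4$-leaf simple nice strict level-$2$ network and $p',q',r'$ each at distance $3$ from $b$ or distance $2$ from the purely interior reticulation vertex $u$ of $N_2$; the hypotheses that $(N_1,N_2)$ is not $N_2$-RC and $d_{N_2}(a,u)\geq 3$ then force $a\notin S'$, so $N_1|_{S'}$ is level-$1$ by \Cref{lemma:level of restriction} (4), and \Cref{lemma:containment level-2 4 leaves} (i) together with \Cref{lemma:variety of restriction} gives $V_{N_2}\not\subseteq V_{N_1}$. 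Combining the two non-containments, $V_{N_1}\cap V_{N_2}$ is a proper subvariety of both $V_{N_1}$ and $V_{N_2}$, i.e. $N_1$ and $N_2$ are distinguishable.

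The main obstacle is the key step: one has to return to the proof of \Cref{lemma:restriction of network with one reticulation leaf} and read off exactly which distance constraints the auxiliary leaves satisfy, since the whole argument hinges on choosing the restriction set so as to avoid the reticulation leaf of the \emph{other} network — this is precisely what guarantees that the other network's restriction drops to level $1$ (rather than remaining strict level-$2$), which is the only situation covered by \Cref{lemma:containment level-2 4 leaves} (i). A secondary point requiring care is the bookkeeping of which vertex ``$u$'' is meant in $d_{N_1}(b,u)$ versus $d_{N_2}(a,u)$, namely the purely interior reticulation vertex of $N_1$ and of $N_2$ respectively, each of which must first be shown to exist and be unique via \Cref{prop:A_icontains reticulation leaf} and \Cref{lemma:r(N)}.
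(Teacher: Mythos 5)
Your proposal is correct and follows essentially the same route as the paper: restrict $N_1$ to a four-leaf set $S=\{a,p,q,r\}$ built from \Cref{lemma:restriction of network with one reticulation leaf}, use the non-$N_1$-RC hypothesis and $d_{N_1}(b,u)\geq 3$ to force $b\notin S$ so that $N_2|_S$ is level-$1$, apply \Cref{lemma:containment level-2 4 leaves} (i) and \Cref{lemma:variety of restriction}, and argue symmetrically for the reverse non-containment. If anything, your write-up is cleaner than the paper's, which contains a few notational slips (writing $c$ for $b$ and $d_{N_2}(c,u)$ for $d_{N_1}(b,u)$) that you have implicitly corrected.
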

\begin{proof}
By \Cref{lemma:restriction of network with one reticulation leaf} and \Cref{lemma:level of restriction} (6), there exist three distinct elements $p,q,r\in X\setminus\{a\}$ such that $N_1|_S$ is a simple nice strict level-2 network for $S=\{a,p,q,r\}.$ In the proof of the first case in \Cref{lemma:restriction of network with one reticulation leaf}, we may choose $p',q',$ and $r'$ such that in $N_1$, they have distance three to $c$ or distance two to $u$. If $c\in \{p,q,r\}$, then the pair is $N_1$-RC or $d_{N_2}(c,u)=2$, a contradiction. Thus, $c\notin\{p,q,r\}.$ It implies that $N_2|_S$ is a level-1 network by \Cref{lemma:level of restriction} (4). Therefore, $V_{N_1}\not\subseteq V_{N_2}.$ To obtain the other non-containment, we can proceed using similar arguments.
\end{proof}

If the assumption that $r(N_1)\neq r(N_2)$ in \Cref{proposition:distinguishability r(N_1)=r(N_2)=1} is removed, then we may not be able to distinguish the networks $N_1$ and $N_2$. \Cref{example: counterexample first statement} in \Cref{sec:appendix} suggests that for two networks with $|r(N_1)|=|r(N_2)|=1$ but $r(N_1)=r(N_2)$, we may not be able to say anything about their distinguishability as we only have one-sided non-containment of network varieties.

\section{Distinguishing semisimple level-2 networks with at least five leaves}\label{section:Distinguishing semisimple level-2 networks with at least five leaves and beyond}

In this section, we will study level-2 semisimple networks and beyond. As semisimple networks could represent more complex evolutionary history compared to their simple counterparts, we would like generalize the results presented in \Cref{subsub:Distinguishing simple level-2 networks with at least five leaves} for the class of semisimple networks.

We will now begin by comparing the varieties of semisimple nice strict level-2 networks and of level-1 networks. The following theorem generalizes \Cref{theorem:comparison-level-2-and-level-1}.

\begin{theorem}\label{thm:semisimple level-2 and level-1}
  For $n\geq 4$, let $N_1$ be an $n$-leaf semisimple nice strict level-2 network and $N_2$ be an $n$-leaf level-1 cycle network.  Then $V_{N_1}\not\subseteq V_{N_2}.$  
\end{theorem}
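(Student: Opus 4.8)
The plan is to reduce the statement to the simple case already handled in \Cref{theorem:comparison-level-2-and-level-1}(1) by using the restriction machinery from \Cref{lemma:level of restriction}, \Cref{lemma:restriction of network with two reticulation leaves}, \Cref{lemma:restriction of network with one reticulation leaf}, and \Cref{lemma:variety of restriction}. First I would recall that a semisimple strict level-2 network $N_1$ has exactly two reticulation vertices lying in its unique strict level-2 biconnected component $B$, and by \Cref{lemma:r(N)}-type reasoning (via \Cref{prop:A_icontains reticulation leaf} applied to the simple ``core'' obtained by contracting each branch) the number of reticulation branches $R(N_1)$ is either $1$ or $2$. The key observation is that, whichever case occurs, one can select a four-element subset $S\subseteq X$ so that $N_1|_S$ is a $4$-leaf \emph{simple} nice strict level-2 network: if $R(N_1)=\{A_1,A_2\}$, pick $s_1\in A_1$, $s_2\in A_2$, and then (mimicking the argument in \Cref{lemma:restriction of network with two reticulation leaves}, working inside $B$ rather than inside a simple network) pick $c,d$ so that the resulting restriction realizes one of the four types in \Cref{fig:unrooted four leaves}; \Cref{lemma:level of restriction}(5) guarantees the restriction stays strict level-2, and niceness is preserved by the same distance-$3$ avoidance used in the simple case. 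If $R(N_1)=\{A_1\}$, the network is obtained from a simple $L_2^{n,(B_1|B_2|B_3)}$ with $A_1\subseteq B_i$; then by \Cref{lemma:level of restriction}(6a) choosing one leaf in $A_1$ and one leaf in some $B_j$, $j\neq i$, together with two further suitable leaves, again produces a $4$-leaf simple nice strict level-2 restriction.

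Next I would handle $N_2$. Since $N_2$ is a cycle network (level-1 with one reticulation vertex), \Cref{lemma:level of restriction}(2) tells us that $N_2|_S$ is a level-1 network for every $S$; in particular for the four-element $S$ chosen above, $N_2|_S$ is a $4$-leaf level-1 network (a tree, a single-triangle, a double-triangle that cannot occur here since $N_2$ has only one reticulation vertex, or a $4$-cycle). Thus $N_1|_S$ is a $4$-leaf simple nice strict level-2 network and $N_2|_S$ is a $4$-leaf level-1 network, and \Cref{lemma:containment level-2 4 leaves}(i) gives $V_{N_1|_S}\not\subseteq V_{N_2|_S}$ under the JC, K2P, or K3P model. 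Finally \Cref{lemma:variety of restriction} lifts this to $V_{N_1}\not\subseteq V_{N_2}$, which is exactly the claim. The case $n=4$ is already covered by \Cref{lemma:small semisimple networks} (the only $4$-leaf semisimple nice strict level-2 networks are the simple ones) combined directly with \Cref{lemma:containment level-2 4 leaves}(i), so we may assume $n\geq 5$ for the restriction argument.

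The main obstacle I anticipate is verifying that a suitable four-element $S$ with $N_1|_S$ simple, nice, \emph{and} strict level-2 can always be chosen when $N_1$ is genuinely semisimple (i.e.\ some branch is a nontrivial tree rather than a single leaf). The branch-identification construction can, a priori, force the restriction to pick up a funnel vertex, or to collapse the level-2 biconnected component down to level-1, exactly as the non-nice restriction in \Cref{fig:restriction not nice} shows. Overcoming this requires a careful case analysis paralleling \Cref{lemma:restriction of network with two reticulation leaves} and \Cref{lemma:restriction of network with one reticulation leaf}, but carried out at the level of branches: one works with the distances between $up$ of the chosen leaves and the two purely interior vertices of $B$, and uses that each reticulation leaf within a reticulation branch still sits at distance $>2$ from at least one purely interior vertex. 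Since those two lemmas already supply the needed leaf choices inside each branch, the extra work is bookkeeping rather than a new idea; once $N_1|_S$ is pinned down to one of the four types of \Cref{fig:unrooted four leaves}, the rest of the argument is immediate from the cited results.
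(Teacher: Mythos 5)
Your argument is correct, but it follows a noticeably different route from the paper's own proof. The paper first disposes of the case where $N_2$ is a tree by restricting to one leaf per branch and invoking \Cref{theorem:comparison-level-2-and-level-1}, and then runs a four-way case analysis comparing the branch partition $P$ of $N_1$ with the partition $Q$ induced by the cycle vertices of $N_2$ (namely $|P|<|Q|$, $|P|>|Q|$, $|P|=|Q|$ with $P\neq Q$, and $P=Q$); in the last case it contracts each branch to a single leaf and falls back on \Cref{lemma:containment level-2 4 leaves} or \Cref{theorem:comparison-level-2-and-level-1} for $t$-leaf networks. You instead argue uniformly: choose a four-element set $S$ meeting the reticulation branch(es) appropriately so that $N_1|_S$ is a $4$-leaf simple nice strict level-2 network, observe that $N_2|_S$ is automatically a level-1 network by \Cref{lemma:level of restriction}, and conclude with \Cref{lemma:containment level-2 4 leaves}(i) and \Cref{lemma:variety of restriction}. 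This is arguably cleaner --- it never needs to compare $P$ with $Q$ and it uses only the $4$-leaf computation rather than the $n$-leaf simple result --- but it concentrates all the weight on the existence of such an $S$, i.e., on the semisimple analogues of \Cref{lemma:restriction of network with two reticulation leaves} and \Cref{lemma:restriction of network with one reticulation leaf}; these are exactly the paper's \Cref{lemma:restriction semisimple two reticulation branches} and \Cref{lemma:restriction semisimple one reticulation branch}, which the paper states immediately after this theorem and justifies only by ``adapting'' the simple-case proofs. Since the paper's own cases lean on the same adaptation, your proposal is at the same level of rigor, and the obstacle you flag (keeping the $4$-leaf restriction simple, nice, and strict level-2 simultaneously, with the four leaves in pairwise distinct branches) is the genuine crux in both treatments.
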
  
\begin{proof}
Suppose that the set of branch's leaves of $N_1$ is given by the partition
 $P=\{A_1,\dots,A_t\}$ of the leaf set $[n]$. By \Cref{lemma:small semisimple networks}, $t\geq 4$. If $n=4$, then the lemma follows from \Cref{lemma:small semisimple networks} and  \Cref{lemma:containment level-2 4 leaves}. Thus, we are left with $n\geq 5$.
   
 We first suppose that $N_2$ is a tree. For each $1\leq i\leq t$, choose an element $a_{i}\in A_{i}.$ Let $S$ be the union of all such $a_{i}$'s. Then by our construction, $N_1|_S$ is simple. Since $S$ contains an element from each reticulation branch of $N_1$, $N_1|_S$ is a strict level-2 network. The fact that $N_1|_S$ is a nice network is straightforward. Indeed, if the simple network $N_1|_S$ is not nice, then neither is $N_1$ as we can obtain $N_1$ from $N_1|_S$ by adding more leaves to each branch. Moreover, it is immediate that $N_2|_S$ is a tree. As $n\geq 5$, both $N_1|_S$ and $N_2|_S$ have at least five leaves. By \Cref{theorem:comparison-level-2-and-level-1}, $V_{N_1}\not\subseteq V_{N_2}.$
     
    Now we suppose that $N_2$ is a strict level-1 network and that the induced partition of $[n]$ in $N_2$ is given by 
 $Q=\{B_1,B_2,\dots, B_s\}$ for $s\geq 3.$ Indeed, if $s\leq 2$, then $N_2$ is a tree. Let us recall that by \Cref{lemma:level of restriction}, for any $S\subseteq X$ of size 4, $N_2|_S$ is a level-1 network. We will distinguish four cases. 
 \begin{enumerate}
     \item If $|P|<|Q|$, then there exist $i,j,k$ with $i\neq j$ and $a,b\in [n]$ such that $a,b\in A_k$ while $a\in B_i$ and $b\in B_j.$ As $n\geq 5$ and $t\geq 4$, the proof of \Cref{lemma:restriction of network with two reticulation leaves} and \Cref{lemma:restriction of network with one reticulation leaf} suggests that we can find three distinct elements $c,d,e\in [n]\setminus A_k$ such that $N_1|_S$ is a 4-leaf simple nice strict level-2 network for $S=\{a,c,d,e\}$. Then, $V_{N_1|_S}\not\subseteq V_{N_2|_S}$ by \Cref{lemma:containment level-2 4 leaves}.
     
     \item  If $|P|>|Q|$, then there exist $i,j,k$ with $i\neq j$ and $a,b\in [n]$ such that $a,b\in B_k$ while $a\in A_i$ and $b\in A_j.$ Then by the proof of \Cref{lemma:restriction of network with two reticulation leaves} and \Cref{lemma:restriction of network with one reticulation leaf}, there exist two distinct element $c,d\in [n]\setminus (A_i\cup A_j)$ belonging to two distinct branch's leaves of $N_1$ such that $N_1|_S$ is a 4-leaf simple nice strict level-2 network for $S=\{a,b,c,d\}$. Again, by \Cref{lemma:containment level-2 4 leaves}, $V_{N_1|_S}\not\subseteq V_{N_2|_S}.$
     
     \item  Next we assume that $|P|=|Q|=t$ but $P\neq Q$. Then, there exist $p,q,r\in [t]$  where $q\neq r$ and $a,b\in [n]$ such that $a,b\in B_p$ while $a\in A_q$ and $b\in A_r.$ Now the proof of \Cref{lemma:restriction of network with two reticulation leaves} and \Cref{lemma:restriction of network with one reticulation leaf} suggests that we can find two distinct leaves $c,d\in [n]\setminus\{a,b\}$ such that $N_1|_S$ is a 4-leaf simple nice strict level-2 network where $S=\{a,b,c,d\}$. For this set $S$, $N_2|_S$ is a 4-leaf 3-cycle network or tree. In this case, again $V_{N_1|_S}\not\subseteq V_{N_2|_S}.$ 
     
     \item Lastly, we are left with the case $P=Q$. If there exist $i,j\in [s]$ such that $A_i=B_j$ but $N_1|_{A_i}\neq N_1|_{B_j},$ then the theorem follows from the fact that trees are distinguishable. Thus, we may assume that if $A_i=B_j$, then $N_1|_{A_i}= N_1|_{B_j}.$ This will imply that the partition $B_1|\dots|B_t$ is a reordering of $A_1|\dots|A_t.$ Suppose that $A_1|\dots|A_t=B_{i_1}|\dots|B_{i_t}.$ Now we can view the network $N_1$ as a $t$-leaf simple nice strict level-2 network while the network $N_2$ as $t$-leaf $t$-cycle network. In $N_1$, we replace the tree corresponding to the partition $A_i$ by a leaf labeled by $i.$ In $N_2$, we replace the tree corresponding to the partition $B_{i_k}$ by a leaf labeled by $k.$ Then the non-containment $V_{N_1}\not\subseteq V_{N_2}$ follows from \Cref{lemma:containment level-2 4 leaves} for $t=4$ and from \Cref{theorem:comparison-level-2-and-level-1} for $t\geq 5$.\qedhere
 \end{enumerate}
 \end{proof}
 
 We may have seen earlier that level-2 networks could explain more complex evolutionary histories of group of species compared to level-1 networks as level-2 networks allow us to have two reticulation vertices in each biconnected component of the network. \Cref{thm:semisimple level-2 and level-1} suggests that it is reasonable to model the evolution of a group of species on a level-2 network as the associated algebraic variety would not be contained in the varieties associated with simpler networks, which in this case are the level-1 networks.
 
 In addition to \Cref{lemma:level of restriction}, we now present more properties of restriction of a semisimple network. These lemmas follow by adapting the proof of \Cref{lemma:restriction of network with two reticulation leaves} and \Cref{lemma:restriction of network with one reticulation leaf}.
 
 \begin{lemma}\label{lemma:restriction semisimple two reticulation branches}
  Let $N$ be an $n$-leaf semisimple nice strict level-2 network on $X$ where $|X|\geq 4$. Moreover, let us assume that $N$ has two reticulation branches. Given two distinct leaves $a,b\in X$ belonging to distinct branches, there exist two distinct leaves $c,d\in X\setminus\{a,b\}$ such that $N|_{\{a,b,c,d\}}$ is a nice network. 
 \end{lemma}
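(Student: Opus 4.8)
The plan is to reduce the claim to its simple counterpart, \Cref{lemma:restriction of network with two reticulation leaves}. Recall from the construction preceding the statement that $N$ arises from a simple network $N_0$ of type $L_2^{n_0,(A_1|A_2|A_3)}$ by attaching, to each leaf of $N_0$, a tree (possibly empty) along the identification process; let $Y$ denote the leaf set of $N_0$. The branches of $N$ are in bijection with $Y$, and a branch is a reticulation branch of $N$ exactly when the corresponding leaf of $N_0$ is a reticulation leaf, since the endpoint $v_i$ of the cut-edge $e_i=(v_i,w_i)$ lying in the (common) level-2 biconnected component is precisely the parent in $N_0$ of that leaf. Because every attached tree hangs off that biconnected component, passing between $N$ and $N_0$ neither creates nor destroys funnel vertices (a funnel involves reticulation edges, hence sits in the biconnected component, where the local structure is the same in $N$ and $N_0$); so $N_0$ is again nice and strict level-2, and by \Cref{lemma:r(N)} the hypothesis $R(N)=2$ forces $|r(N_0)|=2$. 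By \Cref{lemma:small simple nice level-2}, $N_0$ has at least four leaves. Write $\pi\colon X\to Y$ for the surjection sending each leaf to the leaf of $N_0$ labelling its branch.

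The key step I would isolate is the following compatibility fact: if $S\subseteq X$ meets every branch in at most one leaf, then $N|_S\cong N_0|_{\pi(S)}$ as leaf-labelled networks (after identifying $Y$ with the branch labels). This holds because, when forming $N|_S$ as in \Cref{def:restriction of network}, each branch of $N$ contributes at most one leaf to $S$, so the attached tree contracts — under suppression of degree-two vertices — to a single leaf edge incident to the very vertex $v_i$ of the biconnected component to which the corresponding leaf of $N_0$ is incident; hence the union of (bi)directed paths, after suppressing degree-two vertices and deleting parallel edges, is literally $N_0|_{\pi(S)}$. Granting this, the proof is short. Since $a$ and $b$ lie in distinct branches, $\alpha:=\pi(a)$ and $\beta:=\pi(b)$ are distinct leaves of $N_0$; applying \Cref{lemma:restriction of network with two reticulation leaves} to $N_0$ (a simple nice strict level-2 network with $\ge 4$ leaves and two reticulation leaves) produces distinct $\gamma,\delta\in Y\setminus\{\alpha,\beta\}$ with $N_0|_{\{\alpha,\beta,\gamma,\delta\}}$ nice. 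Picking any $c$ in the branch labelled $\gamma$ and any $d$ in the branch labelled $\delta$ (branches are non-empty), the leaves $a,b,c,d$ lie in four distinct branches, so $c,d\in X\setminus\{a,b\}$ are distinct, and the compatibility fact yields $N|_{\{a,b,c,d\}}\cong N_0|_{\{\alpha,\beta,\gamma,\delta\}}$, which is nice.

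The hard part will be making the compatibility fact airtight: one must verify that the restriction operation interacts correctly with the branch decomposition — in particular that the representative leaf chosen inside a branch is immaterial to the isomorphism type of $N|_S$, that every parallel edge produced inside an attached tree or between a branch and the biconnected component is removed, and that no funnel is introduced by the collapse. This is routine graph bookkeeping but it is the one genuinely non-formal point. As an alternative — the route hinted at in the remark following the statement — one can bypass the fact and instead transplant the case analysis in the proof of \Cref{lemma:restriction of network with two reticulation leaves} directly onto $N$, and in the sub-cases where the resulting restriction degenerates to a level-1 network invoke \Cref{lemma:restriction of network with one reticulation leaf} together with \Cref{lemma:level of restriction} and \Cref{lemma:funnel-free and nice}; this requires tracking distances to the two purely interior vertices of the level-2 biconnected component of $N$, and is heavier but conceptually parallel.
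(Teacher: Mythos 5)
Your proposal is correct. The paper in fact prints no argument for this lemma beyond the one-line remark that it ``follows by adapting the proof of'' the simple-case lemma (\Cref{lemma:restriction of network with two reticulation leaves}) together with the observation, made after the statement, that $a,b,c,d$ must lie in pairwise distinct branches so that the restriction is simple; your second, ``transplant the case analysis'' route is exactly that intended adaptation. Your primary route is a cleaner formalization of the same idea: instead of re-running the distance-to-$u$/distance-to-$v$ case analysis inside $N$, you factor the problem through the underlying simple network $N_0$ via the branch map $\pi$ and a compatibility isomorphism $N|_S\cong N_0|_{\pi(S)}$ for transversal $S$. That compatibility fact is sound for the reasons you give: each branch is a pendant tree hanging off the common strict level-2 biconnected component by a cut-edge, so when $S$ meets a branch in one leaf the tree part of the union of directed paths contracts to a single leaf edge at the same attachment vertex $v_i=a_{ij}'$, while the portion inside the biconnected component depends only on the set of attachment vertices, which is identical for $N$ and $N_0$; and since funnels involve only reticulation vertices, all of which sit in that shared biconnected component, niceness transfers both between $N$ and $N_0$ (via \Cref{lemma:funnel-free and nice}) and is preserved by the isomorphism of restrictions. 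The remaining hypotheses of \Cref{lemma:restriction of network with two reticulation leaves} for $N_0$ (simple, strict level-2, $|r(N_0)|=2$, at least four leaves) are verified correctly from \Cref{lemma:r(N)} and \Cref{lemma:small simple nice level-2}. What your reduction buys is a single reusable lemma that also disposes of \Cref{lemma:restriction semisimple one reticulation branch} and makes transparent why $c$ and $d$ may always be taken in branches distinct from those of $a$ and $b$, a fact the paper silently relies on in the proofs of \Cref{proposition:distinguishability semisimple R(N_1)=R(N_2)=2} and its companions; the cost is only the bookkeeping you flag, which is routine.
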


 \begin{lemma}\label{lemma:restriction semisimple one reticulation branch}
  Let $N$ be an $n$-leaf semisimple nice strict level-2 network on $X$ where $|X|\geq 4$. Moreover, let us assume that $N$ has one reticulation branch. Given $a\in X$, there exists three distinct leaves $b,c,d\in X\setminus\{a\}$ such that $N|_{\{a,b,c,d\}}$ is a nice network. 
 \end{lemma}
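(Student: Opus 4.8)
The plan is to reduce the statement to its simple-network counterpart, \Cref{lemma:restriction of network with one reticulation leaf}, by passing to the ``skeleton'' of $N$. Let $N_{0}$ be the network obtained from $N$ by collapsing, for every branch $A_{i}$, the attached tree (together with the edge used to identify it) down to a single leaf $\bar{a}_{i}$. First I would verify that $N_{0}$ satisfies all hypotheses of \Cref{lemma:restriction of network with one reticulation leaf}. It is a \emph{simple} strict level-2 network: it still contains the unique strict level-2 biconnected component $B$ of $N$ unchanged, and every remaining edge of $N_{0}$ is a trivial cut-edge. It is \emph{nice}: in the identification process, grafting an unrooted binary tree onto a leaf creates only tree vertices, so neither the passage $N\to N_{0}$ nor $N_{0}\to N$ can create or remove a funnel vertex; thus $N_{0}$ is funnel-free because $N$ is, and \Cref{lemma:funnel-free and nice} gives niceness. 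It has $t\ge 4$ leaves, $t$ being the number of branches of $N$, since otherwise $N_{0}$ would be a simple nice strict level-2 network on at most three leaves, contradicting \Cref{lemma:small simple nice level-2}. Finally, the assumption that $N$ has exactly one reticulation branch $A_{1}$ means precisely that, of the two reticulation vertices of $B$, exactly one is the endpoint in $B$ of a cut-edge to a branch, namely the endpoint of the cut-edge to $A_{1}$; translated to $N_{0}$ this says that $N_{0}$ has exactly one reticulation leaf, namely $\bar{a}_{1}$.

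Now let $a\in X$ be given, and let $A_{j}$ be the branch containing $a$, with corresponding skeleton leaf $\bar{a}_{j}$. Applying \Cref{lemma:restriction of network with one reticulation leaf} to $N_{0}$ at the leaf $\bar{a}_{j}$ produces three distinct leaves $\bar{a}_{j_{1}},\bar{a}_{j_{2}},\bar{a}_{j_{3}}$ of $N_{0}$ -- equivalently, three branches $A_{j_{1}},A_{j_{2}},A_{j_{3}}$ of $N$, pairwise distinct and distinct from $A_{j}$ -- such that $N_{0}|_{\{\bar{a}_{j},\bar{a}_{j_{1}},\bar{a}_{j_{2}},\bar{a}_{j_{3}}\}}$ is a nice network. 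Pick any leaves $b\in A_{j_{1}}$, $c\in A_{j_{2}}$, $d\in A_{j_{3}}$ of $N$; these are three distinct leaves of $X\setminus\{a\}$.

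To finish, I would identify $N|_{\{a,b,c,d\}}$ with $N_{0}|_{\{\bar{a}_{j},\bar{a}_{j_{1}},\bar{a}_{j_{2}},\bar{a}_{j_{3}}\}}$, which yields the lemma since the latter is nice. Since $a,b,c,d$ lie in four pairwise distinct branches of $N$, the union of directed paths joining them inside $N$ meets each of these four branch trees only along the path from the chosen leaf out to the cut-edge attaching that branch to $B$; contracting degree-two vertices collapses each such path to a single edge, so after removing parallel edges one obtains exactly what the restriction procedure produces from $N_{0}$ on the corresponding four skeleton leaves. The one place requiring care is making this last isomorphism precise -- that is, checking that restriction commutes with collapsing the branch trees to single leaves -- while the rest of the argument is a bookkeeping translation of the simple case. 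The same scheme, invoking \Cref{lemma:restriction of network with two reticulation leaves} instead and taking $a,b$ in the two distinct reticulation branches, proves \Cref{lemma:restriction semisimple two reticulation branches}.
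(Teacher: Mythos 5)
Your proof is correct, and it takes a genuinely different (more modular) route than the paper. The paper dispatches this lemma and \Cref{lemma:restriction semisimple two reticulation branches} in a single sentence, asserting that they follow ``by adapting the proof of'' \Cref{lemma:restriction of network with two reticulation leaves} and \Cref{lemma:restriction of network with one reticulation leaf}, i.e.\ by re-running the distance case analysis with one representative leaf chosen per branch. You instead apply the simple-network lemma as a black box to the skeleton $N_0$ obtained by collapsing each branch tree to a single leaf, and transfer the conclusion back by observing that restriction commutes with this collapse when the four chosen leaves lie in four pairwise distinct branches. Your intermediate verifications are all sound: $N_0$ is precisely the simple network from which $N$ was built by the identification process, it is funnel-free (hence nice) because funnels can only occur inside the strict level-2 biconnected component, which the grafting of branch trees does not touch, it has at least four leaves by \Cref{lemma:small simple nice level-2}, and its reticulation leaves correspond bijectively to the reticulation branches of $N$. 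The commutation step you flag is indeed the only delicate point, and it does go through: each branch tree meets the union of connecting paths only in the unique path from its chosen leaf to the attaching cut-edge, which contracts to a single pendant edge, while the biconnected component is unchanged, so $N|_{\{a,b,c,d\}}$ and $N_0|_{\{\bar a_j,\bar a_{j_1},\bar a_{j_2},\bar a_{j_3}\}}$ coincide after relabeling. What your black-box reduction does not supply---and what the paper's intended adaptation would---is the finer information about where $b,c,d$ sit relative to the reticulation vertices; the paper later cites ``the proof of'' these lemmas for exactly that refined choice in \Cref{proposition:distinguishability semisimple R(N_1)>=R(N_2) (1)} and \Cref{proposition:distinguishability semisimple R(N_1)=R(N_2)=1}. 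For the lemma as stated, however, your argument is complete.
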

 
 In \Cref{lemma:restriction semisimple two reticulation branches}, the elements $a,b,c,$ and $d$ belong to pairwise distinct branches. Otherwise, $N|_{\{a,b,c,d\}}$ has fewer than four leaves and hence it is not nice by \Cref{lemma:small semisimple networks}. Thus, $N|_{\{a,b,c,d\}}$ is simple. Similar statement hold for the elements $a,b,c,$ and $d$ in \Cref{lemma:restriction semisimple one reticulation branch}.

  In what follows, let us recall that the number $R(N)$ denotes the set of reticulation branches of a semisimple network $N$. In particular, if $N$ is simple, then $R(N)$ is the set of its reticulation leaves. Next, given a pair $(N_1,N_2)$, we introduce the following definitions which generalize the definitions of $N_1$-RC and $N_2$-RC.
  
  \begin{definition}\label{definition: N1-RBC and N2-RBC}
  Let $N_1$ and $N_2$ be two semisimple nice networks strict level-2 networks on $X$.
  \begin{enumerate}
       \item A pair $(N_1,N_2)$ is said to be \textit{$N_1$-reticulation-branch-conflicting ($N_1$-RBC)} if for every $A\in R(N_1)$ and $a\in A$, there exist $B\in R(N_2)$ and $b\in B$ such that $d_{N_1|_{\{a,b\}\cup X\setminus (A\cup B)}}(x,y)=3$.
     \item Similarly, a pair $(N_1,N_2)$ is said to be \textit{$N_2$-reticulation-branch-conflicting ($N_2$-RBC)} if for every $A\in R(N_2)$ and $a\in A$, there exist $B\in R(N_1)$ and $b\in B$ such that $d_{N_2|_{\{a,b\}\cup X\setminus (A\cup B)}}(x,y)=3$.
  \end{enumerate}
  \end{definition}
  
The following lemma is an immediate consequence of \Cref{definition:N1-RC and N2-RC} and \Cref{definition: N1-RBC and N2-RBC}, together with the fact that every simple strict level-2 network is semisimple.
 \begin{lemma}
  Let $N_1$ and $N_2$ be two simple nice strict level-2 networks on $X$. Then for $i\in\{1,2\},$ the pair $(N_1,N_2)$ is $N_i$-RC if and only if it is $N_i$-RBC.
 \end{lemma}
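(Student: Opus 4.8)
The plan is to show that, once both networks are simple, the RBC condition of \Cref{definition: N1-RBC and N2-RBC} is literally the same statement as the RC condition of \Cref{definition:N1-RC and N2-RC}; by symmetry it suffices to treat the case $i=1$, the case $i=2$ being identical after interchanging $N_1$ and $N_2$.

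First I would record the structural observation that in a simple nice strict level-2 network $N$ every branch is a singleton: a simple network is exactly the special case of the identification construction in which every attached tree $T_{ij}$ is empty, so each branch equals $\{a\}$ for a single leaf $a$, and hence $R(N)=\{\{a\}:a\in r(N)\}$ — the identification of reticulation branches with reticulation leaves already noted in the paragraph preceding \Cref{definition: N1-RBC and N2-RBC}. Applying this to $N_1$ and $N_2$, a choice of $A\in R(N_1)$ together with $a\in A$ is nothing but a choice of $a\in r(N_1)$, and a choice of $B\in R(N_2)$ together with $b\in B$ is nothing but a choice of $b\in r(N_2)$. Moreover $A\cup B=\{a,b\}$, so the leaf set occurring in \Cref{definition: N1-RBC and N2-RBC} is
\[
\{a,b\}\cup\bigl(X\setminus(A\cup B)\bigr)=\{a,b\}\cup(X\setminus\{a,b\})=X,
\]
and the distance tested there (reading the typographically displaced $x,y$ as the $a,b$ just chosen) is $d_{N_1|_X}(a,b)$.

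It then remains only to observe that $N_1|_X=N_1$, so that $d_{N_1|_X}(a,b)=d_{N_1}(a,b)$ and the RBC requirement ``$d_{N_1|_X}(a,b)=3$'' becomes exactly the RC requirement ``$d_{N_1}(a,b)=3$''; since the quantifiers have already been matched, $(N_1,N_2)$ is $N_1$-RC precisely when it is $N_1$-RBC. The identity $N_1|_X=N_1$ is immediate from \Cref{def:restriction of network}: a semi-directed phylogenetic network has no parallel edges and no degree-two vertices, and in a simple nice level-2 network every edge lies on some path between two leaves, so the union of all leaf-to-leaf paths for $S=X$ is all of $N_1$, and neither the degree-two contraction step nor the parallel-edge-removal step does anything.

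There is essentially no obstacle here: the lemma is a translation of definitions, which is exactly why the paper presents it as an immediate consequence. The only step that is not pure bookkeeping is the identity $N|_X=N$, and if one wanted to be maximally careful the place to spend a sentence is checking that restricting to the full leaf set neither deletes an edge (every edge lies on a leaf-to-leaf path in a simple level-2 network) nor creates a shortcut (there are no degree-two vertices to contract), which together guarantee that the distance between the two chosen leaves $a$ and $b$ is unchanged.
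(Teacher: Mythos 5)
Your proposal is correct and is essentially the paper's own argument: the paper offers no written proof, asserting the lemma is an immediate consequence of Definitions \ref{definition:N1-RC and N2-RC} and \ref{definition: N1-RBC and N2-RBC} together with the fact that simple networks are semisimple with singleton branches, which is exactly the unfolding you carry out. Your extra care in checking $N_1|_X=N_1$ (so the restricted distance equals $d_{N_1}$) is a worthwhile detail the paper leaves implicit.
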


 We will now provide some results on the distinguishability of semisimple nice level-2 networks based on their set of reticulation branches.

\begin{theorem}\label{proposition:distinguishability semisimple R(N_1)=R(N_2)=2}
Let $N_1$ and $N_2$ be two distinct $n$-leaf semisimple nice strict level-2 networks for $n\geq 5$ such that $|R(N_1)|=|R(N_2)|=2$. Let $\{A_1,\dots,A_{l_1}\}$ and $\{B_1,\dots,B_{l_2}\}$ be the set of branches of $N_1$ and $N_2$, respectively. Suppose that $R(N_1)=\{A_1,A_2\}$ and $R(N_2)=\{B_1,B_2\}$. If the pair $(N_1,N_2)$ is neither $N_1$-RBC nor $N_2$-RBC and $A_i\cap B_j=\emptyset$ for $1\leq i,j\leq 2$, then  $N_1$ and $N_2$ are distinguishable.
\end{theorem}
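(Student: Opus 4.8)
The plan is to follow the proof of \Cref{proposition:distinguishability r(N_1)=r(N_2)=2} almost verbatim, with reticulation leaves replaced by reticulation branches and the simple-network restriction lemma \Cref{lemma:restriction of network with two reticulation leaves} replaced by its semisimple counterpart \Cref{lemma:restriction semisimple two reticulation branches}. Distinguishability is symmetric in $N_1$ and $N_2$, and --- once we also use the hypothesis that $(N_1,N_2)$ is not $N_2$-RBC --- so are all the hypotheses; hence it suffices to prove $V_{N_1}\not\subseteq V_{N_2}$ and then repeat the argument with the two networks interchanged. As in the simple case, the engine is \Cref{lemma:variety of restriction}: I will exhibit a four-element subset $S\subseteq X$ for which $N_1|_S$ is a $4$-leaf simple nice strict level-$2$ network and $N_2|_S$ is a level-$1$ network, so that \Cref{lemma:containment level-2 4 leaves}(i) gives $V_{N_1|_S}\not\subseteq V_{N_2|_S}$ and therefore $V_{N_1}\not\subseteq V_{N_2}$.

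Using that $(N_1,N_2)$ is not $N_1$-RBC, fix a reticulation branch of $N_1$, say $A_1$ after relabelling, together with a leaf $a\in A_1$ witnessing the failure, so that no leaf of $B_1\cup B_2$ is at distance three from $a$ in the relevant restriction of $N_1$. Pick any leaf $b\in A_2$. By \Cref{lemma:restriction semisimple two reticulation branches} there exist distinct $c,d\in X\setminus\{a,b\}$ with $N_1|_{\{a,b,c,d\}}$ nice; by the remark after that lemma the four leaves lie in pairwise distinct branches of $N_1$, so $N_1|_{\{a,b,c,d\}}$ is simple, and since $\{a,b,c,d\}$ meets both reticulation branches $A_1,A_2$, \Cref{lemma:level of restriction}(5) shows $N_1|_{\{a,b,c,d\}}$ is a $4$-leaf simple nice strict level-$2$ network whose two reticulation leaves are precisely $a$ and $b$. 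Inspecting the proof of \Cref{lemma:restriction semisimple two reticulation branches}, which adapts Case 1 of the proof of \Cref{lemma:restriction of network with two reticulation leaves}, one sees that $c$ and $d$ may be taken so that each is at distance three from $a$ or from $b$, and --- this is the crucial extra freedom --- so that whenever some leaf is at distance three from $a$ at all, one of $c,d$ can be chosen to be such a leaf.

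Put $S=\{a,b,c,d\}$. From $A_i\cap B_j=\emptyset$ we get $a,b\notin B_1\cup B_2$, so $S$ meets both $B_1$ and $B_2$ only if one of $c,d$ lies in $B_1$ and the other in $B_2$. Suppose this occurred. Then both of these leaves lie in a reticulation branch of $N_2$ and each is at distance three from $a$ or from $b$; the choice of $a$ rules out distance three to a leaf of $B_1\cup B_2$, so both are at distance three from $b$. Running back through the finitely many possibilities for $N_1|_S$ (\Cref{lemma:small simple nice level-2}: only types 1 and 3 have two reticulation leaves), this configuration pins down $N_1|_S$ in such a way that $a$ admits no leaf at distance three; by the extra freedom recorded above we may therefore re-select $c$ or $d$ outside $B_1\cup B_2$. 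Such a re-selection is available because $n\ge 5$ and, by \Cref{lemma:different A_i} applied to $N_2$, the branches $B_1$ and $B_2$ are distinct, so $X\setminus(B_1\cup B_2)\ne\emptyset$. After this adjustment $S$ meets at most one of $B_1,B_2$, so \Cref{lemma:level of restriction}(1) or (3) makes $N_2|_S$ a tree or a level-$1$ network, completing the proof that $V_{N_1}\not\subseteq V_{N_2}$. Interchanging $N_1$ and $N_2$ and invoking ``not $N_2$-RBC'' yields $V_{N_2}\not\subseteq V_{N_1}$, so $N_1$ and $N_2$ are distinguishable.

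The step I expect to be the main obstacle is making the penultimate paragraph precise: one has to re-enter the subcase analysis behind \Cref{lemma:restriction of network with two reticulation leaves} --- tracking where the parents of $c$ and $d$ sit along the internally disjoint $u$--$v$ paths of the level-$2$ biconnected component, and reconciling the distances used there with the distances in the restriction appearing in the definition of $N_1$-RBC --- and verify that, once $a$ is chosen as the $N_1$-RBC witness, the completing leaves can always be kept out of the single obstructing configuration (both forced into reticulation branches of $N_2$ at distance three from one reticulation leaf of $N_1$) while preserving that $N_1|_S$ is a $4$-leaf simple nice strict level-$2$ network. This is exactly the place where the strength of the RBC hypotheses is consumed, just as the $N_i$-RC hypotheses were consumed at the corresponding point of \Cref{proposition:distinguishability r(N_1)=r(N_2)=2}.
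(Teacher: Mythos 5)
Your skeleton is the same as the paper's: pick one leaf from each reticulation branch of $N_1$, complete to a four-element set $S$ via \Cref{lemma:restriction semisimple two reticulation branches} and \Cref{lemma:level of restriction}(5) so that $N_1|_S$ is a $4$-leaf simple nice strict level-2 network, show $N_2|_S$ is level-1 or a tree, and conclude with \Cref{lemma:containment level-2 4 leaves}(i) and \Cref{lemma:variety of restriction}; the symmetric argument gives the other non-containment. Where you diverge is exactly the step you flag as the obstacle, and that step is a genuine gap. The paper never re-selects $c$ or $d$. It notes that the completing leaves can be taken in branches whose attachment points are adjacent to $v_1$ or $v_2$, so that each of $c,d$ satisfies one of the distance-three relations of \Cref{definition: N1-RBC and N2-RBC} with $a_1$ or $a_2$ in the appropriate restriction; hence if both $c$ and $d$ lay in $B_1\cup B_2$ the pair would be $N_1$-RBC, contradicting the hypothesis outright. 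At least one of $c,d$ therefore already lies outside $B_1\cup B_2$, and since $A_i\cap B_j=\emptyset$ puts $a_1,a_2$ outside as well, $S$ meets at most one of $B_1,B_2$ and \Cref{lemma:level of restriction}(1) or (3) finishes the case split.

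Your alternative route does not close. First, the ``extra freedom'' you attribute to \Cref{lemma:restriction semisimple two reticulation branches} (that one of $c,d$ can always be taken to realize a distance-three relation with $a$ whenever any leaf does) is not stated or proved anywhere, and it is doing real work. Second, the inference ``this configuration pins down $N_1|_S$ so that $a$ admits no leaf at distance three'' conflates two different metrics: the distances in the definition of $N_1$-RBC are computed in the restrictions $N_1|_{\{a,b\}\cup X\setminus(A\cup B)}$, which collapse whole branches, not in $N_1|_S$; a fact about distances inside the four-leaf network $N_1|_S$ cannot be fed back into the RBC hypothesis. Third, even granting both points, the re-selection may simply be impossible: nothing prevents every branch of $N_1$ other than $A_1,A_2$ that is attached adjacent to $v_1$ or $v_2$ (the only branches eligible to supply $c$ or $d$ while keeping $N_1|_S$ a nice strict level-2 network) from being contained in $B_1\cup B_2$, in which case there is no leaf outside $B_1\cup B_2$ to swap in. The fix is to abandon the witness-plus-re-selection scheme and argue as the paper does, deriving the contradiction with ``not $N_1$-RBC'' directly from the assumption that both completing leaves lie in $B_1\cup B_2$.
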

 \begin{proof}
 First, we can choose any two elements $a_1\in A_1$ and $a_2\in A_2$. Then by \Cref{lemma:restriction semisimple two reticulation branches} and \Cref{lemma:level of restriction} (5), there exist two distinct leaves $c,d\in X\setminus\{a_1,a_2\}$ such that $N_1|_S$ is a simple nice strict level-2 network for $S=\{a_1,a_2,c,d\}$. We may choose $c$ and $d$ such that they belong to different branches of $N_1$. Suppose that $c\in A_i$ and $d\in A_j$ where $A_i$ and $A_j$ are two distinct branches of $N_1$ and $i,j\notin\{1,2\}$. Let $B$ the unique strict level-2 biconnected component of $N_1$. Let us suppose that the branch $A_l$ is attached to $B$ via cut-edges $e_l=(v_l,w_l)$ such that $v_l\in B$. Finally, we may assume that in $N_1$, the vertex $v_i$ and $v_j$ have distance one to either $v_1$ or $v_2$.
 
 Suppose that both $c$ and $d$ is contained in $B_1\cup B_2$. Since $c$ and $d$ belong to distinct branches, without loss of generality, we may assume that $c\in B_1$ and $d\in B_2$. Then ($d_{N_1|_{\{a_1,c\}\cup X\setminus(A_1\cup A_i)}}(c,a_1)=3$ or  $d_{N_1|_{\{a_2,c\}\cup X\setminus(A_2\cup A_i)}}(c,a_2)=3$) and ($d_{N_1|_{\{a_1,d\}\cup X\setminus(A_1\cup A_j)}}(c,a_1)=3$ or  $d_{N_1|_{\{a_2,d\}\cup X\setminus(A_2\cup A_j)}}(c,a_2)=3$). This implies that the pair $(N_1,N_2)$ is $N_1$-RBC, a contradiction. Thus, at least one of $c$ or $d$ is not contained in $B_1\cup B_2.$ We have the following two cases.
 \begin{enumerate}
     \item Case 1: without loss of generality, we may assume that $c\in B_1$ but $d\notin B_1\cup B_2$. By our hypothesis, $a_1,a_2\notin B_1\cup B_2$. Therefore, \Cref{lemma:level of restriction} (3), $N_2|_S$ is a level-1 network. Hence $V_{N_1|_S}\not\subseteq V_{N_2|_S}.$
     \item Case 2: $c,d\notin B_1\cup B_2$. Again, \Cref{lemma:level of restriction} (1) suggests that $N_2|_S$ is a tree. Hence $V_{N_1|_S}\not\subseteq V_{N_2|_S}.$
 \end{enumerate}
 Therefore, in all cases, \Cref{lemma:variety of restriction} suggests that $V_{N_1}\not\subseteq V_{N_2}$. Conversely, we can apply similar arguments to show that $V_{N_2}\not\subseteq V_{N_1}$. This completes the proof of the theorem.
 \end{proof}

The assumption on the reticulation branches that $A_i\cap B_j=\emptyset$ mentioned in \Cref{proposition:distinguishability semisimple R(N_1)=R(N_2)=2} is essential for distinguishing two semisimple networks. \Cref{example:R(N1)neq  R(N2)} in \Cref{sec:appendix} suggests that removing this assumption may result in an inconclusive statement on the distinguishability of the two given networks.

\begin{theorem}\label{proposition:distinguishability semisimple R(N_1)>=R(N_2) (1)}
Let $N_1$ and $N_2$ be two distinct $n$-leaf semisimple nice strict level-2 networks on $X$ for $n\geq 5$. Let $\{A_1,\dots,A_{l_1}\}$ and $\{B_1,\dots,B_{l_2}\}$ be the set of branches of $N_1$ and $N_2$, respectively. Suppose that $R(N_1)=\{A_1,A_2\}$ and $R(N_2)=\{B_1\}$. Let $u$ be the purely interior reticulation vertex of $N_2$. If $d_{N_1|_{\{x,y\}\cup X\setminus(A_i\cup B_1)}}(x,y)\geq 4$ for any $x\in A_i$ and $y\in B_1$ and $A_1\cap B_1=\emptyset=A_2\cap B_1$, then $V_{N_1}\not\subseteq V_{N_2}$. Additionally, if $(N_1,N_2)$ is not $N_2$-RBC and $d_{N_2|_{\{x\}\cup X\setminus A_i}}(x,u)\geq 3$ for any $x\in A_i$, then $N_1$ and $N_2$ are distinguishable.
\end{theorem}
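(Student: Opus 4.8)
The plan is to follow the proof of \Cref{proposition:distinguishability r(N_1)>=r(N_2)} almost verbatim, with reticulation leaves replaced by reticulation branches, the simple-network restriction lemmas replaced by \Cref{lemma:restriction semisimple two reticulation branches} and \Cref{lemma:restriction semisimple one reticulation branch}, and every ``distance three'' or ``distance two'' statement read inside the restriction that collapses the relevant branch trees to single leaves, as in \Cref{definition: N1-RBC and N2-RBC}.

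For $V_{N_1}\not\subseteq V_{N_2}$: fix $a_1\in A_1$ and $a_2\in A_2$; since $A_1\neq A_2$ these lie in distinct branches of $N_1$, so \Cref{lemma:restriction semisimple two reticulation branches} (together with the relevant part of \Cref{lemma:level of restriction}) produces $c,d\in X\setminus\{a_1,a_2\}$, lying in branches of $N_1$ pairwise distinct from each other and from $A_1,A_2$, such that $N_1|_S$ is a $4$-leaf simple nice strict level-2 network for $S=\{a_1,a_2,c,d\}$; being a $4$-leaf semisimple nice strict level-2 network, it is automatically simple by \Cref{lemma:small semisimple networks}. Re-running the case analysis in the proof of \Cref{lemma:restriction of network with two reticulation leaves} in the branch setting, I can moreover take $c$ and $d$ so that, in the appropriate collapsed restriction, each of them realizes distance three to $a_1$ or to $a_2$. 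The hypothesis $d_{N_1|_{\{x,y\}\cup X\setminus(A_i\cup B_1)}}(x,y)\geq 4$ for all $x\in A_i$ and $y\in B_1$ then forces $c\notin B_1$ and $d\notin B_1$ (exactly as ``$p\neq c$ and $q\neq c$'' is forced in the proof of \Cref{proposition:distinguishability r(N_1)>=r(N_2)}), and combined with $A_1\cap B_1=A_2\cap B_1=\emptyset$ this gives $S\cap B_1=\emptyset$. Hence $N_2|_S$ is a level-1 network by \Cref{lemma:level of restriction} (4), so \Cref{lemma:containment level-2 4 leaves} (i) yields $V_{N_1|_S}\not\subseteq V_{N_2|_S}$ and \Cref{lemma:variety of restriction} gives $V_{N_1}\not\subseteq V_{N_2}$.

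For the remaining non-containment, assume in addition that $(N_1,N_2)$ is not $N_2$-RBC and that $d_{N_2|_{\{x\}\cup X\setminus A_i}}(x,u)\geq 3$ for all $x\in A_i$, $i\in\{1,2\}$. Fix $b_1\in B_1$ and apply \Cref{lemma:restriction semisimple one reticulation branch} (with the relevant part of \Cref{lemma:level of restriction}) to get $p,q,r$ in branches of $N_2$ pairwise distinct from each other and from $B_1$, with $N_2|_{S'}$ a $4$-leaf simple nice strict level-2 network for $S'=\{b_1,p,q,r\}$. As in the proof of \Cref{lemma:restriction of network with one reticulation leaf}, I can take $p,q,r$ so that, in the appropriate collapsed restriction, each of them realizes distance three to $b_1$ or its parent realizes distance two to $u$. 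If some element of $\{p,q,r\}$, say $p$, were to lie in $A_1\cup A_2$, say $p\in A_i$, then either $d_{N_2|_{\{p,b_1\}\cup X\setminus(A_i\cup B_1)}}(p,b_1)=3$, which makes $(N_1,N_2)$ be $N_2$-RBC, or $d_{N_2|_{\{p\}\cup X\setminus A_i}}(p,u)=2$, contradicting the hypothesis; hence $S'\cap(A_1\cup A_2)=\emptyset$. Then $N_1|_{S'}$ is a level-1 network by \Cref{lemma:level of restriction} (3), so \Cref{lemma:containment level-2 4 leaves} (i) and \Cref{lemma:variety of restriction} give $V_{N_2}\not\subseteq V_{N_1}$; together with the first claim, $N_1$ and $N_2$ are distinguishable.

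The main obstacle is the bookkeeping involved in transporting the distance arguments of \Cref{lemma:restriction of network with two reticulation leaves} and \Cref{lemma:restriction of network with one reticulation leaf} to the semisimple setting: there the relevant distances are no longer measured in $N_1$ or $N_2$ but in restrictions that squash each branch's attached tree down to a single representative leaf, so one has to check that this collapse preserves the ``distance three'' and ``distance two'' configurations the arguments exploit, and that the representatives of the reticulation branches can be chosen compatibly with the restrictions appearing in the hypotheses and in \Cref{definition: N1-RBC and N2-RBC}. Once this is verified, the RBC and distance hypotheses eliminate exactly the same cases as in the simple case and the proof closes as above; a minor point to track throughout, as noted after \Cref{lemma:restriction semisimple one reticulation branch}, is keeping the four chosen leaves in pairwise distinct branches so that $N_1|_S$ and $N_2|_{S'}$ are genuine $4$-leaf networks.
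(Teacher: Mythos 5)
Your proposal matches the paper's own proof essentially step for step: both halves pick representatives of the reticulation branches, invoke \Cref{lemma:restriction semisimple two reticulation branches} / \Cref{lemma:restriction semisimple one reticulation branch} with \Cref{lemma:level of restriction} to produce a $4$-leaf simple nice strict level-2 restriction of one network, use the distance and RBC hypotheses to force the chosen leaves outside the other network's reticulation branch(es) so that its restriction is level-1, and close with \Cref{lemma:containment level-2 4 leaves} and \Cref{lemma:variety of restriction}. The bookkeeping point you flag about measuring distances in the collapsed restrictions is exactly what the paper handles by placing the branch attachment vertices at distance one from $v_1$, $v_2$, or $u$, so the argument is the same.
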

\begin{proof}
We choose any elements $a_1\in A_1$ and $a_2\in A_2$. By \Cref{lemma:restriction semisimple two reticulation branches} and \Cref{lemma:level of restriction} (5), there exists $c,d\in X\setminus\{a_1,a_2\}$ such that for $S=\{a_1,a_2,c,d\}$, $N_1|_S$ is a 4-leaf simple nice strict level-2 network.  As before, we may choose $c$ and $d$ such that they belong to different branches of $N_1$. Suppose that $c\in A_i$ and $d\in A_j$ where $A_i$ and $A_j$ are two distinct branches of $N_1$ and $i,j\notin\{1,2\}$. Let $B$ the unique strict level-2 biconnected component of $N_1$. Let us suppose that the branch $A_l$ is attached to $B$ via cut-edges $e_l=(v_l,w_l)$ such that $v_l\in B$. Finally, we may assume that in $N_1$, the vertex $v_i$ and $v_j$ have distance one to either $v_1$ or $v_2$. If $c\in  B_1$, then $d_{N_1|_{\{c,a_k\}\cup X\setminus(A_k\cup B_1)}}(c,a_k)=3$ for some $k\in \{1,2\}$, a contradiction. Similarly, we can show that it is impossible that $d\in B_1$.  Thus, $c,d\notin B_1$. By our hypothesis, $a_1,a_2\notin B_1$. In this case, $N_2|_S$ is a level-1 network and hence $V_{N_1}\not\subseteq V_{N_2}.$ 

Additionally, let us assume that $(N_1,N_2)$ is not $N_2$-RBC. Pick any $b_1\in B_1$. By \Cref{lemma:restriction semisimple one reticulation branch} and \Cref{lemma:level of restriction} (6), there exists $p',q',r'\in X\setminus\{b_1\}$ such that $N_2|_{S'}$ is a simple nice strict level-2 network for $S'=\{b_1,p',q',r'\}.$ As before, we may choose $p',q',$ and $r'$ such that they belong to pairwise different branches of $N_2$. Suppose that $p'\in B_i,q'\in B_j,$ and $r'\in B_k$ such that $i,j,k\notin\{1,2\}$ and $i\neq j,j\neq k,k\neq i$. Let $B'$ the unique strict level-2 biconnected component of $N_2$. Let us suppose that the branch $B_l$ is attached to $B'$ via cut-edges $e'_l=(v'_l,w'_l)$ such that $v'_l\in B'$. Finally, we may assume that in $N_2$, the vertices $v'_i,v_j'$ and $v'_k$ have distance one to either $v'_1,v'_2,$ or $u$. If $p'$ is contained in $A_k$ for some $k\in \{1,2\}$, then the pair is $N_2$-RBC or $d_{N_2|_{\{p'\}\cup X\setminus A_k}}(p',u)= 2$, a contradiction. Similarly, we can argue that it is impossible that $q'\in A_k$ or $r'\in A_k$ for any $k\in \{1,2\}$. Thus, $N_1|_{S'}$ is a level-1 network and hence  $V_{N_2}\not\subseteq V_{N_1}.$
\end{proof}

\begin{proposition}\label{proposition:distinguishability semisimple R(N_1)>=R(N_2) (2)}
Let $N_1$ and $N_2$ be two distinct $n$-leaf semisimple nice strict level-2 networks on $X$ for $n\geq 5$. Let $\{A_1,\dots,A_{l_1}\}$ and $\{B_1,\dots,B_{l_2}\}$ be the set of branches of $N_1$ and $N_2$, respectively. Suppose that and $R(N_1)=\{A_1,A_2\}$ and $R(N_2)=\{A_1\}$. Let $u$ be the purely interior reticulation vertex of $N_2$. If $(N_1,N_2)$ is not $N_2$-RBC and $d_{N_2|_{\{x\}\cup X\setminus A_i}}(x,u)\geq 3$ for any $x\in A_i$, then $V_{N_2}\not\subseteq V_{N_1}$.
\end{proposition}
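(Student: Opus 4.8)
The plan is to follow, almost verbatim, the second (``additionally'') paragraph of the proof of \Cref{proposition:distinguishability semisimple R(N_1)>=R(N_2) (1)}: one produces a four-leaf restriction $S'$ on which $N_2$ restricts to a simple nice strict level-2 network while $N_1$ restricts to a level-1 network, and then invokes \Cref{lemma:containment level-2 4 leaves}~(i) together with \Cref{lemma:variety of restriction}. Since the proposition asserts only the one-sided non-containment $V_{N_2}\not\subseteq V_{N_1}$, no analogue of the first paragraph of that proof (the $V_{N_1}\not\subseteq V_{N_2}$ part) is needed; the only genuinely new feature is that here the unique reticulation branch of $N_2$ is the branch $A_1$, which is also a reticulation branch of $N_1$, so one has to be slightly careful when arguing that the restriction of $N_1$ drops to level-1.

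Concretely, first I would pick any leaf $b_1$ in the reticulation branch $A_1$ of $N_2$. By \Cref{lemma:restriction semisimple one reticulation branch} together with \Cref{lemma:level of restriction}~(6), there are three distinct leaves $p',q',r'\in X\setminus\{b_1\}$, lying in pairwise distinct branches of $N_2$, such that $N_2|_{S'}$ is a four-leaf simple nice strict level-2 network for $S'=\{b_1,p',q',r'\}$. Revisiting the construction inside the proof of \Cref{lemma:restriction of network with one reticulation leaf} (which \Cref{lemma:restriction semisimple one reticulation branch} adapts), these leaves may be chosen so that in $N_2$ each of $p',q',r'$ either lies at distance $3$ from $b_1$ in the relevant sub-restriction or at distance $2$ from the purely interior reticulation vertex $u$. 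I then claim $p',q',r'\notin A_2$: if, say, $p'\in A_2$, then since $A_2\in R(N_1)$ and $A_1\in R(N_2)$, the distance-$3$ alternative would witness that $(N_1,N_2)$ is $N_2$-RBC, while the distance-$2$ alternative would give $d_{N_2|_{\{p'\}\cup X\setminus A_2}}(p',u)=2$, contradicting the hypothesis $d_{N_2|_{\{x\}\cup X\setminus A_i}}(x,u)\ge 3$ for every $x\in A_i$; either way we obtain a contradiction, and the same reasoning applies to $q'$ and $r'$. Hence $S'$ meets $A_1$ (through $b_1$) but avoids $A_2$, so \Cref{lemma:level of restriction}~(3) shows that $N_1|_{S'}$ is a level-1 network. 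Since $N_2|_{S'}$ is a simple nice strict level-2 network, \Cref{lemma:containment level-2 4 leaves}~(i) gives $V_{N_2|_{S'}}\not\subseteq V_{N_1|_{S'}}$, and \Cref{lemma:variety of restriction} then yields $V_{N_2}\not\subseteq V_{N_1}$.

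The main obstacle is precisely this middle step: one has to verify that the auxiliary leaves supplied by \Cref{lemma:restriction semisimple one reticulation branch} can indeed be selected with the ``distance $3$ from $b_1$ or distance $2$ from $u$'' property, since it is exactly that property which lets the non-$N_2$-RBC hypothesis and the distance hypothesis exclude $p',q',r'\in A_2$. This amounts to tracing through the explicit case analysis in the proof of \Cref{lemma:restriction of network with one reticulation leaf} and observing that the same vertex choices survive the branch-identification process used to build a semisimple network from a simple one; everything after that is routine bookkeeping with the restriction operation, exactly as in the proof of \Cref{proposition:distinguishability semisimple R(N_1)>=R(N_2) (1)}.
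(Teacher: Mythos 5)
Your proposal is correct and follows essentially the same route as the paper, whose own proof of this proposition is simply a pointer to the second paragraph of the proof of \Cref{proposition:distinguishability semisimple R(N_1)>=R(N_2) (1)}; you have merely spelled out that argument, correctly noting that since $b_1\in A_1$ the only thing to check is that $p',q',r'$ avoid $A_2$ so that \Cref{lemma:level of restriction}~(3) applies to $N_1|_{S'}$.
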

\begin{proof}
The proof follows by applying similar arguments as in the second paragraph of the proof of \Cref{proposition:distinguishability semisimple R(N_1)>=R(N_2) (1)}.
\end{proof}

\begin{theorem}\label{proposition:distinguishability semisimple R(N_1)=R(N_2)=1}
Let $N_1$ and $N_2$ be two distinct $n$-leaf semisimple nice strict level-2 networks on $X$ for $n\geq 5$. Let $\{A_1,\dots,A_{l_1}\}$ and $\{B_1,\dots,B_{l_2}\}$ be the set of branches of $N_1$ and $N_2$, respectively. Suppose that and $R(N_1)=\{A_1\}$ and $R(N_2)=\{B_1\}$ such that $A_1\cap B_1=\emptyset$. Let $u$ be the purely interior reticulation vertex of both networks. If $(N_1,N_2)$ is neither $N_1$-RBC nor $N_2$-RBC, $d_{N_1|_{\{x\}\cup X\setminus B_1}}(x,u)\geq 3$ for any $x\in B_1$, and  $d_{N_2|_{\{x\}\cup X\setminus A_1}}(x,u)\geq 3$ for any $x\in A_1$, then $N_1$ and $N_2$ are distinguishable.
\end{theorem}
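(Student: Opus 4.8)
The plan is to mimic the restriction-based strategy used for the simple case in \Cref{proposition:distinguishability r(N_1)=r(N_2)=1}, now invoking the semisimple restriction lemmas in place of the simple ones. To show $V_{N_1}\not\subseteq V_{N_2}$, I would first fix a leaf $a\in A_1$. By \Cref{lemma:restriction semisimple one reticulation branch} together with \Cref{lemma:level of restriction} (6), there exist three distinct leaves $p,q,r\in X\setminus\{a\}$, lying in pairwise distinct branches of $N_1$, such that $N_1|_S$ is a $4$-leaf simple nice strict level-2 network, where $S=\{a,p,q,r\}$. Moreover, tracing through the proof of \Cref{lemma:restriction semisimple one reticulation branch} (which adapts the proof of \Cref{lemma:restriction of network with one reticulation leaf}), one may choose $p,q,r$ so that, in the resulting $4$-leaf restriction, each of $p,q,r$ either has distance three to $a$ or distance two to $u$.

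Next I would rule out $\{p,q,r\}\cap B_1\neq\emptyset$. If some leaf of $\{p,q,r\}$, say $p$, lay in $B_1$, then the distance property above would force either $d_{N_1|_{\{a,p\}\cup X\setminus(A_1\cup B_1)}}(a,p)=3$, which makes the pair $(N_1,N_2)$ be $N_1$-RBC (contradicting the hypothesis), or $d_{N_1|_{\{p\}\cup X\setminus B_1}}(p,u)=2$, contradicting the assumption $d_{N_1|_{\{x\}\cup X\setminus B_1}}(x,u)\geq 3$ for all $x\in B_1$. Hence $\{p,q,r\}\cap B_1=\emptyset$, and since $a\in A_1$ with $A_1\cap B_1=\emptyset$, we get $S\cap B_1=\emptyset$. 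Then \Cref{lemma:level of restriction} (4) shows $N_2|_S$ is a level-1 network, so \Cref{lemma:containment level-2 4 leaves}(i) gives $V_{N_1|_S}\not\subseteq V_{N_2|_S}$, and \Cref{lemma:variety of restriction} yields $V_{N_1}\not\subseteq V_{N_2}$. The reverse non-containment $V_{N_2}\not\subseteq V_{N_1}$ follows by the symmetric argument, exchanging the roles of the two networks and using that $(N_1,N_2)$ is not $N_2$-RBC together with $d_{N_2|_{\{x\}\cup X\setminus A_1}}(x,u)\geq 3$ for all $x\in A_1$. Combining the two non-containments, $N_1$ and $N_2$ are distinguishable.

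The main obstacle I anticipate is the bookkeeping around where the chosen leaves $p,q,r$ land: one must verify, through the restriction process, that a leaf of $B_1$ appearing among them genuinely produces one of the two forbidden configurations (an $N_1$-RBC witness, or a distance-two path to $u$), and in particular that restricting first to $S$ and then to a two-element subset produces the same distance that appears in \Cref{definition: N1-RBC and N2-RBC}. This requires being careful that the restriction operation is compatible with taking sub-restrictions and that the purely interior reticulation vertex $u$ persists (up to degree-two contraction) in the smaller restrictions; these compatibilities are implicit in \Cref{lemma:level of restriction} and the construction underlying \Cref{lemma:restriction semisimple one reticulation branch}, but they are precisely what makes the distance hypotheses applicable in the $4$-leaf restrictions.
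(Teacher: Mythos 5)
Your proposal matches the paper's own proof essentially step for step: pick $a\in A_1$, use \Cref{lemma:restriction semisimple one reticulation branch} and \Cref{lemma:level of restriction} to build a $4$-leaf simple nice strict level-2 restriction of $N_1$, rule out the chosen leaves landing in $B_1$ via the same dichotomy (either the pair would be $N_1$-RBC or the distance-to-$u$ hypothesis would be violated), conclude $N_2|_S$ is level-1, and finish with \Cref{lemma:containment level-2 4 leaves} and \Cref{lemma:variety of restriction}, then argue symmetrically. The only difference is cosmetic: the paper phrases the choice of $p,q,r$ via the attachment vertices $v_i,v_j,v_k$ being at distance one from $v_1$, $v_2$, or $u$, while you phrase the same condition in terms of distances inside the $4$-leaf restriction.
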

\begin{proof}
Pick any element $a_1\in A_1$. By \Cref{lemma:restriction semisimple one reticulation branch} and \Cref{lemma:level of restriction} (6), there exists $p,q,r\in X\setminus\{b_1\}$ such that $N_1|_{S}$ is a simple nice strict level-2 network for $S=\{a_1,p,q,r\}.$ As before, we may choose $p,q,$ and $r$ such that they belong to pairwise different branches of $N_1$. Suppose that $p\in A_i,q\in A_j,$ and $r\in A_k$ such that $i,j,k\notin\{1,2\}$ and $i\neq j,j\neq k,k\neq i$. Let $B$ the unique strict level-2 biconnected component of $N_1$. Let us suppose that the branch $A_l$ is attached to $B$ via cut-edges $e_l=(v_l,w_l)$ such that $v_l\in B$. Finally, we may assume that in $N_1$, the vertices $v_i,v_j,$ and $v_k$ have distance one to either $v_1,v_2,$ or $u$. If $p$ is contained in $B_1$, then the pair is $N_1$-RBC or $d_{N_1|_{\{p\}\cup X\setminus B_1}}(p,u)=2$, a contradiction. Similarly, we can argue that it is impossible that $q\in B_1$ or $r\in B_1$. Thus, $p,q,r\notin B_1$. Thus, $N_2|_{S}$ is a level-1 network and hence  $V_{N_1}\not\subseteq V_{N_2}.$ Conversely, we can proceed using similar arguments to show $V_{N_2}\not\subseteq V_{N_1}$.
\end{proof}

\section{Discussion}

Firstly, in \Cref{subsub:Distinguishing simple level-2 networks with at least five leaves}, namely in \Cref{proposition:distinguishability r(N_1)=r(N_2)=2}, \Cref{proposition:distinguishability r(N_1)>=r(N_2)}, and \Cref{proposition:distinguishability r(N_1)=r(N_2)=1}, we have shown that under some assumptions on the set of reticulation leaves, two simple nice strict level-2 networks with at least five leaves are distinguishable under the Jukes-Cantor, Kimura 2-parameter, and 3-parameter model constraints. Secondly, in \Cref{section:Distinguishing semisimple level-2 networks with at least five leaves and beyond} we also have proved that under some more general assumptions, two semisimple nice strict level-2 networks with at least five leaves are distinguishable as well in \Cref{proposition:distinguishability semisimple R(N_1)=R(N_2)=2}, \Cref{proposition:distinguishability semisimple R(N_1)>=R(N_2) (1)}, and \Cref{proposition:distinguishability semisimple R(N_1)=R(N_2)=1}. These distinguishability results show that under some assumptions on the set of species created by reticulation events, the biological data, which in our case is given by DNA sequences, provide ample information that can be used to recover the evolutionary histories of a group of species on a network. Furthermore, we have also shown in \Cref{thm:semisimple level-2 and level-1} that the network varieties associated with $n$-leaf semisimple nice strict level-2 networks are not contained in the network varieties associated with $n$-leaf level-1 networks.

The proof of our main results combines the algebraic methods provided by the use of Fourier transformations and the combinatorial methods using the network restriction procedures to be able to distinguish specific subnetworks of the two given networks. We employed a random search strategy described in \cite{hollering2020identifiability} and  \cite{gross2020distinguishing} which allowed us to search for some subsets of variables in the Fourier coordinates to find the necessary phylogenetic invariants to establish our results. In \Cref{lemma:containment level-2 4 leaves}, we could not provide distinguishability results on the varieties associated with two 4-leaf simple networks belonging to the same type. However, the first part of the proposition provides a one-sided non-containment between varieties associated with network of type 4 and of type 1, 2, or 3 under the Jukes-Cantor model. For instance, one would see this limitation in \Cref{example: counterexample first statement} and \Cref{example: counterexample second statement} in \Cref{sec:appendix}. Therefore, something similar to the random search strategy needs to be employed in order to find phylogenetic invariants that can distinguish more nice strict level-2 networks.

The main open problem is to derive non-containment result on the varieties associated with any two nice strict level-2 networks under the Jukes-Cantor, Kimura 2-parameter, and 3-parameter model. We have seen earlier that some of our results rely on some specific assumptions on the set of reticulation branches. It would be another point of interest to see whether our results remain true if we drop this assumption. Additionally, it would be desirable to study the generic identifiability for higher-level networks as higher-level networks can be used to model more complex evolutionary relationships between species. In particular, one might attempt to answer the following question, which will be the first step towards studying the identifiability of higher-lever networks. 
\begin{question}
Let $N_1$ be an $n$-leaf strict level-$k$ network and $N_2$ be an $n$-leaf strict level-$l$ network for $k<l.$ Is it true that $V_{N_2}\not\subseteq V_{N_1}$? 
\end{question}
\noindent
In this paper, we have shown that if we restrict our attention to the class of semisimple networks, then we have a positive answer for $l=2$. Once we know that the varieties associated with a strict level-$l$ network is not contained in the varieties associated with a strict level-$k$ network with $k<l$, the next prominent class of network to consider is the class of strict level-$l$ networks consisting of networks that can be obtained from simple level-$l$ networks by attaching a tree to each leaf edge. This network can be thought as a generalization of semisimple strict level-2 networks to higher-level networks. From the only unrooted level-2 generator $L_2$, we have four distinct simple nice strict level-2 networks with four leaves that have to be considered for the four leaves case. In general, \cite[Proposition 2.4]{gambette2009structure} states that the number of level-$l$ generators is at least $2^{l-1}$. Therefore, we expect a combinatorial explosion on the number of distinct type of networks that has to be taken into consideration.  

In addition to the growth of the complexity of the problem, we also expect solving this problem to be challenging as the number of parameters used in the Fourier parameterization increases. Consequently, this increase will make the computations more difficult and time-consuming. As we have seen in \Cref{subsection:nice phylogenetic networks}, nice networks enable us to use the powerful Fourier transform in order to obtain the phylogenetic invariants associated with a network model. It would be exciting to see whether any other algebraic approaches can be applied to study non-nice networks in order to recover its network topology.

\newpage
\appendix
\section{Appendix}\label{sec:appendix}
\begin{longtable}{|p{5cm}|c|}
     \hline
     Case 1a: $d(a,u)=2$ and $d(b,u)>2$ &
     \begin{tikzpicture}[scale=0.4]
       \fill[black] (0,0) circle (3pt) (2,1) circle (3pt) (2,2) circle (3pt) (4,0) circle (3pt) (1,-1) circle (3pt) (1,-2) circle (3pt) (2,-1) circle (3pt) (2,-2) circle (3pt) (3,-1) circle (3pt) (3,-2) circle (3pt);
        \node (a1) at (2,2)[above]{$a$} ; 
       \node (a1) at (1,-2)[below]{$c$} ; 
       \node (a1) at (2,-2)[below]{$b$} ;
       \node (a1) at (3,-2)[below]{$d$} ;
       \node (a1) at (0,0)[left]{$u$};
       \node (a1) at (4,0)[right]{$v$};
\draw (0,0)--(4,0);
\draw [dashed,->,>=Triangle](0,0)--(2,1);
\draw [dashed,<-,>=Triangle](2,1) --(4,0);
\draw (0,0)--(1,-1);
\draw [dashed,->,>=Triangle](1,-1)--(2,-1);
\draw [dashed,<-,>=Triangle](2,-1)--(3,-1);
\draw (2,1)--(2,2);
\draw (1,-1)--(1,-2);
\draw (2,-1)--(2,-2);
\draw (3,-1)--(3,-2);
\draw (3,-1)--(4,0);
    \end{tikzpicture}
    \qquad
    \begin{tikzpicture}[scale=0.4]
       \fill[black] (0,0) circle (3pt) (1,1) circle (3pt) (3,1) circle (3pt) (4,0) circle (3pt) (1,-1) circle (3pt) (3,-1) circle (3pt) (1,2) circle (3pt)  (3,2) circle (3pt) (1,-2) circle (3pt) (3,-2) circle (3pt);
        \node (a1) at (1,2)[above]{$a$} ; 
       \node (a1) at (3,2)[above]{$d'$} ; 
       \node (a1) at (1,-2)[below]{$c$} ;
       \node (a1) at (3,-2)[below]{$b$} ;
       \node (a1) at (0,0)[left]{$u$};
       \node (a1) at (4,0)[right]{$v$};
\draw (0,0)--(4,0);
\draw [dashed,->,>=Triangle](0,0)--(1,1);
\draw [dashed,<-,>=Triangle](1,1) --(3,1);
\draw (3,1) --(4,0);
\draw (0,0)--(1,-1);
\draw [dashed,->,>=Triangle](1,-1)--(3,-1);
\draw [dashed,<-,>=Triangle](3,-1)--(4,0);
\draw (1,1)--(1,2);
\draw (3,1)--(3,2);
\draw (1,-1)--(1,-2);
\draw (3,-1)--(3,-2);
    \end{tikzpicture}
    \\
    \hline
    Case 1b: $d(a,u)>2$ and $d(b,u)>2$ &
    \begin{tikzpicture}[scale=0.4]
       \fill[black] (0,0) circle (3pt) (1,1) circle (3pt) (3,1) circle (3pt) (4,0) circle (3pt) (1,-1) circle (3pt) (3,-1) circle (3pt) (1,2) circle (3pt)  (3,2) circle (3pt) (1,-2) circle (3pt) (3,-2) circle (3pt) (2,-1) circle (3pt) (2,-2) circle (3pt) ;
        \node (a1) at (1,2)[above]{$c$} ; 
       \node (a1) at (3,2)[above]{$a$} ; 
       \node (a1) at (1,-2)[below]{$d$} ;
       \node (a1) at (3,-2)[below]{$b$} ;
       \node (a1) at (2,-2)[below]{$e$} ;
       \node (a1) at (0,0)[left]{$u$};
       \node (a1) at (4,0)[right]{$v$};
\draw (0,0)--(4,0);
\draw (0,0)--(1,1);
\draw [dashed,->,>=Triangle](1,1) --(3,1);
\draw [dashed,<-,>=Triangle](3,1) --(4,0);
\draw [dashed,->,>=Triangle](0,0)--(1,-1);
\draw [dashed,<-,>=Triangle](1,-1)--(2,-1);
\draw (2,-1)--(2,-2);
\draw (2,-1)--(3,-1);
\draw (3,-1)--(4,0);
\draw (1,1)--(1,2);
\draw (3,1)--(3,2);
\draw (1,-1)--(1,-2);
\draw (3,-1)--(3,-2);
    \end{tikzpicture}
    \qquad
    \begin{tikzpicture}[scale=0.4]
       \fill[black] (0,0) circle (3pt) (1,1) circle (3pt) (3,1) circle (3pt) (4,0) circle (3pt) (1,-1) circle (3pt) (3,-1) circle (3pt) (1,2) circle (3pt)  (3,2) circle (3pt) (1,-2) circle (3pt) (3,-2) circle (3pt) (2,1) circle (3pt) (2,2) circle (3pt) ;
        \node (a1) at (1,2)[above]{$c$} ; 
       \node (a1) at (3,2)[above]{$e'$} ; 
       \node (a1) at (1,-2)[below]{$d$} ;
       \node (a1) at (3,-2)[below]{$b$} ;
       \node (a1) at (2,2)[above]{$a$} ;
       \node (a1) at (0,0)[left]{$u$};
       \node (a1) at (4,0)[right]{$v$};
\draw (0,0)--(4,0);
\draw (0,0)--(1,1);
\draw [dashed,->,>=Triangle](1,1) --(2,1);
\draw [dashed,<-,>=Triangle](2,1) --(3,1);
\draw (0,0)--(1,-1);
\draw [dashed,->,>=Triangle](1,-1)--(3,-1);
\draw (2,1)--(2,2);
\draw (3,1)--(4,0);
\draw [dashed,<-,>=Triangle](3,-1)--(4,0);
\draw (1,1)--(1,2);
\draw (3,1)--(3,2);
\draw (1,-1)--(1,-2);
\draw (3,-1)--(3,-2);
    \end{tikzpicture}\\
    \hline
    Case 2: $d(a,b)=3$&  \begin{tikzpicture}[scale=0.4]
       \fill[black] (0,0) circle (3pt) (1,1) circle (3pt) (3,1) circle (3pt) (4,0) circle (3pt) (1,-1) circle (3pt) (3,-1) circle (3pt) (1,2) circle (3pt)  (3,2) circle (3pt) (1,-2) circle (3pt) (3,-2) circle (3pt);
        \node (a1) at (1,2)[above]{$b$} ; 
       \node (a1) at (3,2)[above]{$a$} ; 
       \node (a1) at (1,-2)[below]{$c$} ;
       \node (a1) at (3,-2)[below]{$d$} ;
       \node (a1) at (0,0)[left]{$u$};
       \node (a1) at (4,0)[right]{$v$};
\draw (0,0)--(4,0);
\draw [dashed,->,>=Triangle](0,0)--(1,1);
\draw [dashed,<-,>=Triangle](1,1) --(3,1);
\draw (3,1) --(4,0);
\draw (0,0)--(1,-1);
\draw [dashed,->,>=Triangle](1,-1)--(3,-1);
\draw [dashed,<-,>=Triangle](3,-1)--(4,0);
\draw (1,1)--(1,2);
\draw (3,1)--(3,2);
\draw (1,-1)--(1,-2);
\draw (3,-1)--(3,-2);
    \end{tikzpicture}
    \qquad
     \begin{tikzpicture}[scale=0.4]
       \fill[black] (0,0) circle (3pt) (1,1) circle (3pt) (3,1) circle (3pt) (4,0) circle (3pt) (2,-1) circle (3pt) (1,2) circle (3pt)  (3,2) circle (3pt) (2,-2) circle (3pt)  (2,1) circle (3pt) (2,2) circle (3pt) ;
        \node (a1) at (1,2)[above]{$b$} ; 
       \node (a1) at (3,2)[above]{$d'$} ; 
       \node (a1) at (2,-2)[below]{$c$} ;
       \node (a1) at (2,2)[above]{$a$} ;
       \node (a1) at (0,0)[left]{$u$};
       \node (a1) at (4,0)[right]{$v$};
\draw (0,0)--(4,0);
\draw (0,0)--(1,1);
\draw [dashed,->,>=Triangle](1,1) --(2,1);
\draw [dashed,<-,>=Triangle](2,1) --(3,1);
\draw [dashed,->,>=Triangle](0,0)--(2,-1);
\draw (2,1)--(2,2);
\draw (3,1)--(4,0);
\draw [dashed,<-,>=Triangle](2,-1)--(4,0);
\draw (1,1)--(1,2);
\draw (3,1)--(3,2);
\draw (2,-1)--(2,-2);
    \end{tikzpicture}\\
    \hline
    Case 2a: $d(a,b)>3, d(b,c)>3,$ and $d(a,v)=2$ & 
    \begin{tikzpicture}[scale=0.4]
       \fill[black] (0,0) circle (3pt) (2,1) circle (3pt) (2,2) circle (3pt) (4,0) circle (3pt) (1,-1) circle (3pt) (1,-2) circle (3pt) (2,-1) circle (3pt) (2,-2) circle (3pt) (3,-1) circle (3pt) (3,-2) circle (3pt) (0.5,-0.5) circle (3pt) (0.5,-2) circle (3pt);
        \node (a1) at (2,2)[above]{$a$} ; 
       \node (a1) at (1,-2)[below]{$d$} ; 
       \node (a1) at (2,-2)[below]{$c$} ;
       \node (a1) at (3,-2)[below]{$e$} ;
       \node (a1) at (0.5,-2)[below]{$b$} ;
       \node (a1) at (0,0)[left]{$u$};
       \node (a1) at (4,0)[right]{$v$};
\draw (0,0)--(4,0);
\draw [dashed,->,>=Triangle](0,0)--(2,1);
\draw [dashed,<-,>=Triangle](2,1) --(4,0);
\draw (0,0)--(1,-1);
\draw [dashed,->,>=Triangle](1,-1)--(2,-1);
\draw [dashed,<-,>=Triangle](2,-1)--(3,-1);
\draw (2,1)--(2,2);
\draw (1,-1)--(1,-2);
\draw (2,-1)--(2,-2);
\draw (3,-1)--(3,-2);
\draw (3,-1)--(4,0);
\draw (0.5,-0.5)--(0.5,-2);
    \end{tikzpicture}
    \qquad
    \begin{tikzpicture}[scale=0.4]
       \fill[black] (0,0) circle (3pt) (2,1) circle (3pt) (2,2) circle (3pt) (4,0) circle (3pt) (1,-1) circle (3pt) (1,-2) circle (3pt) (2,-1) circle (3pt) (2,-2) circle (3pt) (3,-1) circle (3pt) (3,-2) circle (3pt) (3.5,-0.5) circle (3pt) (3.5,-2) circle (3pt);
        \node (a1) at (2,2)[above]{$a$} ; 
       \node (a1) at (1,-2)[below]{$d$} ; 
       \node (a1) at (2,-2)[below]{$c$} ;
       \node (a1) at (3,-2)[below]{$e$} ;
       \node (a1) at (3.5,-2)[below]{$b$} ;
       \node (a1) at (0,0)[left]{$u$};
       \node (a1) at (4,0)[right]{$v$};
\draw (0,0)--(4,0);
\draw [dashed,->,>=Triangle](0,0)--(2,1);
\draw [dashed,<-,>=Triangle](2,1) --(4,0);
\draw (0,0)--(1,-1);
\draw [dashed,->,>=Triangle](1,-1)--(2,-1);
\draw [dashed,<-,>=Triangle](2,-1)--(3,-1);
\draw (2,1)--(2,2);
\draw (1,-1)--(1,-2);
\draw (2,-1)--(2,-2);
\draw (3,-1)--(3,-2);
\draw (3,-1)--(4,0);
\draw (3.5,-0.5)--(3.5,-2);
    \end{tikzpicture}
    \qquad
    \begin{tikzpicture}[scale=0.4]
       \fill[black] (0,0) circle (3pt) (2,2) circle (3pt) (2,3) circle (3pt) (4,0) circle (3pt) (1,-1) circle (3pt) (1,-2) circle (3pt) (2,-1) circle (3pt) (2,-2) circle (3pt) (3,-1) circle (3pt) (3,-2) circle (3pt) (2,0) circle (3pt) (2,1) circle (3pt);
        \node (a1) at (2,3)[above]{$a$} ; 
       \node (a1) at (1,-2)[below]{$d$} ; 
       \node (a1) at (2,-2)[below]{$c$} ;
       \node (a1) at (3,-2)[below]{$e$} ;
        \node (a1) at (2,1)[right]{$b$} ;
       \node (a1) at (0,0)[left]{$u$};
       \node (a1) at (4,0)[right]{$v$};
\draw (0,0)--(4,0);
\draw [dashed,->,>=Triangle](0,0)--(2,2);
\draw [dashed,<-,>=Triangle](2,2) --(4,0);
\draw (0,0)--(1,-1);
\draw [dashed,->,>=Triangle](1,-1)--(2,-1);
\draw [dashed,<-,>=Triangle](2,-1)--(3,-1);
\draw (2,2)--(2,3);
\draw (1,-1)--(1,-2);
\draw (2,-1)--(2,-2);
\draw (3,-1)--(3,-2);
\draw (3,-1)--(4,0);
\draw (2,1)--(2,0);
    \end{tikzpicture}\\
    \hline
    Case 2a: $d(a,b)>3, d(b,c)>3,$ and $d(a,v)>2$ &
     \begin{tikzpicture}[scale=0.4]
       \fill[black] (0,0) circle (3pt) (1,1) circle (3pt) (3,1) circle (3pt) (4,0) circle (3pt) (1,-1) circle (3pt) (3,-1) circle (3pt) (1,2) circle (3pt)  (3,2) circle (3pt) (1,-2) circle (3pt) (3,-2) circle (3pt) (2,1) circle (3pt) (2,2) circle (3pt) ;
        \node (a1) at (1,2)[above]{$a$} ; 
       \node (a1) at (3,2)[above]{$b$} ; 
       \node (a1) at (1,-2)[below]{$d$} ;
       \node (a1) at (3,-2)[below]{$c$} ;
       \node (a1) at (2,2)[above]{$e'$} ;
       \node (a1) at (0,0)[left]{$u$};
       \node (a1) at (4,0)[right]{$v$};
\draw (0,0)--(4,0);
\draw [dashed,->,>=Triangle](0,0)--(1,1);
\draw [dashed,<-,>=Triangle](1,1) --(2,1);
\draw (2,1) --(3,1);
\draw (0,0)--(1,-1);
\draw [dashed,->,>=Triangle](1,-1)--(3,-1);
\draw (2,1)--(2,2);
\draw (3,1)--(4,0);
\draw [dashed,<-,>=Triangle](3,-1)--(4,0);
\draw (1,1)--(1,2);
\draw (3,1)--(3,2);
\draw (1,-1)--(1,-2);
\draw (3,-1)--(3,-2);
    \end{tikzpicture}
    \qquad
     \begin{tikzpicture}[scale=0.4]
       \fill[black] (0,0) circle (3pt) (1,1) circle (3pt) (3,1) circle (3pt) (4,0) circle (3pt) (1,-1) circle (3pt) (3,-1) circle (3pt) (1,2) circle (3pt)  (3,2) circle (3pt) (1,-2) circle (3pt) (3,-2) circle (3pt) (2,-1) circle (3pt) (2,-2) circle (3pt) ;
        \node (a1) at (1,2)[above]{$a$} ; 
       \node (a1) at (3,2)[above]{$e'$} ; 
       \node (a1) at (1,-2)[below]{$b$} ;
       \node (a1) at (3,-2)[below]{$c$} ;
       \node (a1) at (2,-2)[below]{$d$} ;
       \node (a1) at (0,0)[left]{$u$};
       \node (a1) at (4,0)[right]{$v$};
\draw (0,0)--(4,0);
\draw [dashed,->,>=Triangle](0,0)--(1,1);
\draw [dashed,<-,>=Triangle](1,1) --(3,1);
\draw (3,1) --(4,0);
\draw (0,0)--(1,-1);
\draw (1,-1)--(2,-1);
\draw (2,-1)--(2,-2);
\draw [dashed,->,>=Triangle](2,-1)--(3,-1);
\draw [dashed,<-,>=Triangle](3,-1)--(4,0);
\draw (1,1)--(1,2);
\draw (3,1)--(3,2);
\draw (1,-1)--(1,-2);
\draw (3,-1)--(3,-2);
    \end{tikzpicture}
    \qquad 
    \begin{tikzpicture}[scale=0.4]
       \fill[black] (0,0) circle (3pt) (1,2) circle (3pt) (3,2) circle (3pt) (4,0) circle (3pt) (1,-1) circle (3pt) (3,-1) circle (3pt) (1,3) circle (3pt)  (3,3) circle (3pt) (1,-2) circle (3pt) (3,-2) circle (3pt) (2,0) circle (3pt) (2,1) circle (3pt);
        \node (a1) at (1,3)[above]{$a$} ; 
       \node (a1) at (3,3)[above]{$e'$} ; 
       \node (a1) at (1,-2)[below]{$d$} ;
       \node (a1) at (3,-2)[below]{$c$} ;
       \node (a1) at (2,1)[right]{$b$} ;
       \node (a1) at (0,0)[left]{$u$};
       \node (a1) at (4,0)[right]{$v$};
\draw (0,0)--(4,0);
\draw [dashed,->,>=Triangle](0,0)--(1,2);
\draw [dashed,<-,>=Triangle](1,2) --(3,2);
\draw (3,2) --(4,0);
\draw (0,0)--(1,-1);
\draw [dashed,->,>=Triangle](1,-1)--(3,-1);
\draw [dashed,<-,>=Triangle](3,-1)--(4,0);
\draw (1,2)--(1,3);
\draw (3,2)--(3,3);
\draw (1,-1)--(1,-2);
\draw (3,-1)--(3,-2);
\draw (2,0)--(2,1);
    \end{tikzpicture}\\
    \hline
    Case 2b: $d(a,u)>2,d(c,u)>2,$ and $d(a,v)=2$ &
    \begin{tikzpicture}[scale=0.4]
       \fill[black] (0,0) circle (3pt) (1,1) circle (3pt) (3,1) circle (3pt) (4,0) circle (3pt) (1,-1) circle (3pt) (3,-1) circle (3pt) (1,2) circle (3pt)  (3,2) circle (3pt) (1,-2) circle (3pt) (3,-2) circle (3pt) (2,1) circle (3pt) (2,2) circle (3pt) (2,-1) circle (3pt) (2,-2) circle (3pt) ;
        \node (a1) at (1,2)[above]{$b$} ; 
       \node (a1) at (3,2)[above]{$a$} ; 
       \node (a1) at (1,-2)[below]{$e$} ;
       \node (a1) at (3,-2)[below]{$e'$} ;
       \node (a1) at (2,-2)[below]{$c$} ;
       \node (a1) at (2,2)[above]{$d$} ;
       \node (a1) at (0,0)[left]{$u$};
       \node (a1) at (4,0)[right]{$v$};
\draw (0,0)--(4,0);
\draw (0,0)--(1,1);
\draw (1,1) --(2,1);
\draw [dashed,->,>=Triangle](2,1) --(3,1);
\draw (0,0)--(1,-1);
\draw [dashed,->,>=Triangle](1,-1)--(2,-1);
\draw (2,1)--(2,2);
\draw [dashed,<-,>=Triangle](3,1)--(4,0);
\draw [dashed,<-,>=Triangle](2,-1)--(3,-1);
\draw (1,1)--(1,2);
\draw (3,1)--(3,2);
\draw (1,-1)--(1,-2);
\draw (3,-1)--(3,-2);
\draw (3,-1)--(4,0);
\draw (2,-1)--(2,-2);
    \end{tikzpicture}
    \qquad
    \begin{tikzpicture}[scale=0.4]
       \fill[black] (0,0) circle (3pt) (1,1) circle (3pt) (1,2) circle (3pt) (4,0) circle (3pt) (1,-1) circle (3pt) (1,-2) circle (3pt) (2,-1) circle (3pt) (2,-2) circle (3pt) (3,-1) circle (3pt) (3,-2) circle (3pt) (0.5,-0.5) circle (3pt) (0.5,-2) circle (3pt) (3,1) circle (3pt) (3,2) circle (3pt);
        \node (a1) at (1,2)[above]{$d$} ;
        \node (a1) at (3,2)[above]{$a$} ;
       \node (a1) at (1,-2)[below]{$e$} ; 
       \node (a1) at (2,-2)[below]{$c$} ;
       \node (a1) at (3,-2)[below]{$e'$} ;
       \node (a1) at (0.5,-2)[below]{$b$} ;
       \node (a1) at (0,0)[left]{$u$};
       \node (a1) at (4,0)[right]{$v$};
\draw (0,0)--(4,0);
\draw (0,0)--(1,1);
\draw [dashed,->,>=Triangle](1,1)--(3,1);
\draw [dashed,<-,>=Triangle](3,1) --(4,0);
\draw (0,0)--(1,-1);
\draw [dashed,->,>=Triangle](1,-1)--(2,-1);
\draw [dashed,<-,>=Triangle](2,-1)--(3,-1);
\draw (1,1)--(1,2);
\draw (3,1)--(3,2);
\draw (1,-1)--(1,-2);
\draw (2,-1)--(2,-2);
\draw (3,-1)--(3,-2);
\draw (3,-1)--(4,0);
\draw (0.5,-0.5)--(0.5,-2);
    \end{tikzpicture}
    \qquad
    \begin{tikzpicture}[scale=0.4]
       \fill[black] (0,0) circle (3pt) (1,2) circle (3pt) (1,3) circle (3pt) (4,0) circle (3pt) (1,-1) circle (3pt) (1,-2) circle (3pt) (2,-1) circle (3pt) (2,-2) circle (3pt) (3,-1) circle (3pt) (3,-2) circle (3pt) (3.5,-0.5) circle (3pt) (3.5,-2) circle (3pt) (3,2) circle (3pt) (3,3) circle (3pt);
        \node (a1) at (1,3)[above]{$d$} ;
        \node (a1) at (3,3)[above]{$a$} ;
       \node (a1) at (1,-2)[below]{$e$} ; 
       \node (a1) at (2,-2)[below]{$c$} ;
       \node (a1) at (3,-2)[below]{$e'$} ;
       \node (a1) at (3.5,-2)[below]{$b$} ;
       \node (a1) at (0,0)[left]{$u$};
       \node (a1) at (4,0)[right]{$v$};
\draw (0,0)--(4,0);
\draw (0,0)--(1,2);
\draw [dashed,->,>=Triangle](1,2)--(3,2);
\draw [dashed,<-,>=Triangle](3,2) --(4,0);
\draw (0,0)--(1,-1);
\draw [dashed,->,>=Triangle](1,-1)--(2,-1);
\draw [dashed,<-,>=Triangle](2,-1)--(3,-1);
\draw (1,2)--(1,3);
\draw (3,2)--(3,3);
\draw (1,-1)--(1,-2);
\draw (2,-1)--(2,-2);
\draw (3,-1)--(3,-2);
\draw (3,-1)--(4,0);
\draw (3.5,-0.5)--(3.5,-2);
    \end{tikzpicture}
    \\
    \hline 
     Case 2b: $d(a,u)>2,d(c,u)>2,$ and $d(a,v)=2$ &
   \begin{tikzpicture}[scale=0.4]
       \fill[black] (0,0) circle (3pt) (1,2) circle (3pt) (1,3) circle (3pt) (4,0) circle (3pt) (1,-1) circle (3pt) (1,-2) circle (3pt) (2,-1) circle (3pt) (2,-2) circle (3pt) (3,-1) circle (3pt) (3,-2) circle (3pt) (3,2) circle (3pt) (3,3) circle (3pt) (2,0) circle (3pt) (2,1) circle (3pt);
        \node (a1) at (1,3)[above]{$d$} ;
        \node (a1) at (3,3)[above]{$a$} ;
       \node (a1) at (1,-2)[below]{$e$} ; 
       \node (a1) at (2,-2)[below]{$c$} ;
       \node (a1) at (3,-2)[below]{$e'$} ;
       \node (a1) at (2,1)[right]{$b$} ;
       \node (a1) at (0,0)[left]{$u$};
       \node (a1) at (4,0)[right]{$v$};
\draw (0,0)--(4,0);
\draw (0,0)--(1,2);
\draw [dashed,->,>=Triangle](1,2)--(3,2);
\draw [dashed,<-,>=Triangle](3,2) --(4,0);
\draw (0,0)--(1,-1);
\draw [dashed,->,>=Triangle](1,-1)--(2,-1);
\draw [dashed,<-,>=Triangle](2,-1)--(3,-1);
\draw (1,2)--(1,3);
\draw (3,2)--(3,3);
\draw (1,-1)--(1,-2);
\draw (2,-1)--(2,-2);
\draw (3,-1)--(3,-2);
\draw (3,-1)--(4,0);
\draw (2,0)--(2,1);
    \end{tikzpicture}\\
    \hline 
    Case 2b: $d(a,u)>2,d(c,u)>2,$ and $d(a,v)>2$ &
    \begin{tikzpicture}[scale=0.4]
       \fill[black] (0,0) circle (3pt) (1,1) circle (3pt) (3,1) circle (3pt) (4,0) circle (3pt) (1,-1) circle (3pt) (3,-1) circle (3pt) (1,2) circle (3pt)  (3,2) circle (3pt) (1,-2) circle (3pt) (3,-2) circle (3pt) (2,1) circle (3pt) (2,2) circle (3pt) (0.5,0.5) circle (3pt) (0.5, 2) circle (3pt) ;
       \node (a1) at (0.5,2)[above]{$b$} ;
        \node (a1) at (1,2)[above]{$d$} ; 
       \node (a1) at (3,2)[above]{$f$} ; 
       \node (a1) at (1,-2)[below]{$e$} ;
       \node (a1) at (3,-2)[below]{$c$} ;
       \node (a1) at (2,2)[above]{$a$} ;
       \node (a1) at (0,0)[left]{$u$};
       \node (a1) at (4,0)[right]{$v$};
\draw (0,0)--(4,0);
\draw (0,0)--(1,1);
\draw [dashed,->,>=Triangle](1,1) --(2,1);
\draw [dashed,<-,>=Triangle](2,1) --(3,1);
\draw (0,0)--(1,-1);
\draw [dashed,->,>=Triangle](1,-1)--(3,-1);
\draw (2,1)--(2,2);
\draw (3,1)--(4,0);
\draw [dashed,<-,>=Triangle](3,-1)--(4,0);
\draw (1,1)--(1,2);
\draw (3,1)--(3,2);
\draw (1,-1)--(1,-2);
\draw (3,-1)--(3,-2);
\draw (0.5,0.5)--(0.5,2);
    \end{tikzpicture}
    \qquad
    \begin{tikzpicture}[scale=0.4]
       \fill[black] (0,0) circle (3pt) (1,1) circle (3pt) (3,1) circle (3pt) (4,0) circle (3pt) (1,-1) circle (3pt) (3,-1) circle (3pt) (1,2) circle (3pt)  (3,2) circle (3pt) (1,-2) circle (3pt) (3,-2) circle (3pt) (2,1) circle (3pt) (2,2) circle (3pt) (0.5,0.5) circle (3pt) (0.5, 2) circle (3pt) ;
       \node (a1) at (0.5,2)[above]{$d$} ;
        \node (a1) at (1,2)[above]{$a$} ; 
       \node (a1) at (3,2)[above]{$b$} ; 
       \node (a1) at (1,-2)[below]{$e$} ;
       \node (a1) at (3,-2)[below]{$c$} ;
       \node (a1) at (2,2)[above]{$f$} ;
       \node (a1) at (0,0)[left]{$u$};
       \node (a1) at (4,0)[right]{$v$};
\draw (0,0)--(4,0);
\draw (0,0)--(0.5,0.5);
\draw [dashed,->,>=Triangle](0.5,0.5)--(1,1);
\draw [dashed,-<,>=Triangle](1,1) --(2,1);
\draw (2,1) --(3,1);
\draw (0,0)--(1,-1);
\draw [dashed,->,>=Triangle](1,-1)--(3,-1);
\draw (2,1)--(2,2);
\draw (3,1)--(4,0);
\draw [dashed,<-,>=Triangle](3,-1)--(4,0);
\draw (1,1)--(1,2);
\draw (3,1)--(3,2);
\draw (1,-1)--(1,-2);
\draw (3,-1)--(3,-2);
\draw (0.5,0.5)--(0.5,2);
    \end{tikzpicture}
    \qquad
    \begin{tikzpicture}[scale=0.4]
       \fill[black] (0,0) circle (3pt) (1,1) circle (3pt) (3,1) circle (3pt) (4,0) circle (3pt) (1,-1) circle (3pt) (3,-1) circle (3pt) (1,2) circle (3pt)  (3,2) circle (3pt) (1,-2) circle (3pt) (3,-2) circle (3pt) (2,1) circle (3pt) (2,2) circle (3pt) (2,-1) circle (3pt) (2,-2) circle (3pt) ;
        \node (a1) at (1,2)[above]{$d$} ; 
       \node (a1) at (3,2)[above]{$f$} ; 
       \node (a1) at (1,-2)[below]{$b$} ;
       \node (a1) at (2,-2)[below]{$e$} ;
       \node (a1) at (3,-2)[below]{$c$} ;
       \node (a1) at (2,2)[above]{$a$} ;
       \node (a1) at (0,0)[left]{$u$};
       \node (a1) at (4,0)[right]{$v$};
\draw (0,0)--(4,0);
\draw (0,0)--(1,1);
\draw [dashed,->,>=Triangle](1,1) --(2,1);
\draw [dashed,<-,>=Triangle](2,1) --(3,1);
\draw (0,0)--(1,-1);
\draw (1,-1)--(2,-1);
\draw (2,-1)--(2,-2);
\draw [dashed,->,>=Triangle](2,-1)--(3,-1);
\draw (2,1)--(2,2);
\draw (3,1)--(4,0);
\draw [dashed,<-,>=Triangle](3,-1)--(4,0);
\draw (1,1)--(1,2);
\draw (3,1)--(3,2);
\draw (1,-1)--(1,-2);
\draw (3,-1)--(3,-2);
    \end{tikzpicture}\\
    &\begin{tikzpicture}[scale=0.4]
       \fill[black] (0,0) circle (3pt) (1,1) circle (3pt) (3,1) circle (3pt) (4,0) circle (3pt) (1,-1) circle (3pt) (3,-1) circle (3pt) (1,2) circle (3pt)  (3,2) circle (3pt) (1,-2) circle (3pt) (3,-2) circle (3pt) (2,1) circle (3pt) (2,2) circle (3pt) (2,-1) circle (3pt) (2,-2) circle (3pt) ;
        \node (a1) at (1,2)[above]{$d$} ; 
       \node (a1) at (3,2)[above]{$f$} ; 
       \node (a1) at (1,-2)[below]{$e$} ;
       \node (a1) at (2,-2)[below]{$c$} ;
       \node (a1) at (3,-2)[below]{$b$} ;
       \node (a1) at (2,2)[above]{$a$} ;
       \node (a1) at (0,0)[left]{$u$};
       \node (a1) at (4,0)[right]{$v$};
\draw (0,0)--(4,0);
\draw (0,0)--(1,1);
\draw [dashed,->,>=Triangle](1,1) --(2,1);
\draw [dashed,<-,>=Triangle](2,1) --(3,1);
\draw (0,0)--(1,-1);
\draw [dashed,->,>=Triangle](1,-1)--(2,-1);
\draw (2,-1)--(2,-2);
\draw [dashed,<-,>=Triangle](2,-1)--(3,-1);
\draw (2,1)--(2,2);
\draw (3,1)--(4,0);
\draw (3,-1)--(4,0);
\draw (1,1)--(1,2);
\draw (3,1)--(3,2);
\draw (1,-1)--(1,-2);
\draw (3,-1)--(3,-2);
    \end{tikzpicture}
    \qquad \begin{tikzpicture}[scale=0.4]
       \fill[black] (0,0) circle (3pt) (1,2) circle (3pt) (3,2) circle (3pt) (4,0) circle (3pt) (1,-1) circle (3pt) (3,-1) circle (3pt) (1,3) circle (3pt)  (3,3) circle (3pt) (1,-2) circle (3pt) (3,-2) circle (3pt) (2,0) circle (3pt) (2,1) circle (3pt) (2,2) circle (3pt) (2,3) circle (3pt);
        \node (a1) at (1,3)[above]{$d$} ; 
        \node (a1) at (2,3)[above]{$a$} ;
       \node (a1) at (3,3)[above]{$f$} ; 
       \node (a1) at (1,-2)[below]{$e$} ;
       \node (a1) at (3,-2)[below]{$c$} ;
       \node (a1) at (2,1)[right]{$b$} ;
       \node (a1) at (0,0)[left]{$u$};
       \node (a1) at (4,0)[right]{$v$};
\draw (0,0)--(4,0);
\draw (0,0)--(1,2);
\draw (2,2)--(2,3);
\draw [dashed,->,>=Triangle](1,2)--(2,2);
\draw [dashed,<-,>=Triangle](2,2) --(3,2);
\draw (3,2) --(4,0);
\draw (0,0)--(1,-1);
\draw [dashed,->,>=Triangle](1,-1)--(3,-1);
\draw [dashed,<-,>=Triangle](3,-1)--(4,0);
\draw (1,2)--(1,3);
\draw (3,2)--(3,3);
\draw (1,-1)--(1,-2);
\draw (3,-1)--(3,-2);
\draw (2,0)--(2,1);
    \end{tikzpicture}\\
    \hline
    \caption{Illustrations for the proof of \Cref{lemma:restriction of network with two reticulation leaves}.}
    \label{table:illustration two reticulation leaves}
     \end{longtable}

\begin{longtable}{|p{5cm}|c|}
     \hline
     Case 1: the vertex $a$ is a reticulation leaf &
     \begin{tikzpicture}[scale=0.4]
       \fill[black] (0,0) circle (3pt) (1,2) circle (3pt) (3,2) circle (3pt) (4,0) circle (3pt) (2,-1) circle (3pt)  (1,3) circle (3pt)  (3,3) circle (3pt) (2,-2) circle (3pt) (2,0) circle (3pt) (2,1) circle (3pt);
       \node (a1) at (1,3)[above]{$c$} ; 
        \node (a1) at (3,3)[above]{$a$} ;
       \node (a1) at (2,1)[right]{$d$} ; 
       \node (a1) at (2,-2)[below]{$e$} ;
       \node (a1) at (0,0)[left]{$u$};
       \node (a1) at (4,0)[right]{$v$};
\draw (0,0)--(1,2);
\draw [dashed,->,>=Triangle](1,2) --(3,2);
\draw [dashed,<-,>=Triangle](3,2) --(4,0);
\draw [dashed,<-,>=Triangle](0,0)--(2,-1);
\draw (2,-1)--(4,0);
\draw [dashed,<-,>=Triangle](0,0)--(2,0);
\draw (2,0)--(4,0);
\draw (1,2)--(1,3);
\draw (3,2)--(3,3);
\draw (2,0)--(2,1);
\draw (2,-1)--(2,-2);
    \end{tikzpicture}
    \qquad
     \begin{tikzpicture}[scale=0.5]
       \fill[black] (0,0) circle (3pt) (1,1) circle (3pt) (2,1) circle (3pt)  (3,1) circle (3pt) (4,0) circle (3pt) (2,-1) circle (3pt) (1,2) circle (3pt) (2,2) circle (3pt)  (3,2) circle (3pt) (2,-2) circle (3pt);
        \node (a1) at (1,2)[above]{$c$} ; 
        \node (a1) at (2,2)[above]{$a$} ;
       \node (a1) at (3,2)[above]{$c'$} ; 
       \node (a1) at (2,-2)[below]{$d$} ;
       \node (a1) at (0,0)[left]{$u$};
       \node (a1) at (4,0)[right]{$v$};
\draw [dashed,<-,>=Triangle](0,0)--(4,0);
\draw (0,0)--(1,1);
\draw [dashed,->,>=Triangle](1,1) --(2,1);
\draw [dashed,<-,>=Triangle](2,1) --(3,1);
\draw (3,1)--(4,0);
\draw [dashed,<-,>=Triangle](0,0)--(2,-1);
\draw (2,-1)--(4,0);
\draw (1,1)--(1,2);
\draw (2,1)--(2,2);
\draw (3,1)--(3,2);
\draw (2,-1)--(2,-2);
    \end{tikzpicture}\\
    \hline
    Case 2: the vertex $a$ is not a reticulation leaf &
    \begin{tikzpicture}[scale=0.4]
       \fill[black] (0,0) circle (3pt) (1,2) circle (3pt) (3,2) circle (3pt) (4,0) circle (3pt) (2,-1) circle (3pt)  (1,3) circle (3pt)  (3,3) circle (3pt) (2,-2) circle (3pt) (2,0) circle (3pt) (2,1) circle (3pt) (2,2) circle (3pt) (2,3) circle (3pt);
       \node (a1) at (1,3)[above]{$a$} ; 
       \node (a1) at (2,3)[above]{$c$} ; 
        \node (a1) at (3,3)[above]{$b$} ;
       \node (a1) at (2,1)[right]{$d$} ; 
       \node (a1) at (2,-2)[below]{$e$} ;
       \node (a1) at (0,0)[left]{$u$};
       \node (a1) at (4,0)[right]{$v$};
\draw (0,0)--(1,2);
\draw (1,2)--(2,2);
\draw [dashed,->,>=Triangle](2,2) --(3,2);
\draw [dashed,<-,>=Triangle](3,2) --(4,0);
\draw [dashed,<-,>=Triangle](0,0)--(2,-1);
\draw (2,-1)--(4,0);
\draw [dashed,<-,>=Triangle](0,0)--(2,0);
\draw (2,0)--(4,0);
\draw (1,2)--(1,3);
\draw (3,2)--(3,3);
\draw (2,0)--(2,1);
\draw (2,-1)--(2,-2);
\draw (2,2)--(2,3);
    \end{tikzpicture}
    \qquad
    \begin{tikzpicture}[scale=0.4]
       \fill[black] (0,0) circle (3pt) (1,2) circle (3pt) (3,2) circle (3pt) (4,0) circle (3pt) (2,-1) circle (3pt)  (1,3) circle (3pt)  (3,3) circle (3pt) (2,-2) circle (3pt) (1,0) circle (3pt) (1,1) circle (3pt) (3,0) circle (3pt) (3,1) circle (3pt);
       \node (a1) at (1,3)[above]{$c$} ; 
        \node (a1) at (3,3)[above]{$b$} ;
       \node (a1) at (1,1)[right]{$d$} ; 
       \node (a1) at (3,1)[left]{$a$} ;
       \node (a1) at (2,-2)[below]{$e$} ;
       \node (a1) at (0,0)[left]{$u$};
       \node (a1) at (4,0)[right]{$v$};
\draw (0,0)--(1,2);
\draw [dashed,->,>=Triangle](1,2) --(3,2);
\draw [dashed,<-,>=Triangle](3,2) --(4,0);
\draw [dashed,<-,>=Triangle](0,0)--(2,-1);
\draw (2,-1)--(4,0);
\draw [dashed,<-,>=Triangle](0,0)--(1,0);
\draw (1,0)--(3,0);
\draw (3,0)--(4,0);
\draw (2,0)--(4,0);
\draw (1,2)--(1,3);
\draw (3,2)--(3,3);
\draw (2,-1)--(2,-2);
\draw (1,0)--(1,1);
\draw (3,0)--(3,1);
    \end{tikzpicture}
    \\
    &  \begin{tikzpicture}[scale=0.4]
       \fill[black] (0,0) circle (3pt) (1,1) circle (3pt) (3,1) circle (3pt) (4,0) circle (3pt) (2,-1) circle (3pt) (1,2) circle (3pt)  (3,2) circle (3pt) (2,-2) circle (3pt) (2,1) circle (3pt) (2,2) circle (3pt) (0.5,0.5) circle (3pt) (0.5, 2) circle (3pt) ;
       \node (a1) at (0.5,2)[above]{$a$} ;
        \node (a1) at (1,2)[above]{$c$} ; 
       \node (a1) at (3,2)[above]{$c'$} ; 
       \node (a1) at (2,-2)[below]{$d'$} ;
       \node (a1) at (2,2)[above]{$b$} ;
       \node (a1) at (0,0)[left]{$u$};
       \node (a1) at (4,0)[right]{$v$};
\draw [dashed,<-,>=Triangle](0,0)--(4,0);
\draw (0,0)--(1,1);
\draw [dashed,->,>=Triangle](1,1) --(2,1);
\draw [dashed,<-,>=Triangle](2,1) --(3,1);
\draw (2,1)--(2,2);
\draw (3,1)--(4,0);
\draw [dashed,<-,>=Triangle](0,0)--(2,-1);
\draw (1,1)--(1,2);
\draw (3,1)--(3,2);
\draw (0.5,0.5)--(0.5,2);
\draw (2,-1)--(4,0);
\draw (2,-1)--(2,-2);
    \end{tikzpicture}
    \qquad
    \begin{tikzpicture}[scale=0.4]
       \fill[black] (0,0) circle (3pt) (1,1) circle (3pt) (3,1) circle (3pt) (4,0) circle (3pt) (2,-1) circle (3pt) (1,2) circle (3pt)  (3,2) circle (3pt) (2,-2) circle (3pt) (2,1) circle (3pt) (2,2) circle (3pt) (0.5,0.5) circle (3pt) (0.5, 2) circle (3pt) ;
       \node (a1) at (0.5,2)[above]{$c$} ;
        \node (a1) at (1,2)[above]{$b$} ; 
       \node (a1) at (3,2)[above]{$a$} ; 
       \node (a1) at (2,-2)[below]{$d'$} ;
       \node (a1) at (2,2)[above]{$c'$} ;
       \node (a1) at (0,0)[left]{$u$};
       \node (a1) at (4,0)[right]{$v$};
\draw [dashed,<-,>=Triangle](0,0)--(4,0);
\draw (0,0)--(0.5,0.5);
\draw [dashed,->,>=Triangle](0.5,0.5)--(1,1);
\draw [dashed,<-,>=Triangle](1,1) --(2,1);
\draw (2,1) --(3,1);
\draw (2,1)--(2,2);
\draw (3,1)--(4,0);
\draw [dashed,<-,>=Triangle](0,0)--(2,-1);
\draw (1,1)--(1,2);
\draw (3,1)--(3,2);
\draw (0.5,0.5)--(0.5,2);
\draw (2,-1)--(4,0);
\draw (2,-1)--(2,-2);
    \end{tikzpicture}
    \qquad
     \begin{tikzpicture}[scale=0.4]
       \fill[black] (0,0) circle (3pt) (1,2) circle (3pt) (3,2) circle (3pt) (4,0) circle (3pt) (2,-1) circle (3pt)  (1,3) circle (3pt)  (3,3) circle (3pt) (2,-2) circle (3pt) (2,0) circle (3pt) (2,1) circle (3pt) (2,2) circle (3pt) (2,3) circle (3pt);
       \node (a1) at (1,3)[above]{$c$} ; 
       \node (a1) at (2,3)[above]{$b$} ; 
        \node (a1) at (3,3)[above]{$c'$} ;
       \node (a1) at (2,1)[right]{$d'$} ; 
       \node (a1) at (2,-2)[below]{$a$} ;
       \node (a1) at (0,0)[left]{$u$};
       \node (a1) at (4,0)[right]{$v$};
\draw (0,0)--(1,2);
\draw [dashed,->,>=Triangle](1,2)--(2,2);
\draw [dashed,<-,>=Triangle](2,2) --(3,2);
\draw (3,2) --(4,0);
\draw [dashed,<-,>=Triangle](0,0)--(2,-1);
\draw (2,-1)--(4,0);
\draw [dashed,<-,>=Triangle](0,0)--(2,0);
\draw (2,0)--(4,0);
\draw (1,2)--(1,3);
\draw (3,2)--(3,3);
\draw (2,0)--(2,1);
\draw (2,-1)--(2,-2);
\draw (2,2)--(2,3);
    \end{tikzpicture}\\
    \hline
     \caption{Illustrations for the proof of \Cref{lemma:restriction of network with one reticulation leaf}.}
     \label{table:illustration one reticulation leaf}
     \end{longtable}

\begin{example}\label{example:one of two assumptions is violated}
Let $N_i, 1\leq i\leq 4$ be the four networks presented in \Cref{fig:example of lemma 6.8}. By the first statement of \Cref{theorem:comparison-level-2-and-level-1}, for each pair $(N_i,N_j)$, we have $V_{N_i}\not\subseteq V_{N_j}.$ In order to distinguish each pair, we must show $V_{N_j|_S}\not\subseteq V_{N_i|_S}$ for some subset $S\subseteq \{a,b,c,d,e\}$. Let us now consider a subset $S$ of size 4.
 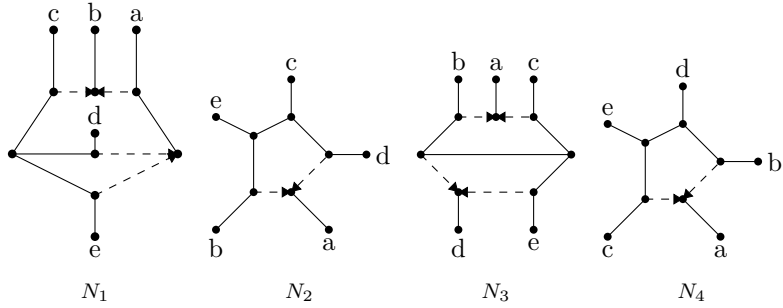
\begin{figure}[H]
    \centering
    \subfigure[$N_1$]{
    \begin{tikzpicture}[scale=0.55]
       \fill[black] (0,0) circle (3pt) (1,1.5) circle (3pt) (2,1.5) circle (3pt)  (3,1.5) circle (3pt) (4,0) circle (3pt) (2,-1) circle (3pt) (1,3) circle (3pt) (2,3) circle (3pt)  (3,3) circle (3pt) (2,-2) circle (3pt) (2,0) circle (3pt) (2,0.5) circle (3pt) ;
        \node (a1) at (1,3)[above]{c} ; 
       \node (a1) at (2,3)[above]{b} ; 
       \node (a1) at (3,3)[above]{a} ;
       \node (a1) at (2,0.5)[above]{d} ; 
       \node (a1) at (2,-2)[below]{e} ; 
       \draw (0,0)--(2,0);
\draw [dashed,->,>=Triangle](2,0)--(4,0);
\draw (0,0)--(1,1.5);
\draw [dashed,->,>=Triangle](1,1.5) --(2,1.5);
\draw [dashed,<-,>=Triangle](2,1.5) --(3,1.5);
\draw (3,1.5)--(4,0);
\draw (0,0)--(2,-1);
\draw [dashed,->,>=Triangle](2,-1)--(4,0);
\draw (1,1.5)--(1,3);
\draw (2,1.5)--(2,3);
\draw (3,1.5)--(3,3);
\draw (2,-1)--(2,-2);
\draw (2,0)--(2,0.5);
    \end{tikzpicture}}
    \subfigure[$N_2$]{
    \begin{tikzpicture}[scale=0.5]
       \fill[black] (0,0) circle (3pt) (1,0) circle (3pt) (2,1) circle (3pt)  (1,2) circle (3pt) (0,1.5) circle (3pt) (-1,-1) circle (3pt) (2,-1) circle (3pt) (3,1) circle (3pt)  (1,3) circle (3pt) (-1,2) circle (3pt) ;
        \node (a1) at (-1,-1)[below]{b} ; 
       \node (a1) at (2,-1)[below]{a} ; 
       \node (a1) at (3,1)[right]{d} ;
       \node (a1) at (1,3)[above]{c} ; 
       \node (a1) at (-1,2)[above]{e} ; 
       \draw [dashed,->,>=Triangle](0,0)--(1,0);
       \draw [dashed,<-,>=Triangle](1,0)--(2,1);
       \draw (2,1)--(1,2);
       \draw (1,2)--(0,1.5);
       \draw (0,1.5)--(0,0);
       \draw (0,0)--(-1,-1);
       \draw (1,0)--(2,-1);
       \draw (2,1)--(3,1);
       \draw (1,2)--(1,3);
       \draw (0,1.5)--(-1,2);
    \end{tikzpicture}}
    \subfigure[$N_3$]{
    \begin{tikzpicture}[scale=0.5]
       \fill[black] (0,0) circle (3pt) (1,1) circle (3pt) (2,1) circle (3pt)  (3,1) circle (3pt) (4,0) circle (3pt) (1,-1) circle (3pt) (3,-1) circle (3pt) (1,2) circle (3pt) (2,2) circle (3pt)  (3,2) circle (3pt) (3,-2) circle (3pt) (1,-2) circle (3pt);
        \node (a1) at (1,2)[above]{b} ; 
       \node (a1) at (2,2)[above]{a} ; 
       \node (a1) at (3,2)[above]{c} ;
       \node (a1) at (1,-2)[below]{d} ; 
       \node (a1) at (3,-2)[below]{e} ; 
\draw (0,0)--(4,0);
\draw (0,0)--(1,1);
\draw [dashed,->,>=Triangle](1,1) --(2,1);
\draw [dashed,<-,>=Triangle](2,1) --(3,1);
\draw (3,1)--(4,0);
\draw [dashed,->,>=Triangle](0,0)--(1,-1);
\draw [dashed,<-,>=Triangle](1,-1)--(3,-1);
\draw (3,-1)--(4,0);
\draw (1,1)--(1,2);
\draw (2,1)--(2,2);
\draw (3,1)--(3,2);
\draw (1,-1)--(1,-2);
\draw (3,-1)--(3,-2);
    \end{tikzpicture}}
    \subfigure[$N_4$]{
    \begin{tikzpicture}[scale=0.5]
       \fill[black] (0,0) circle (3pt) (1,0) circle (3pt) (2,1) circle (3pt)  (1,2) circle (3pt) (0,1.5) circle (3pt) (-1,-1) circle (3pt) (2,-1) circle (3pt) (3,1) circle (3pt)  (1,3) circle (3pt) (-1,2) circle (3pt) ;
        \node (a1) at (-1,-1)[below]{c} ; 
       \node (a1) at (2,-1)[below]{a} ; 
       \node (a1) at (3,1)[right]{b} ;
       \node (a1) at (1,3)[above]{d} ; 
       \node (a1) at (-1,2)[above]{e} ; 
       \draw [dashed,->,>=Triangle](0,0)--(1,0);
       \draw [dashed,<-,>=Triangle](1,0)--(2,1);
       \draw (2,1)--(1,2);
       \draw (1,2)--(0,1.5);
       \draw (0,1.5)--(0,0);
       \draw (0,0)--(-1,-1);
       \draw (1,0)--(2,-1);
       \draw (2,1)--(3,1);
       \draw (1,2)--(1,3);
       \draw (0,1.5)--(-1,2);
    \end{tikzpicture}}
    \caption{Two pairs $(N_1,N_2)$ and $(N_3,N_4)$ which suggest that the assumptions in the second statement of \Cref{theorem:comparison-level-2-and-level-1} are necessary.}
    \label{fig:example of lemma 6.8}
\end{figure}

For the pair $(N_1,N_2),$ $|r(N_1)|=|r(N_2)|=1$ and $r(N_1)\neq r(N_2)$ which violate the first assumption. If $S\in\{\{a,b,c,d\},\{a,b,c,e\},\{a,b,d,e\}\}$, then $N_1|_S$ is a 4-leaf simple nice strict level-2 network and $N_2|_S$ is a simple level-1 network. Thus, $V_{N_1}\not\subseteq V_{N_2}$, which is the statement already implied by the first part of \Cref{theorem:comparison-level-2-and-level-1}. If $S=\{b,c,d,e\}$, then $N_1|_S$ is not nice. If $S$ does not contain $b$, then $S=\{a,c,d,e\}$. We then obtain that $N_1|_S=N_2|_S$, which is a 4-leaf 4-cycle network. Thus, we can not justify the reversed non-containment $V_{N_2|_S}\not\subseteq V_{N_1|_S}$ using a subset $S$ of size 4.

For the pair $(N_3,N_4),$ $|r(N_3)|=2, |r(N_4)|=1$, and $r(N_4)\subseteq r(N_3)$ which violate the second assumption. If $S\in \{\{a,b,c,d\},\{a,b,d,e\}\}$, then  $N_3|_S$ is a 4-leaf strict simple nice level-2 network and $N_4|_S$ is a simple level-1 network. Thus, $V_{N_3}\not\subseteq V_{N_4}$, which is the statement already implied by the first part of \Cref{theorem:comparison-level-2-and-level-1}. If $S=\{a,c,d,e\}$, then $N_3|_S$ is not nice. If $S=\{a,b,c,e\}$, then $N_3|_S=N_4|_S$, which is a 4-leaf 4-cycle network. If $S=\{b,c,d,e\}$, then $N_3|_S$ is a 4-leaf 4-cycle network and $N_4|_S$ is a 4-leaf tree which again imply that $V_{N_3}\not\subseteq V_{N_4}$. Thus, we can not justify the reversed non-containment $V_{N_4|_S}\not\subseteq V_{N_3|_S}$ using a subset $S$ of size 4.
\end{example}

 \begin{example}\label{example: r(N1)=r(N_2)=2}
 In this example, we consider two distinct simple nice strict level-2 networks $N_1$ and $N_2$ in \Cref{fig:example r(N_1)=r(N_2)}. We can see that $r(N_1)=r(N_2)=\{a,b\}$. If $S=\{a,b,c,d\}$, then $N_1|_S$ and $N_2|_S$ are both a simple nice strict level-2 network of type 3. If $S\in \{\{a,b,c,e\},\{a,b,d,e\}\}$, then neither $N_1|_S$ nor $N_2|_S$ are nice. If $S\in\{\{a,c,d,e\},\{b,c,d,e\}\}$, then $N_1|_S=N_2|_S$, which is a 4-leaf 4-cycle network. Thus, in this case, we can not say anything about their distinguishability.
 \begin{figure}[H]
     \centering
     \subfigure[$N_1$]{
     \begin{tikzpicture}[scale=0.4]
       \fill[black] (0,0) circle (3pt) (1,2) circle (3pt) (3,2) circle (3pt) (4,0) circle (3pt) (1,-1) circle (3pt) (3,-1) circle (3pt) (1,3) circle (3pt)  (3,3) circle (3pt) (1,-2) circle (3pt) (3,-2) circle (3pt) (2,0) circle (3pt) (2,1) circle (3pt);
        \node (a1) at (1,3)[above]{$a$} ; 
       \node (a1) at (3,3)[above]{$c$} ; 
       \node (a1) at (1,-2)[below]{$d$} ;
       \node (a1) at (3,-2)[below]{$b$} ;
       \node (a1) at (2,1)[right]{$e$} ;
\draw (0,0)--(4,0);
\draw [dashed,->,>=Triangle](0,0)--(1,2);
\draw [dashed,<-,>=Triangle](1,2) --(3,2);
\draw (3,2) --(4,0);
\draw (0,0)--(1,-1);
\draw [dashed,->,>=Triangle](1,-1)--(3,-1);
\draw [dashed,<-,>=Triangle](3,-1)--(4,0);
\draw (1,2)--(1,3);
\draw (3,2)--(3,3);
\draw (1,-1)--(1,-2);
\draw (3,-1)--(3,-2);
\draw (2,0)--(2,1);
    \end{tikzpicture}}
    \subfigure[$N_2$]{
     \begin{tikzpicture}[scale=0.4]
       \fill[black] (0,0) circle (3pt) (1,2) circle (3pt) (3,2) circle (3pt) (4,0) circle (3pt) (1,-1) circle (3pt) (3,-1) circle (3pt) (1,3) circle (3pt)  (3,3) circle (3pt) (1,-2) circle (3pt) (3,-2) circle (3pt) (2,0) circle (3pt) (2,1) circle (3pt);
        \node (a1) at (1,3)[above]{$a$} ; 
       \node (a1) at (3,3)[above]{$d$} ; 
       \node (a1) at (1,-2)[below]{$c$} ;
       \node (a1) at (3,-2)[below]{$b$} ;
       \node (a1) at (2,1)[right]{$e$} ;
\draw (0,0)--(4,0);
\draw [dashed,->,>=Triangle](0,0)--(1,2);
\draw [dashed,<-,>=Triangle](1,2) --(3,2);
\draw (3,2) --(4,0);
\draw (0,0)--(1,-1);
\draw [dashed,->,>=Triangle](1,-1)--(3,-1);
\draw [dashed,<-,>=Triangle](3,-1)--(4,0);
\draw (1,2)--(1,3);
\draw (3,2)--(3,3);
\draw (1,-1)--(1,-2);
\draw (3,-1)--(3,-2);
\draw (2,0)--(2,1);
    \end{tikzpicture}}
     \caption{Two networks $N_1$ and $N_2$ with $r(N_1)=r(N_2)$ and $|r(N_i)|=2$.}
     \label{fig:example r(N_1)=r(N_2)}
 \end{figure}
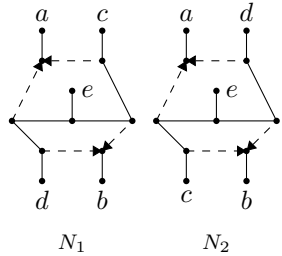

 \end{example}

\begin{example}\label{example: counterexample second statement}
In this example, we consider two networks in \Cref{fig: not N_2-RC}. It can be seen that $|r(N_1)|>|r(N_2)|$ and $r(N_2)\subseteq r(N_1)$. The pair $(N_1,N_2)$ is clearly $N_2$-RC but not $N_1$-RC.
If $S=\{a,b,c,d\}$, then $N_2|_S$ is not nice. If $S=\{a,b,d,e\}$, then $N_1|_S$ is not nice. If $S\in\{\{a,b,c,e\}, \{a,c,d,e\}\}$, then both $N_1|_S$ and $N_2|_S$ are  4-leaf simple nice strict level-2 network of types which do not allow us to conclude any varieties non-containment \Cref{lemma:containment level-2 4 leaves}. If $S=\{b,c,d,e\}$, then both $N_1|_S=N_2|_S$, which is a 4-leaf 4-cycle network.

 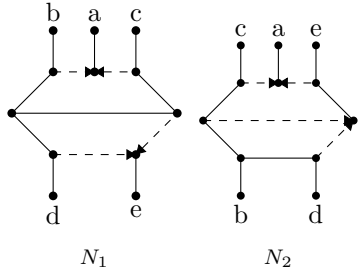
\begin{figure}[H]
     \centering
      \subfigure[$N_1$]{
    \begin{tikzpicture}[scale=0.55]
       \fill[black] (0,0) circle (3pt) (1,1) circle (3pt) (2,1) circle (3pt)  (3,1) circle (3pt) (4,0) circle (3pt) (1,-1) circle (3pt) (3,-1) circle (3pt) (1,2) circle (3pt) (2,2) circle (3pt)  (3,2) circle (3pt) (3,-2) circle (3pt) (1,-2) circle (3pt);
        \node (a1) at (1,2)[above]{b} ; 
       \node (a1) at (2,2)[above]{a} ; 
       \node (a1) at (3,2)[above]{c} ;
       \node (a1) at (1,-2)[below]{d} ; 
       \node (a1) at (3,-2)[below]{e} ; 
\draw (0,0)--(4,0);
\draw (0,0)--(1,1);
\draw [dashed,->,>=Triangle](1,1) --(2,1);
\draw [dashed,<-,>=Triangle](2,1) --(3,1);
\draw (3,1)--(4,0);
\draw (0,0)--(1,-1);
\draw [dashed,->,>=Triangle](1,-1)--(3,-1);
\draw [dashed,<-,>=Triangle](3,-1)--(4,0);
\draw (1,1)--(1,2);
\draw (2,1)--(2,2);
\draw (3,1)--(3,2);
\draw (1,-1)--(1,-2);
\draw (3,-1)--(3,-2);
    \end{tikzpicture}}
     \subfigure[$N_2$]{
    \begin{tikzpicture}[scale=0.5]
       \fill[black] (0,0) circle (3pt) (1,1) circle (3pt) (2,1) circle (3pt)  (3,1) circle (3pt) (4,0) circle (3pt) (1,-1) circle (3pt) (3,-1) circle (3pt) (1,2) circle (3pt) (2,2) circle (3pt)  (3,2) circle (3pt) (3,-2) circle (3pt) (1,-2) circle (3pt);
        \node (a1) at (1,2)[above]{c} ; 
       \node (a1) at (2,2)[above]{a} ; 
       \node (a1) at (3,2)[above]{e} ;
       \node (a1) at (1,-2)[below]{b} ; 
       \node (a1) at (3,-2)[below]{d} ; 
\draw [dashed,->,>=Triangle](0,0)--(4,0);
\draw (0,0)--(1,1);
\draw [dashed,->,>=Triangle](1,1) --(2,1);
\draw [dashed,<-,>=Triangle](2,1) --(3,1);
\draw (3,1)--(4,0);
\draw (0,0)--(1,-1);
\draw (1,-1)--(3,-1);
\draw [dashed,->,>=Triangle](3,-1)--(4,0);
\draw (1,1)--(1,2);
\draw (2,1)--(2,2);
\draw (3,1)--(3,2);
\draw (1,-1)--(1,-2);
\draw (3,-1)--(3,-2);
    \end{tikzpicture}}
     \caption{A pair $(N_1,N_2)$ that is $N_2$-RC.}
     \label{fig: not N_2-RC}
 \end{figure}
  
\end{example}
 
\begin{example}\label{example: counterexample first statement}
Let us consider two networks $N_1$ and $N_2$ presented in \Cref{fig:example r(N)=1}. It can be seen that $|r(N_1)|=|r(N_2)|=1$ and $r(N_1)= r(N_2)$. We consider five possible subsets $S\subseteq \{a,b,c,d,e\}$ of size four. If $S=\{a,b,c,d\}$, then $N_1|_S$ is a $4$-leaf 4-cycle network while  $N_2|_S$ is a $4$-leaf simple nice strict level-2 network. Thus, by \Cref{lemma:containment level-2 4 leaves}, $V_{N_2|_S}\not\subseteq V_{N_1|_S}$. If $S=\{a,c,d,e\}$, then $N_1|_S$ is a single-triangle network while $N_2|_S$ is a 4-leaf 4-cycle network. Thus, $V_{N_2|_S}\not\subseteq V_{N_1|_S}$. If $S= \{a,b,c,e\},$ then $N_1|_S$ and $N_2|_S$ are both simple strict level-2 networks of type 2. If $S\in\{ \{a,b,d,e\},\{b,c,d,e\}\},$ then $N_3|_S$ is simple level-2 of type 2 and $N_4|_S$ is simple level-2 of type 4. Again, under the JC model, $V_{N_2|_S}\not\subseteq V_{N_1|_S}$.
 \begin{figure}[H]
    \centering
    \subfigure[$N_1$]{
    \begin{tikzpicture}[scale=0.5]
       \fill[black] (0,0) circle (3pt) (1,1.5) circle (3pt) (2,1.5) circle (3pt)  (3,1.5) circle (3pt) (4,1.5) circle (3pt) (4,3) circle (3pt) (5,0) circle (3pt) (2.5,-1) circle (3pt) (1,3) circle (3pt) (2,3) circle (3pt)  (3,3) circle (3pt) (2.5,-2) circle (3pt)  ;
        \node (a1) at (1,3)[above]{a} ; 
       \node (a1) at (2,3)[above]{b} ; 
       \node (a1) at (3,3)[above]{c} ;
       \node (a1) at (4,3)[above]{d} ;
       \node (a1) at (2.5,-2)[below]{e} ; 
\draw [dashed,->,>=Triangle](0,0)--(5,0);
\draw (0,0)--(1,1.5);
\draw [dashed,->,>=Triangle](1,1.5) --(2,1.5);
\draw [dashed,<-,>=Triangle](2,1.5) --(3,1.5);
\draw (3,1.5)--(4,1.5);
\draw (4,1.5)--(5,0);
\draw (0,0)--(2.5,-1);
\draw [dashed,->,>=Triangle](2.5,-1)--(5,0);
\draw (1,1.5)--(1,3);
\draw (2,1.5)--(2,3);
\draw (3,1.5)--(3,3);
\draw (4,1.5)--(4,3);
\draw (2.5,-1)--(2.5,-2);
    \end{tikzpicture}}
    \subfigure[$N_2$]{
    \begin{tikzpicture}[scale=0.5]
       \fill[black] (0,0) circle (3pt) (1,1.5) circle (3pt) (2,1.5) circle (3pt)  (3,1.5) circle (3pt) (4,0) circle (3pt) (2,-1) circle (3pt) (1,3) circle (3pt) (2,3) circle (3pt)  (3,3) circle (3pt) (2,-2) circle (3pt) (2,0) circle (3pt) (2,0.5) circle (3pt) ;
        \node (a1) at (1,3)[above]{a} ; 
       \node (a1) at (2,3)[above]{b} ; 
       \node (a1) at (3,3)[above]{c} ;
       \node (a1) at (2,0.5)[above]{d} ; 
       \node (a1) at (2,-2)[below]{e} ; 
       \draw (0,0)--(2,0);
\draw [dashed,->,>=Triangle](2,0)--(4,0);
\draw (0,0)--(1,1.5);
\draw [dashed,->,>=Triangle](1,1.5) --(2,1.5);
\draw [dashed,<-,>=Triangle](2,1.5) --(3,1.5);
\draw (3,1.5)--(4,0);
\draw (0,0)--(2,-1);
\draw [dashed,->,>=Triangle](2,-1)--(4,0);
\draw (1,1.5)--(1,3);
\draw (2,1.5)--(2,3);
\draw (3,1.5)--(3,3);
\draw (2,-1)--(2,-2);
\draw (2,0)--(2,0.5);
    \end{tikzpicture}}
    \caption{A pair of networks $(N_1,N_2)$ which are simple nice strict level-2 networks with only one reticulation leaf.}
    \label{fig:example r(N)=1}
\end{figure}
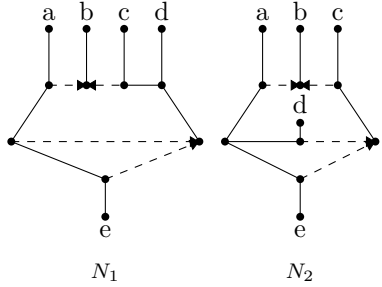
\end{example}

\begin{example}\label{example:R(N1)neq  R(N2)}
In this example, we consider two semisimple networks presented in \Cref{fig:example r=2 semisimple}. It can be seen that $R(N_1)=\{\{a\},\{d,e\}\}$ and $R(N_2)=\{\{a\},\{c\}\}.$ The pair $(N_1,N_2)$ is neither $N_1$-RBC nor $N_2$-RBC. If $S=\{\{a,b,d,e\},\{a,c,d,e\}\}$, then either $N_1|_S$ or $N_2|_S$ is not nice. So we are unable to use Fourier transform to distinguish these two subnetworks.  If $S=\{\{a,b,c,d\},\{a,b,c,e\}\}$, then again, from our results on 4-leaf networks, we are unable to distinguish the corresponding two subnetworks. If $S=\{b,c,d,e\},$ then $N_1|_S$ and $N_2|_S$ are both single-triangle network but they have the same undirected network topologies.
 \begin{figure}[H]
     \centering
      \subfigure[$N_1$]{
    \begin{tikzpicture}[scale=0.5]
       \fill[black] (0,0) circle (3pt) (1,1) circle (3pt)  (3,1) circle (3pt) (4,0) circle (3pt) (1,-1) circle (3pt) (3,-1) circle (3pt) (1,2) circle (3pt)  (3,2) circle (3pt) (2,-2) circle (3pt) (1,-2) circle (3pt) (4,-2) circle (3pt) (3,-1.5) circle (3pt);
        \node (a1) at (1,2)[above]{a} ; 
       \node (a1) at (3,2)[above]{b} ;
       \node (a1) at (1,-2)[below]{c} ; 
       \node (a1) at (2,-2)[below]{d} ; 
       \node (a1) at (4,-2)[below]{e} ; 
\draw (0,0)--(4,0);
\draw [dashed, ->,>=Triangle](0,0)--(1,1);
\draw [dashed,<-,>=Triangle](1,1) --(3,1);
\draw (3,1)--(4,0);
\draw (0,0)--(1,-1);
\draw [dashed,->,>=Triangle](1,-1)--(3,-1);
\draw [dashed,<-,>=Triangle](3,-1)--(4,0);
\draw (1,1)--(1,2);
\draw (3,1)--(3,2);
\draw (1,-1)--(1,-2);
\draw (3,-1)--(3,-1.5);
\draw (2,-2)--(3,-1.5);
\draw (4,-2)--(3,-1.5);
    \end{tikzpicture}}
     \subfigure[$N_2$]{
    \begin{tikzpicture}[scale=0.5]
       \fill[black] (0,0) circle (3pt) (1,1) circle (3pt)  (3,1) circle (3pt) (4,0) circle (3pt) (1,-1) circle (3pt) (3,-1) circle (3pt) (1,2) circle (3pt)  (3,2) circle (3pt) (2,-2) circle (3pt) (1,-2) circle (3pt) (4,-2) circle (3pt) (3,-1.5) circle (3pt);
        \node (a1) at (1,2)[above]{b} ; 
       \node (a1) at (3,2)[above]{a} ;
       \node (a1) at (1,-2)[below]{c} ; 
       \node (a1) at (2,-2)[below]{d} ; 
       \node (a1) at (4,-2)[below]{e} ; 
\draw (0,0)--(4,0);
\draw (0,0)--(1,1);
\draw [dashed,->,>=Triangle](1,1) --(3,1);
\draw [dashed, <-,>=Triangle](3,1)--(4,0);
\draw [dashed,->,>=Triangle](0,0)--(1,-1);
\draw [dashed,<-,>=Triangle](1,-1)--(3,-1);
\draw (3,-1)--(4,0);
\draw (1,1)--(1,2);
\draw (3,1)--(3,2);
\draw (1,-1)--(1,-2);
\draw (3,-1)--(3,-1.5);
\draw (2,-2)--(3,-1.5);
\draw (4,-2)--(3,-1.5);
    \end{tikzpicture}}
     \caption{A pair of networks such that $\{a\}$ is a reticulation branch in both networks.}
     \label{fig:example r=2 semisimple}
 \end{figure}
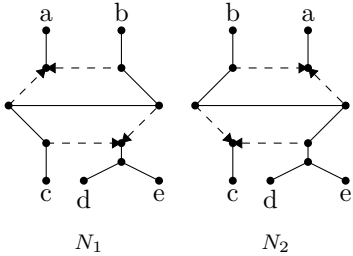
\end{example}

 \smallskip

\smallskip

\noindent \textbf{Acknowledgements.} The author would like to thank his PhD supervisor, Kaie Kubjas for extremely valuable comments and discussions and Annachiara Korchmaros for comments on the earlier version of the manuscript.

\bibliographystyle{plain}
\bibliography{references}

\smallskip

\smallskip

\noindent Author's affiliation:

\vspace{0.2cm}
\noindent Muhammad Ardiyansyah, Department of Mathematics and Systems Analysis, Aalto University,\newline \texttt{muhammad.ardiyansyah@aalto.fi}

\end{document}